\newcommand{\REAL}{\ensuremath{\mathbb{R}}}
\newcommand{\algorithmicbreak}{\textbf{break}}
\newcommand{\Break}{\algorithmicbreak}
\newtheorem{theorem}{Theorem}
\newtheorem{corollary}[theorem]{Corollary}
\newtheorem{lemma}[theorem]{Lemma}
\newtheorem{fact}[theorem]{Fact}
\newtheorem{definition}{Definition}[section]
\newtheorem{claim}{Claim}[section]
\newcommand{\ceil}[1]{\left\lceil #1 \right\rceil}
\newcommand{\floor}[1]{\left\lfloor #1 \right\rfloor}
\colorlet{shadecolor}{gray!15}
\definecolor{lightgray}{gray}{0.25}
\newcommand{\mO}{O}
\newcommand{\mI}{{\mathcal{I}}}
\newcommand{\wtilde}{\widetilde}
\renewcommand{\Pr}[1]{\ensuremath{\mathbb{P}\left[#1\right]}}
\newcommand{\Ex}[1]{\ensuremath{\mathbb{E}\left[#1\right]}}
\newcommand{\Exu}[2]{\ensuremath{\mathbb{E}_{#1}\left[#2\right]}}
\newcommand{\ind}[1]{\ensuremath{\mathds{1}\left[#1\right]}}
\DeclareMathOperator{\poly}{poly}
\DeclareMathOperator{\argmin}{arg\,min}
\newcommand{\eps}{\varepsilon}
\newcommand{\err}{{\textsc{Fail}}}
\newcommand{\matroid}{\mathcal{M}(\ground,\mathcal{I})}
\newcommand{\constLevel}{\textsc{ConstructLevel}}
\newcommand{\update}{\textsc{Update}}
\newcommand{\suit}{\textsc{Promote}}
\newcommand{\boolsuit}{\textsc{BoolPromote}}
\newcommand{\init}{\textsc{Init}}
\newcommand{\dynamicmatroid}{\textsc{DynamicMatroid}}
\newcommand{\matroidleveling}{\textsc{MatroidLeveling}}
\newcommand{\matroidupdates}{\textsc{MatroidUpdates}}
\newcommand{\carupdates}{\textsc{CardinalityConstraintUpdates}}
\newcommand{\levelingconstraint}{\textsc{CardinalityConstraintLeveling}}
\newcommand{\MatroidConstLevel}{\textsc{MatroidConstructLevel}}
\newcommand{\replacementTester}{\textsc{Promote}}
\newcommand{\insertv}{{\textsc{Insert}}}
\newcommand{\deletev}{{\textsc{Delete}}}
\newcommand{\ground}{\ensuremath{\mathcal{V}}}
\newcommand{\bR}{\ensuremath{\mathbf{R}}}
\newcommand{\bE}{\ensuremath{\mathbf{e}}}
\newcommand{\bT}{\ensuremath{\mathbf{T}}}
\newcommand{\bM}{\ensuremath{\mathbf{M}}}
\newcommand{\bH}{\ensuremath{\mathbf{H}}}
\newcommand{\bY}{\ensuremath{\mathbf{Y}}}
\newcommand{\spn}{{\textsc{Span}}}
\newcommand{\rank}{rank}
\definecolor{ShadowColor}{RGB}{30,150,190}
\newcommand\Cshadowbox{\VerbBox\@Cshadowbox}
\def\@Cshadowbox#1{
  \setbox\@fancybox\hbox{\fbox{#1}}
  \leavevmode\vbox{
    \offinterlineskip
    \dimen@=0.7\shadowsize
    \advance\dimen@ .5\fboxrule
    \hbox{\copy\@fancybox\kern.5\fboxrule\lower\shadowsize\hbox{
      \color{ShadowColor}\vrule \@height\ht\@fancybox \@depth\dp\@fancybox \@width\dimen@}}
    \vskip\dimexpr-\dimen@+0.5\fboxrule\relax
    \moveright\shadowsize\vbox{
      \color{ShadowColor}\hrule \@width\wd\@fancybox \@height\dimen@}}}
\newcommand{\marginalgain}[2]{f(#2 + #1) - f(#2)}
\title{
\Large
Dynamic Algorithms for  Matroid Submodular Maximization }
\author{ 
Kiarash Banihashem\thanks{Computer Science Department, University of Maryland, College Park, MD, USA. {\tt kiarash@umd.edu}.}
\and
Leyla Biabani\thanks{Department of Mathematics and Computer Science, TU Eindhoven, the Netherlands. {\tt l.biabani@tue.nl}.}
\and
Samira Goudarzi \thanks{Computer Science Department, University of Maryland, College Park, MD, USA. {\tt samirag@umd.edu}. }
\and
MohammadTaghi Hajiaghayi\thanks{Computer Science Department, University of Maryland, College Park, MD, USA. {\tt hajiagha@cs.umd.edu}.}
\and
Peyman Jabbarzade\thanks{Computer Science Department, University of Maryland, College Park, MD, USA. {\tt peymanj@umd.edu}.}
\and
Morteza Monemizadeh\thanks{Department of Mathematics and Computer Science, TU Eindhoven, the Netherlands. {\tt m.monemizadeh@tue.nl}.}
}
\date{\vspace{-5ex}}
\begin{document}
\maketitle

\begin{abstract}

Submodular maximization under matroid and cardinality  
constraints are classical problems with a wide range of applications in machine learning, 
auction theory, and combinatorial optimization. 
In this paper, we consider these problems in the dynamic setting where 
(1) we have oracle access to a monotone submodular function $f: 2^{\ground} \rightarrow \mathbb{R}^+$ and 
(2) we are given a sequence $\mathcal{S}$ of insertions and deletions of elements of an underlying ground set $\ground$.

We develop the first fully dynamic algorithm for the submodular maximization problem under the matroid constraint that maintains a $(4+\epsilon)$-approximation solution ($0 < \epsilon \le 1$) using an expected query complexity of $O(k\log(k)\log^3{(k/\epsilon)})$ 
, which is indeed parameterized by the rank $k$ of the matroid $\matroid$ as well.

Chen and Peng~\cite{DBLP:journals/corr/abs-2111-03198} at STOC'22 
studied the complexity of this problem in the insertion-only dynamic model (a restricted version of the fully dynamic model 
where deletion is not allowed), and they raised the following important open question: 
\emph{"for fully dynamic streams [sequences of 
insertions and deletions of elements], there is no known constant-factor approximation 
algorithm with poly(k) amortized queries for matroid constraints." }
Our dynamic algorithm answers this question as well as an open problem of Lattanzi et al.~\cite{DBLP:conf/nips/LattanziMNTZ20} (NeurIPS'20) affirmatively. 

As a byproduct, for the submodular maximization under the cardinality  constraint $k$, 
we propose a parameterized (by the cardinality constraint $k$) dynamic algorithm that maintains a $(2+\epsilon)$-approximate solution of the sequence $\mathcal{S}$ at any time $t$ using an expected query complexity of $\mO(k\epsilon^{-1}\log^2(k))$, which is an improvement upon the dynamic algorithm that Monemizadeh~\cite{DBLP:conf/nips/Monemizadeh20} (NeurIPS'20) 
developed for this problem using an expected query complexity $O(k^2\epsilon^{-3}\log^5(n))$. 
In particular, this dynamic algorithm is the first one for this problem whose query complexity is independent of the size of ground set $\ground$ (i.e., $n = |\ground|$). 

{We develop our dynamic algorithm for the submodular maximization problem under the matroid or cardinality constraint by designing a randomized leveled data structure that supports insertion and deletion operations, maintaining an approximate solution for the given problem. In addition, we develop a fast construction algorithm for our data structure that uses a one-pass over a random permutation of the elements and utilizes monotonicity property of our problems which has a subtle proof in the matroid case. We believe these techniques could also be useful for other optimization problems in the area of dynamic algorithms.}

\end{abstract}

\section{Introduction}

\emph{Submodularity} is a fundamental notion
that arises in many applications such as image segmentation, data summarization~\cite{DBLP:conf/aaai/KumariB21,DBLP:journals/jmlr/SchreiberBN20}, 
RNA and protein sequencing~\cite{DBLP:conf/bcb/YangBN20,DBLP:conf/bcb/LibbrechtBN18}
hypothesis identification~\cite{DBLP:journals/pami/BarinovaLK12,DBLP:conf/icml/ChenSMKWK14}, 
information gathering~\cite{DBLP:conf/aaai/RadanovicS0F18}, and
social networks~\cite{DBLP:conf/kdd/KempeKT03}. 
A function $f: 2^\ground \rightarrow \REAL^+$  is called \emph{submodular}  if for all $A \subseteq B \subseteq \ground$ and $e \notin B$, 
it satisfies $f(A \cup \{e\}) -f(A) \ge f(B \cup \{e\}) -f(B)$, and it is called \emph{monotone} if for every $A \subseteq B$, it satisfies $f(A) \leq f(B)$.

Given a monotone submodular function  $f: 2^{\ground} \rightarrow \REAL^+$
that is defined over 
a ground set $\ground$ and a parameter $k \in \mathbb{N}$, 
in the \emph{submodular maximization problem under the cardinality constraint $k$}, 
we would like to report a set $I^* \subseteq \ground$ of size at most $k$ whose 
submodular value is maximum among all subsets of $\ground$ of size at most $k$.

\emph{Matroid} \cite{DBLP:books/daglib/0070636} is a basic branch of mathematics 
that generalizes the notion of linear independence in vector spaces and 
has basic links to linear algebra~\cite{10.2307/24901346}, graphs~\cite{DBLP:journals/mp/Edmonds71}, lattices~\cite{Maeda1970}, codes~\cite{4557271}, transversals~\cite{Edmonds1965TransversalsAM}, and projective geometries~\cite{10.2307/2371070}. 
A matroid $\matroid$ consists of a \emph{ground set} $\ground$ and  
a nonempty downward-closed set system $\mathcal{I} \subseteq 2^{V}$  (known as the independent sets)
that satisfies the \emph{exchange axiom}: 
for every pair of independent sets $A,B \in \mathcal{I}$ such that $|A| < |B| $, 
there exists an element $x\in B\backslash A$ such that $A \cup \{x\} \in \mathcal{I}$.

A growing interest in machine learning~\cite{DBLP:journals/spm/TohidiACGLK20,DBLP:conf/nips/HanCCW20,DBLP:conf/nips/ElenbergDFK17,DBLP:conf/bmvc/Krause13,DBLP:conf/icml/BateniCEFMR19,DBLP:conf/acl/LinB11,DBLP:conf/cikm/SiposSSJ12,DBLP:conf/kdd/El-AriniG11,DBLP:conf/wsdm/AgrawalGHI09,DBLP:conf/icml/WeiIB15,DBLP:conf/iccv/DueckF07}, online auction theory~\cite{DBLP:journals/talg/BateniHZ13,DBLP:conf/wine/GuptaRST10,DBLP:journals/jacm/BabaioffIKK18,DBLP:conf/stoc/KleinbergW12,DBLP:journals/algorithmica/GharanV13,DBLP:journals/geb/KleinbergW19,DBLP:journals/jacm/BabaioffIKK18,DBLP:conf/stoc/KleinbergW12,DBLP:conf/soda/EhsaniHKS18}, and 
combinatorial optimization~\cite{DBLP:journals/mor/LeeSV10,DBLP:conf/soda/ChekuriVZ11,DBLP:conf/stoc/LeeSV10,DBLP:journals/toc/KempeKT15,DBLP:conf/uai/KvetonWAEE14,DBLP:books/ph/PapadimitriouS82,DBLP:journals/jcp/MagosMP06}
is to study the problem of maximizing a monotone submodular function $f: 2^{V} \rightarrow \REAL^+$ under a matroid $\matroid$ constraint. 
In particular, the goal in the \emph{submodular maximization problem under the matroid constraint} is to 
return an independent set $I^* \in \mathcal{I}$ 
of the maximum submodular value $f(I^*) $ among all independent sets in $\mathcal{I}$.

The seminal work of Fisher, Nemhauser and Wolsey~\cite{DBLP:journals/mp/NemhauserWF78} in the 1970s,
was the first that considered the submodular maximization problem under the matroid constraint problem in the offline model. 
Indeed, they developed a simple, elegant leveling algorithm for this problem that 
in time $O(nk)$ (where $k$ is the rank of the matroid $\matroid$), 
returns an independent set whose submodular value is a $2$-approximation of 
the optimal value $OPT = \max_{I^* \in \mathcal{I}} f(I^*)$.

However, despite the simplicity and optimality of this celebrated algorithm, 
there has been a surge of recent research efforts to
reexamine these problems under a variety of massive data models motivated by the unique challenges of
working with massive datasets. These include streaming 
algorithms~\cite{DBLP:conf/kdd/BadanidiyuruMKK14,DBLP:conf/stoc/FeldmanNSZ20,DBLP:conf/icalp/AlalufEFNS20,DBLP:conf/icml/0001MZLK19}, 
dynamic algorithms~\cite{DBLP:conf/icml/MirzasoleimanK017,DBLP:conf/icml/0001ZK18,DBLP:conf/nips/LattanziMNTZ20,DBLP:conf/nips/Monemizadeh20,DBLP:journals/corr/abs-2111-03198}, 
sublinear time algorithms~\cite{DBLP:conf/icml/StanZ0K17}, parallel
algorithms~\cite{DBLP:conf/nips/KupferQBS20,DBLP:conf/stoc/BalkanskiS18,DBLP:conf/icml/BalkanskiS18,DBLP:conf/icml/EneN20,DBLP:conf/soda/EneN19,DBLP:conf/stoc/EneNV19,DBLP:conf/stoc/ChekuriQ19}, online algorithms~\cite{DBLP:conf/nips/HarveyLS20}, 
private algorithms~\cite{DBLP:conf/aaai/ChaturvediNZ21}, learning algorithms~\cite{DBLP:conf/pkdd/BalcanH12,DBLP:conf/stoc/BalcanH11,DBLP:conf/stoc/BalkanskiRS17} and distributed algorithms~\cite{DBLP:conf/stoc/MirrokniZ15,DBLP:conf/nips/EneNV17,DBLP:conf/focs/BarbosaENW16,DBLP:conf/icml/BarbosaENW15}. 

Among these big data models, the \emph{(fully) dynamic model}~\cite{DBLP:conf/focs/Rauch92,DBLP:journals/jacm/HenzingerK99} has been of 
particular interest recently. 
In this model, we see a sequence $\mathcal{S}$ of updates (i.e., inserts and deletes) of elements of an underlying structure (such as a graph, matrix, and so on), and the goal is to maintain an approximate or exact solution of a problem that is defined for that structure using a fast update time. 
For example, the influential work of Onak and Rubinfeld~\cite{DBLP:conf/stoc/OnakR10}(STOC'10) studied dynamic versions of the matching and the vertex cover problems. Some other new advances in the dynamic model have been seen by developing dynamic 
algorithms for matching and vertex cover~\cite{DBLP:conf/stoc/OnakR10,DBLP:conf/soda/BhattacharyaHN17,DBLP:conf/icalp/BernsteinS15,DBLP:conf/focs/Solomon16,DBLP:journals/talg/NeimanS16,DBLP:conf/icalp/CharikarS18,DBLP:conf/stoc/BernsteinDL21,DBLP:journals/talg/BernsteinFH21,DBLP:conf/icalp/BhattacharyaK21}, graph connectivity~\cite{DBLP:conf/soda/KapronKM13,DBLP:conf/soda/AhnGM12}, 
graph sparsifiers~\cite{DBLP:conf/icalp/BernsteinBGNSS022,DBLP:conf/focs/AbrahamDKKP16,DBLP:conf/stoc/DurfeeGGP19,DBLP:conf/focs/ChuzhoyGLNPS20,DBLP:conf/focs/ChenGHPS20,DBLP:conf/focs/GaoLP21,DBLP:conf/stoc/BrandGJLLPS22}, set cover~\cite{DBLP:conf/soda/BhattacharyaHNW21,DBLP:conf/focs/GuptaL20,DBLP:conf/focs/BhattacharyaHN19,DBLP:conf/stoc/GuptaK0P17,DBLP:conf/focs/GuptaL20,DBLP:conf/stoc/AbboudA0PS19}, 
approximate shortest paths~\cite{DBLP:conf/focs/Bernstein09,DBLP:conf/focs/HenzingerKN13,DBLP:phd/us/Bernstein16,DBLP:reference/algo/Bernstein16a,DBLP:journals/siamcomp/HenzingerKN16,DBLP:conf/focs/BrandN19,DBLP:reference/algo/HenzingerKN16,DBLP:conf/soda/AbrahamCK17}, minimum spanning forests~\cite{DBLP:conf/focs/NanongkaiSW17,DBLP:conf/stoc/NanongkaiS17,DBLP:conf/soda/ChechikZ20}, densest subgraphs~\cite{DBLP:conf/stoc/BhattacharyaHNT15,DBLP:conf/stoc/SawlaniW20},  maximal independent sets~\cite{DBLP:conf/stoc/AssadiOSS18,DBLP:conf/focs/BehnezhadDHSS19,DBLP:conf/focs/ChechikZ19}, spanners~\cite{DBLP:journals/talg/BernsteinFH21,DBLP:conf/esa/Baswana06,DBLP:conf/esa/BodwinK16,DBLP:journals/talg/BaswanaKS12}, 
and graph coloring~\cite{DBLP:journals/talg/SolomonW20,DBLP:journals/talg/BhattacharyaGKL22}, to name a few\footnote{
Interestingly, the best paper awards at SODA'23 were awarded to 
two dynamic algorithms~\cite{DBLP:journals/corr/abs-2207-07607,DBLP:journals/corr/abs-2207-07438} for the matching size problem 
in the dynamic model.}. 

However, for the very basic problem of submodular maximization under the matroid constraint, there is no (fully) dynamic algorithm known. This problem was repeatedly  posed as an open problem at 
STOC'22~\cite{DBLP:journals/corr/abs-2111-03198} as well as NeurIPS'20~\cite{DBLP:conf/nips/LattanziMNTZ20}. 
Indeed, Chen and Peng~\cite{DBLP:journals/corr/abs-2111-03198}(STOC'22) raised the following open question:

\begin{tcolorbox}[width=\linewidth, colback=white!80!gray,boxrule=0pt,frame hidden, sharp corners]
\textbf{Open question~\cite{DBLP:journals/corr/abs-2111-03198,DBLP:conf/nips/LattanziMNTZ20}:} 
"For fully dynamic streams [sequences of 
insertions and deletions of elements], there is no known constant-factor approximation 
algorithm with poly(k) amortized queries for matroid constraints." 
\end{tcolorbox}

In this paper, we answer this question as well as the open problem of Lattanzi et al.~\cite{DBLP:conf/nips/LattanziMNTZ20} (NeurIPS'20) affirmatively. 
As a byproduct, we also develop a dynamic algorithm for the 
submodular maximization problem under the cardinality constraint. 
We next state our main result.

\begin{tcolorbox}[width=\linewidth, colback=white!80!gray,boxrule=0pt,frame hidden, sharp corners]
\begin{theorem} 
[Main Theorem]
\label{thm:main}
Suppose we are provided with oracle access to a monotone submodular function $f: 2^{\ground} \rightarrow \mathbb{R}^+$ 
that is defined over a ground set $\ground$. 
Let $\mathcal{S}$ be a  sequence of insertions and deletions of elements of the underlying ground set $\ground$. 
Let $0 < \epsilon \le 1$ be an error parameter. 
\begin{itemize}
\item We develop the first parameterized (by the rank $k$ of a matroid $\matroid$) 
dynamic $(4+\epsilon)$-approximation algorithm for the submodular maximization problem under the matroid constraint using a worst-case expected $O(k\log(k)\log^3{(k/\epsilon)})$ query complexity. 
\item  We also  present a parameterized (by the cardinality constraint $k$) dynamic algorithm for the submodular maximization under the cardinality  constraint $k$, that maintains a $(2+\epsilon)$-approximate solution of the sequence $\mathcal{S}$ 
at any time $t$ using a worst-case expected complexity $\mO(k\epsilon^{-1}\log^2(k))$. 
\end{itemize}                
\end{theorem}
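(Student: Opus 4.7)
The plan is to build a randomized leveled data structure where level $\ell$ maintains a set $S_\ell$ that is greedily constructed using a marginal-gain threshold $\tau_\ell$ that decreases geometrically in $\ell$ (roughly $\tau_\ell \propto (1+\epsilon)^{-\ell}\cdot \mathrm{guess}(\mathrm{OPT})$). At each level we would run a ``promote'' primitive which, for each element $e$ processed, adds $e$ to $S_\ell$ if its marginal gain over the current $S_\ell$ exceeds $\tau_\ell$, and in the matroid case additionally requires that $S_\ell \cup \{e\}$ remains independent (using an exchange/swap step if appropriate). The solution returned by the data structure is the $S_\ell$ of highest $f$-value. In the static setting this is exactly the classical threshold-greedy of Nemhauser--Wolsey--Fisher for matroids and of Badanidiyuru--Vondr\'ak for cardinality; after accounting for the extra slack introduced by the dynamic bookkeeping and by the geometric grid over $\mathrm{OPT}$, one gets the $(4+\epsilon)$ and $(2+\epsilon)$ approximations claimed.

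The construction phase would process elements in a uniformly random permutation, invoking \textsc{Promote} on each element once per level. The central ingredient that keeps the query cost low is a monotonicity property of \textsc{Promote}: if we couple the algorithm's random permutation on two ground sets $A \subseteq B$, then any element promoted at level $\ell$ in the execution on $A$ is also promoted at some level $\ell' \leq \ell$ in the execution on $B$. For the cardinality constraint this follows directly from the diminishing-returns property of $f$. For matroids it is subtle because promotion depends both on marginal gain and on independence with the current $S_\ell$, and the addition of extra elements in $B$ can block later promotions via independence. I would prove the matroid version by induction along the permutation prefix, combining the exchange axiom with monotone submodularity to show that whenever $B$'s execution rejects an element for independence reasons, a ``compensating'' element of at least as high marginal contribution has already been installed, so the level-profile of the kept elements dominates that of $A$.

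For updates, an insertion of $e$ is handled by running \textsc{Promote}$(e)$ at each level; only the insertion path is touched. A deletion of $e$ is handled by first checking, level by level (using a binary-search-style traversal over the levels where $e$ could have entered), the shallowest level $\ell^\star$ whose $S_{\ell^\star}$ contains $e$; from $\ell^\star$ onward we recompute, but only by re-processing the suffix of the original random permutation that came after $e$. A concentration argument on the uniformly random position of $e$ within that permutation shows that the expected number of re-processed elements per level is $O(k)$ in both the cardinality and matroid cases (since a promoted element count is at most $k$). Combined with $O(\log(k/\epsilon))$ levels and $O(\log k)$ binary-search overhead for locating $\ell^\star$, this delivers the stated $O(k\epsilon^{-1}\log^2 k)$ and $O(k\log(k)\log^3(k/\epsilon))$ bounds.

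The hardest step will be the matroid monotonicity lemma described above, because promotion in the matroid case is a non-local decision: a single element installed in $S_\ell$ during the execution on $B$ can cause a chain of later independence rejections that cascades across levels. Making the inductive invariant precise enough to survive this cascade --- I expect the right invariant to compare the two executions' level-$\ell$ sets in terms of the rank function of $\matroid$ restricted to prefixes of the permutation, while simultaneously tracking the marginal-gain account --- is where most of the technical work lies. A secondary but important obstacle is making sure the dynamic update procedure never scans all of $\ground$ (so that the query complexity is genuinely $\mathrm{poly}(k)$ and independent of $n$); this is what forces the permutation-suffix recomputation scheme rather than a naive global rebuild.
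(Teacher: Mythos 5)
Your proposal diverges from the paper at the level of architecture, not just detail, and several of the steps as sketched would not go through.

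\textbf{Different level structure.} You propose levels indexed by a geometrically decreasing marginal-gain threshold $\tau_\ell \propto (1+\epsilon)^{-\ell}\cdot\mathrm{guess}(\mathrm{OPT})$, with the final answer being the best $S_\ell$. This is a Sieve-Streaming / Badanidiyuru--Vondr\'ak style greedy. The paper instead uses a \emph{single fixed} threshold ($\tau = \mathrm{OPT}/(2k)$ for cardinality; $\frac{\epsilon}{10k}\cdot\mathrm{MAX}$ for matroid) and lets the level index track the number of elements that have been committed so far: $I_\ell$ grows by exactly one element per level and the answer is $I_T$ of the last level. This distinction matters because the paper's matroid algorithm is not a threshold greedy at all: it is a Chakrabarti--Kale-style swap algorithm in which an element may \emph{replace} a cheaper element of the current independent set (Property~3, the weight-doubling exchange). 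That swap rule is what makes the approximation proof (Lemmas~\ref{lm:wK_le_wI}--\ref{lm:modular}, using Hall's theorem and the matroid span/rank argument) go through, and it has no analogue in your description. ``Accounting for the extra slack'' will not by itself turn a $(1-1/e)$-style threshold greedy into the claimed $(4+\epsilon)$ matroid guarantee; a genuinely different potential-function argument is required.

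\textbf{The monotonicity lemma you identify is not the one the paper needs, and is not obviously true.} You posit a coupling property: if $A\subseteq B$, an element promoted at level $\ell$ under $A$ is promoted at some level $\ell'\le\ell$ under $B$. The paper proves something quite different and weaker: \emph{within a single execution}, if an element $e$ is not promoting at level $L_{j-1}$ then it is not promoting at $L_j$ (Lemma~\ref{lm:binary_search_argument}). That within-execution monotonicity is exactly what licenses a binary search over levels to decide where $e$ belongs, and its proof in the matroid case is a delicate circuit-exchange argument (Cases~(I) and~(II) in the lemma). A coupling lemma across ground sets is a far stronger claim; it would have to contend with the cascading independence rejections you correctly flag as dangerous, and the paper sidesteps it entirely by a different mechanism.

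\textbf{The update cost analysis is the crux, and the proposed suffix re-processing does not give the stated bounds.} You claim that on deletion one re-processes ``only the suffix of the original permutation after $e$'' and that a concentration argument gives $O(k)$ re-processed elements per level. The suffix after a uniformly random position in a permutation of $R_i$ has expected length $\Theta(|R_i|)$, not $O(k)$, so this bound is unsupported; nor does the algorithm described actually preserve the right distribution if you reuse the old random permutation. The paper instead proves a \emph{uniform invariant}: conditioned on the history $\bH_i$, the committed element $e_i$ is uniform over $R_i$. This invariant is maintained through \emph{full} rebuilds of the suffix of levels with a \emph{fresh} random permutation, and it is what makes the expected cost telescope: the probability that a deletion triggers a rebuild from level $i$ is $1/|R_i|$, the rebuild costs $O(|R_i|\log k\log T)$, and the $|R_i|$ cancels, giving $O(\log k\log T)$ per level and $O(T\log k\log T)$ per update. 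Without the uniform invariant there is no cancellation and the claimed $O(k\log k\log^3(k/\epsilon))$ bound does not follow. The uniform invariant is arguably the paper's central technical contribution for the update analysis, and it is absent from your proposal.

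In short: your plan identifies the right broad strokes (random permutation, levels, binary search over levels, rebuild-on-deletion), but substitutes the wrong algorithmic primitive in the matroid case (threshold greedy rather than weight-doubling swaps), aims at the wrong monotonicity statement, and has no mechanism playing the role of the uniform invariant, which is what actually controls the expected update cost.
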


\end{tcolorbox}

The seminal work of Fisher, Nemhauser and Wolsey~\cite{DBLP:journals/mp/NemhauserWF78}, which we mentioned above,
 developed a simple and elegant leveling algorithm for the submodular maximization problem under the cardinality constraint that achieves the optimal approximation ratio of $\frac{e}{e-1} \approx 1.58$ in time $O(nk)$ ~\cite{DBLP:journals/mp/NemhauserWF78,DBLP:journals/jacm/Feige98}.

The study of the submodular maximization in the dynamic model was initiated at NeurIPS 2020 based on two independent works
due to Lattanzi, Mitrovic, Norouzi-Fard, Tarnawski, and Zadimoghaddam \cite{DBLP:conf/nips/LattanziMNTZ20} 
and Monemizadeh~\cite{DBLP:conf/nips/Monemizadeh20}.
Both works present dynamic algorithms 
that maintain $(2+\epsilon)$-approximate solutions for the submodular maximization  under the cardinality constraint $k$ 
in the dynamic model.
The amortized expected query complexity of these two algorithms are 
$\mO(\epsilon^{-11}\log^6(k)\log^2(n))$ and $\mO(k^2\epsilon^{-3}\log^5(n))$, respectively.

Our dynamic algorithm for the cardinality constraint 
improves upon the dynamic algorithm that Monemizadeh~\cite{DBLP:conf/nips/Monemizadeh20} (NeurIPS'20) 
developed for this problem using an expected query complexity $\mO(k^2\epsilon^{-3}\log^5(n))$. 
In particular, our dynamic algorithm is the first one for this problem whose 
query complexity is independent of the size of the ground set $\ground$.

We develop our dynamic algorithm for the submodular maximization problem under the matroid or cardinality constraint by designing a randomized leveled data structure that supports insertion and deletion operations, maintaining an approximate solution for the given problem. In addition, we develop a fast construction algorithm for our data structure that uses a one-pass over a random permutation of the elements and utilizes a monotonicity property of our problems which has a subtle proof in the matroid case. We believe these techniques could also be useful for other optimization problems in the area of dynamic algorithms.

Interestingly, our results can be seen from the lens of parameterized complexity~\cite{marx2008parameterized,fomin2021fast,DBLP:conf/focs/Korhonen21,KT11,DBLP:conf/mfcs/FafianieK14,DBLP:conf/soda/ChitnisCEHMMV16,DBLP:conf/soda/ChitnisCHM15,doi:10.1137/15M1032077,10.5555/3174304.3175483}. 
In particular, the query complexity of our dynamic algorithms for the submodular maximization problems under the matroid and cardinality 
constraints  (1) are independent of the size of the ground set $\ground$ (i.e., $|\ground| = n$) 
, and 
(2) are parameterized by the rank $k$ of the matroid $\matroid$ and the cardinality $k$, respectively. 
We hope that our work sheds light on the connection between dynamic algorithms  
and the Fixed-Parameter Tractability (FPT)~\cite{10.5555/2464827, DBLP:series/txtcs/FlumG06,cygan2015parameterized} world. 
We should mention that streaming algorithms~\cite{DBLP:conf/mfcs/FafianieK14,DBLP:conf/soda/ChitnisCEHMMV16,DBLP:conf/soda/ChitnisCHM15} through the lens of the parameterized complexity have been considered before 
where vertex cover and matching parameterized by their size were designed in these works.

Finally, one may ask whether we can obtain a dynamic $c$-approximate algorithm for the cardinality constraint 
for $c < 2$ with a  query complexity that is polynomial in $k$. 
Let $g: \mathbb{N} \to \mathbb{R}^+$ be an arbitrary function.
Building on a hardness result recently obtained by Chen 
and Peng~\cite{DBLP:journals/corr/abs-2111-03198}, 
we show in Appendix~\ref{sec:ower:bound} 
that there is no randomized $(2-\epsilon)$-approximate algorithm for the dynamic submodular maximization under cardinality constraint $k$ 
with amortized expected query time of $g(k)$ (e.g., not even doubly exponentially in $k$), 
even if the optimal value is known after every insertion/deletion.

\paragraph{Concurrent work.}
In a concurrent work, 
Dütting, Fusco, Lattanzi, Norouzi-Fard, and Zadimoghaddam~\cite{dutting2023fully} also provide an algorithm for
dynamic submodular maximization under a matroid constraint. Their algorithm
obtains a
$4+\epsilon$ approximation with $\frac{k^2}{\epsilon}\log(k)\log^2(n)\log^3(\frac{k}{\epsilon})$ amortized expected query comlpexity.~\footnote{The two works appeared on arxiv at the same time;
We had
submitted 
an earlier version of our work to
SODA'23, in July 2022.
}

Our query complexity of $k\log(k)\log^{3}(\frac{k}{\epsilon})$ is strictly better as
\textbf{(a)} it does not depend on $n$ and \textbf{(b)} its dependence
on $k$ is nearly linear rather than nearly quadratic and the dependence on $\epsilon^{-1}$ is polylogarithmic. Additionally, our guarantees are worst-case expected, rather than amortized expected.

\subsection{Preliminaries}
\label{sec:prelim}
\paragraph{Notations.}
For two natural numbers $x < y$, we use $[x, y]$ to denote the set $\{x,x+1,\cdots,y\}$, and $[x]$ to denote the set $\{1,2,\cdots,x\}$. 
For a set $A$ and an element $e$, we denote by $A+e$, the set that is the union of two sets $A$ and $\{e\}$. 
Similarly, for a set $A$ and an element $e \in A$, we denote by $A - e$ or $A \backslash e$, the set $A$ from which the element $e$ is removed. 
For a level $L_{i}$, we represent by $L_{1 \le j \le i}$ the levels $L_{1}, L_2,\cdots, L_i$, and we simplify $L_{1 \le j \le i}$ and show it by $L_{\le i}$. The levels $L_{i \le j \le T}$ and its simplification $L_{ i \le }$ are defined similarly. 
For a function $x$ and a set $A$, we denote by $x[A]$ the function $x$ that is restricted to domain $A$.
For an event $E$, we use $\ind{E}$ as the \emph{indicator function} of $E$. That is, $\ind{E}$ is set to one if $E$ holds and is set to zero otherwise. 
For random variables and their values, we use bold and non-bold letters, respectively. For example, we denote a random variable by $\textbf{X}$ and its value by $X$. 
We will use the notations $\Pr{\textbf{X}}$ and $\Ex{\textbf{X}}$ to denote the probability and the expectation of a random variable $\textbf{X}$. 
For two events $A$ and $B$, we will use the notation $\Pr{A|B}$ 
to denote "the conditional probability of $A$ given $B$" or "the probability of $A$ under the condition $B$".
For an event $A$ with nonzero probability and a discrete random variable $\textbf{X}$, we denote by $\Ex{\textbf{X}|A}$ conditional expectation of $X$ given $A$, which is $\Ex{\textbf{X} | A} = \sum_{x} x\cdot \Pr{\textbf{X} = x | A}$. 
Similarly, if $\textbf{X}$ and $\textbf{Y}$ are discrete random variables, 
the conditional expectation of $\textbf{X}$ given $\textbf{Y}$ is denoted by $\Ex{\textbf{X} | \textbf{Y} = y}$.

\paragraph{Submodular function.}
Given a ground set $\ground$, a function $f: 2^{\ground} \rightarrow \REAL^+$  is called \emph{submodular}  if it satisfies 
$ f(A \cup \{e\}) -f(A) \ge f(B \cup \{e\}) -f(B)$, for all $A \subseteq B \subseteq \ground$ and $e \notin B$. 
In this paper, we assume that $f$ is \emph{normalized}, i.e., $f(\emptyset) = 0$. 
When $f$ satisfies the additional property that $f(A \cup \{e\}) -f(A) \ge 0$ for all $A$ and $e \notin A$, we say $f$ is \emph{monotone}. 
For a subset  $A \subseteq \ground$ and an element $e \in \ground \backslash A$, the function $ f(A \cup \{e\}) - f(A)$ is often called the \emph{marginal gain}~\cite{DBLP:conf/kdd/BadanidiyuruMKK14,DBLP:conf/icml/0001MZLK19} of 
adding  $e$ to $A$.

Let $f: 2^{\ground} \rightarrow \REAL^+$ be a monotone submodular function defined on 
the ground set $\ground$.  
The \emph{monotone submodular maximization problem under the cardinality constraint} $k$ 
is defined as finding $OPT = \max_{I \subseteq \ground: |I| \le k} f(I)$. 
We denote by $I^*$ an optimal subset of size at most $k$ that achieves the optimal value $OPT = f(I^*)$. 
Note that we can have more than one optimal set. 

The leveling algorithm of the seminal work of Nemhauser, Wolsey, and Fisher \cite{DBLP:journals/mp/NemhauserWF78} that can approximate $OPT$ to a factor of $(1-1/e)$, is as follows. 
In the beginning, we let $S=\emptyset$. We then take $k$ passes over the set $V$, and in each pass, we 
find an element $e \in V$ that maximizes the marginal gain $f(S \cup \{e\}) - f(S)$, add it to $S$ and delete it from $V$. 

\paragraph{Access Model.}
We assume the access to a monotone submodular function $f: 2^{\ground} \rightarrow \REAL^+$ is given by an \emph{oracle}. 
That is, we consider the oracle that allows \emph{set queries} such that 
for every subset $A \subseteq \ground$, one can query the value $f(A)$. 
The marginal gain $f(A \cup \{e\}) - f(A)$ for every subset  $A \subseteq \ground$ and an element $e \in \ground$ in this query access model can be computed using two queries $f(A \cup \{e\})$ and $f(A)$.

\paragraph{Matroid.}
A matroid $\matroid$ consists of a \emph{ground set} $\ground$ and  
a nonempty downward-closed set system $\mathcal{I} \subseteq 2^{\ground}$  (known as the independent sets) that satisfies the \emph{exchange axiom}: 
for every pair of independent sets $A,B \in \mathcal{I}$ such that $|A| < |B| $, there exists an element $x\in B \backslash A$ such that $A \cup \{x\} \in \mathcal{I}$. 
A subset of the ground set $\ground$ that is not independent is called \emph{dependent}. 
A maximal independent set—that is, an independent set that becomes dependent upon adding any other element—is called a \emph{basis} for the matroid $\matroid$. 
A \emph{circuit} in a matroid $\matroid$ is a minimal dependent subset of $\ground$—that is, a dependent set whose proper subsets are all independent. 
Let $A$ be a subset of $V$. 
The rank of $A$, denoted by $rank(A)$, is the maximum cardinality of an independent subset of $A$. 

Let $f: 2^{\ground} \rightarrow \REAL^{\ge 0}$ be a monotone submodular function defined on 
the ground set $\ground$ of a matroid   $\matroid$. 
We denote by $OPT = \max_{I \in \mathcal{I}} f(I)$ the maximum submodular value of an independent set in $\mathcal{I}$. 
We denote by $I^* \in \mathcal{I}$ an independent set that achieves the optimal value $OPT = f(I^*)$. 

Here, we bring two lemmas about matroids that will be used in our paper. 
\begin{lemma}[\cite{DBLP:books/daglib/0070636}]
\label{lem:elimination:axiom}
The family $\mathcal{C}$ of circuits of a matroid $\matroid$ has the following properties:
\begin{itemize}
    \item ($C1$) $\emptyset \notin \mathcal{C}$.
    \item ($C2$) if $C_1$, $C_2 \in \mathcal{C}$ and $C_1 \subseteq C_2$, then $C_1=C_2$.
    \item ($C3$)  if $C_1$, $C_2 \in \mathcal{C}$, $C_1 \neq C_2$ and $e\in C_1 \cap C_2$, then there exists $C_3 \in \mathcal{C}$ such that $C_3 \subseteq C_1 \cup C_2 \setminus \{e\}$.
\end{itemize}
\end{lemma}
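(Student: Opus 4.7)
I would dispatch the three parts in order of increasing depth. Parts (C1) and (C2) are immediate from the definitions: the empty set is independent because $\mathcal{I}$ is a nonempty downward-closed family, so $\emptyset$ is not a circuit, giving (C1); and if $C_1$ were a proper subset of a circuit $C_2$, then by minimality of $C_2$ the set $C_1$ would be independent, contradicting $C_1 \in \mathcal{C}$, giving (C2).

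For the circuit elimination axiom (C3), my plan is to argue by contradiction using submodularity of the rank function $\rank(\cdot)$, which is a classical consequence of the exchange axiom. Suppose that $(C_1 \cup C_2) \setminus \{e\}$ contains no circuit, i.e., is independent. Since $C_1 \subseteq C_1 \cup C_2$ is a circuit, $C_1 \cup C_2$ is dependent, and combined with the independence assumption this gives $\rank(C_1 \cup C_2) = |C_1 \cup C_2| - 1$. Each $C_i$ is a minimal dependent set, so for any element removed $C_i$ becomes independent, yielding $\rank(C_i) = |C_i| - 1$. By (C2), since $C_1 \neq C_2$ and neither contains the other, $C_1 \cap C_2$ is a proper subset of $C_1$ and hence independent, so $\rank(C_1 \cap C_2) = |C_1 \cap C_2|$. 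Substituting into the submodular inequality $\rank(C_1) + \rank(C_2) \ge \rank(C_1 \cup C_2) + \rank(C_1 \cap C_2)$ and using the identity $|C_1| + |C_2| = |C_1 \cup C_2| + |C_1 \cap C_2|$ collapses everything to $-2 \ge -1$, the desired contradiction. It follows that $(C_1 \cup C_2) \setminus \{e\}$ is dependent and therefore contains some circuit $C_3$.

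The main obstacle is that the excerpt does not establish submodularity of the rank function explicitly, though this is a textbook property of matroids. The cleanest resolution is to cite Oxley~\cite{DBLP:books/daglib/0070636} for rank submodularity (or, alternatively, to include a short derivation of it from the exchange axiom). If a rank-free argument is preferred, one could instead extend $C_2 - f$, for some $f \in C_2 \setminus C_1$, to a maximal independent subset $I$ of $C_1 \cup C_2$ and analyze the circuit created when an element of $C_1 \setminus I$ is added to $I$, but this path requires more casework than the rank-function proof, so I would default to the latter and note that the lemma is a standard matroid fact.
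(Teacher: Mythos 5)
The paper states this lemma as a cited textbook fact (attributed to Oxley's \emph{Matroid Theory}) and provides no proof of its own, so there is no in-paper argument to compare against. Your proof is correct and is the standard one: (C1) and (C2) follow directly from the definitions, and (C3) is established by contradiction via rank submodularity, computing $\rank(C_1)=|C_1|-1$, $\rank(C_2)=|C_2|-1$, $\rank(C_1\cap C_2)=|C_1\cap C_2|$ (proper subset of a circuit, using (C2) to rule out $C_1\subseteq C_2$), and $\rank(C_1\cup C_2)=|C_1\cup C_2|-1$ under the contradiction hypothesis, whereupon submodularity plus inclusion-exclusion yields $-2\ge -1$. Your caveat is the right one: rank submodularity is itself not proved from the exchange axiom in this paper, so you would either cite it from the same reference or include the short derivation; as written, the argument is complete modulo that standard fact.
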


\begin{lemma}
\label{lem:<=1circuit}
Let $e\in V$ be an element, and $I \in \mathcal{I}$ be an independent set. Then $I \cup \{e\}$ has at most one circuit.
\end{lemma}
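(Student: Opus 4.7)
The plan is to proceed by contradiction and invoke the circuit elimination axiom (property $C3$ of Lemma~\ref{lem:elimination:axiom}). First I would observe that because $I$ is independent and every subset of an independent set is independent, no subset of $I$ can be a circuit. Therefore, any circuit that lies inside $I \cup \{e\}$ must actually contain the element $e$, since otherwise it would be a dependent subset of $I$.

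Next, suppose for contradiction that $I \cup \{e\}$ contains two distinct circuits $C_1 \neq C_2$. By the observation above, both $C_1$ and $C_2$ contain $e$, so in particular $e \in C_1 \cap C_2$. Now apply $C3$ from Lemma~\ref{lem:elimination:axiom}: there exists a circuit $C_3 \in \mathcal{C}$ with $C_3 \subseteq (C_1 \cup C_2) \setminus \{e\}$.

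The final step is to derive the contradiction. Since $C_1, C_2 \subseteq I \cup \{e\}$, we have $(C_1 \cup C_2) \setminus \{e\} \subseteq I$, and hence $C_3 \subseteq I$. But $C_3$ is a circuit, so it is dependent, which contradicts the assumption that $I$ is independent (together with the downward-closure of $\mathcal{I}$). Therefore $I \cup \{e\}$ cannot contain two distinct circuits, which proves the claim.

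I do not anticipate any significant obstacle here: the argument is entirely driven by the circuit elimination axiom together with the hereditary property of independent sets. The only subtle point worth stating explicitly is the preliminary remark that any circuit of $I \cup \{e\}$ must pass through $e$, which justifies why the elimination axiom is applicable in the first place.
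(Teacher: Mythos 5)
Your proof is correct and follows essentially the same argument as the paper: both observe that any circuit in $I \cup \{e\}$ must contain $e$ (since $I$ is independent), then apply the circuit elimination axiom (C3) to two hypothetical distinct circuits to produce a circuit contained in $I$, contradicting independence.
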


\begin{proof}
For the sake of contradiction, suppose there are two circuits $C_1, C_2 \subseteq I \cup \{e\}$, where $C_1 \neq C_2$. 
As $I$ is an independent set, $C_1, C_2 \nsubseteq I$, which means $e \in C_1 \cap C_2$. 
Then using Lemma~\ref{lem:elimination:axiom}, there exists a circuit $C_3 \subseteq C_1 \cup C_2 \setminus \{e\}$. Since $C_1 \cup C_2 \setminus \{e\} \subseteq I$, 
we have $C_3 \subseteq I$, which is a contradiction to the fact that the set $I$ is an independent set in $\mathcal{I}$. 
\end{proof}

\paragraph{Dynamic Model.}
Let $\mathcal{S}$ be a sequence of insertions and deletions of elements of an underlying ground set $\ground$.
Let $\mathcal{S}_t$ be the sequence of the first $t$ updates (insertion or deletion) of the sequence $\mathcal{S}$.
By time $t$, we mean the time after the first $t$ updates of the sequence $\mathcal{S}$, or simply when the updates of $\mathcal{S}_t$ are done. 
We define $V_t$ as the set of elements that have been inserted until time $t$ but have not been deleted after their latest insertion. 

Given a monotone submodular function  $f: 2^{\ground} \rightarrow \REAL^+$ defined on the ground set $\ground$, the aim of \emph{dynamic monotone submodular maximization problem under the cardinality constraint} $k$ is to have (an approximation of) $OPT_t = \max_{I_t \subseteq V_t: |I| \le k} f(I_t)$ at any time $t$. Similarly, the aim of \emph{dynamic monotone submodular maximization problem under a matroid $\matroid$ constraint} for a monotone function $f$ defined over the ground set $\ground$ is to have (an approximation of) $OPT_t=\max_{I_t \subseteq V_t: I_t\in \mathcal{I}}  f(I_t)$ at any time $t$. We also define $MAX_t$ to be $\max_{e \in V_t} f(e)$.
For simplicity, during the analysis for a fixed time $t$, we use $V$, $OPT$, and $MAX$ instead of $V_t$, $OPT_t$, and $MAX_t$ respectively.

Our dynamic algorithm is in the oblivious adversarial model as is common for analysis of randomized data structures such as universal hashing \cite{DBLP:conf/stoc/CarterW77}. 
The model allows the adversary, who is aware of the submodular function $f$ and the algorithm that is going to be used, to determine all the arrivals and departures of the elements in the ground set $\ground$.
However, the adversary is unaware of the random bits used in the algorithm and so cannot choose updates adaptively in response to the randomly guided choices of the algorithm. Equivalently, we can suppose that the adversary prepares the full input (insertions and deletions) before the algorithm runs. \\
The \emph{query complexity} of an $\alpha$-approximate dynamic algorithm is the number of oracle queries that the algorithm must make to 
maintain an $\alpha$-approximate of the solution at time $t$, given all computations that have been done 
till time $t-1$.\\
We measure the \emph{time complexity} of our dynamic algorithm 
in terms of its \emph{query complexity}, taking into account queries made to either the submodular oracle for $f$ or the matroid independence oracle for $\mI$. 

The query complexities of the algorithms in our paper will be worst-case expected.
An algorithm is said to have worst-case expected update time (or query time) $\alpha$ 
if for every update $x$, the expected time to process $x$ is at most $\alpha$. 
We refer to Bernstein, Forster, and Henzinger~\cite{DBLP:journals/talg/BernsteinFH21} 
for a discussion about the worst-case expected bound for dynamic algorithms.

\subsection{Our contribution and overview of techniques}
\label{sec:contrib}
Our dynamic algorithms for the submodular maximization problems with cardinality and matroid constraints 
consist of the following three building blocks. 
\begin{itemize}
    \item \emph{Fast leveling algorithms:} We first develop linear-time leveling algorithms 
    for these problems based on random permutations of elements.  
    These algorithms are used in \emph{rare occasions} when we need to (partially or totally) reset a solution that we maintain.  
    \item \emph{Insertion and deletion subroutines:} We next design subroutines for inserting and deleting a new element. 
    Upon insertion or deletion of an element, these subroutines often perform \emph{light} computations, but in rare occasions, 
    they perform \emph{heavy} operations by invoking the leveling algorithms to (partially or totally) reset a solution that we maintain. 
    \item \emph{Relax $OPT$ or $MAX$ assumptions:} For the leveling algorithms, and the insertion and the deletion subroutines, 
    we assume we know either an approximation of $OPT$ (as for the cardinality constraint) or 
    an approximation of the maximum submodular value $MAX = \max_{e \in V} f(e)$ of an element (as for the matroid constraint). 
    In the final block of our dynamic algorithms, we show how to relax such an assumption. 
\end{itemize}

\subsubsection{Submodular maximization problem  under the cardinality constraint} 
Designing and analyzing the above building blocks for the cardinality constraint is simpler than 
designing and developing them for the matroid constraint. 
Therefore, we outline them first for the cardinality constraint. 
That gives the intuition and sheds light on how we develop these building blocks for the matroid constraint 
which are more involved. 
Since the main contribution of our paper is developing a dynamic algorithm for the matroid constraint, 
we explain the algorithms (in Section~\ref{sec:algorithm:matroid}) and the analysis (in Section~\ref{sec:matroid:analysis}) for the matroid first. 
The dynamic algorithm and its analysis for the cardinality constraint are given in Section~\ref{sec:klogk} and Appendix~\ref{sec:analysis:klogk}, respectively. 

Suppose for now, we know the optimal value $OPT = \max_{I^* \subseteq V: |I^*| \le k} f(I^*)$ of any subset of the set $V$ 
of size at most $k$. We consider the fixed threshold $\tau = \frac{OPT}{2k}$.

\paragraph{First building block: Fast leveling algorithm.}
Our leveling algorithm constructs a set of levels $L_0, L_1, \cdots, L_T$, where $T$ is a random variable guaranteed to be $T \le k$. 
Every level $L_{\ell}$ consists of two sets $R_{\ell}$ and $I_{\ell}$, 
and an element $e_{\ell}$ so that:  
\begin{enumerate}
    \item $R_0 = V$, $I_0=\emptyset$, and $R_1 = \{e \in R_0: f(e) \ge \tau\}$
    \item $R_0 \supseteq R_1 \supset \cdots \supset R_T \supset R_{T+1}=\emptyset$ 
    
    \item For $1 \le \ell \le T$, we have $I_{\ell} = I_{\ell-1} \cup \{e_{\ell}\}$ 
    \item We report the set $I_T$ as the solution
\end{enumerate}

The illustration of our construction is shown in Figure~\ref{fig:layering}. 

\begin{figure}[h]
\begin{center}
\includegraphics[scale=0.35]{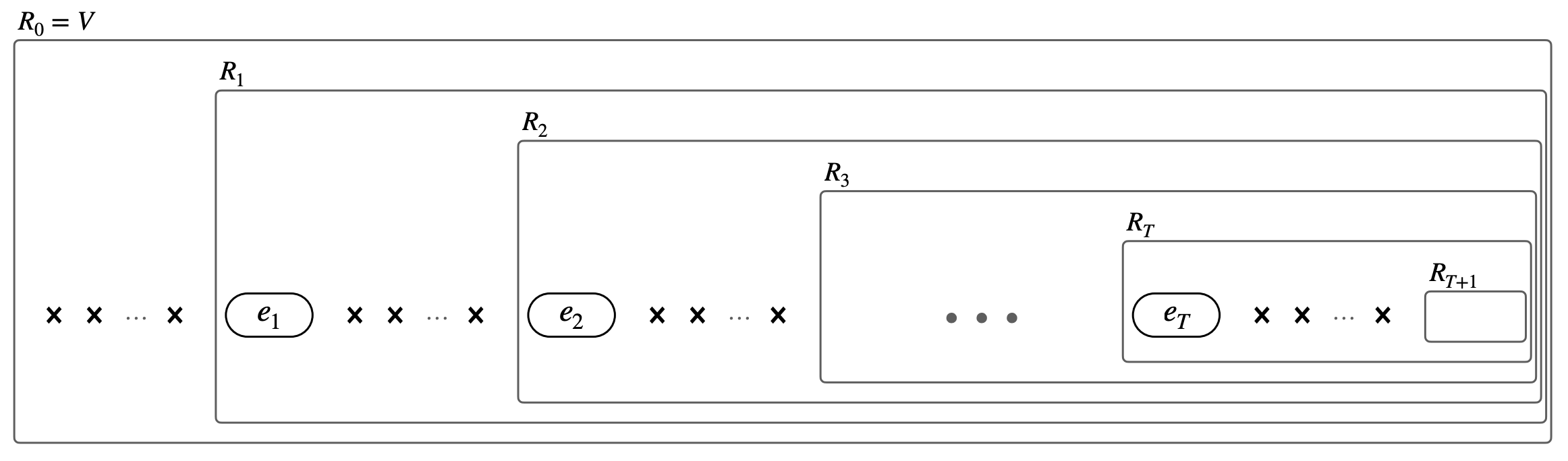}
\end{center}
\vspace{-0.3cm}
\caption{The illustration of our leveling algorithm. }
\label{fig:layering}
\end{figure}

The key concept in constructing the levels is the notion of \emph{promoting elements}. 

\begin{tcolorbox}[width=\linewidth, colback=white!80!gray,boxrule=0pt,frame hidden, sharp corners]

\begin{definition}[Promoting elements]
\label{def:cardinality:promote}
Let $L_{1 \le \ell \le T}$ be a level. 
We call an element $e \in R_{\ell}$, a promoting element for the set $I_{\ell}$ if 
$f(I_{\ell} \cup \{e\}) - f(I_{\ell}) \ge \tau$ and $|I_{\ell}| < k$.
\end{definition}

\end{tcolorbox}

The levels are constructed as follows:  
Let $\ell =  1$. 
We first randomly permute the elements of the set $R_1$ and denote by $P$ this random permutation. 
We next iterate through the elements of $P$ and for every element $e \in P$, 
we check if $e$ is a promoting element with respect to the set $I_{\ell - 1}$ or not. 

\begin{itemize}
    \item If $e$ is a promoting  element for the set $I_{\ell - 1}$, 
we then let $e_{\ell}$ be $e$ and let $I_{\ell}$ be $I_{\ell - 1} \cup \{e_{\ell}\}$. 
Observe that now we have the set $I_{\ell}$ and the element $e_{\ell}$, 
however, the set $R_{\ell}$ is not complete yet, 
as some of its elements may come after $e$ in the permutation $P$. 
We create the next level $L_{\ell+1}$ by setting $R_{\ell+1} = \emptyset$. 
We then proceed to the next element in $P$. 
Note that in this way, for all levels $L_{1 < j \le \ell}$, the sets $R_j$ are 
not complete, and they will be complete when we reach the end of the permutation $P$.

\item Next, we consider the case when $e$ is not a promoting element for the set $I_{\ell-1}$. 
    This essentially means that we need to find  
    the largest $z \in [1,\ell-1]$ so that $e$ is promoting  for the set $I_{z-1}$, 
    but it is not promoting for the set $I_{z}$. 
    A naive way of doing that is to perform a linear scan 
    for which we need one oracle query to compute $f(I_{x} \cup \{e\}) - f(I_{x}) $ for every $x \in [1,\ell-1]$. 
    However, we do a binary search in the interval $[i, \ell-1]$, which needs $O(\log k)$ oracle queries to find $z$.     
    Once we find such  $z$, we add $e$ to all sets $R_{r}$ for $2 \le r \le z$\footnote{Observe that $z$ is already in $R_1$}. 
    Observe that adding $e$ to all these sets 
    may need $O(k)$ time, but we do not need to do oracle queries in order to add 
    $e$ to these sets.
\end{itemize}

The permutation $P$ has at most $n$ elements, and we do the above operations for every such element.
Thus, the leveling algorithm may require a total of $O(n \log k)$ oracle queries.
Observe that the implicit property that we use to perform the binary search 
is the \emph{monotonicity property} which says if an element is a promoting for a set $I_{z-1}$, 
it is promoting for all sets $I_{\le z-1}$. For the cardinality constraint, the monotone property is trivial to see. 
However, it is \textit{\textbf{intricate}}  for the \textit{\textbf{matroid}}  constraint. 
We will develop a leveling algorithm for the matroid constraint, which satisfies a monotonicity property, allowing us to perform the binary search.

\paragraph{Second building block: Insertion and deletion of an element.}
Next, we explain the insertion and deletion subroutines. 
Let $\mathcal{S}$ be a sequence of insertions and deletions of elements of an underlying ground set $\ground$.
First, suppose we would like to delete an element $v$.
Observe that the set $R_0$ should contain all elements that have been inserted but not deleted so far. 
Thus, we remove $v$ from $R_0$. 
Now, two cases can occur:
\begin{itemize}
    \item \emph{Light computation:} 
        The first case is when $v \notin I_i$ for all $i \in [T]$. 
        Then, we do a light computation by   
        iterating through the levels $L_1,\cdots, L_T$, and 
        for each level $L_i$, we delete $v$ from $R_i$. 
        Handling this light computation takes zero query complexity as we do not make any oracle query. 
    \item \emph{Heavy computation:}
       However, if there exists a level $i \in [T]$ where $e_i = v$, we then  
       rebuild all levels $L_{i \le j \le T}$. To this end, we invoke the leveling algorithm 
       for the level $L_i$ to rebuild the levels $L_{i},\cdots, L_T$. 
       This computation is heavy, for which we need to make $O(|R_i|  \log k)$ oracle queries. 
\end{itemize}

When we invoke the leveling algorithm for a level $L_i$, 
it randomly permutes the elements $R_i$ and iterates through this random permutation 
to compute $I_{\ell}$, $R_{\ell}$, and $e_{\ell}$ for $i \le \ell \le T$. 
We prove that this means for every level $L_{\ell \ge i}$, the promoting element $e_{\ell}$ that we picked is sampled uniformly at random from the set $R_{\ell}$. 
This ensures that the probability that the sampled element $e_{\ell}$ being deleted is $\frac{1}{|R_{\ell}|}$.
Therefore, the expected number of oracle queries to reconstruct the levels $[i,T]$ is at most $ O( \frac{1}{|R_i|}  |R_i|  \log k) = O(\log k)$.
Since there are at most $T \le k$ levels, by the linearity of expectation, 
the expected oracle queries that a deletion can incur is $O(k\log k)$.

Next, suppose we would like to insert an element $v$. 
First of all, the set $R_0$ should contain all elements that have been inserted but not deleted so far. 
Thus, we add $v$ to $R_0$. 
Later, we iterate through levels $L_1,\cdots,L_{T+1}$, and 
for each level $L_i$, we check if $v$ is a promoting element for the previous level or not. 
If it is not, we break the loop and exit the insertion subroutine. 
However, if $v$ is a promoting element for the level $L_{i-1}$, 
we then add it to the set $R_i$ and with probability $\frac{1}{|R_i|}$, 
we let $e_i$  be $v$  and invoke the leveling algorithm 
for the level $L_{i+1}$ to rebuild the levels $L_{i+1},\cdots, L_T$. 
The proof of correctness for insertion uses similar techniques to the proof for deletion.

\paragraph{Third building block: Relax the assumption of having $OPT$.}
Our dynamic algorithm assumes the optimal value $OPT = \max_{I^* \subseteq V: |I^*| \le k} f(I^*)$ is given as a parameter.
However, in reality, the optimal value is not known in advance and it may change after every update. 
To remove this assumption, we use the well-known technique that has been also used in~\cite{DBLP:conf/nips/LattanziMNTZ20}.
Indeed, we run parallel instances of 
our dynamic algorithm for different guesses of the optimal value 
$OPT_t$ for the set $V_t$ of elements that have been inserted till time $t$, but not deleted, such that for any time $t$,
$\max_{I^* \subseteq V: |I^*| \le k} f(I^*) \in (OPT_t/(1+\epsilon), OPT_t]$ in one of the runs. 
These guesses are $(1+\epsilon)^i$ where $i\in \mathbb{Z}$. 
We apply each update on only $\mO(\log (k)/\epsilon)$ instances of our algorithm. 
See Section~\ref{sec:klogk} for the details.

\subsubsection{Submodular maximization problem  under the matroid constraint} 
The dynamic algorithm that we develop for the matroid constraint has similar building blocks as 
the cardinality constraint, but it is more intricate. 
We outline these building blocks for the matroid constraint next.

\paragraph{First building block: Leveling algorithm.}
Let $\matroid$ be a matroid whose rank is $k = rank(\mathcal{M})$. 
We first assume that we have the maximum submodular value $MAX = \max_{e \in V} f(e)$. 
We relax this assumption later. 
Our leveling algorithm builds a set of levels $L_0, L_1,\cdots,L_T$, where $T$ is 
a random variable guaranteed to satisfy $T = O(k\log(k/\epsilon))$. 
Every level $L_i$ consists of three sets $R_i$, $I_i$, and $I'_i$, 
and an element $e_i$. 
For these sets, we have the following properties: 
\begin{enumerate}
    \item $V=R_0 \supseteq R_1 \supset \cdots \supset R_T \supset R_{T + 1} = \emptyset$
    \item The sets $I_i$ are independent sets, i.e., $I_i \in \mathcal{I}$
    \item Each $I'_i$ is the union of all $I_{j}$ for $j \le i$, i.e., $I'_i = \bigcup_{j \le i} I_{j}$
    \item The sets $I'_i$ \emph{are not necessarily} independent
    \item We report the set $I_T$ as the solution
\end{enumerate}

The illustration of our construction is similar to the one for the cardinality constraint and 
is shown in Figure~\ref{fig:layering}. 
A key concept in our algorithm is again the notion of \emph{promoting elements}. 
However, the definition of promoting elements for the matroid constraint is more 
complex than that of the cardinality constraint,
and is inspired by the streaming algorithm of Chakrabarti and Kale~\cite{chakrabarti2015submodular}. 
The complexity comes from the fact that adding an element $e$ to an independent set, say $I$ may 
preserve the independency of $I$ or it may violate it by 
creating a circuit\footnote{A \emph{circuit} in a matroid $\mathcal{M}$ is a minimal dependent subset of $V$—that is, 
a dependent set whose proper subsets are all independent. }. 
In Lemma~\ref{lem:<=1circuit}, we prove that adding $e$ to an independent set can create at most one circuit.
Thus, we need to handle both cases when we define the notion of promoting elements. 

\begin{tcolorbox}[width=\linewidth, colback=white!80!gray,boxrule=0pt,frame hidden, sharp corners]

\begin{definition}[Promoting elements]
Let $L_{1 \le \ell \le T}$ be a level.
We call an element $e$, a promoting element for the level $L_{\ell}$  if 
\begin{itemize}
    \item \textbf{Property 1:} $f(I'_{\ell} + e) - f(I'_{\ell}) \ge \frac{\epsilon}{10k}\cdot MAX$, and 
    \item One of the following properties hold: 
        \begin{itemize}
            \item \textbf{Property 2: } $I_{\ell} + e$ is independent set (i.e., $I_{\ell}+e \in \mathcal{I}$) \emph{or}
            \item \textbf{Property 3: } $I_{\ell} +e$ is not independent and
            the minimum weight element $\hat{e} = \arg\min_{e' \in C} w(e')$ of 
            the set $C = \{ e' \in I_{\ell}: I_{\ell} + e - e' \in \mI\}$ satisfies 
            $2w(\hat{e}) \le {f(I'_{\ell} + e) - f(I'_{\ell})}$. 
        \end{itemize}    
\end{itemize}
\end{definition}

\end{tcolorbox}

We next explain the leveling algorithm. 
We first initialize $I_{0}$ and $I'_0$ as empty sets and let $R_0$ be the set of existing elements $V$.
We then let $R_1$ be all elements of the set $R_0$ that are promoting with respect to $L_0$. 
Observe that since $I_0 $ and $I'_0$ are empty sets, an element is  
filtered out from level $L_0$ because of Property 1.

The leveling algorithm can be called for any level $L_i$ and starting at that level, 
it builds the rest of levels $L_{i \le j \le T}$. 
Suppose our leveling algorithm is called for a level $L_i$ for $i \ge 1$. 
Let $\ell=i$.
We randomly permute the set $R_i$ and let $P$ be this random permutation. 
We iterate through the elements of $P$ and upon seeing a new element $e$, 
we check if $e$ is a promoting element for the level $L_{\ell - 1}$. 
\begin{itemize}
    \item The first case occurs if $e$ is a promoting  element for the level $L_{\ell - 1}$. 
    Note that $e$ is promoting if satisfies Property 1 and one of Properties 2 and 3.
        \begin{itemize}
            \item If the element $e$ satisfies Properties 1 and 2,  we set $I_{\ell} = I_{\ell - 1} + e$.
            \item If $e$ satisfies Properties 1 and 3, we set $I_{\ell} = I_{\ell - 1} + e - \hat{e}$.
        \end{itemize}        
        In both cases, the resulting $I_{\ell}$ is an independent set in $\mathcal{I}$. 
        We then fix the weight of $e$ to be $w(e) = \marginalgain{e}{I'_{\ell-1}}$. 
        Later, we let $I'_{\ell} := I'_{\ell - 1} + e$, and $e_{\ell} = e$. 
        Similar to the leveling algorithm that we develop for the cardinality case, 
        we now have the sets $I_{\ell}$ and $I'_{\ell}$, and the element $e_{\ell}$.
        However, the set $R_{\ell}$ is not complete yet, as some of its elements may come after $e$ in the permutation $P$. 
        We create the next level $L_{\ell+1}$ by setting $R_{\ell+1} = \emptyset$. 
        We then proceed to the next element in $P$. 
        Note that in this way, for all levels $L_{i < j \le \ell}$, the sets $R_j$ are 
        not complete and they will be complete when we reach the end of the permutation $P$. 
    \item The second case is when $e$ is not a promoting element for $L_{\ell - 1}$. 
    Here, similar to the cardinality constraint, our goal is to perform the binary search 
    to find the smallest $z \in [i,\ell-1]$ so that 
    $e$ is promoting  for the level $L_{z-1}$, but it is not promoting  for the level $L_{z}$. 
    \textbf{\textit{Interestingly, we prove the monotonicity property holds for the matroid constraint}}. 
    (The proof of this subtle property is given in Section~\ref{sec:binary:search}.)
    That is, we prove if $e$ is promoting for a level $L_{x-1}$, 
    it is promoting for all levels $L_{r \le x-1}$ and if $e$ is not promoting for a level $L_{x}$, 
    it is not promoting for all levels $L_{r \ge x}$. 
    Thus, we can do the binary search to find the smallest $z \in [i,\ell-1]$ so that 
    $e$ is promoting  for the level $L_{z-1}$, but it is not promoting  for the level $L_{z}$, which needs $\mO(\log(T))=\mO(\log(k\log(k/\epsilon))$ steps of binary search.     
    Once we find such  $z$, we add $e$ to all sets $R_{r}$ for $i < r \le z$. 
    Unlike the previous case, however,  we stay in 
    the current level $L_{\ell}$ and proceed to the next element of $P$.
    Observe that adding $e$ to all these sets 
    may need $ T = \mO(k\log(k/\epsilon))$ time, but we do not need to do oracle queries in order to add 
    $e$ to these sets.
\end{itemize}

\paragraph{Overview of the analysis:}
In order to prove the correctness of our leveling algorithm and compute its query complexity, 
we define two invariants; \textbf{\textit{level}} and \textbf{\textit{uniform}} invariants.  
The level invariant itself is a set of $5$ invariants \textbf{\textit{starter}}, \textbf{\textit{survivor}}, 
\textbf{\textit{independent}}, \textbf{\textit{weight}}, and \textbf{\textit{terminator}}. 
We show that these invariants are fulfilled by the end of the leveling algorithm (in Section~\ref{sec:level:proofs:matroid})
and after every insertion and deletion of an element (in Section~\ref{sec:update:matroid}). 

The level invariants assert that all elements that are added to $R_i$ at a level $L_i$ 
are promoting elements for the previous level. 
In other words, those elements of the set $R_{i-1}\backslash e_{i-1}$ 
that are not promoting will be filtered out and not be seen in $R_i$. 
Intuitively, this invariant provides us the approximation guarantee.
The uniform invariant asserts that for every level $L_{i \in [T]}$, 
conditioned on the random sets $R_{j \le i}$ and random elements $e_{j < i}$, 
the element $e_i$ is chosen uniformly at random from the set $R_i$. 
That is, $\Pr{e_i = e | R_{j \le i} \wedge e_{j < i}} = \frac{1}{|R_i|} \cdot \ind{e \in R_i}$\footnote{For an event $A$, we define $\ind{A}$ as the \emph{indicator function} of $A$.
That is, $\ind{A}$ is set to one if $A$ holds and is set to zero otherwise. }.
Intuitively, this invariant provides us with the randomness that we need  
to fool the adversary in the (fully) dynamic model 
which in turn helps us to develop a dynamic algorithm 
for the submodular maximization problem under a matroid constraint. 

The proof that the level and uniform invariants hold after every insertion and deletion 
is \textbf{\textit{novel}} and \textbf{\textit{subtle}}. 
This proof is given in Sections~\ref{sec:level:proofs:matroid} and~\ref{sec:update:matroid}. 
The technical part is to show that all promoting elements that are added to $R_i$ at a level $L_i$ (from the previous level) 
will be promoting after every update (i.e., insert or delete) and also, the sets $I_i$ will remain independent after updates. 
In addition, we need to show that uniformly chosen elements $e_i$ from survivor set $R_i$ 
will be uniform after every update. 

Now, we overview how we analyze the query complexity of our leveling algorithm.
Checking if an element $e$ is promoting for a level $L_{i}$ 
can be done using $O(\log(k))$ oracle queries using a binary search argument.
The proof is given in Section~\ref{sec:binary:search}.
The binary search that we perform in order to
place an element $e$ in the correct level requires $O(\log T)$ such promoting checks.
Thus, if we initiate the leveling algorithm with a set $R_i$, 
our algorithm needs $O(|R_i| \log(k) \log(T))$ oracle queries to build the levels $L_i,\cdots, L_T$ 
for $T=O(k\log(k/\epsilon))$.

\paragraph{Second building block: Insertion and deletion of an element.}
Now, we explain how to maintain the independent set $I_T$ upon insertions and deletions of elements. 
First, suppose we would like to delete an element $e$.
We iterate through levels $L_1,\cdots,L_T$ and 
for each level $L_i$ we delete $e$ from $R_i$ and we later check if $e$ is the element $e_i$ that we have picked for the level $L_i$. 
If this is the case, we then 
invoke the leveling algorithm for the set $R_i$ to \underline{reset} 
the levels $L_i,\cdots, L_T$. 
Since, the invocation of the leveling algorithm for the level $L_i$
may initiate $O(|R_i| \log(k) \log(T))$ oracle queries 
(to build the levels $L_i,\cdots, L_T$) and 
since the element $e_i$ is chosen uniformly at random from the set $R_i$ and 
we iterate through levels $L_1,\cdots,L_T$, thus, 
the worst-case expected query complexity of deletion is 
$\sum_{i=1}^T \frac{1}{|R_i|} \cdot \mO( |R_i|\cdot \log(k) \cdot \log(T)) = \mO(k\log(k)\log^2(k/\epsilon))$.

Next, suppose we would like to insert an element $e$. 
First of all, the set $R_0$ should contain all elements that have been inserted but not deleted so far. 
Thus, we add $e$ to $R_0$. 
Later, we iterate through levels $L_1,\cdots,L_T$ and 
for each level $L_i$, we check if $e$ is a promoting element for that level or not. 
If it is not, we break the loop and exit the insertion subroutine. 
However, if $e$ is indeed, a promoting element for the level $L_i$, 
we then add it to the set $R_i$ and with probability $1/|R_i|$, 
we set $e_i=e$ and invoke the leveling algorithm (with the input index $i+1$) to \underline{reset} 
the subsequent levels $L_{i+1},\cdots, L_T$. 
The query complexity of an insertion is proved similar to what we showed for a deletion.

The third block of our dynamic algorithm for the matroid constraint is to relax the assumption of knowing $MAX$. 
Relaxing this assumption is similar to what we did for the cardinality constraint. 
See Section~\ref{sec:algorithm:matroid} for the details.

\subsection{Related Work}
In this section, we state some known results for the submodular maximization problem under the matroid and cardinality constraints or some other related problems in the streaming, distributed, and dynamic models. 
In Table~\ref{table:summary}, we summarize the results in streaming and dynamic models for the submodular maximization problem under the matroid or cardinality constraint.

\begin{table*}[h]
\begin{center}
\begin{tabular}{|c|c|c|c|c|c|}
\hline
model & result & problem & approx. & query complexity  & ref. \\
\hline
 
\multirow{3}{*}{\parbox{2.5cm}{dynamic streaming model} } 
        &  \multirow{2}{*}{algorithm} & cardinality& $2+\epsilon$ & $O(\epsilon^{-1}dk\log(k))$  & \cite{DBLP:conf/icml/MirzasoleimanK017} \\
        
        &   & cardinality & $2+\epsilon$ & $O(dk\log^2(k)+d\log^3(k))$  & \cite{DBLP:conf/icml/0001ZK18} \\
        &   & matroid & $5.582+O(\epsilon)$ & $O(k+\epsilon^{-2}d\log(k))$  & \cite{DBLP:conf/icml/DuettingFLNZ22} \\
\hline

\multirow{2}{*}{\parbox{2.5cm}{insertion-only dynamic model }  }
        &  \multirow{2}{*}{algorithm} & \multirow{2}{*}{matroid}& $2+\epsilon$ & $k^{\tilde{O}(1/\epsilon)}$  & \cite{DBLP:journals/corr/abs-2111-03198} \\
        
        &   && $\frac{e}{e-1}+\epsilon$ & $k^{\tilde{O}(1/\epsilon^2)}\cdot \log(n)$  & \cite{DBLP:journals/corr/abs-2111-03198} \\
\hline

\multirow{8}{*}{\parbox{2.5cm}{fully dynamic model} }     

        &   \multirow{6}{*}{algorithm}  & \multirow{2}{*}{matroid} & \multirow{2}{*}{$4+\epsilon$}  &  $O(k\log(k)\log^3(k/\epsilon))$ &   this paper \\

        &  &&   &  $O(\frac{k^2}{\epsilon}\log(k)\log^2(n)\log^3(\frac{k}{\epsilon}))$ & \cite{dutting2023fully} \\
 
        \cline{3-6}
 
        &  & \multirow{4}{*}{cardinality}& \multirow{4}{*}{$2 + \epsilon$}  &  $O(\epsilon^{-3}k^2\log^4(n))$  & \cite{DBLP:conf/nips/Monemizadeh20}  \\

        &  &&   &  $O(\epsilon^{-4}\log^4(k)\log^2(n))$ & \cite{LattanziMNTZ20update} \\

        &  &&   &  $O(\poly(\log(n), \log(k), \frac{1}{\epsilon}))$ & \cite{pmlr-v202-banihashem23a} \\
 
        &  &&   &  $\mO(k\epsilon^{-1}\log^2(k))$  & this paper  \\
 
        \cline{2-6}
  
        &  \multirow{2}{*}{lower bound} & \multirow{2}{*}{cardinality}& $2-\epsilon$ & $n^{\tilde{\Omega}(\epsilon)}/k^3$  & \cite{DBLP:journals/corr/abs-2111-03198} \\
  
        &   && $1.712$ & $\Omega(n/k^3)$   & \cite{DBLP:journals/corr/abs-2111-03198} \\
        \hline
\end{tabular}
\end{center}
\caption{Results for the submodular maximization subject to cardinality  and matroid 
constraints. The lower bounds presented in~\cite{DBLP:journals/corr/abs-2111-03198} assume that we 
know the optimal submodular maximization value of the sub-sequence $\mathcal{S}_t$,  
where $\mathcal{S}_t$ is the set of elements that are inserted but not deleted from the beginning of the sequence $\mathcal{S}$ 
till any time $t$. 
}
\vspace{-0.4cm} 
\label{table:summary}
\end{table*}

\medskip

\paragraph{Known dynamic algorithms.} 
The study of the submodular maximization in the dynamic model is initiated 
at NeurIPS 2020 based on two independent works. 
The first work is due to Lattanzi, Mitrovic, Norouzi-Fard, Tarnawski, and Zadimoghaddam \cite{DBLP:conf/nips/LattanziMNTZ20} 
who present a randomized dynamic algorithm 
that maintains an expected $(2+\epsilon)$-approximate solution of the maximum submodular (under the cardinality constraint $k$) 
of a dynamic sequence $\mathcal{S}$. The amortized expected query complexity of their algorithm 
is $O(\epsilon^{-11}\log^6(k)\log^2(n))$. 
The second work is due to Monemizadeh~\cite{DBLP:conf/nips/Monemizadeh20} who 
presents a randomized dynamic  
algorithm with approximation guarantee $(2+\epsilon)$. 
The amortized expected query complexity of his algorithm is 
$O(\epsilon^{-3}k^2\log^5(n))$.
The original version of the algorithm in Lattanzi et al.~\cite{DBLP:conf/nips/LattanziMNTZ20} has some correctness issues, as pointed out by Banihashem, Biabani, Goudarzi, Hajiaghayi, Jabbarzade, and Monemizadeh ~\cite{pmlr-v202-banihashem23a} at ICML 2023, who also provide an alternative algorithm for solving this problem with polylogarithmic update time.
Those issues were also subsequently fixed by Lattanzi et al.~\cite{LattanziMNTZ20update} by modifying their algorithm. The query complexity of their new algorithm is $O(\epsilon^{-4}\log^4(k)\log^2(n))$ per update.
Peng's work at NeurIPS 2021 \cite{DBLP:conf/nips/Peng21} focuses on the dynamic influence maximization problem, which is a white box dynamic submodular maximization problem.
Work of Banihashem, Biabani, Goudarzi, Hajiaghayi, Jabbarzade, and Monemizadeh \cite{nonbanihashem2023dynamic} at NeurIPS 2023 solves dynamic non-monotone submodular maximization under cardinality constraint $k$.

At STOC 2022, Chen and Peng~\cite{DBLP:journals/corr/abs-2111-03198} show two lower bounds for the submodular maximization in the dynamic model. Both of these lower bounds hold even if we know the optimal submodular maximization value of the sequence $\mathcal{S}$ at any time $t$. Their first lower bound shows that any randomized algorithm that achieves an approximation ratio of $2-\epsilon$ for dynamic submodular maximization under cardinality constraint $k$ 
requires amortized query complexity  $n^{\tilde{\Omega}(\epsilon)}/k^3$. 
They also prove a stronger result by showing that any randomized algorithm for dynamic submodular maximization under cardinality constraint $k$ that obtains an approximation guarantee of $1.712$ must have amortized query complexity at least $\Omega(n/k^3)$. 

Chen and Peng~\cite{DBLP:journals/corr/abs-2111-03198} also
studied the complexity of the submodular maximization under matroid constraint in the insertion-only dynamic model (a restricted version of the fully dynamic model where deletions are not allowed) and they developed two algorithms for this problem.
The first algorithm maintains a $(2+\epsilon)$-approximate independent set of a matroid $\matroid$ such that the expected number of oracle queries per insertion is $k^{\tilde{O}(1/\epsilon)}$. Their second algorithm is a $(\frac{e}{e-1}+\epsilon)$-approximation algorithm 
using an amortized query complexity of $k^{\tilde{O}(1/\epsilon^2)}\cdot \log(n)$, where $k$ is the rank of $\matroid$ and $n = |\ground|$. 
However, these results do not work for the classical (fully) dynamic model, and they posed developing a dynamic algorithm for the submodular maximization problem under the matroid constraint in the (fully) dynamic model as an open problem. 

And as discussed previously, the concurrent work of 
Dütting et al.~\cite{dutting2023fully} at ICML 2023 provides an algorithm for dynamic submodular optimization under matroid constraint. Their algorithm has a 
$4+\epsilon$ approximation guarantee and $O(\frac{k^2}{\epsilon}\log(k)\log^2(n)\log^3(\frac{k}{\epsilon}))$ amortized expected query complexity.

\paragraph{Known (insertion-only) streaming algorithms.}
The first streaming algorithm for the submodular maximization under the cardinality constraint 
was developed by Badanidiyuru, Mirzasoleiman, Karbasi, and Krause~\cite{DBLP:conf/kdd/BadanidiyuruMKK14}. 
In this seminal work, the authors developed a $(2+\epsilon)$-approximation algorithm 
for this problem using $O(k\epsilon^{-1}\log k )$ space. 
Later, Kazemi, Mitrovic, Zadimoghaddam, Lattanzi and Karbasi \cite{DBLP:conf/icml/0001MZLK19} 
proposed a space streaming algorithm for this problem that improves the space complexity down to 
$O(k\epsilon^{-1})$.

In a groundbreaking work, Chakrabarti and Kale~\cite{chakrabarti2015submodular} at IPCO'14 
designed a streaming framework for submodular maximization problems 
under the matroid and matching constraints, as well as other constraints where independent sets are given 
either by a hypermatching constraint in $p$-hypergraphs or by the intersection of $p$ matroids.  
In particular, their streaming framework gives a $4$-approximation streaming algorithm for the 
submodular maximization under the matroid constraint using $O(k)$  space, where $k$ is 
the rank of the underlying matroid $\matroid$. 
The approximation ratio was recently improved to 3.15
by Feldman, Liu, Norouzi-Fard, Svensson, and Zenklusen~\cite{feldman2021streaming}.

Later, Chekuri, Gupta, and Quanrud~\cite{DBLP:conf/icalp/ChekuriGQ15} 
developed one-pass streaming algorithms for 
(non-monotone)  submodular maximization problems under $p$-matchoid\footnote{A set system $(N,\mathcal{I})$ 
is $p$-matchoid if there exists $m$ matroids $(N_1,\mathcal{I}_1),\cdots,(N_m,\mathcal{I}_m)$ 
such that every element of $N$ appears in the ground set of at
most $p$ of these matroids and $\mathcal{I} = \{S \subseteq 2^{\mathcal{N}}: \forall_{1 \le i \le m} S \cap N_i \in \mathcal{I}_i\}$.} constraint as well as simpler 
streaming algorithms for the monotone case that have the same bounds as 
those of Chakrabarti and Kale~\cite{chakrabarti2015submodular}. 
(These two works~\cite{chakrabarti2015submodular,DBLP:conf/icalp/ChekuriGQ15} were inspiring works for us as well). 

\paragraph{Known streaming algorithms for related submodular problems.} 
For non-monotone submodular objectives, the first streaming result was obtained by
Buchbinder, Feldman, and Schwartz~\cite{DBLP:conf/soda/BuchbinderFS15a}, 
who designed a randomized streaming algorithm achieving an $11.197$-approximation 
for the problem of maximizing a non-monotone submodular function subject
to a single cardinality constraint.

Chekuri, Gupta, and Quanrud~\cite{DBLP:conf/icalp/ChekuriGQ15} further extended the work of Chakrabarti and Kale by developing  
$(5p+2+1/p)/(1-\epsilon)$-approximation algorithm for 
the non-monotone submodular maximization problems under $p$-matchoid constraints in the insertion-only streaming model. 
They also devised a deterministic streaming algorithm achieving $(9p+O(\sqrt{p}))/(1-\epsilon)$-approximation for the same problem. 
Later, Mirzasoleiman, Jegelka, and Krause~\cite{DBLP:conf/aaai/MirzasoleimanJ018} designed a different deterministic algorithm for 
the same problem achieving an approximation ratio of $4p+4\sqrt{p}+1$.

At NeurIPS'18, Feldman, Karbasi and Kazemi~\cite{DBLP:conf/nips/FeldmanK018} improved these results 
for monotone and non-monotone submodular maximization under the $p$-matchoid constraint 
with respect to the space usage and approximation factor. 
As an example, their streaming algorithm for non-monotone submodular under $p$-matchoid 
achieves $4p+2-o(1)$-approximation that improves upon the randomized 
streaming algorithm proposed in~\cite{DBLP:conf/icalp/ChekuriGQ15}. 

\paragraph{Known dynamic streaming algorithms.}
Mirzasoleiman, Karbasi and Krause \cite{DBLP:conf/icml/MirzasoleimanK017} and
Kazemi, Zadimoghaddam and Karbasi \cite{DBLP:conf/icml/0001ZK18} proposed dynamic streaming algorithms for 
the cardinality constraint. 
In particular,  the authors in \cite{DBLP:conf/icml/MirzasoleimanK017} developed a dynamic streaming algorithm that given
a stream of inserts and deletes of elements of an underlying ground set $\ground$, $(2+\epsilon)$-approximates
the submodular maximization under cardinality constraint using $O((dk\epsilon^{-1}\log k)^2)$ space and
$O(dk\epsilon^{-1}\log k)$ average update time, where $d$ is an upper-bound for the number of deletes that are allowed.

The follow-up paper \cite{DBLP:conf/icml/0001ZK18} studies approximating submodular maximization under cardinality constraint
in three models, (1) centralized model, (2) dynamic streaming where we are allowed to insert and delete (up to $d$) elements
of an underlying ground set $\ground$, and (3) distributed (MapReduce) model.
In order to design a generic framework for all three models, they compute a coreset for submodular maximization
under cardinality constraint. Their coreset has a size of $O(k\log k+d\log^2 k)$. Out of this coreset, we can extract
a set $S$ of size at most $k$ whose $f(S)$ in expectation is at least $2$-approximation of the optimal solution.
The time to extract such a set $S$ from the coreset is $O(dk\log^2 k+d\log^3 k)$. 

The algorithms presented in \cite{DBLP:conf/icml/MirzasoleimanK017} and \cite{DBLP:conf/icml/0001ZK18} 
are dynamic streaming algorithms (not fully dynamic algorithms) whose time complexities depend on 
the number of deletions (Theorem 1 of the second reference). 
Therefore, their query complexities will be high
if we recompute a solution after each insertion or deletion.  
Indeed, if the number of deletions is linear in terms of the maximum size of the ground set $\ground$, 
it is in fact better to re-run the known leveling algorithms (say, \cite{DBLP:journals/mp/NemhauserWF78}) 
after every insertion and deletion.
A similar result was recently obtained for the submodular maximization 
under the matroid constraint. 
At ICML 2022, Duetting,  Fusco, Lattanzi, Norouzi{-}Fard, Zadimoghaddam~\cite{DBLP:conf/icml/DuettingFLNZ22}
presented a streaming $(5.582+O(\epsilon))$-approximation algorithm for 
    the deletion robust version of this problem, where the number of deletions is known to the algorithm, and they are revealed at the end of the stream. 
The space usage of their algorithm is $O(k+\epsilon^{-2}d\log(k))$, 
which is again linear if the number of deletions ($d$) is linear in terms of the maximum size of the ground set $\ground$. 
This was subsequently improved by
Zhang, Tatti, and Gionis~\cite{zhang2022coresets}.

\paragraph{Known MapReduce algorithms.}
The first distributed algorithm for the cardinality constrained submodular maximization was due to Mirrokni and Zadimoghaddam \cite{DBLP:conf/stoc/MirrokniZ15} who gave a $3.70$-approximation in $2$
rounds without duplication and a $1.834$-approximation with significant duplication of the ground set (each element being sent
to $\Theta(\frac{1}{\epsilon}\log(\frac{1}{\epsilon}))$ machines).
Later, Barbosa, Ene, Nguyen and    Ward \cite{DBLP:conf/focs/BarbosaENW16}
achieves a $(2+\epsilon)$-approximation in $2$ rounds and
was the first to achieve a $(\frac{e}{e-1}+\epsilon)$ approximation  in $O(\frac{1}{\epsilon})$ rounds.
Both algorithms require $\Omega(\frac{1}{\epsilon})$ duplication.
\cite{DBLP:conf/focs/BarbosaENW16}
mentions that without duplication, the two algorithms could be implemented in
$O(\frac{1}{\epsilon}\log(\frac{1}{\epsilon}))$ and $O(\frac{1}{\epsilon^2})$ rounds, respectively.

In a subsequent work, Liu and Vondrak \cite{DBLP:conf/soda/LiuV19} develop a simple thresholding
algorithm that with one random partitioning of the dataset (no duplication) achieves the following:
In $2$ rounds of MapReduce, they obtain a $(2+\epsilon)$-approximation 
and in $2/\epsilon$ rounds, they achieve $(\frac{e}{e-1}-\epsilon)$-approximation.
Their algorithm is inspired by the streaming algorithms that are presented in \cite{DBLP:journals/topc/KumarMVV15} and 
\cite{DBLP:journals/mst/McGregorV19}. It is also similar to the algorithm of Assadi and Khanna \cite{DBLP:conf/soda/AssadiK18}
who study the communication complexity of the maximum coverage problem.

\section{Dynamic algorithm for submodular matroid maximization}
\label{sec:algorithm:matroid}
In this section, we present our dynamic algorithm for the submodular maximization problem under the matroid constraint.
The pseudocode of our algorithm is provided in Algorithms \ref{alg:matroid} and \ref{alg:matroid-updates}. 
The overview of our dynamic algorithm is given in Section~\ref{sec:contrib} "Our contribution". 

\paragraph{Promoting Elements}
As we explained in Section~\ref{sec:contrib} "Our contribution", 
a key concept in our algorithm is the notion of \emph{promoting elements}.

\begin{tcolorbox}[width=\linewidth, colback=white!80!gray,boxrule=0pt,frame hidden, sharp corners]

\begin{definition}[Promoting elements]
\label{def:promote}
Let $L_{1 \le \ell \le T}$ be a level.
We call an element $e$, a promoting element for the level $L_j$  if 
\begin{itemize}
    \item \textbf{Property 1:} $f(I'_{\ell} + e) - f(I'_{\ell}) \ge \frac{\epsilon}{10k}\cdot MAX$, and 
    \item One of the following properties hold: 
        \begin{itemize}
            \item \textbf{Property 2: } $I_{\ell} + e$ is independent set (i.e., $I_{\ell}+e \in \mathcal{I}$) \emph{or}
            \item \textbf{Property 3: } $I_{\ell}+e$ is not independent and
            the minimum weight element $\hat{e} = \arg\min_{e' \in C} w(e')$ of 
            the set $C = \{ e' \in I_{\ell}: I_{\ell} + e - e' \in \mI\}$ satisfies 
            $2w(\hat{e}) \le {f(I'_{\ell} + e) - f(I'_{\ell})}$. 
        \end{itemize}    
\end{itemize}
\end{definition}

\end{tcolorbox}

\begin{algorithm}[ht]
  \caption{\matroidleveling$(\matroid, MAX)$ }
  \label{alg:matroid}
  \begin{algorithmic}[1]
    \Function{\init}{$V$}
        \State $I_{0} \gets \emptyset,\quad I'_{0} \gets \emptyset, \quad R_{0} \gets V$
        \State $R_{1} \gets \{ e \in R_0 : \replacementTester{}(I_0, I'_0, e, w[I_0]) \ne \err\}$
        \State Invoke \MatroidConstLevel$(i = 1)$
    \EndFunction
    
 \rule{15cm}{0.4pt} 
 
    \Function{\MatroidConstLevel}{$i$}
        \State Let $P$ be a random permutation of elements of $R_{i}$ and $\ell \gets i$ 
        \For{$e$ in $P$}\label{line:iterate_P}
            \If{ \replacementTester$(I_{\ell-1}, I'_{\ell-1}, e, w[I_{\ell-1}]) \ne \err$}
                \State $y \gets$ \replacementTester$(I_{\ell-1}, I'_{\ell-1}, e, w[I_{\ell-1}]))$ and $z \gets \ell$
                \label{line:const_level_matroid:def_y}
                \State Fix the weight $w(e) \gets f(I'_{\ell - 1} + e) - f(I'_{\ell -1})$, and set the element $e_{\ell} \gets e$
                \label{line:const_level_matroid:w}
                \State Let  $I_{\ell} \gets (I_{\ell-1} + e) \backslash y$,  \ \ $I'_{\ell} \gets I'_{\ell-1} + e$,  \ \ $R_{\ell+1} \gets  \emptyset$, \ \  and then $\ell \gets \ell + 1$\label{line:constlevelmatroid:increase_ell} 

            \Else
                \State Run binary search to find the lowest $z \in [i, \ell-1]$ such that \replacementTester$(I_z, I'_z, e, w[I_z]) = \err$ \label{line:const_level_matroid:binary_search}
            \EndIf
                \For{$r \gets i+1$ \textbf{to} $z$}  
                    \State $R_r \gets R_r + e$\label{line:constlevelmatroid:addR_bs}
                \EndFor
          \EndFor
        \State \Return $T \gets \ell-1$ which is the final $\ell$ that the for-loop above returns subtracted by one
    \EndFunction
    
 \rule{15cm}{0.4pt} 
 
    \Function{\replacementTester}{$I, I', e, w[I]$}
        \label{alg:replacement}
          \If{$f(I' \cup \{e\}) - f(I') \notin [\frac{\epsilon}{10k}\cdot MAX, MAX]$}\label{line:remove_w_not_in_range}
              \State \Return \err
          \EndIf
          \If{$I + e \in \mI$}
             \State \Return $\emptyset$
          \EndIf
          \State $C \gets \{ e' \in I: I + e - e' \in \mI\}$ and let $\hat{e} \gets \arg\min_{e' \in C} w(e')$ \label{line:matroid:promote:find_min}
          \If{$2\cdot w(\hat{e}) \le  f(I' + e) - f(I')$} 
             \State \Return $\{\hat{e}\}$
          \Else \ \ 
             \State \Return \err
          \EndIf
    \EndFunction
  \end{algorithmic}
\end{algorithm}

We define the function 
$\replacementTester(I_{\ell}, I'_{\ell}, e, w[I_{\ell}])$ for an element $e \in V$ 
with respect to the level $L_{\ell}$ which 
\begin{itemize}
    \item returns $\emptyset$ if properties 1 and 2 hold; 
    \item returns $\hat{e}$ if properties 1 and 3 hold;
    \item returns $\err$ otherwise.
\end{itemize}

Subroutine \replacementTester{} in Algorithm \ref{alg:matroid} implements this function. 
This subroutine checks if an element $e \in V$ is a promoting element for a level $L_{\ell}$ or not.  
In case that $e$ is a promoting element for $L_{\ell}$, the subroutine \replacementTester{} finds 
an element $e'$ (if it exists) that satisfies Property 3 of definition~\ref{def:promote} and replaces it by $e$. 

Our leveling algorithm consists of three subroutines, \init{}, \MatroidConstLevel{}, and \replacementTester{}. 
We explained in above Subroutine \replacementTester{}. 
In Subroutine \init{}, we first initialize $I_{0}$ and $I'_0$ as empty set and set $R_0$ to the ground set $V$.
We then let $R_1$ be all elements of the set $R_0$ that are promoting  with respect to $L_0$. 
Observe that since $I_0 $ and $I'_0$ are empty sets, 
if an element $e$ 
filtered out from the level $L_0$, 
i.e., $e\in L_0$ but $e\notin L_1$, then
$e$ was filtered
because of Property 1.
Finally, we invoke \MatroidConstLevel{} for the level $L_1$, to build all the remaining levels.
Subroutine \MatroidConstLevel{} implements our leveling algorithm 
that we gave an overview of it in Section~\ref{sec:contrib} "Our contribution".

\begin{algorithm}[ht]
\caption{\matroidupdates$(\matroid, MAX)$ }
\label{alg:matroid-updates}
    \begin{algorithmic}[1]
    \Function{\deletev}{$v$}
        \State $R_0 \gets R_0 - v$
        \For{ $i \gets 1$ \textbf{to} $T$} 
            \If{$v \notin R_i$}  \State \Break
            \EndIf
            \State $R_i \gets R_i - v$ 
            \If{$e_i = v$}
            \State Invoke $\MatroidConstLevel(i)$
            \label{line:reset:delete:matroid}
            \State \Break
            \EndIf
        \EndFor
    \EndFunction
    
\rule{15cm}{0.4pt} 

    \Function{\insertv}{$v$}
        \State $R_0 \gets R_0 + v$. 
        \For{$i \gets 1$ \textbf{to} $T+1$}
            \If{\replacementTester$(I_{i-1}, I'_{i-1}, v, w[I_{i-1}])$ = \err}
            \label{line:mat:insert:break}
             \State \Break
            \EndIf
            \State $R_{i} \gets R_{i} + v$.
            \State Let $p_i=1$ with probability $\frac{1}{|R_i|}$, and otherwise $p_i=0$ \label{line:p:insert:klogk}
            \If{$p_i=1$}   \label{line:mat:insert:if}
                \State {$e_i \gets v$, \quad $w(e_i) \gets f(I'_{i - 1} + v) - f(I'_{i -1})$, \quad 
                $y \gets \replacementTester{}(I_{i-1}, I'_{i-1}, v, w[I_{i-1}])$ } \label{line:mat:insert:setw}
                \State {
                $I_i \gets I_{i-1} + v - y, \quad I'_i \gets I'_i + v$} \label{line:mat:insert:setI}
                \State {$R_{i+1} = \{e' \in R_i: \replacementTester{}(I_i, I'_i, e', w[I_i]) \ne \err \}$} \label{line:mat:insert:setRi+1}
                \State {$\MatroidConstLevel{}(i+1)$
                }
                \State \Break
            \EndIf
        \EndFor
    \EndFunction
    
  \end{algorithmic}
\end{algorithm}

\paragraph{Relaxing $MAX$ assumption.}

Our dynamic algorithm assumes the maximum value $\max_{e \in V} f(e)$ is given as a parameter.
However, in reality, the maximum value is not known in advance and it may change after every insertion or deletion. 
To remove this assumption, we run parallel instances of 
our dynamic algorithm for different guesses of the maximum value $MAX_t$ at any time $t$ of the sequence $\mathcal{S}_t$, such that 
$\max_{e \in V_t} f(e) \in (MAX_t/2, MAX_t]$ in one of the runs. 
Recall that $V_t$ is the set of elements that have been inserted but not deleted from the beginning of the sequence till time $t$. 
These guesses that we take are $2^i$ where $i\in \mathbb{Z}$. 
If $\rho$ is the ratio between the maximum and minimum non-zero possible value 
of an element in $V$, 
then the number of parallel instances of our algorithm will be 
$\mO(\log\rho)$. 
This incurs an extra $\mO(\log\rho)$-factor in the query complexity of our dynamic algorithm.

\begin{algorithm}[h] 
\caption{Unknown $MAX$} 
\begin{algorithmic}[1]
    \State Let $\mathcal{A}_i$ be the instance of our dynamic algorithm, for which $MAX=2^i$
    
     \rule{15cm}{0.4pt} 
    
    \Function{UpdateWithoutKnowingMAX}{$e$}
        \For{\textbf{each} $i \in \left[\ceil{\log{f(e)}},\floor{\log{\left(\frac{10k}{\epsilon}\cdot f(e)\right)}}\right]$} \Comment{$\frac{\epsilon}{10k}\cdot{2^i} \leq f(e) \leq 2^i$}
            \State Invoke $\update(e)$ for instance $\mathcal{A}_i$
        \EndFor
    \EndFunction
\end{algorithmic}
\end{algorithm}

Next, we show how to replace this extra factor with 
an extra factor of $\mO(\log{(k/\epsilon)})$ which is independent of $\rho$. 
We use the well-known technique that has been also used in~\cite{DBLP:conf/nips/LattanziMNTZ20}. 
In particular, for every element $e$, we add it to those instances $i$ 
for which we have $\frac{\epsilon}{10k}\cdot{2^i} \leq f(e) \leq 2^i$.
The reason is if the maximum value of $V_t$ is within the range $(2^{i-1},2^i]$ and 
$f(e) > 2^i$, then $f(e)$ is greater than the maximum value and can safely be ignored for the instance $i$ 
that corresponds to the guess $2^i$. 
On the other hand, we can safely ignore all elements $e$ whose $f(e) < \frac{\epsilon}{10k}\cdot{2^i}$, 
since these elements will never be a promoting element in the run with $MAX = 2^i$.
This essentially means that every element $e$ is added to at most 
$\mO(\log{(k/\epsilon)})$ parallel instances. 
Thus, after every insertion or deletion, 
we need to update only $\mO(\log{(k/\epsilon)})$ instances of our dynamic algorithm.

\section{Analysis of dynamic algorithm for submodular matroid}
\label{sec:matroid:analysis}

In this section, we prove the correctness of our \MatroidConstLevel{}, \insertv{}, and \deletev{} algorithms. We will also compute the query complexity of each one of them. To analyze our randomized algorithm, for any variable $x$ in our pseudo-code, we use $\bold{x}$ to denote it as a random variable and use $x$ itself to denote its value in an execution.
The most frequently used random variables in our analysis are as follows:

\begin{tcolorbox}[width=\linewidth, colback=white!80!gray,boxrule=0pt,frame hidden, sharp corners]

\begin{itemize}
    \item  We denote by $\bE_i$ the random variable corresponding to the element $e_i$ picked at level $L_i$. 
     \item  We denote by $\bR_i$ the random variable that corresponds to the set $R_i$.
    \item  The random variable $\bT$ corresponds to $T$, which is the index of the last non-empty level created. 
    Indeed, for a level $L_i$ to be existent and non-empty, $\bT \ge i$ should hold. 
    
    \item We define
    $H_i = (e_1, \dots, e_{i-1}, R_0, \dots, R_{i})$
    as \emph{the partial configuration up to the level $L_i$}.
    Note that $R_{i}$ is included in this definition, while $e_{i}$ is not.
    $\bH_i := (\bE_1, \dots, \bE_{i-1}, \bR_0,  \bR_{1}, \dots, \bR_{i})$ is the random variable corresponding to the partial configuration $H_i$.
\end{itemize}

\end{tcolorbox}

We break the analysis of our algorithm into a few steps. 

\paragraph{Step 1: Analysis of binary search.} 
In the first step, we prove that the binary search that 
we use to speed up the process of finding the right levels for non-promoting elements  
works. Indeed, we prove that if $e \in V$ is a promoting element for a level $L_{z-1}$, 
it is promoting  for all levels $L_{r \le z-1}$ and 
if $e$ is not promoting  for the level $L_{z}$, 
it is not promoting  for all levels $L_{r \ge z}$. 
Therefore, because of this monotonicity property, we can do a binary search to find the smallest $z \in [i,\ell-1]$ so that 
$e$ is promoting  for the level $L_{z-1}$, but it is not promoting  for the level $L_{z}$. 
Additionally, we show that checking
whether $e$ is promoting for a level $L_{z}$ can be done with $O(\log(k))$ queries using a binary search argument.

\paragraph{Step 2: Maintaining invariants.}
We define six invariants, and we show that these invariants \emph{hold} when \init{} is run, and our whole data structure gets built, \emph{and are preserved} after every insertion and deletion of an element. 

\begin{tcolorbox}[width=\linewidth, colback=white!80!gray,boxrule=0pt,frame hidden, sharp corners]
\textbf{Invariants:} 
\begin{enumerate}

\item \textbf{Level invariants.} 
\begin{enumerate}
    \item \textbf{Starter.} $R_0=V$ and $I_0 = I'_0 = \emptyset$
    \item \textbf{Survivor.}  For $1 \leq i \leq T + 1$, $R_i = \{ e \in R_{i-1} - e_{i-1}: \replacementTester{}(I_{i-1}, I'_{i-1}, e, w[I_{i-1}]) \ne \err\}$ 
    \item  \textbf{Independent.} For $1 \leq i \leq T$, $I_i = I_{i-1} + e_i - \replacementTester{}(I_{i-1}, I'_{i-1}, e_i, w[I_{i-1}])$, and $I'_i = \cup_{j \le i} I_j$
    \item  \textbf{Weight.} For $1 \leq i \leq T$, $e_i \in R_i$ and  $w(e_i) = f(I'_{i-1} + e_i) - f(I'_{i-1})$
    \item \textbf{Terminator.} $R_{T+1}=\emptyset$
\end{enumerate}
     \item  \textbf{Uniform invariant.} For all $i \ge 1$, conditioned on the random variables $\bT$ and $\bH_i$, the element $e_i$ is chosen uniformly at random 
          from the set $R_i$. That is, 
            $ \Pr{\bE_i = e |  \bT \geq i \text{ and } \bH_i = H_i  }= \frac{1}{|R_i|}\cdot \ind{e\in R_i}  $.

\end{enumerate}
\end{tcolorbox}

The survivor invariant says that all elements that are added to $R_i$ at a level $L_i$ 
are promoting elements for that level. 
In other words, those elements of the set $R_{i-1} - e_{i-1}$ 
that are not promoting will be filtered out and not be seen in $R_i$. 
The terminator invariant shows that the recursive construction of levels stops when the survivor set becomes empty. 
The independent invariant shows that the sets $I_i$ are independent sets of the matroid $\matroid$, and $I'_i$ is equal to the union of $I_1, \dots, I_i$.
The weight invariant explains that the weight of every element $e_i$  added to the independent set $I_i$ is defined with respect to the marginal gain it adds to the set $I'_{i-1}$, and it is fixed later on. 
Intuitively, the level invariants provide the approximation guarantee.

The uniform invariant asserts that, conditioned on $\bT \geq i$ which means that $L_{i}$ is a non-empty level and $\bH_i = H_i$, which implies that $e_1, \cdots, e_{i - 1}$ are chosen and $R_i$ is well-defined, the element $\bE_i$ is uniform random variable over the set $R_i$. That is, $\Pr{\bE_i = e | \bT \geq i \text{ and } \bH_i = H_i } = \frac{1}{|R_i|} \cdot \ind{e \in R_i}$.
Intuitively, this invariant provides us with the randomness that we need  
to fool the adversary in the (fully) dynamic model 
which in turn helps us to develop a dynamic algorithm 
for the submodular matroid maximization.

\paragraph{Step 3: Query complexity.}
In the third part of the proof, 
we show that if the \textbf{uniform} invariant holds, we can bound 
the worst-case expected query complexity of the leveling algorithm, and later,  
the worst-case expected query complexity of the insertion and deletion operations. 
\paragraph{Step 4: Approximation guarantee.}
Finally, in the last step of the proof, we show that if the survivor, terminator, independent and 
weights invariants hold, we can report an independent set $I_T \in \mathcal{I}$ 
whose submodular value is an $(4+\epsilon)$-approximation of the optimal value.

\subsection{Monotone property and binary search argument}
\label{sec:binary:search}
Recall that we defined the function 
$\suit(I_j, I'_j, e, w[I_j])$ for an element $e \in V$ 
with respect to the level $L_j$ which 
\begin{itemize}
    \item returns $\emptyset$ if properties 1 and 2 hold;
    \item returns $\hat{e}$ if properties 1 and 3 hold;
    \item returns $\err$ otherwise.
\end{itemize}
Here properties 1, 2, and 3 are the ones that we defined in Definition~\ref{def:promote}.
Recall that if the first two cases occur, we say that $e$ 
is a promoting element with respect to the level $L_j$. 
In this section, we consider a boolean version of the function $\suit(I_j, I'_j, e, w[I_j])$. 
We denote this boolean function by $\boolsuit(e,L_j)$ which is 
$True$ if either of the first two cases happen. That is, 
when $\suit(I_j, I'_j, e, w[I_j])$ returns either $\emptyset$ or $\hat{e}$; 
otherwise, $\boolsuit(e,L_j)$ returns $False$.

\begin{lemma}
\label{lm:binary_search_argument}
Let $L_j$ be an arbitrary level of the Algorithm~\dynamicmatroid{}, where $1 \le j \le T$. 
Let $e$ be an arbitrary element of the ground set. 
If $\boolsuit(e,L_{j-1})$ returns $False$, then 
$\boolsuit(e,L_{j})$ returns $False$. 
\end{lemma}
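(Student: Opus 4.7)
The plan is to prove the contrapositive: assuming $\boolsuit(e,L_j)$ returns True, I will show $\boolsuit(e,L_{j-1})$ returns True. Since $\boolsuit$ returns True exactly when Property 1 holds together with Property 2 or Property 3, I break the argument into pieces. Property 1 propagates by submodularity: the \emph{independent} invariant gives $I'_{j-1}\subseteq I'_j$, so $f(I'_{j-1}+e)-f(I'_{j-1})\ge f(I'_j+e)-f(I'_j)\ge \frac{\epsilon}{10k}\cdot MAX$. If $I_{j-1}+e\in\mathcal{I}$, Property 2 at $L_{j-1}$ holds trivially. The remaining case is $I_{j-1}+e\notin\mathcal{I}$, whose target is Property 3 at $L_{j-1}$, i.e.\ $2w(\hat{e}_{j-1})\le f(I'_{j-1}+e)-f(I'_{j-1})$. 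Let $K$ be the unique circuit of $I_{j-1}+e$ guaranteed by Lemma~\ref{lem:<=1circuit}.

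My first sub-claim is that $I_j+e\notin\mathcal{I}$ as well, which forces Property 3 (rather than Property 2) at $L_j$. By the \emph{independent} invariant, $I_j=I_{j-1}+e_j-y$ with $y=\emptyset$ or $y\in I_{j-1}$. When $y=\emptyset$, $I_j+e\supseteq I_{j-1}+e$ is dependent. When $y\in I_{j-1}$, let $K_0$ be the unique circuit of $I_{j-1}+e_j$, which contains $y$ by construction inside $\replacementTester$. If $y\notin K$ then $K\subseteq(I_{j-1}+e)-y\subseteq I_j+e$, so $I_j+e$ is dependent. If $y\in K$ then $K\ne K_0$ (because $e\in K\setminus K_0$ and $e_j\in K_0\setminus K$), and Lemma~\ref{lem:elimination:axiom} applied to $K$ and $K_0$ at their common element $y$ yields a circuit inside $(K\cup K_0)-y\subseteq I_j+e$.

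My second sub-claim is the weight-monotonicity $w(\hat{e}_{j-1})\le w(\hat{e}_j)$; combined with Property 3 at $L_j$ and submodularity it gives $2w(\hat{e}_{j-1})\le 2w(\hat{e}_j)\le f(I'_j+e)-f(I'_j)\le f(I'_{j-1}+e)-f(I'_{j-1})$, which is exactly Property 3 at $L_{j-1}$. I will establish the monotonicity by showing every $e'\in C_j$ satisfies $w(e')\ge w(\hat{e}_{j-1})$. Let $K'$ be the unique circuit of $I_j+e$. When $y=\emptyset$, or when $y\in I_{j-1}$ and $y\notin K$, uniqueness of circuits forces $K'=K$, so $C_j=C_{j-1}$ and the bound is immediate. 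Otherwise $y\in K$, and the same uniqueness argument forces $e_j\in K'$ (else $K'\subseteq I_{j-1}+e$ would give $K'=K$, contradicting $y\in K\setminus K'$). For $e'=e_j$ the $\replacementTester$ rule $2w(y)\le w(e_j)$, together with $y\in K\setminus\{e\}=C_{j-1}$, yields $w(e_j)\ge 2w(y)\ge 2w(\hat{e}_{j-1})$. For $e'\ne e_j$ and $e'\in K$, $e'\in C_{j-1}$ directly. For $e'\ne e_j$ and $e'\notin K$, I invoke the strong form of circuit elimination on $K,K'$ at their common element $e$ to obtain a circuit through $e'$ contained in $(K\cup K')-e\subseteq I_{j-1}+e_j$; by uniqueness this circuit is $K_0$, so $e'\in K_0-\{e_j\}$ and $e'\ne y$, placing $e'$ in the exchange set $C=K_0-e_j$ that $\replacementTester$ minimized over when selecting $y$, and hence $w(e')\ge w(y)\ge w(\hat{e}_{j-1})$.

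The main obstacle is the last bullet above: bounding $w(e')$ for an element of $C_j$ that lies outside $K\cup\{e_j\}$. This step genuinely requires relating three distinct circuits ($K$, $K_0$, and $K'$) through the common elements $e$, $e_j$, and $y$, and it depends simultaneously on the unique-circuit Lemma~\ref{lem:<=1circuit} and the weighted exchange criterion hard-wired into $\replacementTester$. It is precisely this step that shows the Chakrabarti--Kale weighted promoting rule is compatible with the monotone structure the binary search in \MatroidConstLevel{} relies on.
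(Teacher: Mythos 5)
Your contrapositive proof is correct and tracks the paper's argument closely: the case split on $y=\emptyset$, $y\notin K$, $y\in K$ mirrors the paper's subcases ($|I_j|=|I_{j-1}|+1$, then $\hat e_j\notin C$, then $\hat e_j\in C$), and your weight-monotonicity $w(\hat e_{j-1})\le w(\hat e_j)$ is exactly the bound the paper needs. The one genuine technical difference is in the hard case $y\in K$. You bound each $e'\in C_j = K'-e$ separately, and when $e'\notin K\cup\{e_j\}$ you invoke the \emph{strong} circuit-elimination axiom (a circuit through a prescribed $e'\in K'\setminus K$ inside $(K\cup K')-e$) to deposit $e'$ in $K_0 - e_j$. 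That is a true matroid theorem, but it is strictly more than what the paper's Lemma~\ref{lem:elimination:axiom} states. The paper instead applies \emph{weak} elimination to $K$ and $K_0$ at the common element $y$, obtaining a circuit $C''\subseteq(K\cup K_0)-y\subseteq I_j+e$; uniqueness (Lemma~\ref{lem:<=1circuit}) forces $C''=K'$, and the single containment $K'\subseteq(K\cup K_0)-y$ already yields the weight bound for all of $C_j$ at once, since every element of $(K\cup K_0)-y-e$ has weight at least $w(\hat e_{j-1})$ (using $w(e_j)\ge 2w(y)$ for the one extra element $e_j$). If you rewrite your hard case along these lines you can replace the strong axiom by the stated weak one. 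Finally, dispose of $e=e_j$ at the outset: in the contrapositive direction, if $e=e_j$ then $I_j+e=I_j\in\mathcal{I}$ and your Sub-claim~1 is false as worded, but the conclusion is immediate because $\boolsuit(e_j,L_{j-1})$ returns $True$ by the survivor invariant.
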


Suppose for the moment that this lemma is correct. 
Then by applying a simple induction, 
we can show the function $\boolsuit(e,L_j)$ 
is monotone which means that the function $\suit(I_j, I'_j, e, w[I_j])$ is monotone. 
Thus, for every arbitrary element $e$, 
it is possible to perform a binary search on the interval $[i, \ell-1]$ 
to find the smallest $z\in [i, \ell-1]$ such that $\boolsuit(e,L_{z-1}) = True$ and $\boolsuit(e,L_z) = False$.

Now we prove the lemma. 

\begin{proof}
Suppose that $\boolsuit(e,L_{j-1})$ returns $False$. 
It means that either property 1 or both properties 2 and 3 do not hold.
If property 1 does not hold, then $f(I'_{j-1}+e)-f(I'_{j-1}) < \frac{\eps}{10k}\cdot MAX$. Since $I'_{j-1} \subseteq I'_j$ and $f$ is submodular, we have $f(I'_{j}+e)-f(I'_{j}) \leq f(I'_{j-1}+e)-f(I'_{j-1}) < \frac{\eps}{10k}\cdot MAX$, which means that $\boolsuit(e,L_j) = False$.

For the remainder of the proof, we assume that both properties 2 and 3 do not hold.
This means that $I_{j-1}+e$ is not independent, and for
the minimum weight element $\hat{e} := \arg\min_{e' \in C} w(e')$ of 
the set $C := \{ e' \in I_{j-1}: I_{j-1} + e - e' \in \mI\}$, we have  
$ {f(I'_{j-1} + e) - f(I'_{j-1})} < 2w(\hat{e})$. 
Now, let us consider level $L_{j}$. 
There are two cases to consider: $|I_{j}|=|I_{j-1}|+1$ and $|I_{j}| = |I_{j-1}|$.

For the first case, we have $I_{j} = I_{j-1} + {e_j}$.
Thus, we have $I_{j-1} \subseteq I_{j}$.
Now, let us consider the element $e$. 
For the set $C := \{ e' \in I_{j-1}: I_{j-1} + e - e' \in \mI\}$, 
we have $C \subseteq I_{j-1} \subseteq  I_{j}$ 
which means that the circuit (dependent set) $C + e \subseteq  I_{j} + {e}$. 
Note that since $I_j$ is an independent set, we also know that
$C+e$ is the only circuit of $I_j + {e}$ according to Lemma~\ref{lem:<=1circuit}. 
Recall that 
$ {f(I'_{j-1} + e) - f(I'_{j-1})} < 2w(\hat{e})$ where 
$\hat{e}$ is the minimum weight element $\hat{e} := \arg\min_{e' \in C} w(e')$.
Since $I'_{j-1} \subseteq I'_{j}$, then by the submodularity of the function $f$, 
we have 
$$
f(I'_{j} + e) - f(I'_{j}) \leq
f(I'_{j-1} + e) - f(I'_{j-1}) < 2w(e') \enspace .
$$
Hence, $\boolsuit(e,L_{j})$ returns $False$ in this case.

For the second case, we have $I_{j} = I_{j-1} - {\hat{e_j}} + e_j $. 
This means that $I_{j-1} + e_j$ is not an independent set. 
Thus, the set $C' := \{ e' \in I_{j-1}: I_{j-1} + e_j - e' \in \mI\}$ has 
a minimum weight element $\hat{e_j}$  
that is replaced by $e_j$ to obtain the independent set $I_{j}$. 

Now, we consider two subcases.
Case (I) is $\hat{e_j} \in C$ and Case (II) is $\hat{e_j} \notin C$.

\begin{figure}[h]
\begin{center}
\includegraphics{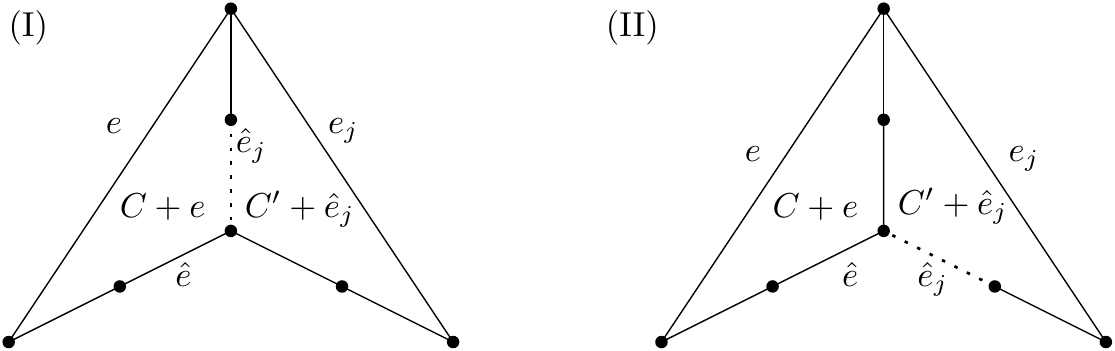}
\end{center}
\caption{Illustration of $I_j+e$ for the subcases (I) and (II) in Lemma~\ref{lm:binary_search_argument}.
$C+e$ and $C'+\hat{e}_j$ are circuits.
Case (I) is $\hat{e_j} \in C$. Then there is a circuit $C'' \subseteq (C+e)\cup (C'+e_j) -{\hat{e}_j}$.
Case (II) is $\hat{e_j} \notin C$. Then $C+e \subseteq I_j+e$.}
\label{fig:binaryS:lemma}
\end{figure}

First, we consider Case (I) which is $\hat{e_j} \in C$. 
Thus, $\hat{e_j} \in C\cap C'$. 
Note that $C \subseteq I_{j-1}$ 
and $e_j \notin I_{j-1}$, so $e_j \notin C$, which implies that $(C + e)$ and $(C' + e_j)$ are two different circuits.
Since $\hat{e_j} \in (C+e)\cap (C'+e_j)$, 
there is a circuit $C'' \subseteq (C+e) \cup (C'+e_j) - {\hat{e_j}}$ 
according to Lemma ~\ref{lem:elimination:axiom}. 
In addition, 
$(C+e) \cup (C'+e_j) \subseteq I_{j-1} + e + e_j = I_j +e + \hat{e_j}$. 
    Since $\hat{e_j} \notin C''$, we then have $C'' \subseteq I_j + e$.
    Recall that $\hat{e}$ and $\hat{e_j}$ are the minimum weight element in $C$ and $C'$, 
    respectively. Since $\hat{e_j} \in C$, then $w(\hat{e})\leq w(\hat{e_j})$. 

    Let $e''$ be the minimum weight element in $C''-e$.
    Since $C'' \subseteq (C+e) \cup (C'+e_j)$  and $w(e_j) > w(\hat{e_j})$, 
    we have $w(e'') \geq \min(w(\hat{e}), w(\hat{e_j})) = w(\hat{e})$.
    Since $I'_{j-1} \subseteq I'_{j}$ and $f$ is a submodular function, 
    we obtain the following: 

   $$
    f(I'_{j} + e) - f(I'_{j}) \leq
    f(I'_{j-1} + e) - f(I'_{j-1}) <
    2\cdot w(\hat{e}) \leq
    2\cdot w(e'') \enspace.
    $$

This essentially means that $I_{j} + e$ is not independent as $C'' \subseteq I_{j} + e$, and
     $f(I'_j + e) - f(I'_j) < 2\cdot w(e'')$, 
    where $e''$ is the minimum weight element in  $C'' - e$. 
    Thus, $\boolsuit(e,L_{j})$ returns $False$. 

Finally, we consider  Case (II) which is $\hat{e_j} \notin C$. 
In this case, $C+e \subseteq I_{j-1}-{\hat{e_j}} + e \subseteq I_j + e$. 
Note that $C+e$ is the only circuit of $ I_j + e$ by Lemma~\ref{lem:<=1circuit}. 
Recall $f(I'_{j-1} + e) - f(I'_{j-1}) < 2\cdot w(\hat{e})$ and $I'_{j-1} \subseteq I'_{j}$. 
Hence, by the submodularity of $f$ we have
$f(I'_{j} + e) - f(I'_{j}) \leq
f(I'_{j-1} + e) - f(I'_{j-1}) <
2\cdot w(\hat{e})$. 
Thus, $\boolsuit(e,L_{j})$ returns $False$ proving the lemma.
\end{proof}

\begin{lemma}\label{lm:find_min} Let $I \in \mI$ be an independent set and $e$ be an element such that $I \cup \{e\} \notin \mI$.
  Define $C := \{e': I + e - e' \in \mI\}$.
  Let $w : I \cup \{e\} \to \mathbb{R}^{\ge 0}$ be an arbitrary weight function and
  define $\hat{e} := \argmin_{e'\in C} w(e')$. The element $\hat{e}$ can be found
  using at most $O(\log(|I|))$ oracle queries.
\end{lemma}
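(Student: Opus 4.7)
The plan is to reduce the problem to a monotone binary search over the elements of $I$ sorted by weight, exploiting the uniqueness of the circuit created by adjoining $e$ to $I$.

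First, I will use Lemma~\ref{lem:<=1circuit} to pin down the structure of $C$. Since $I \in \mI$ and $I + e \notin \mI$, the set $I + e$ contains exactly one circuit, call it $C_0$, and necessarily $e \in C_0$ (otherwise $C_0 \subseteq I$, contradicting $I \in \mI$). An element $e' \in I$ satisfies $I + e - e' \in \mI$ if and only if removing $e'$ destroys this circuit, i.e., $e' \in C_0$. Hence $C = C_0 \setminus \{e\}$, and $\hat e$ is exactly the minimum-weight element of $C_0 \setminus \{e\}$.

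Next, I will sort the elements of $I$ in \emph{decreasing} order of weight as $e_1, e_2, \dots, e_m$ (with $m = |I|$), and define prefixes $J_i := \{e_1, \dots, e_i\} \subseteq I$. Sorting uses no oracle queries. The crucial monotonicity is this: $J_i + e$ is dependent if and only if $C_0 \setminus \{e\} \subseteq J_i$, which is equivalent to every element of $C_0 \setminus \{e\}$ appearing among the $i$ largest-weight elements of $I$. In particular, if $J_i + e \notin \mI$ then $J_{i+1} + e \notin \mI$, so dependence is monotone in $i$. Moreover, if $i^*$ is the smallest index with $J_{i^*} + e \notin \mI$, then $e_{i^*}$ is the element of $C_0 \setminus \{e\}$ of smallest weight (equivalently, largest index in the decreasing order), which is precisely $\hat e$.

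Therefore the algorithm is to run a binary search on $i \in \{1, \dots, m\}$ to locate this smallest $i^*$, where each probe asks the matroid independence oracle whether $J_i + e \in \mI$. This uses $O(\log m) = O(\log |I|)$ queries, giving the bound. The only substantive point in the proof is the equivalence in the previous paragraph, which follows cleanly from the uniqueness of the circuit $C_0$ given by Lemma~\ref{lem:<=1circuit}; no deep matroid machinery is needed beyond that.
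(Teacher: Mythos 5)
Your proof is correct and uses the same core algorithm as the paper: sort by decreasing weight, binary-search for the first prefix that becomes dependent once $e$ is adjoined, and return the newly added element. The one substantive difference is how the structural claim ($e_{i^*}$ lies in $C$ and has minimum weight there) is justified: you invoke Lemma~\ref{lem:<=1circuit} to identify $C$ with the unique circuit minus $e$ and then read off everything from set containment, while the paper instead argues directly with $\spn$ and rank to show $I + e - e_i \in \mI$. Your route is arguably cleaner, since the unique-circuit lemma is already proved in the paper and it immediately gives the characterization $C = C_0 \setminus \{e\}$, from which monotonicity of the binary-search predicate and minimality of $e_{i^*}$ are transparent. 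A further small improvement in your version is that you sort only $I$ and always adjoin $e$ explicitly; the paper sorts $I \cup \{e\}$, which tacitly requires $w(e)$ to be defined (the caller only passes $w[I]$) and, if $w(e)$ were small, would let the search land on $e$ itself — a corner case the paper glosses over and your formulation sidesteps.
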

\begin{proof}
  Let $e_1, \dots, e_{|I| + 1}$ denote an ordering of $I \cup \{e\}$
  such that $w(e_1) \ge w(e_2) \dots \ge w(e_{|I| + 1})$. 
  Let $i$ denote the smallest index such that
  $\{e_1, \dots, e_{i}\} \notin \mI$. Such an index exists
  because $\{e_1, \dots, e_{|I| + 1}\} = I \cup \{e\} \notin \mI$. 
  We claim that $\hat{e} = e_i$.
  We note that the element $e_i$ can be found
  using a binary search over $[|I| + 1]$ because
  for any $j$,
  if $\{e_1, \dots, e_j\} \notin \mI$, then
  $\{e_1, \dots, e_{j+1}\} \notin \mI$ as well.

  To prove this, we first claim that $e_i \in C$. To
  see why this holds, we first observe that
  since $\{e_1, \dots, e_{i-1}\}$ is independent but
  $\{e_1, \dots, e_i\}$ is not, we have
  $e_i \in \spn(\{e_1, \dots, e_{i-1}\}) \subseteq \spn(I + e - e_i)$.
  Therefore, since $e_j \in \spn(I + e - e_i)$ for all $j\ne i$, we have
  $I + e \subseteq \spn(I + e - e_i)$, which implies
  \begin{align*}
    \rank(I + e - e_i) \ge \rank(I + e) \ge \rank(I) = |I|
    = |I+ e - e_i|,
  \end{align*}
  which implies $I + e - e_i \in \mI$ as claimed.

  We need to show that for any $e' \in C$, we have $w(e_i) \le w(e')$. 
  Assume for contradiction that $w(e') < w(e_i)$. It follows that
  $e' = e_j$ for some $j > i$. By definition of $C$,
  we must have
  $I + e - e_j \in \mI$, which implies
  $\{e_1, \dots, e_{j-1}\} \in\mI$. Since $i < j$, this further implies
  $\{e_1, \dots, e_i\} \in  \mI$, which is not possible by definition of $i$.
\end{proof}

\subsection{Correctness of invariants after \MatroidConstLevel{} is called}
\label{sec:level:proofs:matroid}

In this section, we focus on the previously defined invariants at the end of the execution of the algorithm $\MatroidConstLevel{}(j)$. 
We first provide a definition explaining what we mean by stating that level invariants partially hold.

\begin{tcolorbox}[width=\linewidth, colback=white!80!gray,boxrule=0pt,frame hidden, sharp corners]

\begin{definition}
    For $j \ge 1$, we say that 
    \emph{the level invariants partially hold for the first $j$ levels} if the followings hold.
    \begin{enumerate}
        \item \textbf{Starter.} $R_0 = V$ and $I_0 = I'_0 = \emptyset$
        \item \textbf{Survivor.}  For $1 \leq i \le j$, $R_i = \{ e \in R_{i-1} - e_{i-1}: \replacementTester{}(I_{i-1}, I'_{i-1}, e, w[I_{i-1}]) \ne \err\}$ 
        \item  \textbf{Independent.} For $1 \leq i \le j-1$, $I_i = I_{i-1} + e_i - \replacementTester{}(I_{i-1}, I'_{i-1}, e_i, w[I_{i-1}])$, and $I'_i = \cup_{j \le i} I_j$
        \item  \textbf{Weight.} For $1 \leq i \le j - 1$, $e_i \in R_i$ and  $w(e_i) = f(I'_{i-1} + e_i) - f(I'_{i-1})$
    \end{enumerate}
\end{definition}

\end{tcolorbox}

Next, we have the following theorem, in which we ensure that all level invariants hold after the execution of $\MatroidConstLevel{}(j)$ given the assumption that level invariants partially hold for the first $j$ levels when $\MatroidConstLevel{}(j)$ is invoked. 
This theorem will be of use in the following sections in showing that level invariants hold after each update. It can also independently prove that level invariants hold after \init{} is run.

\begin{theorem}
    \label{thm:invariants:leveling}
If before calling $\MatroidConstLevel{}(j)$, the level invariants partially hold for the first $j$ levels,
then after the execution of $\MatroidConstLevel{}(j)$, level invariants fully hold.
\end{theorem}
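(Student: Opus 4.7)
The plan is to verify each of the five parts of the level invariant at the end of $\MatroidConstLevel(j)$. Observe first that $\MatroidConstLevel(j)$ writes only to $R_r$ for $r \ge j+1$ and to $I_r, I'_r, e_r, w(e_r)$ for $r \ge j$. Hence the Starter invariant, the Survivor invariant for $1 \le i \le j$, and the Independent and Weight invariants for $1 \le i \le j-1$ are inherited unchanged from the partial invariants. For Independent and Weight at indices $j \le i \le T$, verification is by direct inspection of lines~\ref{line:const_level_matroid:w}--\ref{line:constlevelmatroid:increase_ell}, which assign $e_\ell$, $w(e_\ell)$, $I_\ell$, and $I'_\ell$ in exactly the forms prescribed; the only non-routine step is the identity $I'_\ell = \bigcup_{j' \le \ell} I_{j'}$, which I would prove by induction on $\ell$ using that any element $y$ removed when forming $I_\ell$ already belongs to $I_{\ell-1}$ and therefore to the prior union, so it is not lost.

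The heart of the argument is the Survivor invariant for $j+1 \le i \le T+1$. For each $e \in R_j$, let $z_e$ denote the value of $z$ computed when $e$ is processed (the old $\ell$ in the if-branch, or the lowest failure index returned by the binary search in line~\ref{line:const_level_matroid:binary_search} in the else-branch). By line~\ref{line:constlevelmatroid:addR_bs}, an element $e \in R_j$ ends up in $R_r$ for $r \ge j+1$ if and only if $r \le z_e$. Setting
\[
A_r := \{e \in R_j : z_e \ge r\}, \qquad B_r := \{e \in R_{r-1} - e_{r-1} : \suit(I_{r-1}, I'_{r-1}, e, w[I_{r-1}]) \ne \err\},
\]
I would establish $A_r = B_r$ by induction on $r \in [j+1, T+1]$, leveraging Lemma~\ref{lm:binary_search_argument} (monotonicity of $\suit$). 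The inclusion $A_r \subseteq B_r$ follows because $z_e \ge r$ forces $e \ne e_{r-1}$ and, by monotonicity, $e$ to pass at $L_{r-1}$. For the reverse inclusion, I would rule out $z_e = r-1$: the if-branch case forces $e = e_{r-1}$ and the else-branch case forces $e$ to fail at $L_{r-1}$, each contradicting $e \in B_r$.

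For Terminator, let $\ell_f$ be the final value of $\ell$, so $T = \ell_f - 1$ and $R_{T+1} = R_{\ell_f}$. Every $e \in R_j$ satisfies $z_e < \ell_f$: the else-branch gives $z_e \le \ell - 1 < \ell_f$ trivially, while the if-branch with $z_e \ge \ell_f$ would increment $\ell$ past $\ell_f$, contradicting finality. Hence no element is ever added to $R_{\ell_f}$, which was initialised to $\emptyset$ when $L_{\ell_f - 1}$ was finalised. The main obstacle is the inductive equality $A_r = B_r$: this is where Lemma~\ref{lm:binary_search_argument} enters essentially, since without the monotonicity of $\suit$ the binary search would not return the correct cutoff $z_e$, and the correspondence between the algorithm's placements and the Survivor invariant would collapse.
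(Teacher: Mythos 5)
Your proof is correct and follows essentially the same decomposition as the paper's: it checks each of the five invariants separately, observes that the call only touches levels $\ge j$ so the partial invariants carry over, and uses Lemma~\ref{lm:binary_search_argument} to justify that each element $e$ is placed in exactly the prefix $R_{j+1},\dots,R_{z_e}$ dictated by the Survivor invariant (your $A_r = B_r$ formulation makes this slightly more explicit than the paper's per-element argument in Lemma~\ref{lm:survivor:leveling}, but the content is the same). The Independent, Weight, and Terminator parts match the paper's Lemmas~\ref{lm:independent:leveling},~\ref{lm:weight:leveling}, and~\ref{lm:terminator:leveling}; the one minor thing worth flagging is that $e_i \in R_i$ in the Weight invariant is not purely ``direct inspection'' but itself rests on the placement logic (your $A_r = B_r$ argument), as the paper handles explicitly in Lemma~\ref{lm:weight:leveling}.
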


\begin{proof}
    Considering that the starter invariant holds by the assumption of the theorem and needs no further proof, we have broken the proof of this theorem into four lemmas, each considering one of the survivor, independent, weight, and terminator invariants separately. 
    These mentioned lemmas and their proofs can be found in detail in Appendix \ref{sec:invar_appendix} as Lemmas~\ref{lm:survivor:leveling},~\ref{lm:independent:leveling},~\ref{lm:weight:leveling}, and~\ref{lm:terminator:leveling} in Section \ref{subs:thm:invariants:levelin}.
\end{proof}

Finally, we prove a lemma that says knowing that the level invariants are going to hold after the execution of $\MatroidConstLevel{}(j)$, 
a modified version of uniform invariant will also hold after this execution. We use this lemma in the next sections to prove that the uniform invariant holds after each update. It also shows that uniform invariant holds after \init{} is run since the previous theorem had proved that level invariants would hold. 

\begin{lemma}[Uniform invariant]
\label{lm:uniform:leveling}
If $\MatroidConstLevel{}(j)$ is invoked and 
the level invariants are going to hold after its execution, then for any $i \ge j$ we have $ \Pr{\bE_i = e |  \bT \geq i \text{ and } \bH_i = H_i  }= \frac{1}{|R_i|}\cdot \ind{e\in R_i}  $.
\end{lemma}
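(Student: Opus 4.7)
The only source of randomness in $\MatroidConstLevel(j)$ is the uniform permutation $\sigma$ of $R_j$ that the outer loop iterates over; once $\sigma$ is fixed, the algorithm is deterministic. My plan is twofold: first I will give an explicit description of $\bE_i$ as a function of $\sigma$, showing that $e_i$ is the first element of the final set $R_i$ to appear in $\sigma$; then I will derive uniformity by a swap-symmetry argument.

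For the first step, I will fix a realizable $H_i$ with $R_i \ne \emptyset$ (which is equivalent to $\bT \ge i$) and argue that every element of $R_i$ occurs in $\sigma$ after all of $e_j, \dots, e_{i-1}$. When a generic element $e$ is processed by the outer loop, let $\ell$ be the working level at that moment. If $e$ passes $\suit(I_{\ell-1}, I'_{\ell-1}, e, w[I_{\ell-1}])$, then $e$ becomes $e_\ell$ and is subsequently added to $R_r$ for $r \in [j+1,\ell]$; otherwise the binary-search branch returns some $z \in [j, \ell-1]$ and $e$ is added to $R_r$ for $r \in [j+1,z]$. In either case $e$ only enters $R_r$ for $r \le \ell$, so $e \in R_i$ forces $\ell \ge i$, meaning $e$ is processed only after $e_j,\dots,e_{i-1}$ have already been fixed. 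Once the working level has advanced to $i$, the next processed element that passes $\suit(I_{i-1}, I'_{i-1}, \cdot, w[I_{i-1}])$ becomes $e_i$. Using the survivor invariant (supplied by the hypothesis of the lemma) together with the fact that $e_j, \dots, e_{i-1} \in I'_{i-1}$ have zero marginal value against $I'_{i-1}$ and therefore violate Property~1 of Definition~\ref{def:promote}, I will then conclude that the elements of $R_j \setminus \{e_j,\dots,e_{i-1}\}$ promoting for $L_{i-1}$ are exactly the elements of $R_i$. Consequently, $e_i$ is the first element of $R_i$ in the order induced by $\sigma$.

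The second step is a swap argument. Fixing $e, e' \in R_i$, I let $\phi$ be the involution on permutations of $R_j$ that exchanges the positions of $e$ and $e'$; since $\sigma$ is uniform, $\phi$ preserves its distribution. Because both $e$ and $e'$ lie after $e_{i-1}$ in $\sigma$ by the first step, the prefix of $\sigma$ ending at $e_{i-1}$ is unchanged by $\phi$, so the values of $e_{j'}$, $I_{j'}$, $I'_{j'}$ for $j' \le i-1$ are preserved, and hence so are $L_{j'}$ for $j' \le i-1$ and the sets $R_{j'}$ for $j' \le i$. Therefore $\phi$ maps permutations yielding $(\bH_i = H_i, \bE_i = e)$ bijectively onto those yielding $(\bH_i = H_i, \bE_i = e')$. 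This gives $\Pr{\bE_i = e \mid \bH_i = H_i, \bT \ge i} = \Pr{\bE_i = e' \mid \bH_i = H_i, \bT \ge i}$ for every pair $e, e' \in R_i$, which yields the claimed uniform distribution over $R_i$.

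The main obstacle lies in the first step, specifically in justifying that the final set $R_i$ depends only on $L_{i-1}$ and not on the ordering of $\sigma$. Because the binary-search branch can add elements to earlier $R_r$'s long after the algorithm has moved past level $i$, I will need to invoke the monotonicity of $\boolsuit$ from Lemma~\ref{lm:binary_search_argument} to verify that the final $R_i$ coincides with $\{e \in R_j \setminus \{e_j,\dots,e_{i-1}\} : e \text{ is promoting for } L_{i-1}\}$. This combinatorial identity is precisely what allows the swap $\phi$ to preserve the configuration $H_i$ and thus drives the symmetry argument.
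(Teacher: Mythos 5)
Your proposal is correct and follows essentially the same skeleton as the paper's proof. The two load-bearing facts are the same in both: (a)~$e_i$ is the first element of the (final) set $R_i$ to appear in the permutation, because elements can only be inserted into $R_r$ for $r\le\ell$ and only $e_\ell$ itself is ever placed into $R_\ell$; and (b)~once $e_j,\dots,e_{i-1}$ are fixed, the final set $R_i$ is determined independently of the ordering of the remaining elements (the paper phrases this as ``Observation~2,'' relying on the fact that the level invariants hold at termination; you prove it directly by unfolding the survivor invariant and invoking the monotonicity Lemma~\ref{lm:binary_search_argument} to collapse the chain of promotion conditions to the single condition ``promoting for $L_{i-1}$''). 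The only genuine difference is the last mile: the paper conditions on the observed prefix $\bM_i$ and appeals to the fact that sampling without replacement makes the first element of $R_i$ among the unseen elements uniform, then averages via total probability; you instead exhibit a measure-preserving involution on permutations that swaps two candidate values of $\bE_i$ while holding $\bH_i$ fixed, and deduce uniformity from exchangeability. Both are standard ways to finish and yield the same conclusion; yours avoids introducing $\bM_i$, while the paper's avoids having to verify that the swap fixes the entire configuration. One small remark: your appeal to the marginal-gain-zero argument to show $e_j,\dots,e_{i-1}\notin R_i$ is unnecessary — the survivor invariant already removes $e_{m-1}$ when passing from $R_{m-1}$ to $R_m$, so this follows by a trivial induction — but it is not incorrect.
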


\begin{proof}
At the beginning of $\MatroidConstLevel{}(j)$, we take a random permutation of elements in $R_j$. Making a random permutation is equivalent to sampling all elements without replacement. In other words, instead of fixing a random permutation $P$ of $R_j$ and iterating through $P$ in Line \ref{line:iterate_P}, we can repeatedly sample a random element $e$ from the unseen elements of $R_j$ until we have seen all of the elements. 
Hence, in the following proof, we assume our algorithm uses sampling without replacement.

Given this view, we make the following claims.
    
    \textbf{Observation 1.}
    $e_i$ is the first element of $R_i$ seen in the permutation.~\\
            This is because before $e_i$ is seen,
            the value of $\ell$ is at most $i$. It is also clear from the algorithm that when an element $e$ is considered, it can only be added to sets $R_{x}$ for $x\le \ell$, both when $y=\err$ and when $y\ne \err$.
            Furthermore, $e$ can only be added to $R_{\ell}$ if $e=e_{\ell}$. Therefore, no element can be added to $R_i$ before $e_i$ is seen.
            
    \textbf{Observation 2.}
    Once $e_1, \dots, e_{i-1}$ have been seen, the set $R_i$ is uniquely determined.\\
    Note that $R_i$ is uniquely determined \emph{even though the algorithm has not observed its elements yet}. 
    This is because regardless of the randomness of $\MatroidConstLevel{}(j)$, the level invariants will hold after its execution. 
    This implies that the content of the set $R_i$ only depends on the value of $(\bE_1, \dots, \bE_{i - 1})$, which is not going to change after it is set to be equal to $(e_1, \dots, e_{i - 1})$.

    Let the random variable $\bM_i$ denote the sequence of elements that our algorithm observes until setting $\bE_{i-1}$ to be $e_{i - 1}$, including $e_{i-1}$ itself.
    In other words, if $e_{i-1}$ is the $x$-th element of the permutation $P$, $M_i$ is the first $x$ elements of $P$.
    ~\\
    Based on the above facts, conditioned on $\bM_i=M_i$, 
    \textbf{(a) }the value of $\bR_i$, or in other words $R_i$ is uniquely determined.
    \textbf{(b)} $e_i$ is going to be the first element of $R_i$ that the algorithm observes. Therefore,
    since we assumed that the algorithm uses sampling without replacement, $\bE_i$ is going to have a uniform distribution over $R_i$, i.e.,
    \begin{align*}
        \Pr{\bE_i = e | 
        \bT \ge i,
        \bM_i = M_i
        }
        = \frac{1}{|R_i|} \ind{e \in R_i} \enspace .
    \end{align*}

    By the law of total probability, we have 
    \begin{align*}
        \Pr{\bE_i = e_i | 
        \bT \ge i,
        \bH_i = H_i
        }
        &=
        \Exu{M_i}{
        \Pr{\bE_i = e_i | 
        \bT \ge i,
        \bH_i = H_i,
        \bM_i=M_i
        }
        } \enspace, 
    \end{align*}
    where the expectation is taken over all $M_i$ with positive probability.

    Also, note that knowing that $\bM_i = M_i$ uniquely determines the value of $\bH_i$ as well. This is because $M_i$ includes $(e_1, \dots, e_{i-1})$ and, with similar reasoning to what we used for Observation 2, we can say that $R_1, \dots, R_{i}$ are uniquely determined by $(e_1, \dots, e_{i-1})$.
    
Since we are only considering $M_i$ with positive probability, and $\bH_i$ is a function of $\bM_i$ given the discussion above, all the forms of $M_i$ that we consider in our expectation are the ones that imply $\bH_i = H_i$. Therefore, we can drop the condition $\bH_i=H_i$ from the condition $\bH_i=H_i, \bM_i = M_i$, which implies
\[
        \Pr{\bE_i = e_i | \bT \ge i, \bH_i = H_i }
        = \Exu{M_i}{ \Pr{\bE_i = e_i | \bT \ge i, \bM_i=M_i} }
        = \Exu{M_i}{ \frac{1}{|R_i|} \ind{e_i \in R_i} }
        =\frac{1}{|R_i|} \ind{e_i \in R_i} \enspace ,
\]
    as claimed.
\end{proof}

\subsection{Correctness of invariants after an update}
\label{sec:update:matroid}
In our dynamic model, 
we consider a sequence $\mathcal{S}$ of updates to the underlying ground set $V$ 
where at time $t$ of the sequence $\mathcal{S}$, we observe an update which can be the deletion of an element $e \in V$ or insertion of an element $e \in V$. 
We assume that an element $e$ can be deleted at time $t$, if it is in $V$ meaning that it was not deleted after the last time it was inserted.

We use several random variables for our analysis, including $\bE_i$, $\bR_i$, $\bT$, and $\bH_i$. Upon observing an update at time $t$, 
we should distinguish between each of these random variables and their corresponding values before and after the update.
To do so, we use the notations $\bY^-$ and $Y^{-}$ to denote a random variable and its value before time $t$ when $e$ is either deleted or inserted, and we keep using $\bY$ and $Y$ to denote them at the current time after the execution of update.
As an example,
    $\bH_i^- := (\bE_1^-, \dots, \bE_{i-1}^-, \bR_0^-,  \bR_{1}^-, \dots, \bR_{i}^-)$ 
    is the random variable that corresponds to the partial configuration $H_i^- = (e_1^-, \dots, e_{i-1}^-, R_0^-, \dots, R_{i}^-)$.

\subsubsection{Correctness of invariants after every insertion}

We first consider the case when the update at time $t$ of the sequence $\mathcal{S}$ is an insertion of an element $v$. 
In this section, we prove the following theorem.

\begin{theorem}
\label{mat_insert:invariants}
If before the insertion of an element $v$, the level invariants and uniform invariant hold, then they also hold after the execution of \insertv$(v)$. 
\end{theorem}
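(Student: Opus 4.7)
My plan is to split the proof into two parts: first verifying the five level invariants, and then verifying the uniform invariant. For each part I will do a case analysis on how the insertion procedure terminates. Let $i^\star$ denote the (random) level at which the loop of \insertv{} exits.

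For the level invariants, the exit is either because $v$ fails the $\suit$ test at level $i^\star$, or because $v$ is chosen as the new $e_{i^\star}$ and $\MatroidConstLevel(i^\star+1)$ is invoked. In the first case, for every $\ell < i^\star$ the element $v$ passed the $\suit$ test for $L_{\ell-1}$ and was correctly appended to $R_\ell$, while for $\ell \ge i^\star$ the monotonicity of $\boolsuit$ proved in Lemma~\ref{lm:binary_search_argument} forces $v$ to remain non-promoting at $L_{\ell-1}$, so $v$ must not appear in $R_\ell$, exactly what the algorithm enforces; since no $e_\ell$, $I_\ell$, $I'_\ell$, or weight is altered, the independent, weight, and terminator invariants are inherited unchanged. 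In the selected case the same analysis handles $\ell < i^\star$, while at level $i^\star$ the pseudo-code explicitly sets $e_{i^\star}=v$, $w(v)$, $I_{i^\star}$, $I'_{i^\star}$, and $R_{i^\star+1}$ so that the first four level invariants partially hold for the first $i^\star+1$ levels; in particular $v$ is automatically absent from $R_{i^\star+1}$ because $f(I'_{i^\star}+v)-f(I'_{i^\star})=0$ causes Property~1 of Definition~\ref{def:promote} to fail. Theorem~\ref{thm:invariants:leveling} then applies to the subsequent call $\MatroidConstLevel(i^\star+1)$ and establishes every level invariant in full.

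For the uniform invariant, fix $i$, condition on $\bT \ge i$ and $\bH_i = H_i$, and split on whether $v \in R_i$. If $v \notin R_i$, the configuration $H_i$ itself distinguishes two sub-cases. Sub-case (a): no $e_\ell$ in $H_i$ equals $v$, meaning that the insertion broke at some $j^\star \le i$; then $\bE_i=\bE_i^-$, and the post-insertion event $\{\bH_i=H_i\}$ factorises into the pre-insertion event $\{\bH_i^-=H_i^-\}$ (where $H_i^-$ strips $v$ from the affected $R_j$'s) together with independent Bernoulli events for the coin flips at levels $<j^\star$, all independent of $\bE_i^-$, so the pre-insertion uniform invariant yields $\Pr{\bE_i=e \mid \bH_i=H_i}=\tfrac{1}{|R_i^-|}\ind{e\in R_i^-}=\tfrac{1}{|R_i|}\ind{e\in R_i}$. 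Sub-case (b): $e_\ell=v$ for some $\ell<i$, so $\MatroidConstLevel(\ell+1)$ was invoked; since the level invariants were already shown to hold, Lemma~\ref{lm:uniform:leveling} delivers uniformity of $\bE_i$ over $R_i$. If $v\in R_i$, then $v$ survived the $\suit$ tests at $L_0,\dots,L_{i-1}$ and was not selected at any earlier level; at level $i$ the algorithm flips a fresh coin of bias $1/|R_i|$. Combining that coin outcome with the pre-insertion uniform distribution of $\bE_i^-$ over $R_i^-$ gives, for every $e\in R_i=R_i^-\cup\{v\}$, the value $\tfrac{1}{|R_i|}\ind{e=v}+\tfrac{|R_i|-1}{|R_i|}\cdot\tfrac{1}{|R_i^-|}\ind{e\in R_i^-}=\tfrac{1}{|R_i|}$, as required.

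The main obstacle will be the measure-theoretic bookkeeping in the second part: one must verify that conditioning on the post-insertion event $\{\bH_i=H_i\}$ is equivalent to conditioning on the corresponding pre-insertion event $\{\bH_i^-=H_i^-\}$ together with the outcomes of the freshly drawn coins inside \insertv{}, and that those coins are independent of $\bE_i^-$ and of the pre-insertion randomness. This decoupling is what allows the coin at level $i$ to combine cleanly with the old uniform distribution over $R_i^-$, and once it is justified the remaining probability computation is the short convex-combination step shown above.
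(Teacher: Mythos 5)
Your proposal is correct and follows essentially the same route as the paper's proof: both decompose the theorem into a level-invariant lemma and a uniform-invariant lemma, both track the level $i^\star$ at which \insertv{} exits (or invokes $\MatroidConstLevel$), both reduce the post-$\MatroidConstLevel$ part to Theorem~\ref{thm:invariants:leveling} and Lemma~\ref{lm:uniform:leveling}, and both rely on the equivalence between the post-insertion event $\{\bH_i = H_i,\bT\ge i\}$ and the pre-insertion event $\{\bH_i^- = H_i^-,\bT^-\ge i\}$ together with the freshly drawn coins $p_1=\dots=p_{i-1}=0$, combined with independence of those coins from the pre-insertion randomness.

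The one organizational difference is in the uniform-invariant case split: you partition first on $v\in R_i$ vs.\ $v\notin R_i$ and then on whether some $e_\ell = v$, whereas the paper partitions first on whether some $e_\ell = v$ (its Cases~1 and~2) and folds the $v\in R_i$/$v\notin R_i$ distinction into a single computation via the indicator $\ind{v\in R_i}$. The two partitions are logically equivalent (since $e_\ell=v$ for $\ell<i$ forces $v\notin R_i$) and arrive at the same convex-combination formula. You also make explicit a point the paper leaves implicit: that $e_{i^\star}=v$ is automatically excluded from the newly set $R_{i^\star+1}$ on Line~\ref{line:mat:insert:setRi+1} because $v\in I'_{i^\star}$ forces the marginal gain $f(I'_{i^\star}+v)-f(I'_{i^\star})$ to be zero, so Property~1 of Definition~\ref{def:promote} fails; this is what reconciles the code (which does not subtract $e_{i^\star}$) with the survivor invariant (which does). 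That observation is needed for the survivor invariant to hold partially at level $i^\star+1$, and stating it explicitly strengthens the argument.
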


We break the proof of this theorem into Lemmas ~\ref{mat_insert_level} and ~\ref{mat_insert_uni}. 
Note that we use Lemma ~\ref{mat_insert_level} in the proof of Lemma ~\ref{mat_insert_uni}. However, Lemma ~\ref{mat_insert_uni} would not be used in the proof of ~\ref{mat_insert_level}, so no loop would form when combined to prove the theorem. Proof of Lemma \ref{mat_insert_level} is written in detail in Appendix \ref{sec:invar_appendix} Section \ref{subs:mat_insert_level}.

\begin{lemma} [Level invariants]
\label{mat_insert_level}
If before the insertion of an element $v$ the level invariants (i.e., starter, survivor, independent, weight, and terminator) hold, 
then they also hold after the execution of \insertv$(v)$. 
\end{lemma}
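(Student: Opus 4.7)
The plan is to case on why the \textbf{for}-loop inside \insertv$(v)$ terminates. Let $i^*$ denote the iteration at which the loop breaks. Either \suit$(I_{i^*-1}, I'_{i^*-1}, v, w[I_{i^*-1}]) = \err$ and we break immediately (\textbf{Case A}), or the \suit{} test passes and the Bernoulli coin $p_{i^*} = 1$ fires, after which the algorithm performs the local updates at level $i^*$ and invokes $\MatroidConstLevel(i^*+1)$ (\textbf{Case B}). The a priori third possibility---exhausting all $T+1$ iterations without breaking---cannot arise: the pre-insertion terminator invariant gives $|R_{T+1}^{-}| = 0$, so if the loop reaches iteration $T+1$ and \suit{} succeeds then after the assignment $R_{T+1} \gets R_{T+1}+v$ we have $|R_{T+1}| = 1$, forcing $p_{T+1} = 1$ deterministically. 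Hence Case B is entered no later than $i^* = T+1$.

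In Case A, the only changes the data structure undergoes are the additions of $v$ to $R_0, R_1, \dots, R_{i^*-1}$; no $e_j$, $I_j$, $I'_j$, or $w[I_j]$ is touched and $T$ is unchanged. The independent and weight invariants therefore carry over verbatim from their pre-insertion forms. The starter invariant is immediate since both $V$ and $R_0$ gained exactly the element $v$. The terminator invariant is preserved because $R_{T+1}$ is only potentially altered when $i^* = T+1$, in which case the failing \suit{} test prevents $v$ from being added to $R_{T+1}$. For the survivor invariant at a level $1 \le j \le i^*$, any element $e\ne v$ keeps its correct promoting status because $I_{j-1}, I'_{j-1}, e_{j-1}, w[I_{j-1}]$ are unchanged; and the element $v$ lies in $R_{j-1}$ (since the loop reached iteration $j$), and the pass/fail decision the algorithm made \emph{before} updating $R_j$ is exactly the \suit{} condition required by the invariant. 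For $j > i^*$, the monotonicity Lemma~\ref{lm:binary_search_argument} inductively forces $v$ to be non-promoting for all such $L_{j-1}$, so the unmodified $R_j$ continues to satisfy the survivor condition.

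In Case B, the strategy is to verify that the \emph{partial} level invariants for the first $i^*+1$ levels hold at the moment $\MatroidConstLevel(i^*+1)$ is invoked, and then appeal to Theorem~\ref{thm:invariants:leveling} to obtain all five level invariants after the subsequent execution. For $j \le i^*-1$ the Case~A argument applies unchanged. At $j = i^*$, the assignments $e_{i^*} \gets v$, $w(e_{i^*}) \gets f(I'_{i^*-1}+v) - f(I'_{i^*-1})$, $I_{i^*} \gets I_{i^*-1} + v - y$ with $y = \suit(I_{i^*-1}, I'_{i^*-1}, v, w[I_{i^*-1}])$, and $I'_{i^*} \gets I'_{i^*-1} + v$ (interpreting the pseudocode's $I'_i \gets I'_i + v$ in this intended sense) match the independent and weight conditions verbatim, using $I_{i^*-1} - y \subseteq I'_{i^*-1}$ to get $I'_{i^*-1} \cup I_{i^*} = I'_{i^*-1} + v$. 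For the survivor condition at $j = i^*+1$, the algorithm's definition $R_{i^*+1} = \{e' \in R_{i^*} : \suit(I_{i^*}, I'_{i^*}, e', w[I_{i^*}]) \ne \err\}$ matches the invariant up to the exclusion of $e_{i^*} = v$, but this discrepancy is harmless because $v \in I'_{i^*}$ gives $f(I'_{i^*}+v) - f(I'_{i^*}) = 0 < \tfrac{\epsilon}{10k}\cdot MAX$, so $v$ is automatically filtered by Property~1 of \suit{}. The main obstacle I anticipate is careful book-keeping: one must track exactly which of $R_j, I_j, I'_j, e_j, w[I_j]$ change and verify that every quantity consumed inside a \suit{} call at levels $\le i^*$ is identical to its pre-insertion value, which is what makes the ``promoting status of other elements is unchanged'' argument rigorous. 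Once the partial invariants are established, Theorem~\ref{thm:invariants:leveling} delivers the full level invariants---including the terminator at the new value of $T$---completing the proof.
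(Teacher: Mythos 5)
Your proposal is correct and follows essentially the same route as the paper's proof: case on how the insertion loop terminates, verify the partial level invariants up to the level where $\MatroidConstLevel$ is (or is not) invoked, and then invoke Theorem~\ref{thm:invariants:leveling} to finish. You are somewhat more explicit than the paper in a few places (that the loop must break by iteration $T+1$ since $|R_{T+1}|=1$ forces $p_{T+1}=1$, the intended reading of the $I'_i \gets I'_i + v$ line, and that $v$ is automatically filtered from $R_{i^*+1}$ by Property~1), but the underlying argument and decomposition are the same.
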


\begin{lemma} [Uniform invariant]
\label{mat_insert_uni}
If before the insertion of an element $v$ the level and uniform invariants hold, then the uniform invariant also holds after the execution of \insertv$(v)$. 
\end{lemma}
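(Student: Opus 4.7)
The plan is a case analysis over what the \insertv{} routine does at or before level $i$, partitioning the event $\{\bT \ge i, \bH_i = H_i\}$ into three mutually exclusive sub-events:
\begin{itemize}
    \item $\mathcal{E}_A$: at some $l \le i-1$ the coin flip $p_l$ was $1$, so $\MatroidConstLevel(l+1)$ was invoked and rebuilt level $i$ from scratch;
    \item $\mathcal{E}_B$: no such $l$ existed and the for-loop of \insertv{} exited before $v$ was added to $R_i$ (either because $v$ failed the promote test at some level $j \le i$, or simply because the loop never reached $i$), so $R_i = R_i^-$ and $e_i = e_i^-$;
    \item $\mathcal{E}_C$: no such $l$ existed but the for-loop reached level $i$ and added $v$ to $R_i$, so $R_i = R_i^- + v$ and either $p_i = 0$ (so $e_i = e_i^-$) or $p_i = 1$ (so $e_i = v$).
\end{itemize}

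On $\mathcal{E}_A$, Lemma~\ref{mat_insert_level} guarantees that the level invariants hold after \insertv{}, and hence also after the nested call $\MatroidConstLevel(l+1)$. Applying Lemma~\ref{lm:uniform:leveling} to that call with $j = l+1 \le i$ immediately yields $\Pr{\bE_i = e \mid \mathcal{E}_A, \bT \ge i, \bH_i = H_i} = \frac{1}{|R_i|}\ind{e \in R_i}$, since the post-call state coincides with the final post-\insertv{} state for levels $\ge j$.

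On $\mathcal{E}_B$, the level-$i$ data is unchanged by the update, so I re-express the conditioning in terms of the pre-update configuration $H_i^-$ (which coincides with $H_i$ on levels $j \ge 1$ except that some prefix $R_1, \dots, R_{i'-1}$ had $v$ stripped off, and $R_0^- = R_0 - v$) and invoke the pre-update uniform invariant, giving $\Pr{\bE_i = e \mid \mathcal{E}_B, \bT \ge i, \bH_i = H_i} = \frac{1}{|R_i^-|}\ind{e \in R_i^-} = \frac{1}{|R_i|}\ind{e \in R_i}$. On $\mathcal{E}_C$, the coin $p_i$ is independent of the pre-update randomness and is $1$ with probability $1/|R_i|$: this gives $\Pr{\bE_i = v \mid \mathcal{E}_C, \bT \ge i, \bH_i = H_i} = 1/|R_i|$, while for $e \in R_i^-$ the pre-update uniform invariant combined with $\Pr{p_i = 0} = |R_i^-|/|R_i|$ yields $\Pr{\bE_i = e \mid \mathcal{E}_C, \bT \ge i, \bH_i = H_i} = \frac{|R_i^-|}{|R_i|} \cdot \frac{1}{|R_i^-|} = \frac{1}{|R_i|}$. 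Summing the three conditional probabilities via the law of total probability closes the argument.

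The main obstacle is translating the post-update conditioning on $\bH_i = H_i$ back into a pre-update conditioning on $\bH_i^- = H_i^-$ that is compatible with the pre-update uniform invariant. Concretely, on $\mathcal{E}_B \cup \mathcal{E}_C$ one must argue that the map from the pre-update configuration together with the Insert-internal randomness (the values of $p_1, \dots, p_i$) to the post-update configuration is such that conditioning on $\{\bH_i = H_i\} \cap \mathcal{E}_X$ identifies a unique pre-update $H_i^-$, and that the coin flip $p_i$ (on $\mathcal{E}_C$) is independent of $\bE_i^-$ given this pre-update configuration. Once this bookkeeping is in place, the probabilities factorize cleanly and the calculations above go through.
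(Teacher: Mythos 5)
Your proof is correct and takes essentially the same approach as the paper: you handle the rebuild case $\mathcal{E}_A$ via Lemma~\ref{lm:uniform:leveling} (the paper's Claim~\ref{insertcase1}), and for the non-rebuild cases you translate the post-update conditioning $\{\bT \ge i, \bH_i = H_i\}$ into a pre-update conditioning on $\bH_i^-$ together with $p_1 = \cdots = p_{i-1} = 0$ (the paper's Claim~\ref{insertclaim}), then apply the pre-update uniform invariant and the independence of the fresh coin $p_i$. Your further split of the non-rebuild case into $\mathcal{E}_B$ and $\mathcal{E}_C$ according to whether $v \in R_i$ is a cosmetic reorganization of the paper's single Case 2, which folds both subcases into one computation via the factor $\ind{v \in R_i}$.
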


\begin{proof} 
By the assumption that the uniform invariant holds before the insertion of the element $v$, we mean that for any arbitrary $i$ and any arbitrary element $e$, the following holds:
\[
    \Pr{\bE_i^- = e | \bT^- \geq i, \bH_i^- = H_i^-} = 
    \frac{1}{|R_i^{-}|} \cdot \ind{e \in R_i^{-}} \enspace.
\]
We aim to prove that given 
our assumptions,
after the execution of \insertv$(v)$, for each arbitrary $i$ and each arbitrary element $e$, we have 
\[
    \Pr{\bE_i = e | \bT \geq i, \bH_i = H_i} = 
    \frac{1}{|R_i|} \cdot \ind{e \in R_i} \enspace .
\]

Note that $\Pr{\bE_i = e | \bT \geq i, \bH_i = H_i}$, is only defined when 
$\Pr{\bT \geq i, \bH_i = H_i} > 0$, which means that given the input and considering the behavior of our algorithm including its random choices, it is possible to reach a state where $\bT \geq i$ and $\bH_i = H_i$.
In this proof, we use $\bold{p}_i$ to denote to the variable $p_i$ used in the \insertv{} as a random variable.

Fix any arbitrary $i$ and any arbitrary element $e$. Since $\bH_i^- = (\bE_1^-, \dots, \bE_{i - 1}^-, \bR_0^-, \bR_1^-, \dots, \bR_i^-)$ refers to our data structure levels before the insertion of the element $v$, it is clear that the following facts hold about $\bH_i^-$. 

\begin{fact}
\label{ebeforeinsert}
For any $j < i$, $\bold{e}_j^- \neq v$.
\end{fact}

\begin{fact}
\label{rbeforeinsert}
For any $j \leq i$, $v \notin \bold{R}_j^-$.
\end{fact}

We consider the following cases based on which of the following holds for $H_i = (e_1, \dots, e_{i - 1}, R_0, R_1, \dots, R_i)$:
\begin{itemize}
    \item Case 1: If the $e_j = v$ for some $j < i$.
    \item Case 2: If $v \notin \{e_1, \dots, e_{i - 1}\}$.
\end{itemize}

We handle these two cases separately (Lemma~\ref{insertcase1} for the first case and Lemma~\ref{insertcase2} for the second case). We show that, no matter the case, $\Pr{\bE_i = e | \bT \geq i, \bH_i = H_i}$ is equal to $\frac{1}{|R_i|} \cdot \ind{e \in R_i}$, which completes the proof of the Lemma.

\begin{claim}
\label{insertcase1}
If $H_i$ is such that there is a $1 \le j < i$ that $e_j = v$, then  
$\Pr{\bE_i = e | \bT \geq i, \bH_i = H_i} = \frac{1}{|R_i|} \cdot \ind{e \in R_i}$.
\end{claim}

\begin{proof}
We know that, $\bold{p_j}$ 
must have been equal to 1, as otherwise, instead of having $\bold{e_j} = e_j = v$, we would have had $\bE_j = \bE_j^-$, which would not have been equal to $v$ as stated in Fact \ref{ebeforeinsert}. According to our algorithm, since $\bold{p_j}$ has been equal to $1$, we have invoked \MatroidConstLevel($j + 1$). 
By Lemma~\ref{mat_insert_level}, we know that the level invariants hold at the end of the execution of \insertv{}, which is also the end of the execution of \MatroidConstLevel{$(j + 1)$}. 
Thus, Lemma~\ref{lm:uniform:leveling}, proves that $\Pr{\bE_i = e | \bT \geq i, \bH_i = H_i} = \frac{1}{|R_i|}\cdot \ind{e \in R_i}$.
\end{proof}

\begin{claim}
\label{insertclaim}
Assume that $H_i$ is such that $e_j \neq v$ for any $1 \le j < i$ and define $H_i^-$ based on $H_i$ as $ H_i^- := 
    (R_0 \backslash \{v\} , \dots, R_i \backslash \{v\}, e_1, \dots, e_{i-1})$.
The events $[\bT \geq i, \bH_i = H_i]$ and $[\bT^- \geq i, \bH_i^- = H_i^-, \bold{p_1} = 0, \dots, \bold{p_{i - 1}} = 0]$ are equivalent 
and imply each other, thusly they are interchangeable.   
\end{claim}
\begin{proof}
First, we show that if $\bT \geq i, \bH_i = H_i$, then $\bT^- \geq i, \bH_i^- = H_i^-, \bold{p_1} = 0, \dots, \bold{p_{i - 1}} = 0$. 
Considering that case 2 holds for $H_i$, $\bH_i = H_i$, means that for any $j < i$, $\bE_i \neq v$, which means there is no $j < i$ with $\bold{p_j} = 1$. Note that if $\bold{p_j} = 1$, then we would have set $\bold{e_j}$ to be equal to $v$, and we would have invoked \MatroidConstLevel($j + 1$). Thus, in addition to knowing that for any $j < i$, $\bold{p_j} = 0$, we also know that, we have not invoked \MatroidConstLevel($j + 1$) for any $j < i$. As for any $j < i$, $\bold{p_j} = 0$ and \MatroidConstLevel($j + 1$) was not invoked, we have the following results: 
\begin{enumerate}
    \item 
    Level $i$ also existed before the insertion of $v$, i.e. $\bT^- \geq i$.
    \item 
    We have made no change in the values of $(\bold{e_1}, \dots, \bold{e_{i - 1}})$, and they still have the values they had before the insertion of $v$, i.e. for any  $j < i$, $\bold{e_j} = \bold{e_j^-}$, and so $\bold{e_j^-} = e_j$. 
    \item
    All the change we might have made in our data structure is limited to adding the element $v$ to a subset of $\{\bold{R_0^-}, \dots, \bold{R_i^-}\}$. Hence, for any $j \leq i$, whether $\bold{R_j}$ is equal to $\bold{R_j^-}$ or $\bold{R_j^-} \cup \{v\}$, $\bold{R_j^-} = \bold{R_j} \backslash \{v\} = R_j \backslash \{v\}$. 
\end{enumerate}
So far, we have proved that throughout our algorithm, we reach the state, where $\bT \geq i, \bH_i = H_i$, only if $\bT^- \geq i, \bH_i^- = H_i^-, \bold{p_1} = 0, \dots, \bold{p_{i - 1}} = 0$.

We know that in our insertion algorithm, there is not any randomness other than setting the value of $\bold{p_j}$ as long as we have not invoked \MatroidConstLevel, which only happens when for a $j$, $\bold{p_j}$ is set to be 1. It means that the value of $\bH_i$ can be determined uniquely if we know the value of $\bH_i^-$, and we know that $\bold{p_1}, \dots, \bold{p_{i - 1}}$ are all equal to $0$. 
Since we have assumed that $\bT \geq i, \bH_i = H_i$ is a valid and reachable state in our algorithm, $\bT^- \geq i, \bH_i^- = H_i^-$ must have been a reachable state as well. Plus, $\bT^- \geq i, \bH_i^- = H_i^-, \bold{p_1} = 0, \dots, \bold{p_{i - 1}} = 0$, should imply that $\bT \geq i$ and $ \bH_i = H_i$. Otherwise, $\bT \geq i, \bH_i = H_i$ could not be a reachable state, which is in contradiction with our assumption. 
\end{proof}

\begin{claim}
\label{insertcase2}
If $H_i$ is such that $e_j \neq v$ for any $1 \le j < i$, then  
$\Pr{\bE_i = e | \bT \geq i, \bH_i = H_i} = \frac{1}{|R_i|} \cdot \ind{e \in R_i}$.
\end{claim}

\begin{proof}
Define $H_i^-$ based on $H_i$ as $ H_i^- := 
    (R_0 \backslash \{v\} , \dots, R_i \backslash \{v\}, e_1, \dots, e_{i-1})$.
    ~\\
We calculate $\Pr{\bE_i = e|\bT \geq i, \bH_i = H_i}$. 
As stated above, considering that Case 2 holds for $H_i$, we know that 
$\bT \geq i, \bH_i = H_i$ implies that \MatroidConstLevel{} has not been invoked for any $j < i$. Thus, the value of $\bE_i$ will be determined based on the random variable $\bold{p_i}$. And we have: 
$$\Pr{\bE_i = e|\bT \geq i, \bH_i = H_i} = 
\sum_{p_i \in \{0, 1\}}
(\Pr{\bold{p_i} = p_i|\bT \geq i, \bH_i = H_i} \cdot
\Pr{\bE_i = e|\bT \geq i, \bH_i = H_i, \bold{p_i} = p_i}) \enspace .$$
According to the algorithm, if $v \in H_i$, then $\Pr{\bold{p_i} = 1|\bT \geq i, \bH_i = H_i}$ is equal to $\frac{1}{|R_i|}$. Otherwise, if $v \notin H_i$, then $\bold{p_i}$ would be zero by default, and $\Pr{\bold{p_i} = 1|\bT \geq i, \bH_i = H_i} = 0$. Hence, we can say that: 
$$\Pr{\bold{p_i} = 1|\bT \geq i, \bH_i = H_i} = 
\frac{1}{|R_i|} \cdot \ind{v \in R_i} \enspace.$$
Additionally, Having $\bT \geq i, \bH_i = H_i$, 
 if $\bold{p_i} = 1$, then $\bE_i$ would be $v$. 
Otherwise, if $\bold{p_i} = 0$, then $\bE_i^-$ would remain unchanged, i.e. $\bE_i = \bE_i^-$.  
Hence, $\Pr{\bE_i = e|\bT \geq i, \bH_i = H_i}$ is equal to 
$$
\frac{1}{|R_i|} \cdot \ind{v \in R_i} \cdot
\Pr{\bE_i = e|\bT \geq i, \bH_i = H_i, \bold{p_i} = 1} 
+ (1 - \frac{1}{|R_i|} \cdot \ind{v \in R_i})\cdot
\Pr{\bE_i^- = e|\bT \geq i, \bH_i = H_i, \bold{p_i} = 0}
.$$
We consider the following cases based on the value of $e$: 
\begin{itemize}
\item Case (i): $e = v$ 

In this case 
$\Pr{\bE_i = e|\bT \geq i, \bH_i = H_i, \bold{p_i} = 1} = 1$, and $\Pr{\bE_i^- = e|\bT \geq i, \bH_i = H_i, \bold{p_i} = 0} = 0$. Thus, we have: 
$$\Pr{\bE_i = e|\bT \geq i, \bH_i = H_i} = 
\frac{1}{|R_i|} \cdot \ind{v \in R_i} \cdot 1
+ (1 - \frac{1}{|R_i|} \cdot \ind{v \in R_i}) \cdot 0 = \frac{1}{|R_i|} \cdot \ind{v \in R_i} = \frac{1}{|R_i|} \cdot \ind{e \in R_i}.
 $$
\item Case (ii): $e \neq v$
In this case, $\Pr{\bE_i = e|\bT \geq i, \bH_i = H_i, \bold{p_i} = 1} = 0$. So we have: 
$$\Pr{\bE_i = e|\bT \geq i, \bH_i = H_i} =
\frac{1}{|R_i|} \cdot \ind{v \in R_i} \cdot 0
+ (1 - \frac{1}{|R_i|} \cdot \ind{v \in R_i}) \cdot 
\Pr{\bE_i^- = e|\bT \geq i, \bH_i = H_i, \bold{p_i} = 0}. $$
According to the claim that we proved beforehand, $\bT \geq i, \bH_i = H_i$ and $\bT^- \geq i, \bH_i^- = H_i^-, \bold{p_1} = 0, \dots, \bold{p_{i - 1}} = 0$ are interchangeable. So we have: 
$$\Pr{\bE_i = e|\bT \geq i, \bH_i = H_i} =
(1 - \frac{1}{|R_i|} \cdot \ind{v \in R_i}) \cdot 
\Pr{\bE_i^- = e| \bT^- \geq i, \bH_i^- = H_i^-, \bold{p_1} = 0, \dots, \bold{p_i} = 0}.
$$
Since for any $j \leq i$, $\bE_i^-$ and $\bold{p_i}$ are independent random variables, we have: 
\begin{align*}
\Pr{\bE_i = e|\bT \geq i, \bH_i = H_i} =
(1 - \frac{1}{|R_i|} \cdot \ind{v \in R_i}) \cdot 
\Pr{\bE_i^- = e| \bT^- \geq i, \bH_i^- = H_i^-} \\
= (1 - \frac{1}{|R_i|} \cdot \ind{v \in R_i}) \cdot 
\left(\frac{1}{|R_i^-|} \cdot \ind{e \in R_i^-}\right) \enspace, 
\end{align*}
where the last equality holds because of the assumption stated in Lemma. From the definition of $H_i^-$, we have $R_i^- = R_i \backslash \{v\}$. Therefore, 
$$\Pr{\bE_i = e|\bT \geq i, \bH_i = H_i} = 
\frac{|R_i| - \ind{v \in R_i}}{|R_i|} \cdot 
\left(\frac{1}{|R_i| - \ind{v \in R_i} } \cdot \ind{e \in R_i \backslash \{v\}}\right) \enspace .
$$
And since, $e \neq v$, we have: 
$$\Pr{\bE_i = e|\bT \geq i, \bH_i = H_i} =
\frac{1}{|R_i|} \cdot \ind{e \in R_i} \enspace .$$
\end{itemize}
\end{proof}
As stated before, proof of these claims completes the Lemma's proof. 
\end{proof}

\subsubsection{Correctness of invariants after every deletion}

Now, we consider the case when the update at time $t$ of the sequence $\mathcal{S}$, is a deletion of an element $v$, and prove the following theorem. 

\begin{theorem}
\label{mat_delete:invariants}
If before the deletion of an element $v$, the level invariants and the uniform invariant hold, then they also hold after the execution of \deletev$(v)$. 
\end{theorem}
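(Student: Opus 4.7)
}
I will mirror the structure used for the insertion case, splitting the claim into two lemmas: first that the level invariants (starter, survivor, independent, weight, terminator) are preserved by $\deletev(v)$, and second that, assuming those invariants, the uniform invariant is preserved as well. Let $i^{*}$ be the (unique) level with $e_{i^{*}}^{-}=v$, if such a level exists; otherwise set $i^{*}=\infty$. Let $j^{\dagger}$ be the first index at which the loop in $\deletev$ breaks because $v\notin R_{j^{\dagger}}^{-}$ (or $T{+}1$ if the loop reaches the end). By the pre-deletion survivor and weight invariants, $v\in R_{j}^{-}$ for every $j\le \min(i^{*},j^{\dagger})$ and $v\notin R_{j}^{-}$ for every larger $j$; this observation will drive the whole case analysis.

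\paragraph{Level invariants.}
For the no-rebuild case ($i^{*}=\infty$) I will simply check the five invariants directly: starter is immediate from $R_{0}=R_{0}^{-}-v$; terminator is unchanged since $R_{T+1}^{-}=\emptyset$; the independent and weight invariants involve only $I_{j}, I'_{j}, w$ and none of these are touched by the deletion. The only real content is the survivor invariant, which I will verify by computing, for each $j\le j^{\dagger}$,
\[
\{e\in R_{j-1}-e_{j-1}:\suit(I_{j-1},I'_{j-1},e,w[I_{j-1}])\ne\err\}
=\{e\in R_{j-1}^{-}-e_{j-1}^{-}:\suit(\cdots)\ne\err\}-v=R_{j}^{-}-v=R_{j},
\]
using $v\ne e_{j-1}^{-}$ and that $I_{j-1},I'_{j-1},w[I_{j-1}]$ are unchanged. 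The $j>j^{\dagger}$ case is automatic because nothing is modified there. For the rebuild case ($i^{*}\le T$) I will run exactly the same computation up to level $i^{*}$ to show that the \emph{partial} level invariants for the first $i^{*}$ levels (starter, survivor up through $i^{*}$, and independent/weight up through $i^{*}-1$) hold immediately before $\MatroidConstLevel(i^{*})$ is called. Then Theorem~\ref{thm:invariants:leveling} promotes them to the full level invariants after the rebuild.

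\paragraph{Uniform invariant.}
Fix $i$ and an element $e$; I need to show $\Pr{\bE_{i}=e\mid\bT\ge i,\bH_{i}=H_{i}}=\frac{1}{|R_{i}|}\ind{e\in R_{i}}$. I will split on whether the rebuild reached level $i$, i.e., whether $i^{*}\le i$. If $i^{*}\le i$, then $\bE_{i}$ is drawn inside $\MatroidConstLevel(i^{*})$; since the level invariants hold at termination by the first lemma, Lemma~\ref{lm:uniform:leveling} applied to that call gives the claim for every level $\ge i^{*}$, hence for $i$. If $i^{*}>i$, then none of $\bE_{1},\dots,\bE_{i}$ is rewritten by the deletion, so $\bE_{j}=\bE_{j}^{-}$ for $j\le i$ and (crucially) $\bE_{i}^{-}\ne v$. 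The pre-deletion configuration $H_{i}^{-}$ is uniquely reconstructible from $H_{i}$: put $v$ back into exactly those $R_{j}$ for which the pre-deletion survivor invariant would place it (iterated from $R_{0}^{-}=R_{0}+v$ using $\suit$, which depends only on $I_{j-1},I'_{j-1},w[I_{j-1}]$, all unchanged). Thus the post-deletion event $[\bT\ge i,\bH_{i}=H_{i}, i^{*}>i]$ is equivalent to the pre-deletion event $[\bT^{-}\ge i,\bH_{i}^{-}=H_{i}^{-},\bE_{i}^{-}\ne v]$, and
\[
\Pr{\bE_{i}=e\mid\bT\ge i,\bH_{i}=H_{i}}
=\Pr{\bE_{i}^{-}=e\mid\bT^{-}\ge i,\bH_{i}^{-}=H_{i}^{-},\bE_{i}^{-}\ne v}.
\]
Applying the pre-deletion uniform invariant to the numerator and denominator of this conditional probability, together with $R_{i}=R_{i}^{-}-v$ when $v\in R_{i}^{-}$ and $R_{i}=R_{i}^{-}$ otherwise, gives exactly $\frac{1}{|R_{i}|}\ind{e\in R_{i}}$ in both sub-cases.

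\paragraph{Main obstacle.}
The routine part is the survivor check and the appeal to Theorem~\ref{thm:invariants:leveling}. The delicate step I expect to require the most care is the equivalence of the conditioning events in the no-rebuild sub-case: I must argue that $H_{i}^{-}$ is a deterministic function of $H_{i}$ (plus the identity of the deleted element), and that the extra condition $\bE_{i}^{-}\ne v$ is exactly the event carving out the $i^{*}>i$ scenario, so that the reduction to the pre-deletion uniform invariant is clean. Once this equivalence is established, the conditional probability computation is a one-line renormalization, just as in Claim~\ref{insertcase2}.
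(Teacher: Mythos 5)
Your proof is correct and follows essentially the same route as the paper: split into the two lemmas for level and uniform invariants, verify the partial level invariants up to the rebuild index and invoke Theorem~\ref{thm:invariants:leveling}, then for the uniform invariant split on whether a rebuild touched level $i$, handling the rebuild case via Lemma~\ref{lm:uniform:leveling} and the non-rebuild case via an event-equivalence that reduces to the pre-deletion uniform invariant conditioned on $\bE_i^-\ne v$. The one cosmetic difference is that the paper splits the non-rebuild case into two sub-cases ($X_i=0$ where $v\in R_i^-$ and $X_i=1$ where the loop broke earlier), whereas you unify them by reconstructing $H_i^-$ via iterating $\suit$ — a correct and slightly cleaner treatment, since the renormalization $\frac{1}{|R_i^-|-\ind{v\in R_i^-}}\ind{e\in R_i^-\setminus\{v\}}$ collapses to $\frac{1}{|R_i|}\ind{e\in R_i}$ in both sub-cases.
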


Similar to Theorem~\ref{mat_insert:invariants}, we break the proof of this theorem into 
Lemmas ~\ref{mat_delete_level} and ~\ref{mat_delete_uni}. Proofs of Lemmas \ref{mat_delete_level} is given in Appendix \ref{sec:invar_appendix} Sections \ref{subs:mat_delete_level}.

\begin{lemma}[Level invariants]
\label{mat_delete_level}
If before the deletion of an element $v$ the level invariants (i.e., starter, survivor, independent, weight, and terminator) hold, 
then they also hold after the execution of \deletev$(v)$. 
\end{lemma}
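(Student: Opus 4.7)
The plan is to split the proof into two cases according to whether the deleted element $v$ equals some $e_{i^*}$ in the current configuration. In both cases the chain property $R_0 \supseteq R_1 \supseteq \cdots \supseteq R_T$, which follows from the survivor invariant, justifies the \texttt{break} in the \deletev{} loop: if $v \notin R_i$ then $v \notin R_j$ for all $j \geq i$, so no deeper level need be touched, and the starter, independent, weight, and terminator invariants — which depend only on $R_0$, $I_i$, $I'_i$, $e_i$, $w$, and $T$ — only need to be tracked against the set-level edits the algorithm actually performs.

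For the light case, where $v \notin \{e_1, \ldots, e_T\}$, let $j^*$ be the largest index with $v \in R_{j^*}$. The algorithm removes $v$ from $R_0, \ldots, R_{j^*}$ and leaves everything else unchanged. Starter is immediate. The survivor invariant is preserved at each level $i \leq j^*$ because $R_i$ and $R_{i-1}$ both lose exactly $v$, so the defining equality $R_i = \{e \in R_{i-1} - e_{i-1} : \replacementTester(I_{i-1}, I'_{i-1}, e, w[I_{i-1}]) \ne \err\}$ is retained (the relevant $\replacementTester$ inputs are untouched); at level $j^* + 1$ it survives because $v$ was already filtered there by an $\err$ outcome, so removing $v$ from $R_{j^*}$ does not change the filtered result; higher levels are not touched at all. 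The independent, weight, and terminator invariants follow because the algorithm alters none of $I_i$, $I'_i$, $e_i$, $w$, or $T$.

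For the heavy case, the weight invariant gives $v = e_{i^*} \in R_{i^*}$, and the chain property gives $v \in R_i$ for every $i \leq i^*$; hence the loop removes $v$ from $R_0, \ldots, R_{i^*}$ and then invokes \MatroidConstLevel$(i^*)$. By Theorem~\ref{thm:invariants:leveling} it suffices to verify that the level invariants partially hold for the first $i^*$ levels at the moment of this invocation. Starter is immediate. For $1 \leq i \leq i^*$, since $e_j \neq v$ for every $j < i^*$, the quantities $I_{i-1}$, $I'_{i-1}$, and $w[I_{i-1}]$ are identical to their pre-update values, so $\replacementTester(I_{i-1}, I'_{i-1}, e, w[I_{i-1}])$ evaluates identically for every $e \ne v$; combined with $R_{i-1}^{\text{new}} = R_{i-1}^{\text{old}} - \{v\}$, this delivers the partial survivor invariant (using at $i = i^*$ the fact that $v \in R_{i^*}^{\text{old}}$ had a non-\err{} outcome, which is now excluded by the set difference). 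The partial independent and weight invariants for $i < i^*$ are inherited unchanged. The main obstacle is precisely this bookkeeping: one must argue carefully that no input to $\replacementTester$ is disturbed before \MatroidConstLevel$(i^*)$ begins, which relies on $v$ being located uniquely at index $i^*$ in the sense made precise by the survivor and weight invariants together. Once the partial invariants are in hand, Theorem~\ref{thm:invariants:leveling} finishes the proof.
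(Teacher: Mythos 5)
Your proof is correct and follows essentially the same route as the paper's: split on whether $v$ coincides with some $e_{i^*}$ (equivalently, whether \MatroidConstLevel{} gets invoked), show that all data for levels below the cutoff is untouched so the partial level invariants transfer, and close either directly (light case) or by invoking Theorem~\ref{thm:invariants:leveling} (heavy case). One small inaccuracy in the justification: the weight invariant is not independent of the $R_i$ sets — it contains the clause $e_i \in R_i$, and the $R_i$ are edited — so you still need the (easy) observation, which the paper makes explicitly, that $e_i \ne v$ together with $R_i = R_i^- \setminus \{v\}$ yields $e_i \in R_i$, rather than appealing to the claim that these invariants ``depend only on $R_0$, $I_i$, $I'_i$, $e_i$, $w$, and $T$.''
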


\begin{lemma} [Uniform invariant]
\label{mat_delete_uni}
If before the deletion of an element $v$, the level and uniform invariants hold, then the uniform invariant also holds after the execution of \deletev$(v)$. 
\end{lemma}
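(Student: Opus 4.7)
The plan is to mirror the argument of Lemma~\ref{mat_insert_uni}. Fix an arbitrary $i$ and element $e$; I will partition the event $[\bT \ge i, \bH_i = H_i]$ according to the behavior of \deletev$(v)$ and combine the pieces via the law of total probability. Two tools drive the argument: Lemma~\ref{mat_delete_level}, which asserts that the level invariants hold after \deletev$(v)$, and Lemma~\ref{lm:uniform:leveling}, which converts any post-invocation guarantee of the level invariants into the uniform invariant for every level at or above the rebuild point. As a preliminary observation, whenever $\bH_i = H_i$ has positive probability we must have $v \notin \{e_1, \dots, e_{i-1}\}$: a rebuild at some $j^* \le i-1$ produces fresh $e_{j^*}, \dots, e_{i-1}$ drawn from sets that no longer contain $v$, and in the absence of any rebuild $e_j = e_j^- \ne v$ by the minimality of $j^*$.

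I split the remaining mass into Case~B, where \MatroidConstLevel$(j^*)$ is invoked with $j^* \le i$, and its complement Case~A. In Case~B, Lemma~\ref{mat_delete_level} guarantees that the level invariants hold when \MatroidConstLevel$(j^*)$ returns, so Lemma~\ref{lm:uniform:leveling} immediately yields $\Pr{\bE_i = e \, | \, \bT \ge i, \bH_i = H_i, \text{Case~B}} = \tfrac{1}{|R_i|}\,\ind{e \in R_i}$.

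Case~A is the main technical step. Here no rebuild occurs at any level $\le i$, so $\bE_i = \bE_i^-$ and $e_j = e_j^-$ for $j < i$; also \deletev{} has already stripped $v$ from $R_i$, so $v \notin R_i$. The survivor invariant applied to the pre-deletion state forces the levels $j \in \{1, \dots, i\}$ for which $v \in R_j^-$ to form a prefix $\{1, 2, \dots, k\}$ for some $k \in \{0, 1, \dots, i\}$; accordingly the pre-deletion partial configurations consistent with $\bH_i = H_i$ and Case~A are precisely $H_i^-(k) := (e_1, \dots, e_{i-1}, R_0 \cup \{v\}, \dots, R_k \cup \{v\}, R_{k+1}, \dots, R_i)$ for $k \in \{0, \dots, i\}$, and the events $\{\bH_i^- = H_i^-(k)\}$ are disjoint over $k$. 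Conditioning on $\bH_i^- = H_i^-(k)$, the pre-deletion uniform invariant gives $\bE_i^-$ uniform on the $R_i^-$-component of $H_i^-(k)$. When $k < i$ this component equals $R_i$ and the conditional probability is immediately $\tfrac{1}{|R_i|}\,\ind{e \in R_i}$; when $k = i$ this component is $R_i \cup \{v\}$ and Case~A additionally requires $\bE_i^- \ne v$, so after dividing by $\Pr{\bE_i^- \ne v \, | \, \bH_i^- = H_i^-(i)} = \tfrac{|R_i|}{|R_i|+1}$ the conditional probability again collapses to $\tfrac{1}{|R_i|}\,\ind{e \in R_i}$. Every summand agrees, so the law of total probability over $k$ produces the same value, and combining Cases~A and~B finishes the proof.

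The main obstacle is the bookkeeping in Case~A: multiple pre-deletion states $H_i^-(k)$ collapse to the same post-deletion $H_i$, and the $k = i$ summand is qualitatively different because the corresponding pre-deletion $R_i^-$ still contains $v$. The observation that the normalization factor $\tfrac{|R_i|}{|R_i|+1}$ is exactly what is needed to rescale the $k = i$ summand back to $\tfrac{1}{|R_i|}$ is what makes every term in the mixture coincide, so the uniform distribution on $R_i$ is preserved after the deletion.
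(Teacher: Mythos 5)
Your proof is correct and follows essentially the same route as the paper's, which splits on a random variable $\bold{X_i}\in\{0,1,2\}$: the case $\bold{X_i}=2$ (a rebuild invoked at some level $j\le i$) is your Case~B and is dispatched exactly as you do via Lemmas~\ref{mat_delete_level} and~\ref{lm:uniform:leveling}; the case $\bold{X_i}=1$ (\deletev{} breaks at some level $j\le i$ because $v\notin R_j^-$) is your Case~A with $k<i$; and the case $\bold{X_i}=0$ ($v\in R_i^-$ with no rebuild at any level $\le i$) is your Case~A with $k=i$, where the paper, like you, conditions on $\bE_i^-\ne v$ and uses the factor $|R_i|/(|R_i|+1)$ to rescale the uniform law from $R_i^-=R_i\cup\{v\}$ to $R_i$. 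Folding the paper's two non-rebuild cases into a single parameterized family $H_i^-(k)$, indexed by the largest $k$ with $v\in R_k^-$, is a mild repackaging rather than a genuinely different argument.
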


\begin{proof}
In other words, we want to prove that if for any $i$ and any element $e$ 
\[
    \Pr{\bE_i^- = e | \bT^- \geq i, \bH_i^- = H_i^-} = 
    \frac{1}{|R_i^{-}|} \cdot \ind{e \in R_i^{-}} \enspace ,
\]
then, after execution \deletev$(v)$, for each $i$ and each element $e$, we have 
\[
    \Pr{\bE_i = e | \bT \geq i, \bH_i = H_i} = 
    \frac{1}{|R_i|} \cdot \ind{e \in R_i} \enspace .
\]

Fix any arbitrary $i$ and $e$. 
We define a random variable $\bold{X_i}$ attaining values from the set $ \{0, 1, 2\}$, as follows:
    \begin{enumerate}
    \item If the execution of \deletev$(v)$ has terminated after invoking \MatroidConstLevel$(j)$, then we set $\bold{X_i}$ to $2$.
    \item If the execution of \deletev$(v)$ has terminated in a level $L_{j \leq i}$ because $v \notin R^-_j$, then we set $\bold{X_i}$ to $1$.
    \item Otherwise, we set $\bold{X_i}$ to $0$. 
    That is, this case occurs if $v \in R_i^-$ and \deletev$(v)$ terminates because in a level $L_{j > i}$, 
    either $e_j=v$ or $v \notin R_j$.
\end{enumerate}

In Claims \ref{propos:mat:delete1}, \ref{propos:mat:delete2}, and \ref{sevomi},
we show that for each value $X_i \in \{0, 1, 2\}$, 
$\Pr{\bE_i = e | \bT \geq i, \bH_i = H_i, \bold{X_i} = X_i} = \frac{1}{|R_i|} \cdot \ind{e \in R_i}$. 
This would imply the statement of our Lemma and completes the proof since 
\begin{equation*}
    \Pr{\bE_i = e | \bT \geq i, \bH_i = H_i}=
\Exu{X_i\sim \bold{X_i} }{\Pr{\bE_i = e | \bT \geq i, \bH_i = H_i, \bold{X_i}=X_i}}
\end{equation*}
by the law of total probability.

\begin{claim}
\label{propos:mat:delete1}
$\Pr{\bE_i = e | \bT \geq i, \bH_i = H_i, \bold{X_i} = 0} = \frac{1}{|R_i|} \cdot \ind{e \in R_i}$.    
\end{claim}

\begin{proof}
First, we prove the following claim. 

\begin{claim}
\label{prop:not:in:x0}
If $\bold{X_i} = 0$,  then for every $j < i$, $e_j \ne v$ and $v \notin R_i$. 
\end{claim}

\begin{proof}

Since $\bold{X_i}=0$, then $\MatroidConstLevel{}(j)$ has not been invoked for any $j \le i$. 
Thus, $\bold{e_j^-} = \bold{e_j} = e_j$ for any $j < i$. 
However, if $e_j = v$ for a level index $j < i$, then $\bold{e_j^-} = v$ would have held for that $j < i$, 
which means that $\MatroidConstLevel{}(j)$ would have been executed for that $j$ . 
This contradicts the assumption that $\bold{X_i}=0$. 
Therefore, for all $j < i$, we must have  $e_j \ne v$ proving the first part of this claim.  

Next, we prove the second part. 
Since $\bold{X_i} = 0$, the algorithm \deletev$(v)$ neither has called \MatroidConstLevel{} nor it terminates its execution until level $L_i$. 
Thus, $\bold{R_i} = \bold{R_i^-} - v$, which implies that $v \notin \bold{R_i}$. 
However, if we had $v \in R_i$, then the event $[\bH_i = H_i, \bold{X_i = 0}]$ would have been impossible.
\end{proof}

Using Claim~\ref{prop:not:in:x0}, 
we know that $e_j \ne v$ for $j < i$ and $v \notin R_i$. 
However, we also know that $v \in R_j^-$ for $j \le i$. 
Thus, we can define $H_i^- = (e_1^-, \dots, e_{i-1}^-, R_0^-  , \dots, R_i^- )$ 
based on $H_i = (e_1, \dots, e_{i-1}, R_0  , \dots, R_i)$ as follows:
\begin{align*}
    H_i^- =     (e_1, \dots, e_{i-1}, R_0 \cup \{v\} , \dots, R_i \cup \{v\}) \enspace. 
\end{align*}

\begin{claim}
    \label{pro:events:eqn}
   Two events $[\bT \geq i, \bH_i = H_i, \bold{X_i} = 0]$ and $[\bT^- \geq i, \bH_i^- = H_i^-, \bE_i^- \neq v]$ are equivalent (i.e., they imply each other).
\end{claim}

\begin{proof}
We first prove that the event $[\bT \geq i, \bH_i = H_i, \bold{X_i} = 0]$ 
implies the event $[\bT^- \geq i, \bH_i^- = H_i^-,\bE_i^- \neq v]$. 
Indeed, since $\bold{X_i}=0 \ne 2$ we know that  
the algorithm $\MatroidConstLevel{}(j)$ was not invoked for any $j \le i$ and 
the element $v$ was contained in $\bold{R_j^-}$ for all $j \le i$. 
In this case, according to the algorithm \deletev$(v)$, 
we conclude that for any $j \le i$, we have $\bold{e_j^-} \ne v $ and  $\bold{e_j^-} = \bold{e_j}$, and $\bold{R_j} = \bold{R_j^-} - v$.
This means that $\bold{R_j^-} = \bold{R_j} \cup \{v\}$. 
Therefore, since $\bH_i = H_i$, 
we must have  $\bH_i^- = H_i^-$, $\bE_i^- \neq v$, and  $\bE_i^-=\bE_i$.

Next, we prove the other way around. 
That is, the event $[\bT^- \geq i, \bH_i^- = H_i^-,\bE_i^- \neq v]$ 
implies the event $[\bT \geq i, \bH_i = H_i, \bold{X_i} = 0]$. 
Indeed, since $\bH_i^- = H_i^- = (e_1, \dots, e_{i-1}, R_0 \cup \{v\} , \dots, R_i \cup \{v\})$, 
then, for any $j \leq i$, $v \in \bold{R_j^-}$ and for any $j < i$, $\bold{e_j^-} = e_j$. 

Recall from Claim~\ref{prop:not:in:x0} that 
for all $j < i$, $e_j \ne v$ and $v \notin R_i$. 
Thus, for any $j < i$, we know that $\bold{e_j^-} \ne v$. 
However, we also know that $\bE_i^- \neq v$. 
Thus, $\bold{e_j^-} \ne v$ for any $j \leq i$.
This essentially means that the algorithm \deletev$(v)$ neither invokes \MatroidConstLevel{} 
nor terminates its execution till the level $L_i$. 
This implies that $\bold{X_i} = 0$. 
On the other hand, the algorithm \deletev$(v)$ only removes $v$ from $R_i^-$ and 
does not make any change in $\bold{e_1^-, \dots, e_i^-}$. 
Thus, $\bold{R_i} = R_i^- - \{v\}= R_i \cup \{v\} - v = R_i$ and 
$\bE_i = \bE_i^-$.  Therefore,  we have $\bH_i = H_i$. 
\end{proof}

Therefore, we have the following corollary.
\begin{corollary}
\label{cor:eqn:events}
$\Pr{\bE_i = e | \bT \geq i, \bH_i = H_i, \bold{X_i} = 0} =  \Pr{\bE_i^- = e | \bT^- \geq i, \bH_i^- = H_i^-, \bE_i^- \neq v} $. 
\end{corollary}

Thus, in order to prove 
$\Pr{\bE_i = e | \bT \geq i, \bH_i = H_i, \bold{X_i} = 0} = \frac{1}{|R_i|} \cdot \ind{e \in R_i}$, 
we can prove 
\[
\Pr{\bE_i^- = e | \bT^- \geq i, \bH_i^- = H_i^-, \bE_i^- \neq v} = \frac{1}{|R_i|} \cdot \ind{e \in R_i} \enspace .
\]

Recall that the assumption of this lemma is  
$ \Pr{\bE_i^- = e | \bT^- \geq i, \bH_i^- = H_i^-} = 
  \frac{1}{|R_i^{-}|} \cdot \ind{e \in R_i^{-}} $. 
That is, conditioned on the event $[\bT^- \geq i, \bH_i^- = H_i^-]$, 
the random variable $\bE_i^- \sim U(R_i^-)$ is a uniform random variable over the set $R_i^-$. 
(i.e., the value $e_i$ of the random variable $\bE_i^-$ takes ones of the elements of the set $R_i^-$ uniformly at random.)  
However, since $X_i =0$ and using Claim~\ref{pro:events:eqn}, we have $\bE_i^- \ne v$. 
Thus, conditioned on the event $ [\bT^- \geq i, \bH_i^- = H_i^-, \bE_i^- \ne v]$, 
we have that the random variable $\bE_i^- \sim U(R_i^- \backslash \{v\} ) = U(R_i)$ 
should be a uniform random variable over the set $R_i^- \backslash \{v\} = R_i$. 
Indeed, we have

\begin{align*}
\Pr{\bE_i^- = e | \bT^- \geq i, \bH_i^- = H_i^-, \bE_i^- \neq v} 
&= \dfrac {\Pr{\bE_i^- = e, \bE_i^- \neq v| \bT^- \geq i, \bH_i^- = H_i^-}}
    {\Pr{\bE_i^- \neq v| \bT^- \geq i, \bH_i^- = H_i^-}} 
= \dfrac{\frac{1}{|R_i^{-}|} \cdot \ind{e \in R_i^{-} \backslash \{v\}}}{1 - \frac{1}{|R_i^{-}|}} \\
&= \dfrac{1}{|R_i^{-}| - 1} \cdot \ind{e \in R_i^{-} \backslash \{v\}} 
= \dfrac{1}{|R_i|} \cdot \ind{e \in R_i} \enspace ,
\end{align*}
where the second equality holds because of our assumption that the uniform invariant holds before the deletion, and the fourth invariant holds because $R_i^- = R_i \cup \{v\}$ and $v \notin R_i$ proving the case $X=0$.

\end{proof}

\begin{claim}
\label{propos:mat:delete2}
$\Pr{\bE_i = e | \bT \geq i, \bH_i = H_i, \bold{X_i} = 1} = \frac{1}{|R_i|} \cdot \ind{e \in R_i}$.
\end{claim}

\begin{proof}
We will be conditioning on possible values of $\bH_i^-$.  
\begin{align*}
    \Pr{\bE_i = e | \bT \geq i, \bH_i = H_i, X_i = 1} = \Exu{H_i^-}{\Pr{\bE_i = e | \bT \geq i, \bH_i = H_i, \bold{X_i} = 1, \bH_i^- = H_i^-}} \enspace ,
\end{align*}
where the expectation is taken over all $H_i$ for which $\Pr{\bT \geq i, \bH_i = H_i, \bold{X_i} = 1, \bH_i^- = H_i^-} > 0$.
For all such $H_i^-$, we claim that
this can be further rewritten as
$\Pr{\bT \geq i, \bH_i^- = H_i^-}$.
This is because \deletev$(v)$ is executed deterministically if it does not invoke 
the algorithm \MatroidConstLevel{}. Furthermore, the value of $\bold{X_i}$ is
deterministically determined by $\bH_i^-$.
Therefore, for any value of $H_i^-$, either
$\bH_i^-=H_i^-$ implies $\bold{X_i}\ne 1$, in which case $\Pr{\bT \geq i, \bH_i^- = H_i^-, \bold{X_i} = 1} = 0$, which is in contradiction with our assumption, or 
$\bH_i^- = H_i^-$ imply
$\bold{X_i}=1$. Therefore, for all such $H_i^-$ implies $\bold{X_i}=1$, which also means that \MatroidConstLevel{} never gets invoked, in which case $\bH_i$ is uniquely determined. Hence $\bH_i^- = H_i^-$ should also imply that $\bH_i = H_i$, as otherwise $\Pr{\bT \geq i, \bH_i^- = H_i^-, \bH_i = H_i} = 0$.
We therefore obtain: 
\begin{align*}
    \Pr{\bE_i = e | \bT \geq i, \bH_i = H_i, \bold{X_i} = 1, \bH_i^- = H_i^-} = \Pr{\bE_i = e | \bT \geq i, \bH_i^- = H_i^-} 
\end{align*}
as claimed.

Also, we know that $\bH_i = H_i, \bold{X_i} = 1$, implies that: 
\begin{align*}
    \bT^- = \bT, \enspace
    \bold{R_i^-} = \bold{R_i}, \enspace
    \bE_i^- = \bE_i, \enspace
\end{align*}
since it means that the execution of \deletev$(v)$ has terminated before level $i$, thus no change has been made for that level.
Therefore, for a $H_i^-$ used in our expectation, we know that $\bT \geq i, \bH_i^- = H_i^-$ also implies 
\begin{align*}
    \bT^- \geq i,  \enspace
    R_i^- = R_i, \enspace
    \bE_i^- = \bE_i,
\end{align*}
we have:
\begin{align*}
    \Pr{\bE_i = e | \bT \geq i, \bH_i^- = H_i^-} = \Pr{\bE_i^- = e | \bT^- \geq i, \bH_i^- = H_i^-} = \frac{1}{|R_i^{-}|} \cdot \ind{e \in R_i^{-}} = \frac{1}{|R_i|} \cdot \ind{e \in R_i} \enspace, 
\end{align*}
where the third equality holds because of our assumption that the uniform invariant holds before the deletion of element $v$.
Therefore, $\Pr{\bE_i = e | \bT \geq i, \bH_i = H_i, \bold{X_i} = 1} = \frac{1}{|R_i|} \cdot \ind{e \in R_i}$.
    
\end{proof}

\begin{claim}
\label{sevomi}
    $\Pr{\bE_i = e | \bT \geq i, \bH_i = H_i, \bold{X_i} = 2}  = \frac{1}{|R_i|}\cdot \ind{e \in R_i}$.
\end{claim}

\begin{proof}
By Lemma~\ref{mat_delete_level}, we know that the level invariants hold at the end of the execution of \deletev{}, which is also the end of the execution of \MatroidConstLevel{$(j)$}. 
Using Lemma~\ref{lm:uniform:leveling}, we know that 
since the level invariants are going to hold after the execution of $\MatroidConstLevel{}(j)$, 
for $i$ which is greater than $j$, we have: 
$$\Pr{\bE_i = e | \bT \geq i, \bH_i = H_i, \bold{X_i} = 2}  = \frac{1}{|R_i|}\cdot \ind{e \in R_i} \enspace, $$

which proves this claim. 
\end{proof}

\end{proof}

\subsection{Application of Uniform Invariant: Query complexity}
As for the query complexity of this algorithm, 
observe that checking if an element $e$ is promoting  for a level $L_{z}$ 
needs $O(\log(k))$ oracle queries 
because of Lemma~\ref{lm:find_min} and the fact that the size of the independent set $I_z$ is at most $k$.
The binary search that we perform needs $O(\log T)$ number of such suitability checks for the element $e$. 
Thus, if we initiate the leveling algorithm with a set $R_i$, 
our algorithm needs $O(|R_i|\cdot \log(k) \cdot \log(T))$ oracle queries to build the levels $L_i,\cdots, L_T$. 

\begin{lemma}
\label{lm:number_of_levels}
  The number of levels $T$ is at most $k\log(\frac{k}{\epsilon})$.
\end{lemma}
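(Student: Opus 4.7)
The plan is a potential-function argument that exploits the doubling enforced by Property~3 of promoting elements. I would define
\[
\Phi_\ell := \sum_{e \in I_\ell} \log_2\!\left(\frac{w(e)\cdot 10k}{\epsilon\cdot MAX}\right),
\]
note that $\Phi_0 = 0$ since $I_0 = \emptyset$, bound $\Phi_\ell$ uniformly from above, and show that a certain type of level increments $\Phi$ by at least $1$.

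First I would pin down the range of $w(e)$ for every element $e$ ever contained in some $I_\ell$. By the Weight invariant, when an element $e$ first becomes some $e_i$ at level $L_i$, its weight is frozen to $w(e) = f(I'_{i-1}+e) - f(I'_{i-1})$, and the range check at Line~\ref{line:remove_w_not_in_range} in \replacementTester{} guarantees this marginal lies in $[\tfrac{\epsilon}{10k}\cdot MAX,\, MAX]$. By the Independent invariant, every $e \in I_\ell$ originally arrived as some $e_i$ with $i \le \ell$, and its weight is never subsequently overwritten. Hence $w(e)\in[\tfrac{\epsilon}{10k}MAX,\,MAX]$ for all $e\in I_\ell$, and since $|I_\ell|\le \rank(\matroid)=k$, this yields $0 \le \Phi_\ell \le k\log_2(10k/\epsilon)$.

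Next I would analyze $\Phi_\ell - \Phi_{\ell-1}$ by casing on which branch of Definition~\ref{def:promote} is certified by \replacementTester{} for $e_\ell$. If Property~2 fires, then $I_\ell = I_{\ell-1} + e_\ell$, so $|I_\ell| = |I_{\ell-1}|+1$ and $\Phi_\ell \ge \Phi_{\ell-1}$; since $|I_\ell|\le k$, there can be at most $k$ such levels. If Property~3 fires, then $I_\ell = I_{\ell-1} + e_\ell - \hat{e}$ with $2w(\hat{e}) \le w(e_\ell)$, and $\hat{e}$'s contribution to $\Phi$ is replaced by $e_\ell$'s, giving $\Phi_\ell - \Phi_{\ell-1} = \log_2(w(e_\ell)/w(\hat{e})) \ge 1$. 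Letting $A$ and $B$ denote the numbers of Property~2 and Property~3 levels respectively, we obtain $A \le k$ and $B \le \Phi_T \le k\log_2(10k/\epsilon)$, so $T = A + B = O(k\log(k/\epsilon))$, matching the claimed bound.

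The only subtle point I would take care to verify is that the weight of each element really is fixed from the moment it enters the independent set until it is swapped out or remains in $I_T$. This follows by inspection of \MatroidConstLevel{} and \insertv{}, which only write to $w(e)$ on Lines~\ref{line:const_level_matroid:w} and~\ref{line:mat:insert:setw} at the instant $e$ is selected as a new $e_\ell$, together with the Independent invariant which guarantees no other mechanism alters $I_\ell$. Everything else is straightforward bookkeeping.
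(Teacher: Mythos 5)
Your proof is correct and, to the best of my checking, has no gaps: the potential $\Phi_\ell = \sum_{e\in I_\ell}\log_2\bigl(w(e)\cdot 10k/(\epsilon\cdot MAX)\bigr)$ is non-negative and bounded above by $k\log_2(10k/\epsilon)$, it never decreases, it jumps by at least $1$ on each Property~3 level because $w(e_\ell)\ge 2w(\hat{e})$, and the number of Property~2 levels is at most $k$ because $|I_\ell|$ is non-decreasing and increases by exactly $1$ on those levels. That gives $T\le k + k\log_2(10k/\epsilon) = O(k\log(k/\epsilon))$, which matches the paper's actual bound (both the paper's proof and yours give $O(k\log(k/\epsilon))$ with a constant-factor slack relative to the lemma's nominal $k\log(k/\epsilon)$). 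The paper takes a genuinely different route: it builds a directed graph on $I'_T = \{e_1,\dots,e_T\}$ with an edge $e_i\to\hat{e}$ whenever $e_i$ replaced $\hat{e}$, argues from in-degree and out-degree bounds that the graph is a disjoint union of at most $|I_T|\le k$ paths, and bounds each path's length by $\log(k/\epsilon)+O(1)$ using the same doubling property. The two arguments encode the same combinatorial fact: the paper's paths are exactly the "replacement chains" whose doubling your potential function tracks in aggregate. The paper's graph view makes the chain structure explicit and may be more illuminating for a reader who wants to visualize why replacements are rare; your potential-function version is the standard amortized-analysis packaging of the same idea and is arguably tighter to write down because it avoids needing to argue separately about degree bounds and path decomposition. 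Either is fine, but they are recognizably distinct presentations.
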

\begin{proof}
  Consider a directed graph $G$ with elements $I'_T = \{e_1,\cdots,e_T\}$ as vertices of this graph.
  For each element $e_i \in I'_T$, we know that $e_i$ is a promoting element for $L_{i-1}$, i.e. $\replacementTester(I_{i-1}, I'_{i-1}, e_i, w[I_{i-1}]) \neq \err$. Therefore, we define $parent(e_i) = \replacementTester(I_{i-1}, I'_{i-1}, e_i, w[I_{i-1}])$. This value is $\emptyset$ if $I_i = I_{i-1} + e_i$. Otherwise, if $I_i = I_{i-1} - e' + e_i$, this value would be $e'$.
  For each $e_i \in I'_T$, if $parent(e_i) \neq \emptyset$, we add an edge $e_i \to parent(e_i)$ to the graph.

  Since an element can only be replaced once, we have  $|\{e'|e' \in I'_T, parent(e') = e\}| = 1$, i.e. the in-degree of each $e \in I_T$ is at most 1.
  Furthermore, the out-degree of each vertex is $1$, because for each element $e\in I'_T$, $|parent(e)| \le 1$.
  Therefore, it follows that the graph is a union of disjoint paths and each $e_i \in I'_T$ is
  in exactly one path.

   An element $e$ is a starting element in a path (its in-degree is $0$), if and only if it has not been replaced by another element. That means, $e$ remains in $I_T$ at the end of the algorithm. Given that $|I_T| \le k$, there are at most $k$ paths in $G$.
  Furthermore, for two successive elements $(u, v)$ in the path where $parent(u) = v$, $w(u) \ge 2w(v)$.
  As the weights of all elements in $I'_T$ satisfy $w(e) \in [\frac{\epsilon}{10k}MAX, MAX]$,
  the length of each path is bounded by $\log(k/\epsilon) + 4$. 
  Consequently, it follows that the total number of vertices in the graph is at most $\mO(k\log(\frac{k}{\epsilon}))$.
\end{proof}

Next, we analyze the query complexity
of \MatroidConstLevel{}. 
\begin{lemma}
\label{lm:level_query_complexity}
  The total cost of calling $\MatroidConstLevel{}(i)$ is at most
  $\mO\left(|R_i|\log(k)\log\left(\frac{k}{\epsilon}\right)\right)$.
\end{lemma}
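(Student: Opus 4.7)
The plan is to charge the total query cost to the $|R_i|$ iterations of the for-loop at Line~\ref{line:iterate_P}, showing that each iteration uses at most $O(\log(k)\log(k/\epsilon))$ oracle queries. First, I would bound the cost of a single invocation of $\replacementTester(I,I',e,w[I])$. Reading the pseudocode, the only potentially expensive step is computing $\hat e=\argmin_{e'\in C} w(e')$ at Line~\ref{line:matroid:promote:find_min}, where $C=\{e'\in I: I+e-e'\in\mI\}$; by Lemma~\ref{lm:find_min} this can be carried out with $O(\log|I|)=O(\log k)$ matroid-independence queries. The marginal-gain check at Line~\ref{line:remove_w_not_in_range} uses two submodular queries, and the weight comparison is free. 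Hence each call to $\replacementTester{}$ costs $O(\log k)$ queries.

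Next I would bound the per-iteration cost of the outer loop. In each iteration, the algorithm first calls $\replacementTester{}$ once. If the result is not $\err$, the iteration terminates after $O(\log k)$ queries. Otherwise, the algorithm runs a binary search on $[i,\ell-1]$ at Line~\ref{line:const_level_matroid:binary_search}; its validity relies on the monotonicity property established in Lemma~\ref{lm:binary_search_argument}, which guarantees that the set of indices $z$ for which $\boolsuit(e,L_z)$ is $\mathrm{False}$ is an upward-closed suffix of $[i,\ell-1]$. The binary search therefore makes $O(\log T)$ probes, each being one call to $\replacementTester{}$, for a total of $O(\log(T)\log(k))$ queries. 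The subsequent set-maintenance work at Line~\ref{line:constlevelmatroid:addR_bs} and the bookkeeping updates at Line~\ref{line:constlevelmatroid:increase_ell} perform no oracle queries and so do not contribute.

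Finally, I would sum over the $|R_i|$ iterations and substitute the bound on the number of levels. By Lemma~\ref{lm:number_of_levels}, $T=O(k\log(k/\epsilon))$, so $\log T=O(\log(k)+\log\log(k/\epsilon))=O(\log(k/\epsilon))$. The total query complexity is therefore
\[
|R_i|\cdot O(\log(T)\log(k))=O\bigl(|R_i|\log(k)\log(k/\epsilon)\bigr),
\]
which matches the claimed bound. The main conceptual obstacle is already behind us: it is justifying that the binary search on Line~\ref{line:const_level_matroid:binary_search} is correct, which is exactly the content of Lemma~\ref{lm:binary_search_argument}; once that monotonicity is in hand, the query-complexity accounting is essentially routine bookkeeping and a single application of Lemma~\ref{lm:find_min}.
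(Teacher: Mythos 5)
Your proof is correct and follows essentially the same approach as the paper: bound each \replacementTester{} call by $O(\log k)$ via Lemma~\ref{lm:find_min}, bound the binary-search depth by $O(\log T)=O(\log(k/\epsilon))$ via Lemma~\ref{lm:number_of_levels}, and multiply by the $|R_i|$ loop iterations. The only small difference is that you explicitly re-invoke Lemma~\ref{lm:binary_search_argument} to justify the binary search's correctness here, whereas the paper treats that as already established and focuses purely on the query count.
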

\begin{proof}
  Checking if an element $e$ is promoting needs $\mO(\log(k))$ query calls, because
  of Lemma~\ref{lm:find_min} and the fact that $|I|\le k$ for any $I \in \mI$. 
  The algorithm $\MatroidConstLevel{}(i)$ iterates over all elements in $R_i$.
  For each element $e$, it first calls the \replacementTester{} function, and
  select $e$ if it is a promoting element, i.e. $\replacementTester(I_{\ell-1},
  I'_{\ell-1}, e, w[I_{\ell-1}]) \ne \err$. In this case, we only need $\mO(\log(k))$
  query calls. However, if $e$ is not a promoting element, it reaches
  Line~\ref{line:const_level_matroid:binary_search} and runs the binary search
  on the interval $[i,\ell-1]$. Based on Lemma~\ref{lm:number_of_levels}, the
  length of this interval is
  $\mO\left(k\log\left(\frac{k}{\epsilon}\right)\right)$. Therefore, the number
  of steps in binary search is at most
  $\mO\left(\log\left(k\log\left(\frac{k}{\epsilon}\right)\right)\right) =
  \mO\left(\log\left(\frac{k}{\epsilon}\right)\right)$. In each step of the
  binary search, the algorithm calls $\replacementTester$ one time. Thus, for
  each element we need $\mO\left(\log(k)\log\left(\frac{k}{\epsilon}\right)\right)$,
  and for all elements, we need
  $\mO\left(|R_i|\log(k)\log\left(\frac{k}{\epsilon}\right)\right)$ query calls.
\end{proof}
\begin{lemma}
  For a specified value of $MAX$, each update operation in Algorithm \ref{alg:matroid-updates} has query complexity at most
  $\mO\left(k\log(k)\log^2\left(\frac{k}{\epsilon}\right)\right)$.
\end{lemma}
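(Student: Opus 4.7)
The plan is to analyze \deletev{} and \insertv{} separately, decompose the work level by level, and apply the uniform invariant to argue that the heavy reconstruction calls are offset by the small probability of triggering them. Throughout, I will take $T = \mO(k\log(k/\epsilon))$ from Lemma~\ref{lm:number_of_levels} and the cost of $\MatroidConstLevel(i)$ as $\mO(|R_i|\log(k)\log(k/\epsilon))$ from Lemma~\ref{lm:level_query_complexity} as black boxes.

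For \deletev{$v$}, note that the body of the for-loop at level $i$ is query-free except when $e_i=v$, in which case we invoke $\MatroidConstLevel(i)$ and break. Let $Q^{\mathrm{del}}_i$ be the number of queries charged to iteration $i$. If $v\notin R_i$ the loop has already broken, so we may condition on $v\in R_i$. By the uniform invariant (which holds before the deletion by Theorem~\ref{mat_delete:invariants} applied to the prior state), conditioned on $\bT\ge i$ and $\bH_i=H_i$ with $v\in R_i$, the element $\bE_i$ is uniform on $R_i$, so $\Pr[\bE_i=v\mid \cdot] = 1/|R_i|$. Combining with Lemma~\ref{lm:level_query_complexity},
\[
\Ex{Q^{\mathrm{del}}_i \mid \bT\ge i, \bH_i=H_i} \;\le\; \tfrac{1}{|R_i|}\cdot \mO\bigl(|R_i|\log(k)\log(k/\epsilon)\bigr) \;=\; \mO(\log(k)\log(k/\epsilon)).
\]
Summing the tower of conditional expectations over $i=1,\dots,T$ and using $T=\mO(k\log(k/\epsilon))$ yields the claimed $\mO(k\log(k)\log^2(k/\epsilon))$ bound for deletions.

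For \insertv{$v$}, each iteration incurs one $\replacementTester{}$ call of cost $\mO(\log k)$ (deterministic) on Line~\ref{line:mat:insert:break}; summed over the at most $T+1$ iterations before the loop breaks, this contributes $\mO(T\log k) = \mO(k\log(k)\log(k/\epsilon))$. The only additional work happens in the $\bp_i=1$ branch: recomputing $w(e_i)$ and $y$ costs $\mO(\log k)$ queries; rebuilding $R_{i+1}$ in Line~\ref{line:mat:insert:setRi+1} takes $\mO(|R_i|\log k)$ queries (an $\mO(\log k)$-query promoting check on each element of the newly enlarged $R_i$); and the call $\MatroidConstLevel(i+1)$ costs $\mO(|R_{i+1}|\log(k)\log(k/\epsilon)) \le \mO(|R_i|\log(k)\log(k/\epsilon))$ since $R_{i+1}\subseteq R_i$. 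So conditional on reaching iteration $i$ with $v$ being promoting (which is required to set $\bp_i$ at all), the expected branch cost is $\tfrac{1}{|R_i|+1}\cdot\mO(|R_i|\log(k)\log(k/\epsilon)) = \mO(\log(k)\log(k/\epsilon))$ by the probability $1/|R_i|$ (after $v$ is inserted) chosen on Line~\ref{line:p:insert:klogk}. Summing across $i=1,\dots,T+1$ again gives $\mO(k\log(k)\log^2(k/\epsilon))$.

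The main obstacle I anticipate is bookkeeping around the uniform invariant: one must take care that the $1/|R_i|$ factor canceling the $|R_i|$ in Lemma~\ref{lm:level_query_complexity} is applied with the correct conditioning (on the configuration $\bH_i$ just before the update), and, in the insertion case, that the insertion of $v$ into $R_i$ only changes $|R_i|$ by one and that $R_{i+1}\subseteq R_i$ so no term blows up. Once the telescoping of conditional expectations is justified by the law of total expectation, the $T$ factor from Lemma~\ref{lm:number_of_levels} combines with the $\log(k)\log(k/\epsilon)$ per-level cost to produce the stated bound for both operations, and the overall worst-case expected query complexity per update is $\mO(k\log(k)\log^2(k/\epsilon))$.
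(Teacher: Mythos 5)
Your proof takes essentially the same route as the paper's: bound the deterministic per-level direct cost, invoke the uniform invariant to show $\MatroidConstLevel{}(i)$ is triggered with probability at most $1/|R_i|$, cancel against the $O(|R_i|\log(k)\log(k/\epsilon))$ cost of Lemma~\ref{lm:level_query_complexity}, and sum over the $T=O(k\log(k/\epsilon))$ levels from Lemma~\ref{lm:number_of_levels}. You are actually a bit more careful than the paper about which $R_i$ (pre- or post-update) appears where, but the argument and final bound are identical.
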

\begin{proof}
  We divide the queries made by the algorithm into two categories:
  the queries made directly by the update operations \insertv{} and \deletev{}, and the queries made indirectly, if the update triggers a call to \MatroidConstLevel{}. For the first category, the number of queries for each update is always $O(T)$ for insertion which can be bounded by $O\left(
  k\log\left(\frac{k}{\epsilon}\right)\right)$, and there are no queries made for deletion. We therefore focus on the second category.
  
  Based on uniform invariant, when we insert/delete an element, for each
  natural number $i \le T$, we call $\MatroidConstLevel{}(i)$ with probability
  $\dfrac{1}{|R_i|} \cdot \ind{e \in R_i}$ which is at most $\dfrac{1}{|R_i|}$.
  Using Lemma~\ref{lm:level_query_complexity}, the query complexity for calling
  $\MatroidConstLevel{}(i)$ is
  $\mO\left(|R_i|\log(k)\log\left(\frac{k}{\epsilon}\right)\right)$. Therefore,
  the expected number of queries caused by level $i$ is bounded by
  $\dfrac{1}{|R_i|}
  \cdot\mO\left(|R_i|\log(k)\log\left(\frac{k}{\epsilon}\right)\right) =
  \mO\left(\log(k)\log\left(\frac{k}{\epsilon}\right)\right)$. As the
  Lemma~\ref{lm:number_of_levels} bounded the number of levels by $T = O\left(
  k\log\left(\frac{k}{\epsilon}\right)\right)$, we calculate the expected
  number of query calls for each update by summing the expected number of query
  calls at each level:
  \begin{align*}
      \sum_{i=1}^T \mO\left(\log(k)\log\left(\frac{k}{\epsilon}\right)\right) \le \mO\left(k\log(k)\log^2\left(\frac{k}{\epsilon}\right)\right) \enspace.
  \end{align*}
\end{proof}

In order to obtain an algorithm that works regardless of the value of $MAX$, we guess $MAX$ up to a factor of $2$ using parallel runs.
Each element is inserted only to $\log(k/\epsilon)$ copies of the algorithm. Therefore, we obtain the total query complexity claimed in Theorem \ref{thm:matroid:query_complexity}.

\begin{theorem}
\label{thm:matroid:query_complexity}
  The expected query complexity of each insert/delete for all runs is $\mO\left(k\log(k)\log^3\left(\frac{k}{\epsilon}\right)\right)$.
\end{theorem}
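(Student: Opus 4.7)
The plan is to combine the per-instance bound from the previous lemma with the observation that each update only propagates to a polylogarithmic number of parallel instances corresponding to different guesses of $MAX$.

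First, I would invoke the preceding lemma, which asserts that for any single instance (i.e., for a fixed value of $MAX$), the expected query complexity of a single insertion or deletion is $\mO\bigl(k\log(k)\log^2(k/\epsilon)\bigr)$. This takes care of the cost within one instance, and the preceding analysis (uniform invariant plus the bound $T = \mO(k \log(k/\epsilon))$ from Lemma~\ref{lm:number_of_levels}) is the workhorse.

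Next, I would account for the parallel runs used to eliminate the $MAX$ assumption. Recall that the algorithm maintains one instance $\mathcal{A}_i$ for every guess $MAX = 2^i$ with $i \in \mathbb{Z}$, and that the update routine \textsc{UpdateWithoutKnowingMAX} forwards an update on an element $e$ only to those indices $i$ satisfying $\frac{\epsilon}{10k}\cdot 2^i \le f(e) \le 2^i$. The number of such indices is $\bigl\lfloor \log(10k f(e)/\epsilon)\bigr\rfloor - \bigl\lceil \log f(e)\bigr\rceil + 1 = \mO(\log(k/\epsilon))$, independent of the actual value of $f(e)$. Hence a single update operation is relayed to at most $\mO(\log(k/\epsilon))$ parallel instances.

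Multiplying the per-instance cost by the number of instances touched per update yields
\[
\mO\!\left(\log(k/\epsilon)\right) \cdot \mO\!\left(k\log(k)\log^2(k/\epsilon)\right) \;=\; \mO\!\left(k\log(k)\log^3(k/\epsilon)\right),
\]
as claimed. There is no genuine obstacle here: the main technical work (the uniform invariant, the bound on $T$, and the $\mO(|R_i|\log(k)\log(k/\epsilon))$ cost of \MatroidConstLevel{}) has already been carried out in the preceding lemmas; all that remains is to verify the two routine facts above, namely (i) the per-instance bound applies verbatim under the same invariants (which hold for each $\mathcal{A}_i$ independently by Theorems~\ref{mat_insert:invariants} and~\ref{mat_delete:invariants}), and (ii) the window of indices $i$ for which a given element is forwarded has logarithmic length. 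The only point worth emphasizing is that no update is lost: an instance $\mathcal{A}_i$ for which $f(e) > 2^i$ treats $e$ as exceeding the guessed maximum (and is irrelevant because a correct guess will satisfy $\max_{e' \in V_t} f(e') \in (2^{i-1}, 2^i]$), while an instance with $f(e) < \frac{\epsilon}{10k}\cdot 2^i$ can safely ignore $e$ since such an element could never satisfy Property~1 of Definition~\ref{def:promote}, and therefore cannot contribute to any $R_j$ with $j \ge 1$ in that instance.
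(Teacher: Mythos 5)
Your proposal matches the paper's argument exactly: invoke the preceding lemma for the per-instance bound of $\mO\bigl(k\log(k)\log^2(k/\epsilon)\bigr)$, observe that each element is forwarded to only $\mO(\log(k/\epsilon))$ of the parallel $MAX$-guessing instances, and multiply. The additional justification you give (why updates outside the window can be safely ignored) echoes the paper's earlier discussion in Section~\ref{sec:algorithm:matroid} and is correct.
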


\subsection{Application of Level Invariants: Approximation guarantee}
Recall that we run parallel instances of $\dynamicmatroid{}$ for different guesses of the maximum value $MAX$ such that after each update, there is a run with $\max_{e \in V_t} f(e) \in (MAX/2, MAX]$, where $V_t$ is the set of elements that have been inserted but not deleted yet. 
In this section, we only talk about the run with $\max_{e \in V_t} f(e) \in (MAX/2, MAX]$.
We prove that if the level invariants hold, then after each update the submodular value of the set $I_T$ in this run is a $(4+\epsilon)$-approximation 
of the optimal value $OPT$. Formally, we state this claim as follows: 

\begin{theorem}
\label{thm:survivor:gives:approximation}
Suppose that the level invariants hold in every run of $\dynamicmatroid{}$. 
Let $I_T$ be the independent set of the final level $L_T$ in the run with $\max_{e \in V} f(e) \in (MAX/2, MAX]$. 
Then, the set $I_T$ satisfies $(4 + \epsilon) \cdot f(I_T) \ge  OPT$, 
where $OPT = \max_{I^* \in \mathcal{I}} f(I^*)$.
\end{theorem}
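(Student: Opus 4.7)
The plan is to adapt the swap-based charging analysis of Chakrabarti and Kale~\cite{chakrabarti2015submodular} to our leveled structure. Let $I'_T = \bigcup_{j \le T} I_j$ be the set of all elements ever chosen and write $W(S) := \sum_{e \in S} w(e)$ for $S \subseteq I'_T$. The proof has two movements: first, bound $f(I'_T)$ in terms of $f(I_T)$ through a weight telescoping/doubling argument; second, bound $f(I^*)$ in terms of $f(I'_T)$ plus an additive error, using the fact that every element of $I^*\setminus I'_T$ has been filtered out.

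The first step is to show $f(I'_T) \le 2 f(I_T)$. By the weight invariant, $w(e_i) = f(I'_{i-1}+e_i)-f(I'_{i-1})$, and by the independent invariant $I'_i = I'_{i-1} + e_i$; telescoping yields $W(I'_T) = f(I'_T)$. As in the proof of Lemma~\ref{lm:number_of_levels}, the replacement relation (edges $e_i \to \hat e$ whenever $e_i$ displaces $\hat e$ in $I_\ell$) partitions $I'_T$ into directed chains ending in $I_T$, and along each chain Property~3 forces the weights to at least double. Summing geometrically across chains gives $W(I'_T) \le 2\, W(I_T)$. Finally, ordering $I_T$ chronologically and applying submodularity (the earlier $I_T$ elements are a subset of $I'_{i-1}$ when $e_i$ is added) shows $W(I_T) \le f(I_T)$. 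Combining, $f(I'_T) \le 2 f(I_T)$.

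The second step bounds $f(I^*)$. Monotonicity and submodularity give
\begin{equation*}
f(I^*) \le f(I^* \cup I'_T) \le f(I'_T) + \sum_{e^* \in I^* \setminus I'_T} \bigl( f(I'_T + e^*) - f(I'_T) \bigr).
\end{equation*}
By the survivor and terminator invariants, each $e^* \in I^* \setminus I'_T$ was filtered at some level $\ell^*$ with $\suit(I_{\ell^*}, I'_{\ell^*}, e^*, w[I_{\ell^*}]) = \err$. Using submodularity to replace its marginal with the one at time $\ell^*$, the filtering either (a) failed Property~1, contributing less than $\tfrac{\epsilon}{10k}\, MAX$ per element and thus at most $\tfrac{\epsilon}{10}\, MAX$ total over $I^*$; or (b) failed Properties~2 and~3, so the minimum-weight swap witness $\hat e_{\ell^*} \in I_{\ell^*}$ obeys $f(I'_{\ell^*}+e^*)-f(I'_{\ell^*}) < 2 w(\hat e_{\ell^*})$ and $I_{\ell^*} + e^* - \hat e_{\ell^*} \in \mI$.

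The main technical step is then to construct an injection $\sigma$ from the Case-(b) elements into $I_T$ satisfying $w(\sigma(e^*)) \ge w(\hat e_{\ell^*})$. Starting from Brualdi's exchange bijection between $I^*$ and (a basis extension of) $I_T$, I pick $\sigma(e^*) \in I_T$ with $I_T - \sigma(e^*) + e^* \in \mI$, and then argue inductively along the replacement chain from $\hat e_{\ell^*}$ that $w(\sigma(e^*)) \ge w(\hat e_{\ell^*})$: if $\sigma(e^*) \in I_{\ell^*}$ then minimality of $\hat e_{\ell^*}$ in $C_{\ell^*}$ suffices, and otherwise $\sigma(e^*)$ was introduced by a later Property~3 swap whose weight already dominates $w(\hat e_{\ell^*})$. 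Summing gives $\sum_{(b)} 2 w(\hat e_{\ell^*}) \le 2 W(I_T) \le 2 f(I_T)$; combining with the first step, $f(I^*) \le 4 f(I_T) + \tfrac{\epsilon}{10}\, MAX$, and since in the selected run $MAX \le 2\, OPT$, rearranging yields $f(I_T) \ge OPT/(4+\epsilon)$. The hard part is precisely this matroid-exchange charging: verifying that Brualdi's bijection can be chosen to align with the replacement chains so that $w(\sigma(e^*)) \ge w(\hat e_{\ell^*})$ holds uniformly, especially when $\sigma(e^*)$ enters $I_T$ strictly after time $\ell^*$ via a chain of Property~3 swaps rather than being directly present in $I_{\ell^*}$.
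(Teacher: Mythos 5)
Your overall architecture matches the paper's: write $I'_T=\bigcup_j I_j$, establish $W(I'_T)=f(I'_T)\le 2W(I_T)\le 2f(I_T)$, then bound $f(I^*)\le f(I'_T)+\sum_{e^*\in I^*\setminus I'_T}\text{(marginal)}$, split the marginals by which property failed, absorb the Property-1 failures into $\tfrac{\epsilon}{10}MAX$, and charge the Property-3 failures into $I_T$ via an injection. The first movement is correct (the paper's Lemmas~\ref{lm:wK_le_wI} and~\ref{lm:wI_le_fI}, plus the telescoping identity, give exactly what you claim). The arithmetic at the end also works out.

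The genuine gap is the injection $\sigma$, and you have rightly flagged it as ``the hard part,'' but the argument you sketch does not close it. First, $\sigma(e^*)\in I_{\ell^*}$ with $I_T-\sigma(e^*)+e^*\in\mI$ does \emph{not} imply $\sigma(e^*)\in C_{\ell^*}=\{e'\in I_{\ell^*}: I_{\ell^*}+e^*-e'\in\mI\}$, because $I_T$ and $I_{\ell^*}$ are different independent sets; so minimality of $\hat e_{\ell^*}$ in $C_{\ell^*}$ says nothing about $w(\sigma(e^*))$. Second, in your other branch, an element of $I_T$ introduced by a later Property-3 swap only dominates the weight of the \emph{element it replaced}, i.e., of elements on its own replacement chain; it need not dominate $w(\hat e_{\ell^*})$ unless $\hat e_{\ell^*}$ lies on that chain, and you have no handle forcing Brualdi's bijection to respect chain membership. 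Third, even if one charges each $e^*$ to the chain-root above $\hat e_{\ell^*}$ (which is in $I_T$ and does have weight $\ge w(\hat e_{\ell^*})$), that map is not injective: several $e^*$'s can have their $\hat e_{\ell^*}$ on the same chain.

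The paper resolves exactly this by not committing to a single witness. In Lemma~\ref{lm:hall} it builds, level by level, a set-valued $N_i: \wtilde I_i \to 2^{I_i}$ with two invariants: $e^*\in\spn(N_i(e^*))$ and $w(e^*)\le 2w(e')$ for every $e'\in N_i(e^*)$. When a swap removes some $\hat e\in N_{i-1}(e^*)$, the paper patches $N_i(e^*)=(N_{i-1}(e^*)\cup C)-\hat e$ using the new circuit $C$ and the span-transitivity Lemma~\ref{lm:matroid:e_in_span_span}, so both invariants survive. At the end, $\spn$-containment plus independence of $E\subseteq I^*\setminus I^*_W$ gives $|E|\le|N(E)|$ by Lemma~\ref{lm:matroid:hall_property_in_matroid}, so Hall's marriage theorem produces the injection $H$ with $H(e^*)\in N(e^*)$ and therefore $w(e^*)\le 2w(H(e^*))$. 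The flexibility of keeping a \emph{set} of candidate witnesses through the swaps, and only choosing a system of distinct representatives at the very end, is precisely what a one-shot Brualdi bijection between $I^*$ and $I_T$ cannot provide, because that bijection is blind to the weights and to the order in which swaps occurred. If you want to salvage the Brualdi route you would still need to prove a Hall-type condition aligned with the chains, which is essentially a reformulation of Lemma~\ref{lm:hall}.
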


To this end, we first define a few notations.
\begin{definition}
    For an element $e \in \ground$, we let $z(e)$ denote the largest $i$ such that $e \in R_i$.
    In Algorithms \ref{alg:matroid} and \ref{alg:matroid-updates}, $w(e)$ is defined for all elements $e\in I'_T$, but we need to define it for other elements as well. Therefore, if $e_{z(e)} = e$, we set $w(e)  = \marginalgain{e}{I'_{z(e) - 1}}$, to match the value defined in the Algorithm. Otherwise, we set $w(e) = \marginalgain{e}{I'_{z(e)}}$.
    For a set $E \subseteq \ground$, we define $w(E)=\sum_{e\in E} w(e)$.
\end{definition}

We split the proof of Theorem~\ref{thm:survivor:gives:approximation} into four steps.
We first (in Lemma~\ref{lm:wK_le_wI}) prove that $w(I'_T) \le 2w(I_T)$. 
Later, in Lemma~\ref{lm:wI_le_fI}  we show that the sum of the weight of the elements in $I_T$ 
is upper-bounded by the submodular function of $I_T$.  That is, $w(I_T) \le f(I_T)$. 
Recall that $OPT = \max_{I \in \mathcal{I}} f(I)$ and we used the notation 
$I^* = \arg\max_{I \in \mathcal{I}} f(I) $ 
for an independent set in $\mathcal{I}$ whose submodular value is maximum. 
In the third step of the proof of Theorem~\ref{thm:survivor:gives:approximation}, 
we show that $f(I^*) \le 2w(I_T) + w(I^*)$. We prove this in Lemma~\ref{lm:fIstar_2fI_wI}.
Our proofs for these lemmas are inspired by the analysis
in Chakrabarti and Kale~\cite{chakrabarti2015submodular} who study
the streaming version of the problem.
Finally, we show that $w(I^*) \le  2w(I_T) + \frac{\epsilon}{5} \cdot f(I^*)$.
This is proven in Lemma~\ref{lm:modular} using an argument
inspired by the analysis of Ashwinkumar~\cite{badanidiyuru2011buyback}.

Having all these tools in hand, we can then finish the proof of 
Theorem~\ref{thm:survivor:gives:approximation}. 
Indeed, we have 
    \begin{align}
        f(I^*) \overset{(a)}{\le} 2w(I_T) + w(I^*) \overset{(b)}{\le} 4w(I_T) + \frac{\epsilon}{5} \cdot f(I^*) \overset{(c)}{\le} 4f(I_T) + \frac{\epsilon}{5} \cdot f(I^*) \enspace,
    \end{align}
    where (a), (b), and (c) follow from Lemmas \ref{lm:fIstar_2fI_wI}, 
    \ref{lm:modular} and \ref{lm:wI_le_fI} respectively.
    
This essentially means that     
$f(I^*) \le \frac{4}{1-\frac{\epsilon}{5}} \cdot f(I_T)$. 
Now observe that $\frac{4}{1-\frac{\epsilon}{5}} \le 4+\epsilon$. 
Indeed, if we want to have this claim correct, we must have $20 - 4\epsilon + 5\epsilon - \epsilon^2 \ge 20$ 
which means we must have $ \epsilon(\epsilon-1) \le 0$. 
However, this is correct since $0 < \epsilon \le 1$, which finishes the proof of Theorem~\ref{thm:survivor:gives:approximation}.

Next, we prove the four steps that we explained above.

\begin{definition}[\spn]
\label{def:matroid:span}
    Let $E \subseteq V$ be a set of elements. We define $\spn(E) = \{e \in V : rank(E + e) = rank(E) \}$.
\end{definition}

\begin{lemma}
    \label{lm:wK_le_wI}
    $w(I'_T) \le 2w(I_T)$.
\end{lemma}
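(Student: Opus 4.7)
The plan is to exploit the path decomposition of $I'_T$ that was already established in the proof of Lemma~\ref{lm:number_of_levels}. Recall that we built a directed graph $G$ on vertex set $I'_T$ with an edge $e_i \to \mathrm{parent}(e_i)$ whenever $\mathrm{parent}(e_i) \neq \emptyset$, where $\mathrm{parent}(e_i) = \replacementTester(I_{i-1}, I'_{i-1}, e_i, w[I_{i-1}])$ is the element that $e_i$ kicks out at level $i$ (under Property~3). We showed that $G$ is a disjoint union of paths, that every path terminates (at its in-degree-zero endpoint) in an element of $I_T$, and that for every edge $u \to v$ (so $v = \mathrm{parent}(u)$) one has the doubling inequality $w(u) \ge 2 w(v)$. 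This last inequality is exactly Property~3 in Definition~\ref{def:promote}: at the moment $u = e_i$ was selected we had $2 w(v) \le f(I'_{i-1} + u) - f(I'_{i-1}) = w(u)$, using the Weight invariant to identify the right-hand side with $w(u)$.

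Next, I would fix a single path and sum a geometric series. Write the path as $v_1, v_2, \dots, v_p$ with the edges going $v_{j+1} \to v_j$, so $v_1$ is the out-degree-zero end (the oldest element on the path) and $v_p$ is the in-degree-zero end (which, as noted, lies in $I_T$). Iterating the doubling property along the path gives $w(v_j) \le 2^{-(p-j)} w(v_p)$ for every $j \in [p]$, hence
\begin{equation*}
\sum_{j=1}^{p} w(v_j) \;\le\; w(v_p)\sum_{j=0}^{p-1} 2^{-j} \;\le\; 2\, w(v_p).
\end{equation*}

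Finally, because $G$ decomposes into vertex-disjoint paths whose vertex sets partition $I'_T$ and whose in-degree-zero endpoints are exactly the elements of $I_T$, summing the per-path bound above over all paths yields
\begin{equation*}
w(I'_T) \;=\; \sum_{\text{paths } P} \sum_{v \in P} w(v) \;\le\; 2 \sum_{\text{paths } P} w\bigl(\text{endpoint of }P\bigr) \;=\; 2\, w(I_T),
\end{equation*}
which is the desired bound. The only subtle point is to be sure that the weight $w(v)$ referenced in the doubling inequality really agrees with the current value of $w(v)$ used on the right-hand side of the final inequality; this follows because the Weight invariant fixes $w(e_i)$ once and for all at the moment $e_i$ is introduced (Line~\ref{line:const_level_matroid:w}), so the weight of an element is never rewritten, even when it later gets replaced and leaves $I_T$ while remaining in $I'_T$.
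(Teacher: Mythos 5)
Your proof is correct, but it takes a genuinely different route from the paper's. The paper proves the stronger statement $w(I'_i) \le 2w(I_i)$ for every $i$ by a short induction: using the independent invariant, either $I_i = I_{i-1}+e_i$ or $I_i = I_{i-1}+e_i-\hat{e}$ with $w(\hat{e}) \le w(e_i)/2$, and in both cases $w(I_i) \ge w(I_{i-1}) + w(e_i)/2 \ge \tfrac12 w(I'_{i-1}) + \tfrac12 w(e_i) = \tfrac12 w(I'_i)$. Your argument instead reuses the path decomposition of $I'_T$ from the proof of Lemma~\ref{lm:number_of_levels}, sums a geometric series along each replacement chain, and aggregates over the paths whose in-degree-zero endpoints are exactly $I_T$. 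Both are valid; the paper's induction is shorter and self-contained, whereas your global chain argument makes the amortization transparent (each surviving element of $I_T$ "pays for" the exponentially decaying weights of the elements it recursively displaced) and dovetails nicely with the bound on $T$, since it's literally the same graph. The only point worth spelling out slightly more carefully is that every element of $I_T$ is indeed some $e_i$, hence a vertex of $G$ with in-degree zero, so the in-degree-zero endpoints are not merely contained in $I_T$ but equal to it — this is what makes the final sum come out to exactly $2w(I_T)$; the paper's proof of Lemma~\ref{lm:number_of_levels} already argues this, so the reference suffices.
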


\begin{proof}
    We prove by induction on $i$ that
    $w(I'_i) \le 2w(I_i)$ for all $i$. Setting $i=T$ will finish the proof.~\\
    The claim holds for $i=0$ as $w(I_i) = w(I'_i) = w(\emptyset) = 0$.
    Assume that the claim holds for $i-1$, we prove it holds for $i$ as well.
    Given independent invariant, 
    $I'_{i} = I'_{i-1} + e_i$ 
    and either $I_{i} = I_{i-1} + e_i$ or
    $I_{i} = I_{i-1} + e_i - \hat{e}$ for some $\hat{e}$ satisfying
    $w(\hat{e}) \le \frac{w(e_i)}{2}$. In either case,
    \begin{align*}
        w(I_{i}) &\ge w(I_{i-1}) + w(e_i) - \frac{w(e_i)}{2}
        \ge w(I_{i-1}) + \frac{w(e_i)}{2}
        \\&\overset{(a)}{\ge} \frac{1}{2}w(I'_{i-1}) + \frac{w(e_i)}{2}
        = \frac{1}{2} w(I'_{i}) \enspace,
    \end{align*}
    where for $(a)$, we have used the induction assumption for $i-1$.
\end{proof}

\begin{lemma}
\label{lm:wI_le_fI}
    The sum of the weight of the elements in $I_T$ 
    is upper-bounded by the submodular function of $I_T$.  
    That is, $w(I_T) \le f(I_T)$. 
\end{lemma}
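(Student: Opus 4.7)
The plan is to order the elements of $I_T$ by the level at which they were added to $I'$, then build them up one by one while exploiting submodularity together with the fact that each $I'_{j-1}$ is a \emph{superset} of the previously-added elements of $I_T$. Concretely, since elements of $I_T$ are never re-added (they were placed by some \MatroidConstLevel{} step and possibly never evicted), there is an increasing sequence of indices $j_1 < j_2 < \cdots < j_m$ such that $I_T = \{e_{j_1}, \dots, e_{j_m}\}$. By the independent invariant, $I'_{j-1} = \bigcup_{\ell \le j-1} I_\ell$ contains every $e_\ell$ with $\ell \le j-1$, so in particular $S_k := \{e_{j_1}, \dots, e_{j_{k-1}}\} \subseteq I'_{j_k - 1}$ for each $k$.

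Next I will apply the submodularity of $f$ to the inclusion $S_k \subseteq I'_{j_k - 1}$ and the element $e_{j_k} \notin I'_{j_k - 1}$ (which holds by the weight invariant, since $e_{j_k}$ is a ``new'' element picked at level $j_k$). This yields
\begin{equation*}
w(e_{j_k}) \;=\; f(I'_{j_k - 1} + e_{j_k}) - f(I'_{j_k - 1}) \;\le\; f(S_k + e_{j_k}) - f(S_k) \;=\; f(S_{k+1}) - f(S_k).
\end{equation*}
Summing over $k = 1, \dots, m$ telescopes the right-hand side to $f(S_{m+1}) - f(S_1) = f(I_T) - f(\emptyset) = f(I_T)$, while the left-hand side is exactly $w(I_T)$, giving the claim.

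The only subtle point — really more of a bookkeeping step than an obstacle — is verifying that the ordering $j_1 < \cdots < j_m$ and the containment $S_k \subseteq I'_{j_k - 1}$ are justified by the invariants established earlier, i.e.\ that $I'_{j_k - 1}$ genuinely contains every $e_{\ell}$ with $\ell < j_k$ (even ones that were later evicted from some $I_\ell$ by a replacement). This follows immediately from the independent invariant $I'_i = \bigcup_{j \le i} I_j$, since each $e_\ell$ belongs to $I_\ell$ at the moment it is introduced and hence to $I'_{\ell} \subseteq I'_{j_k - 1}$. With that observation in place, the proof is just submodularity plus telescoping, and requires no use of the matroid structure.
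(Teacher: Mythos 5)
Your proof is correct and is essentially the same argument as the paper's: both telescope $f$ along the elements of $I_T$ in the order they were added, applying submodularity with $I'_{i-1}$ on the large side (the paper phrases this as an induction over all levels $i$ with the auxiliary set $\wtilde{I}_i := I_i \cap I_T$, which coincides with your $S_k$ at the relevant indices $j_k$, so the two are the same sets). One small imprecision: $e_{j_k} \notin I'_{j_k-1}$ follows not from the weight invariant alone but from the survivor invariant (chained, it gives $R_i \cap \{e_1,\dots,e_{i-1}\} = \emptyset$, and $I'_{i-1} \subseteq \{e_1,\dots,e_{i-1}\}$), though this is exactly the kind of fact the paper also uses implicitly.
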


\begin{proof}
  For each $i \in [T]$, define
  $\wtilde{I}_i$ as
  $I_i \cap I_T$. We prove by induction on $i$ that
  $w(\wtilde{I}_i) \le f(\wtilde{I}_i)$. 
  Setting $i=T$ proves the claim.
  
  The case of $i=0$ holds trivially as
  $w(\wtilde{I}_0) = f(\wtilde{I}_0) = 0$.
  Assume that $w(\wtilde{I}_{i-1}) \le f(\wtilde{I}_{i-1})$,
  we will prove that $w(\wtilde{I}_i) \le f(\wtilde{I}_i)$.
  If $e_i \notin I_T$, then the claim holds trivially as
  $\wtilde{I}_{i} = \wtilde{I}_{i-1}$. Note that in this case, if an element has appeared in $I_{i-1}$, but it is removed from $I_i$, then it is not included in $I_T$ and hence $\wtilde{I}_{i-1}$. We therefore assume that $e_i \in I_T$. In this case,
  we note that
  \begin{align}
    \label{eq:jul5_1102}
      I'_{i-1} = \bigcup_{j \le i-1} I_{j} 
      \supseteq I_{i-1} \supseteq \wtilde{I}_{i-1} \enspace .
  \end{align}
  Therefore, 
  \begin{align*}
      w(\wtilde{I}_{i}) - w(\wtilde{I}_{i-1})
      =
      w(e_i) \overset{(a)}{=}
      \marginalgain{e}{I'_{i-1}}
      \overset{(b)}{\le}
      \marginalgain{e}{\wtilde{I}_{i-1}} = f(\wtilde{I}_i) - f(\wtilde{I}_{i-1}) \enspace,
  \end{align*}
  where for $(a)$ we have used weight invariant, and for $(b)$ we have used the definition of submodularity together with \eqref{eq:jul5_1102}.
  Summing the above inequality with the induction hypothesis
  $w(\wtilde{I}_{i-1}) \le f(\wtilde{I}_{i-1})$ proves the claim.    
\end{proof}

\begin{lemma}
\label{lm:fIstar_2fI_wI}
 Recall that $OPT = \max_{I \in \mathcal{I}} f(I)$ and we 
 used the notation $I^* = \arg\max_{I \in \mathcal{I}} f(I) $ 
 for an independent set in $\mathcal{I}$ whose submodular value is maximum. 
Then, $f(I^*) \le 2w(I_T) + w(I^*)$.
\end{lemma}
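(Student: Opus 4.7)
The plan is to bound $f(I^*)$ by first expanding it using monotonicity and submodularity as a sum of marginals over $I'_T$, and then to identify each piece as a ``weight'' quantity. Concretely, I will start from
\[
f(I^*) \;\le\; f(I^* \cup I'_T) \;\le\; f(I'_T) \;+\; \sum_{e \in I^* \setminus I'_T} \bigl(f(I'_T + e) - f(I'_T)\bigr),
\]
where the first inequality is monotonicity and the second follows by writing $f(I^* \cup I'_T) - f(I'_T)$ as a telescoping sum along an enumeration of $I^* \setminus I'_T$ and applying submodularity to each marginal (dropping the elements already added). Then I will bound the two terms on the right-hand side by $2w(I_T)$ and $w(I^*)$ respectively.

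For the first term, I will show $f(I'_T) = w(I'_T)$ by a telescoping argument. Using the independent invariant, $I'_i = I'_{i-1} + e_i$ for each $i \in [T]$, hence
\[
f(I'_T) \;=\; \sum_{i=1}^{T} \bigl(f(I'_i) - f(I'_{i-1})\bigr) \;=\; \sum_{i=1}^{T} \bigl(f(I'_{i-1}+e_i) - f(I'_{i-1})\bigr) \;=\; \sum_{i=1}^{T} w(e_i) \;=\; w(I'_T),
\]
where the third equality invokes the weight invariant ($w(e_i) = f(I'_{i-1}+e_i) - f(I'_{i-1})$). Combining this with Lemma~\ref{lm:wK_le_wI} yields $f(I'_T) \le 2w(I_T)$.

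For the second term, I will bound each marginal $f(I'_T + e) - f(I'_T)$ by $w(e)$. Fix $e \in I^* \setminus I'_T$. Since $e \in V = R_0$, the index $z(e)$ is well defined and satisfies $z(e) \le T$, and because $e \notin I'_T$ we have $e \ne e_{z(e)}$, so by the stated definition $w(e) = f(I'_{z(e)} + e) - f(I'_{z(e)})$. The independent invariant gives $I'_{z(e)} = \bigcup_{j \le z(e)} I_j \subseteq \bigcup_{j \le T} I_j = I'_T$, so submodularity yields $f(I'_T + e) - f(I'_T) \le f(I'_{z(e)} + e) - f(I'_{z(e)}) = w(e)$. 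Summing over $e \in I^* \setminus I'_T$ and using that $w$ is nonnegative (since $f$ is monotone), I get $\sum_{e \in I^* \setminus I'_T} w(e) \le w(I^*)$. Putting everything together completes the proof: $f(I^*) \le 2w(I_T) + w(I^*)$.

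The main (mild) obstacle is being careful about the two cases in the definition of $w(e)$: elements in $I'_T$ take their weight relative to $I'_{z(e)-1}$ while elements outside $I'_T$ take it relative to $I'_{z(e)}$. Only the second case appears in the sum that matters, and it is exactly tailored so that submodularity transports the marginal at $I'_{z(e)}$ to a bound at $I'_T$. Note that I do not need to invoke the survivor invariant (i.e.\ the fact that $e$ fails to be promoting at level $z(e)+1$) for this particular lemma; that invariant will be used later to obtain the sharper comparison $w(I^*) \le 2w(I_T) + \tfrac{\epsilon}{5}f(I^*)$ in Lemma~\ref{lm:modular}.
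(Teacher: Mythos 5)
Your proof is correct and follows essentially the same route as the paper's: both first establish $f(I'_T) = w(I'_T) \le 2w(I_T)$ via the weight invariant and Lemma~\ref{lm:wK_le_wI}, then bound $f(I^* \cup I'_T) - f(I'_T)$ by $w(I^* \setminus I'_T) \le w(I^*)$ using a telescoping sum together with submodularity and the fact that $I'_{z(e)} \subseteq I'_T$. The only cosmetic difference is that you insert an intermediate comparison against the marginal at $I'_T$ before passing to $I'_{z(e)}$, whereas the paper passes directly from $D_{i-1}$ to $I'_{z(e^*_i)}$; the substance is the same.
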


\begin{proof}
We first note that
\begin{align}
    f(I'_T) = 
    \sum_{i=1}^{T}
    f(I'_{i}) - f(I'_{i-1})
    =
    \sum_{i=1}^{T}
    f(I'_{i-1}+e_i) - f(I'_{i-1})
    \overset{(a)}{=}
    \sum_{i=1}^{T}
    w(e_i)
    {=}
    w(I'_T)
    \overset{(b)}{\le} 2w(I_T) \enspace ,
    \label{inequality:fIprime_2wI}
  \end{align}
  where $(a)$ follows from weight invariant, and $(b)$ follows from
  Lemma~\ref{lm:wK_le_wI}.

  We now bound $f(I^*)$.
  Enumerate $I^* \backslash I'_T$ as $\{e^*_1, \dots, e^*_{|I^* \backslash I'_T|}\}$ in an arbitrary order.
  Define
  $D_0 = I'_T$ and $D_i = I'_T \cup \{e^*_{1}, \dots e^*_{i}\}$.
  It is clear that $D_{i-1} \supseteq I'_T \supseteq I'_{z(e_i^*)}$.
  Therefore,
  \begin{align*}
    f(D_{i}) - f(D_{i-1}) 
    &=
    f(D_{i-1} +e^*_i) - f(D_{i-1})
    \overset{(a)}{\le}
    f(I'_{z(e^*_i)} +e^*_i) - f(I'_{z(e^*_i)})
    \overset{(b)}{=} w(e^*_i) \enspace ,
  \end{align*}
  where for $(a)$ we have used the definition of submodularity, and $(b)$ holds because $e^*_i \notin I'_T$.
  Summing over all $i$, we obtain 
  \begin{align*}
    \sum_{i=1}^{|I^* \backslash I'_T|} f(D_i) - f(D_{i-1}) &\le \sum_{i=1}^{|I^* \backslash I'_T|} w(e^*_i)\\
    f(D_{|I^* \backslash I'_T|}) - f(D_0) &\le w(I^* \backslash I'_T)\\
    &\le w(I^*) \enspace.
  \end{align*}
  Given that $D_0=I'_T$ and $D_{|I^* \backslash I'_T|}=I^* \cup I'_T$, we have
  \begin{align*}
      f(I^*) \le
      f(I^* \cup I'_T)
      \le
      f(I'_T) + w(I^*) \le 2w(I_T) + w(I^*) \enspace,
  \end{align*}
  where the last inequality follows from (\ref{inequality:fIprime_2wI}).
    
\end{proof}

\begin{lemma}
\label{lm:modular}
    $w(I^*) \le  2w(I_T) + \frac{\epsilon}{5} \cdot f(I^*)$.
\end{lemma}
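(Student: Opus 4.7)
The approach is to split $I^*$ by element weight: light elements contribute only an $O(\epsilon) f(I^*)$ error, while heavy elements are charged against $I_T$ through a matroid exchange argument. Let $B := \{e^* \in I^* : w(e^*) < \frac{\epsilon}{10k}\, MAX\}$ and $A := I^* \setminus B$. Since $|B| \le |I^*| \le k$, we have $w(B) < \frac{\epsilon}{10}\, MAX$; for the run we analyze, $MAX \le 2 \max_{e \in V} f(e) \le 2 f(I^*)$ (every singleton subset of $I^*$ is independent, and loops may be disregarded since the algorithm never promotes them), so $w(B) < \frac{\epsilon}{5} f(I^*)$. It therefore suffices to prove $w(A) \le 2 w(I_T)$.

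\textbf{Step~1 (Candidate witnesses).} I will construct an injection $\phi: A \to I_T$ with $w(\phi(e^*)) \ge w(e^*)/2$; summation then yields $w(A) \le 2 w(I_T)$. For each $e^* \in A$ the candidate witnesses are located by a case split on how $e^*$ ``died''. If $e^* \in I_T$, take $\phi(e^*) = e^*$. If $e^* \in I'_T \setminus I_T$, then $e^* = e_j$ with $j = z(e^*)$ was later displaced by a Property~3 swap whose incoming element has weight at least $2 w(e^*)$; iterating along the replacement chain of Lemma~\ref{lm:number_of_levels} reaches a root in $I_T$ of weight at least $2 w(e^*)$. If $e^* \in A \setminus I'_T$, then by the terminator and survivor invariants $\replacementTester(I_{z(e^*)}, I'_{z(e^*)}, e^*, w[I_{z(e^*)}]) = \err$; since Property~1 holds ($e^* \in A$), Properties~2 and~3 must both fail, so by Lemma~\ref{lem:<=1circuit} there is a unique circuit $C + e^* \subseteq I_{z(e^*)} + e^*$ with $C \subseteq I_{z(e^*)} \subseteq I'_T$, and $2w(\hat e) > w(e^*)$ forces \emph{every} $u \in C$ to satisfy $w(u) > w(e^*)/2$; tracing any such $u$ through its replacement chain yields a candidate in $I_T$ of weight at least $w(e^*)/2$.

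\textbf{Step~2 (Injectivity via Hall's theorem).} To convert the per-element choice of Step~1 into a genuine injection, view the candidates as edges of a bipartite graph between $A$ and $I_T$ and verify Hall's condition $|N(S)| \ge |S|$ for every $S \subseteq A$. This will follow from matroid exchange applied to the independent sets $S \subseteq I^*$ and $I_T$: a violation $|N(S)| < |S|$ would let one combine $S$ with $I_T \setminus N(S)$, via exchange, into an independent set showing that some heavy element of $S$ could either have been promoted or should have triggered a Property~3 swap, contradicting the level invariants. A Hall matching then supplies an injective $\phi$, and combining Steps~1 and~2 with the light-element bound gives
\begin{equation*}
   w(I^*) \;=\; w(A) + w(B) \;\le\; 2\, w(I_T) + \tfrac{\epsilon}{5}\, f(I^*).
\end{equation*}

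\textbf{Main obstacle.} I expect the hardest step to be Step~2: the local candidates produced in Step~1 are easy to identify, but several elements of $A$ can point to a common element of $I_T$, and ruling out such collisions requires using both the tree-of-chains structure of $I'_T$ from Lemma~\ref{lm:number_of_levels} and the matroid exchange axiom between the pair $(I^*, I_T)$ in order to guarantee that every subset $S \subseteq A$ reaches enough distinct elements of $I_T$ to be matched.
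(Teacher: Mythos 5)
Your overall architecture — split $I^*$ into light and heavy elements, bound the light mass by $\tfrac{\epsilon}{5}f(I^*)$, and match heavy elements injectively into $I_T$ via Hall's marriage theorem — is exactly the paper's plan. But the way you populate the bipartite graph in Step~1 is too lean, and as a result the Hall verification sketched in Step~2 does not go through; this is a genuine gap, not just a deferred detail.

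The paper's Lemma~\ref{lm:hall} builds, by induction over levels, a set $N(e^*) \subseteq I_T$ for each heavy $e^* \in I^* \setminus I^*_W$ with the \emph{spanning} invariant $e^* \in \spn(N(e^*))$, maintained via Lemma~\ref{lm:matroid:e_in_span_span}: whenever a Property~3 swap removes some $\hat e \in N_{i-1}(e^*)$ while inserting $e_i$, the entire circuit $C$ of $I_{i-1}+e_i$ (minus $\hat e$) is added to $N_i(e^*)$, precisely so that the span is preserved. Hall's condition is then immediate from Lemma~\ref{lm:matroid:hall_property_in_matroid}: any independent $S \subseteq A$ satisfies $S \subseteq \spn(N(S))$, hence $|S| \le |N(S)|$. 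Your Step~1, by contrast, assigns to each $e^*$ only the \emph{chain roots} obtained by walking the replacement paths of Lemma~\ref{lm:number_of_levels}. These roots are in general a strict subset of $N(e^*)$ and do \emph{not} span $e^*$: when $u \in C$ is replaced by $e_i$, the new circuit of $I_{i-1}+e_i$ can contain other elements $u_3, u_4, \dots$ that live on \emph{different} chains, and these are exactly what the paper's $N$ absorbs but your chain-walk ignores. Concretely, two heavy elements $e_1^*, e_2^*$ can have circuits whose chain roots coincide, giving $|N(\{e_1^*, e_2^*\})| = 1$ and a Hall violation that your construction cannot rule out.

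Your proposed rescue in Step~2 — "matroid exchange applied to $S$ and $I_T \setminus N(S)$" — is not a proof. The exchange axiom relates cardinalities of two independent sets, but it gives no information about \emph{which} element of $I_T$ you can add to $S$, and there is no mechanism in the sketch connecting the added element to the historical data (levels, weights, circuits at rejection time) that you would need in order to derive "some element of $S$ should have been promoted." To make the Hall argument work you either have to carry the full circuit (not its chain root) through the replacement process exactly as Lemma~\ref{lm:hall} does, or find an entirely different certificate of Hall's condition; the chain-root neighborhoods by themselves are insufficient.

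Two smaller remarks. First, the same under-population occurs in your case $e^* \in I'_T \setminus I_T$: when $e^* = e_j$ is swapped out by $e_m$ with circuit $C$, the paper sets $N_m(e^*) = C - e^*$, which generally contains more than $e_m$, while you take only the single chain root. Second, the light-element bound $w(I^*_W) \le \tfrac{\epsilon}{5}f(I^*)$ is fine and matches the paper's \eqref{eq:jul5_1442}, modulo the standard assumption that loops are removed (which the paper itself uses implicitly when invoking $f(I^*) \ge \max_{e\in V} f(e)$).
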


We first give a sketch of the proof of Lemma~\ref{lm:modular}.

We split the $I^*$ into two parts. The first part consists of elements with weights $w(e) \le \frac{\epsilon}{10k}MAX$. As we will show, the total weight of these elements can be bounded by $\frac{\epsilon}{5} \cdot f(I^*)$. As for the second group, we will show that each element can be mapped one-to-one to an element in $I_T$ with at least half its weight. 
~\\
Formally, 
let $I^*_{W}$ consist of all the elements in $I^*$ such that
$w(e) \le \frac{\epsilon}{10k}MAX$. 
We first bound $w(I^*_{W})$:
\begin{align}
    w(I^*_{W})
    =\sum_{e \in I^*_{W}} w(e) 
       \notag
    &\le |I^*_{W}| \cdot \frac{\epsilon}{10k}\cdot MAX
       \notag
    \\&\le |I^*| \cdot \frac{\epsilon}{10k} \cdot MAX
       \notag
   \\& \le \frac{\epsilon}{10} \cdot MAX < \frac{\epsilon}{5} \cdot f(I^*) \enspace .
   \label{eq:jul5_1442}
\end{align}
The last conclusion comes from the fact that $f(I^*) \ge \max_{e \in V} f(e) \in (\frac{MAX}{2}, MAX]$.

In order to bound $w(I^* \backslash I^*_{W}$), we will use the following lemmas:

\begin{lemma}
\label{lm:matroid:e_in_span_span}
     Let sets $E_1, E_2 \subseteq V$ and elements $e_1, e_2 \in V$. If $e_1 \in \spn(E_1)$ and $e_2 \in \spn(E_2 - e_2)$, then $e_1 \in \spn((E_1 \cup E_2) - e_2)$.
\end{lemma}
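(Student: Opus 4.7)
The plan is to reduce the claim to two standard properties of the matroid closure operator $\spn$: (i) \emph{monotonicity} --- $A \subseteq B$ implies $\spn(A) \subseteq \spn(B)$ --- and (ii) the observation, immediate from Definition~\ref{def:matroid:span}, that $e \in \spn(F)$ is equivalent to $\rank(F + e) = \rank(F)$. Property (i) is not stated explicitly in the excerpt, so I would first verify it using Lemma~\ref{lem:elimination:axiom}: if $A \subseteq B$ and $e_1 \in \spn(A) \setminus A$, then $A + e_1$ carries a circuit $C$ through $e_1$, and the same $C \subseteq B + e_1$ witnesses $e_1 \in \spn(B)$; the case $e_1 \in A \subseteq B$ gives $e_1 \in \spn(B)$ trivially.

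Given (i) and (ii), the proof is a short rank computation. Set $F := (E_1 \cup E_2) - e_2$; the goal is $\rank(F + e_1) = \rank(F)$. First, applying monotonicity to the inclusion $E_2 - e_2 \subseteq F$ together with the hypothesis $e_2 \in \spn(E_2 - e_2)$ gives $e_2 \in \spn(F)$, and hence $\rank(F) = \rank(F + e_2) = \rank(E_1 \cup E_2)$. Second, applying monotonicity to $E_1 \subseteq E_1 \cup E_2$ together with the hypothesis $e_1 \in \spn(E_1)$ gives $e_1 \in \spn(E_1 \cup E_2)$, and hence $\rank(E_1 \cup E_2 + e_1) = \rank(E_1 \cup E_2)$. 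Sandwiching these with the trivial monotonicity of rank,
\begin{equation*}
    \rank(F) \le \rank(F + e_1) \le \rank(E_1 \cup E_2 + e_1) = \rank(E_1 \cup E_2) = \rank(F),
\end{equation*}
so equality holds throughout and $e_1 \in \spn(F)$ by definition of $\spn$.

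The only step that requires any real argument is deriving monotonicity of $\spn$ from the axioms, and even that is a two-line invocation of the circuit elimination property recorded as Lemma~\ref{lem:elimination:axiom}. I do not anticipate substantive obstacles: the degenerate cases (e.g.\ $e_1 = e_2$, $e_1$ or $e_2$ already lying in $E_1 \cup E_2$, or $e_2 \notin E_2$ so that $E_2 - e_2 = E_2$) all leave the above rank-inequality chain valid verbatim, so no case analysis is needed.
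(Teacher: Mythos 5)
Your proof is correct and takes essentially the same route as the paper: both characterize $\spn$ via rank, both reduce the claim to two monotonicity applications (to $E_2 - e_2 \subseteq (E_1 \cup E_2) - e_2$ and to $E_1 \subseteq E_1 \cup E_2$), and both close with the identical rank sandwich. The only addition on your part is the short justification of $\spn$-monotonicity from circuit elimination — which the paper simply asserts — and that justification is sound.
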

\begin{proof}
  Note that if we have two sets $S_1, S_2 \subseteq E$ such that $S_1 \subseteq S_2$, then $\spn(S_1) \subseteq \spn(S_2)$ and $rank(S_1) \le rank(S_2)$.  
  Now, using Definition \ref{def:matroid:span} with the fact $E_1 \subseteq (E_1 \cup E_2)$ gives us 
  $      e_1 \in \spn(E_1) \subseteq \spn(E_1 \cup E_2)$. 
  Therefore, we have $rank((E_1 \cup E_2) + e_1) = rank(E_1 \cup E_2)$.

  In a similar way, since $E_2-e_2 \subseteq ((E_1 \cup E_2) - e_2$), we can conclude that 
  $e_2 \in \spn(E_2-e_2) \subseteq \spn((E_1 \cup E_2)-e_2)$ 
  what implies that $rank(E_1 \cup E_2) = rank((E_1 \cup E_2) - e_2)$. 
  
  By using these two results, we conclude that $rank((E_1 \cup E_2) - e_2) = rank((E_1 \cup E_2) + e_1)$. Furthermore,
  \begin{align*}
      rank((E_1 \cup E_2) - e_2) \le rank((E_1 \cup E_2) - e_2 + e_1) \le rank((E_1 \cup E_2) + e_1)
  \end{align*}
  where the first and third parts are equal. Therefore, all of them are equal and $rank((E_1 \cup E_2) - e_2 + e_1) = rank((E_1 \cup E_2) - e_2)$, which implies that $e_1 \in \spn((E_1 \cup E_2) - e_2)$.
\end{proof}

\begin{lemma}\label{lm:hall}
  There is a function 
  $N: I^* \backslash I^*_{W} \to 2^{I_T}$ such that
  for all $e \in I^* \backslash I^*_{W}$, $e \in \spn(N(e))$ and  
  for all $e' \in N(e)$,
  $w(e) \le 2w(e')$.
\end{lemma}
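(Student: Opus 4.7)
The plan is to split $e \in I^* \setminus I^*_W$ into the easy case $e \in I_T$, where I set $N(e) := \{e\}$ so that $e \in \spn(\{e\})$ and $w(e) \le 2 w(e)$ trivially, and the main case $e \notin I_T$. For the main case I will show that $I_T + e \notin \mathcal{I}$; then by Lemma~\ref{lem:<=1circuit} the set $I_T + e$ has a unique circuit of the form $\tilde{C}_T + e$ with $\tilde{C}_T \subseteq I_T$, and I take $N(e) := \tilde{C}_T$, making $e \in \spn(\tilde{C}_T)$ immediate. The core of the work is then the weight bound $w(e') > w(e)/2$ for every $e' \in \tilde{C}_T$.

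The strategy for that bound is to identify a \emph{base level} $j_0 \in [1,T]$ at which $I_{j_0} + e$ already admits a circuit $\tilde{C}_{j_0} + e$ with every element of $\tilde{C}_{j_0}$ having weight $> w(e)/2$, and then to propagate this property by induction on $j = j_0, j_0+1, \dots, T$. For the base case I split on whether $e$ ever joined $I'_T$. If $e \notin I'_T$, I take $j_0 := z(e)$; then $e \in R_{z(e)}$, $e \ne e_{z(e)}$, and $e \notin R_{z(e)+1}$, so the survivor invariant forces $e$ to be non-promoting at $L_{z(e)}$. Since $w(e) = f(I'_{z(e)}+e) - f(I'_{z(e)}) > \tfrac{\epsilon}{10k}\cdot MAX$, Property 1 of Definition~\ref{def:promote} holds at $L_{z(e)}$, so Properties 2 and 3 both fail: failure of Property 2 yields a circuit $C + e$ with $C \subseteq I_{z(e)}$, and failure of Property 3 gives $2w(\hat{e}) > w(e)$ for the minimum-weight $\hat{e} \in C$, and hence $w(c) > w(e)/2$ for every $c \in C$. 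If instead $e \in I'_T \setminus I_T$, then $e = e_i$ for some $i$, and I let $j_0$ be the first level $>i$ at which $e$ leaves the independent set. Then $e = \hat{e}_{j_0}$ and $I_{j_0}+e = I_{j_0-1}+e_{j_0}$, whose unique circuit is $C'_{j_0} + e_{j_0}$; the elements of $C'_{j_0}-e$ have weight $\ge w(\hat{e}_{j_0}) = w(e)$ since $\hat{e}_{j_0}$ is minimum in $C'_{j_0}$, and $w(e_{j_0}) \ge 2w(\hat{e}_{j_0}) = 2 w(e)$ by the promoting condition on $e_{j_0}$ at $L_{j_0-1}$.

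For the inductive step I assume $I_{j-1}+e$ has unique circuit $\tilde{C}_{j-1}+e$ with $w(e') > w(e)/2$ for every $e' \in \tilde{C}_{j-1}$, and analyze the transition to $I_j$. If $I_j = I_{j-1}+e_j$, or if $I_j = I_{j-1}+e_j-\hat{e}_j$ with $\hat{e}_j \notin \tilde{C}_{j-1}$, then $\tilde{C}_{j-1}+e$ still lies in $I_j+e$, so by Lemma~\ref{lem:<=1circuit} (applied to the independent $I_j$) it remains the unique circuit, and I set $\tilde{C}_j := \tilde{C}_{j-1}$. The substantive case is $I_j = I_{j-1} + e_j - \hat{e}_j$ with $\hat{e}_j \in \tilde{C}_{j-1}$: here $\hat{e}_j$ lies in both circuits $\tilde{C}_{j-1}+e$ and $C'_j + e_j$, and these two circuits differ because $e \notin C'_j + e_j$, so the circuit elimination axiom (Lemma~\ref{lem:elimination:axiom}) produces a circuit $C'' \subseteq (\tilde{C}_{j-1}+e) \cup (C'_j + e_j) - \hat{e}_j \subseteq I_j + e$. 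Since $I_j \in \mathcal{I}$ contains no circuit, $e \in C''$, and I set $\tilde{C}_j := C'' - e \subseteq I_j$. A three-way split then verifies the weight bound: elements in $\tilde{C}_{j-1} - \hat{e}_j$ have weight $> w(e)/2$ by the inductive hypothesis; elements in $C'_j - \hat{e}_j$ have weight $\ge w(\hat{e}_j) > w(e)/2$ since $\hat{e}_j \in \tilde{C}_{j-1}$ and the inductive hypothesis applies to $\hat{e}_j$ itself; and $e_j$ (if present) has weight $\ge 2 w(\hat{e}_j) > w(e)$. Setting $j = T$ finishes the proof.

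The main obstacle I anticipate is the elimination subcase of the inductive step, where the new circuit is synthesized from two distinct circuits rather than inherited from the previous level. The weight bookkeeping there simultaneously invokes the inductive hypothesis at $\hat{e}_j$ itself, the fact that $\hat{e}_j$ is the minimum-weight element of $C'_j$, and the promoting inequality $2w(\hat{e}_j) \le w(e_j)$; verifying that these three bounds jointly control the weights across all three pieces of $\tilde{C}_j$ is the delicate part of the argument.
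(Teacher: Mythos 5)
Your proof is correct and takes essentially the same approach as the paper's: a level-by-level induction that maintains, for each $e\in I^*\setminus I^*_W$, a witness set $W$ with $e\in\spn(W)$ and $2w(e')\ge w(e)$ for $e'\in W$, with the witness initialized when $e$ first becomes non-promoting (or first leaves the independent set) and updated via circuit merging whenever the swapped-out element $\hat{e}_j$ happens to lie in the witness. The differences from the paper's write-up are cosmetic: you run a separate trajectory per element rather than a single global induction over all of $\wtilde{I}_i$ at once, you keep the witness as the exact unique circuit of $I_j+e$ (minus $e$) rather than a possibly larger spanning set, and you invoke the circuit elimination axiom (Lemma~\ref{lem:elimination:axiom}) directly to synthesize the new circuit, whereas the paper instead applies the derived span-transitivity Lemma~\ref{lm:matroid:e_in_span_span} to $N_{i-1}(e)$ and the new circuit; the weight bookkeeping in the merge step is identical. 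One small thing you make more explicit than the paper does, and correctly so, is the verification that Property~1 holds at $L_{z(e)}$ (so that non-promotion must be due to Properties 2 and 3 failing) by appealing to $e\notin I^*_W$.
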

\begin{proof}
  Define
  \begin{math}
      \wtilde{I}_i := \{e \in I^* \backslash I^*_{W}: z(e) \le i\}.
  \end{math}
  We prove by induction on $i \in [T]$ that there is 
  a function $N_i: \wtilde{I}_i \to 2^{I_i}$ such that
  $e \in \spn(N_i(e))$ and  
  $w(e) \le 2w(e')$ for all $e' \in N_i(e)$.
  
  The induction base holds trivially as $\wtilde{I}_0 = \emptyset$. Assume the claim holds for $i-1$. We show it holds for $i$.
  Let $e$ be an element of
  $\wtilde{I}_i$. We define $N_i(e)$ based on three cases as follows.
 
  \begin{itemize}
      \item \textbf{Assume that
      $\mathbf{e \in \wtilde{I}_{i-1}}$.} 
      If $I_i = I_{i-1} + e_i$ or $I_{i} = I_{i-1} + e_i - \hat{e}$ for some $\hat{e} \notin N_{i-1}(e)$, we set
      $N_i(e) = N_{i-1}(e)$.
      $N_i$ has the desirable properties for $e$ by the induction hypothesis.
      Otherwise, assuming that
      $I_{i} = I_{i-1} + e_i - \hat{e}$,  for some $\hat{e} \in N_{i-1}(e)$.
      By Lemma~\ref{lem:<=1circuit} there is a unique circuit in $I_{i-1} + e_i$, named $C$. 
      Define $N_i(e)$ as $(N_{i -1}(e) \cup C) - \hat{e} $.
      Since $\hat{e} \in N_{i-1}(e)$, by induction hypothesis, $w(e_i) \le 2w(\hat{e})$.
      Also, given independent invariant, $\hat{e} = \replacementTester{}(I_{i-1}, I'_{i-1}, e_i, w[I_{i-1}])$, i.e. $\hat{e} \gets \arg\min_{e' \in C} w(e')$. Thus,
      $w(e_i) \le 2w(\hat{e}) \le 2w(e')$ for all $e' \in C - \hat{e}$. Moreover, $w(e_i) \le 2w(e')$ for all $e' \in N_{i-1}(e)$ by induction hypothesis. Hence, $w(e_i) \le 2w(e')$ for all $e' \in \left( \left( N_{i -1}(e) \cup C \right) - \hat{e} \right)$.
      Furthermore, since $e \in \spn(N_{i-1}(e))$ and $\hat{e} \in \spn(C - \hat{e})$, we can use Lemma~\ref{lm:matroid:e_in_span_span} to conclude that
      $e \in \spn(N_{i}(e))$. 
      \item \textbf{Assume that $\mathbf{e = e_i}$.} 
      In this case, we set $N_i(e) = e$.
      \item \textbf{If neither of the two cases above hold,}
      then $z(e)=i$ but $e\neq e_i$.
      According to the survivor invariant, $e$ is not a promoting element for $L_i$.
      It follows that
      $I_i+e$ is not independent.
      By Lemma~\ref{lem:<=1circuit} there is a unique circuit in $I_i + e$. 
      Let $C$ denote this circuit, and let $N_i(e) = C - e$. It is clear that $e \in \spn(N_i(e))$, and $w(e) \le 2w(e')$ for all $e' \in C - e$ since otherwise, $e$ would be promoting element for $L_i$.
  \end{itemize}
  Finally, we set $N = N_T$ to get the desired function.
\end{proof}

\begin{lemma}
    \label{lm:matroid:hall_property_in_matroid}
    Assume that $E, E' \subseteq V$. If $E$ be an independent set such that $E \subseteq \spn(E')$, then $|E| \le |E'|$. 
\end{lemma}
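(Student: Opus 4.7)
The plan is to compare $E$ with a maximal independent subset (basis) $B'$ of $E'$, and to show that (i) $|B'| \le |E'|$ trivially, (ii) $\spn(B') = \spn(E')$, and (iii) any independent set contained in $\spn(B')$ has size at most $|B'|$. Combining these three facts gives $|E| \le |B'| \le |E'|$ as desired.

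First I would pick $B' \subseteq E'$ to be a maximal independent subset of $E'$, so that $|B'| \le |E'|$ and $\rank(B') = |B'|$. The first substantive step is to verify that $\spn(E') \subseteq \spn(B')$. For any $e \in \spn(E')$, I would suppose for contradiction that $e \notin \spn(B')$, meaning $\rank(B'+e) > \rank(B') = |B'|$. Since the rank increases by at most one, $B'+e$ has rank $|B'|+1$ and is therefore independent. Then $B'+e$ is an independent subset of $E'+e$, yielding $\rank(E'+e) \ge |B'|+1 > \rank(E')$ (here I use that by maximality of $B'$, every element of $E' \setminus B'$ lies in $\spn(B')$ so $\rank(E') = |B'|$). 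This contradicts $e \in \spn(E')$, so $\spn(E') \subseteq \spn(B')$, and in particular $E \subseteq \spn(B')$.

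Next, I would use the exchange axiom to show $|E| \le |B'|$. Suppose for contradiction that $|E| > |B'|$. Since $E \in \mI$ and $B' \in \mI$, the exchange axiom provides an element $x \in E \setminus B'$ such that $B' + x \in \mI$. Then $\rank(B'+x) = |B'|+1 > \rank(B')$, so $x \notin \spn(B')$, contradicting $x \in E \subseteq \spn(B')$. Hence $|E| \le |B'| \le |E'|$, finishing the proof.

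The only subtle point is the second step, showing $\spn(E') \subseteq \spn(B')$ from the bare definition $\spn(S) = \{e : \rank(S+e) = \rank(S)\}$ supplied in the paper; everything else is a direct application of the exchange axiom. This step is short once one observes that any strict increase in rank upon adding $e$ to $B'$ would force a strict rank increase upon adding $e$ to the superset $E'$ as well.
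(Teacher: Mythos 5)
Your proof is correct, but it takes a more elementary and self-contained route than the paper's. The paper's proof is a one-liner: since $E$ is independent, $|E| = \rank(E)$; monotonicity of rank gives $\rank(E) \le \rank(\spn(E'))$; the standard matroid identity $\rank(\spn(E')) = \rank(E')$ and $\rank(E') \le |E'|$ finish it. The crucial step there is the unstated fact that a set and its span have the same rank. Your argument avoids invoking that fact directly: by passing to a maximal independent subset $B'$ of $E'$, you reduce the claim to $|E| \le |B'|$, which you settle by a clean application of the exchange axiom, while the containment $\spn(E') \subseteq \spn(B')$ is proved from first principles via a rank-increment contradiction. The tradeoff is length versus self-containment: the paper's version is shorter but leans on a standard identity about $\rank \circ \spn$, whereas yours rebuilds the needed consequence from the exchange axiom and the bare definition of $\spn$ given in the paper. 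One small remark: your justification that $\rank(E') = |B'|$ (via ``every element of $E'\setminus B'$ lies in $\spn(B')$'') could be shortened to the observation that $B'$ is a maximal, hence maximum-size, independent subset of $E'$, which equals $\rank(E')$ by definition; but what you wrote is also valid.
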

\begin{proof}
    Given that $E$ is independent, we know that $|E| = rank(E)$.
    In addition, given that $E \subseteq \spn(E')$, we have $rank(E) \le rank\left(\spn\left(E'\right)\right)$.
    Therefore, $        |E| = rank(E) \le rank\left(\spn\left(E'\right)\right) = rank(E') \le |E'|$. 
\end{proof}

\begin{proof}[Proof of Lemma~\ref{lm:modular}]
    Let $N: I^* \backslash I^*_{W} \to 2^{I_{T}}$ be the function described in Lemma~\ref{lm:hall}.
    Recall that $e \in \spn(N(e))$ for all $e \in I^* \backslash I^*_{W}$.
    This further implies that for all
    $E \subseteq I^* \backslash I^*_{W}$, 
    we have $E \subseteq \spn(N(E))$.
    Therefore, as $E$ is independent, we can use Lemma~\ref{lm:matroid:hall_property_in_matroid} to conclude that $|E| \le |N(E)|$.
    
    By Hall's marriage theorem, we conclude that there is an injection $H: I^* \backslash I^*_{W} \to I_T$ such that
    $H(e) \in N(e)$ for all $e\in I^* \backslash I^*_{W}$. Therefore, 
    \begin{align*}
        w(I^* \backslash I^*_{W})
        =
        \sum_{e \in I^* \backslash I^*_{W}}
        w(e)
        \overset{(a)}{\le}
        \sum_{e \in I^* \backslash I^*_{W}}
        2w(H(e))
        \overset{(b)}{\le} \sum_{e' \in I_T}2w(e')
        = 2w(I_T) \enspace .
    \end{align*}
    where for $(a)$ we have used the fact that
    $w(e) \le 2w(e')$ for all $e' \in N(e)$, and for $(b)$ we have used the fact that $H$ is an injection. Summing the above inequality with \eqref{eq:jul5_1442} finishes the proof.
    \begin{align*}
        w(I^*) = w(I^* \backslash I^*_{W}) + w(I^*_W) \le 2w(I_T) + \frac{\epsilon}{5} \cdot f(I^*) \enspace .
    \end{align*}
\end{proof}

\section{Parameterized dynamic algorithm for submodular maximization under cardinality constraint}
\label{sec:klogk}
In this section, we present our dynamic algorithm for the maximum submodular problem under the cardinality constraint $k$.
The pseudo-code of our algorithm is provided in Algorithm \ref{alg:cardinality:offline_constraint}. 
The overview of our dynamic algorithm is given in Section "Our contribution"~\ref{sec:contrib}. 
The analysis of this algorithm is similar to the dynamic algorithm that we designed for the matroid constraint. 
Thus, we explain it in Appendix~\ref{sec:analysis:klogk}.

\begin{algorithm}[H]
  \caption{\levelingconstraint$(k, OPT)$ }
  \label{alg:cardinality:offline_constraint}
  \begin{algorithmic}[1]
    \Function{\init}{$V$}
        \State $\tau \gets \frac{OPT}{2k}$
        \State $I_{0} \gets \emptyset$ and $ R_{0} \gets V$
        \State
        $R_{1} \gets \{ e \in R_0 : \replacementTester{}(I_0, e) = True\}$
        \State Invoke \constLevel$(i = 1)$
    \EndFunction
    
 \rule{15cm}{0.4pt} 
    \Function{\constLevel}{$i$}
        \State Let $P$ be a random permutation of elements of $R_{i}$ and $\ell \gets i$ 
        \For{$e$ in $P$}\label{line:cardinality:iterate_P}
            \If{ \replacementTester$(I_{\ell-1}, e) = True$} \label{line:cardinality:check_e_is_promoting}
                \State $e_{\ell} \gets e$, \ \  $I_{\ell} \gets I_{\ell-1} + e_{\ell}$, \ \ and $z \gets \ell$ \label{line:cardinality:set_I_l}
                \State $\ell \gets \ell + 1$ \ \ and $R_{\ell} \gets  \emptyset$\label{line:cardinality:constLevel:increase_ell} 
            \Else
                \State Run binary search to find the lowest $z \in [i, \ell-1]$ such that \replacementTester$(I_z, e) = False$ \label{line:cardinality:const_level_matroid:binary_search}
            \EndIf
                \For{$r \gets i+1$ \textbf{to} $z$}  
                    \State $R_r \gets R_r + e$. \label{line:cardinality:constlevelmatroid:addR_bs}
                \EndFor
          \EndFor
        \State \Return $T \gets \ell-1$ which is the final $\ell$ that the for-loop above returns subtracted by one
    \EndFunction
    
 \rule{15cm}{0.4pt} 
 
    \Function{\replacementTester}{$I, e$}
          \If{$\marginalgain{e}{I} \ge \tau$  and  $|I| < k$}
             \State \Return True
          \EndIf
          \State \Return False
    \EndFunction
    \end{algorithmic}
\end{algorithm}

%----------------------------------------------------------------------------------------------------------

\begin{algorithm}
    \caption{\carupdates$(k, OPT)$ }
    \begin{algorithmic}[1]
    \Function{\deletev}{$v$}
        \State $R_0 \gets R_0 - v$
        \For{ $i \gets 1$ \textbf{to} $T$}
            \If{$v \notin R_i$}
                \State \Break
            \EndIf
            \State $R_i \gets R_i - v$
            \If{$e_i = v$}
            \State Invoke $\constLevel(i)$.
            \State \Break
            \EndIf
        \EndFor
    \EndFunction
    
\rule{15cm}{0.4pt} 

    \Function{\insertv}{$v$}
        \State $R_0 \gets R_0 + v$. 
        \For{$i \gets 1$ \textbf{to} $T+1$}
            \If{\replacementTester$(I_{i-1}, v)$ = False} \label{line:cardinality:insert:break}
                \State \Break
            \EndIf
            \State $R_{i} \gets R_{i} + v$.
            \State Let $p=1$ with probability $\frac{1}{|R_i|}$, and otherwise $p=0$. \label{line:cardinality:p:insert:klogk}
            \If{$p=1$} \label{line:cardinality:insert:if}
                \State {$e_i \gets v$, \quad $I_i \gets I_{i-1} + v$} \label{line:cardinality:insert:setI}
                \State {$R_{i+1} = \{e' \in R_i: \replacementTester{}(I_i, e') = True \}$} \label{line:cardinality:insert:setRi+1}
                \State {$\constLevel{}(i+1)$
                }
                \State \Break
            \EndIf
        \EndFor
    \EndFunction
    
  \end{algorithmic}
\end{algorithm}

% %----------------------------------------------------------------------------------------------------------
\paragraph{Relaxing $OPT$ assumption.}
%----------------------------------------------------------------------------------------------------------

Our dynamic algorithm assumes the optimal value 
$OPT=\max_{I^* \subseteq V: |I^*|\le k}  f(I^*)$ is given as a parameter.
However, in reality, the optimal value is not known in advance and may change after every insertion or deletion. 
To remove this assumption in Algorithm~\ref{alg:cardinality:unknown:opt}, we run parallel instances of 
our dynamic algorithm for different guesses of the optimal value $OPT_t$  at any time $t$ of the sequence $\mathcal{S}_t$, such that 
$\max_{I^* \subseteq V_t: |I^*|\le k} f(I^*) \in (OPT_t/(1+\epsilon), OPT_t]$ in one of the runs. 
Recall that $V_t$ is the set of elements that have been inserted but not deleted from the beginning of the sequence till time $t$. 
These guesses that we take are $(1+\epsilon)^i$ where $i\in \mathbb{Z}$. 
If $\rho$ is the ratio between the maximum and minimum non-zero possible value 
of a subset of $V$ with at most $k$ elements, 
then the number of parallel instances of our algorithm will be 
$\mO(\log_{1+\eps}{\rho})$. 
This incurs an extra $\mO(\log_{1+\eps}{\rho})$-factor in the query complexity of our dynamic algorithm. 

In fact, we can replace this extra factor with 
an extra factor of $\mO(\log{(k)}/\epsilon)$ which is independent of $\rho$. 
To this end, we use the well-known technique that has been also used in~\cite{DBLP:conf/nips/LattanziMNTZ20}. 
In particular, for every element $e$, we add it to those instances $i$ 
for which we have $\frac{(1+\epsilon)^i}{2k} \leq f(e) \leq (1+\epsilon)^i$.
The reason is if the optimal value of $V_t$ is within the range $((1+\epsilon)^{i-1},(1+\epsilon)^i]$ and 
$f(e) > (1+\epsilon)^i$, then $f(e)$ is greater than the optimal value and can safely be ignored for the instance $i$ 
that corresponds to the guess $(1+\epsilon)^i$. 
On the other hand, we can safely ignore all elements $e$ whose $f(e) < \frac{(1+\epsilon)^i}{2k}=\tau$, 
since these elements will never be a promoting element in the run with $OPT = (1+\epsilon)^i$.
This essentially means that every element $e$ is added to at most 
$\mO(\log_{1+\epsilon}{(2k)}) = \mO(\log{(k)}/\epsilon)$ parallel instances. 
Thus, after every insertion or deletion, 
we need to update only $\mO(\log{(k)}/\epsilon)$ instances of our dynamic algorithm.

\begin{algorithm}[h] 
\caption{Unknown $OPT$} 
\begin{algorithmic}[1]
    \State Let $\mathcal{A}_i$ be the instance of our dynamic algorithm, for which $OPT=(1+\epsilon)^i$.
    
     \rule{15cm}{0.4pt} 
    
    \Function{UpdateWithoutKnowingOPT}{$e$}
        \For{\textbf{each} $i \in \left[\ceil{\log_{1+\epsilon}{f(e)}},\floor{\log_{1+\epsilon}{\left(2k\cdot f(e)\right)}}\right]$} \Comment{$\frac{(1+\epsilon)^i}{2k} \leq f(e) \leq (1+\epsilon)^i$}
            \State Invoke $\update(e)$ for instance $\mathcal{A}_i$.
        \EndFor
    \EndFunction
\end{algorithmic}
\label{alg:cardinality:unknown:opt}
\end{algorithm}

\newcommand{\Proc}{Proceedings of the~}
\newcommand{\AMS}{Annals of Mathematical Statistics}
\newcommand{\STOC}{Annual ACM Symposium on Theory of Computing (STOC)}
\newcommand{\FOCS}{IEEE Symposium on Foundations of Computer Science (FOCS)}
\newcommand{\SODA}{Annual ACM-SIAM Symposium on Discrete Algorithms (SODA)}
\newcommand{\SOCG}{Annual Symposium on Computational Geometry (SoCG)}
\newcommand{\ICALP}{Annual International Colloquium on Automata, Languages and Programming (ICALP)}
\newcommand{\ESA}{Annual European Symposium on Algorithms (ESA)}
\newcommand{\CCC}{Annual IEEE Conference on Computational Complexity (CCC)}
\newcommand{\RANDOM}{International Workshop on Randomization and Approximation Techniques in Computer Science (RANDOM)}
\newcommand{\APPROX}{International Workshop on  Approximation Algorithms for Combinatorial Optimization Problems  (APPROX)}
\newcommand{\PODS}{ACM SIGMOD Symposium on Principles of Database Systems (PODS)}
\newcommand{\SSDBM}{ International Conference on Scientific and Statistical Database Management (SSDBM)}
\newcommand{\ALENEX}{Workshop on Algorithm Engineering and Experiments (ALENEX)}
\newcommand{\BEATCS}{Bulletin of the European Association for Theoretical Computer Science (BEATCS)}
\newcommand{\CCCG}{Canadian Conference on Computational Geometry (CCCG)}
\newcommand{\CIAC}{Italian Conference on Algorithms and Complexity (CIAC)}
\newcommand{\COCOON}{Annual International Computing Combinatorics Conference (COCOON)}
\newcommand{\COLT}{Annual Conference on Learning Theory (COLT)}
\newcommand{\COMPGEOM}{Annual ACM Symposium on Computational Geometry}
\newcommand{\DCGEOM}{Discrete \& Computational Geometry}
\newcommand{\DISC}{International Symposium on Distributed Computing (DISC)}
\newcommand{\ECCC}{Electronic Colloquium on Computational Complexity (ECCC)}
\newcommand{\FSTTCS}{Foundations of Software Technology and Theoretical Computer Science (FSTTCS)}
\newcommand{\ICCCN}{IEEE International Conference on Computer Communications and Networks (ICCCN)}
\newcommand{\ICDCS}{International Conference on Distributed Computing Systems (ICDCS)}
\newcommand{\VLDB}{ International Conference on Very Large Data Bases (VLDB)}
\newcommand{\IJCGA}{International Journal of Computational Geometry and Applications}
\newcommand{\INFOCOM}{IEEE INFOCOM}
\newcommand{\IPCO}{International Integer Programming and Combinatorial Optimization Conference (IPCO)}
\newcommand{\ISAAC}{International Symposium on Algorithms and Computation (ISAAC)}
\newcommand{\ISTCS}{Israel Symposium on Theory of Computing and Systems (ISTCS)}
\newcommand{\JACM}{Journal of the ACM}
\newcommand{\LNCS}{Lecture Notes in Computer Science}
\newcommand{\RSA}{Random Structures and Algorithms}
\newcommand{\SPAA}{Annual ACM Symposium on Parallel Algorithms and Architectures (SPAA)}
\newcommand{\STACS}{Annual Symposium on Theoretical Aspects of Computer Science (STACS)}
\newcommand{\SWAT}{Scandinavian Workshop on Algorithm Theory (SWAT)}
\newcommand{\TALG}{ACM Transactions on Algorithms}
\newcommand{\UAI}{Conference on Uncertainty in Artificial Intelligence (UAI)}
\newcommand{\WADS}{Workshop on Algorithms and Data Structures (WADS)}
\newcommand{\SICOMP}{SIAM Journal on Computing}
\newcommand{\JCSS}{Journal of Computer and System Sciences}
\newcommand{\JASIS}{Journal of the American society for information science}
\newcommand{\PMS}{ Philosophical Magazine Series}
\newcommand{\ML}{Machine Learning}
\newcommand{\DCG}{Discrete and Computational Geometry}
\newcommand{\TODS}{ACM Transactions on Database Systems (TODS)}
\newcommand{\PHREV}{Physical Review E}
\newcommand{\NATS}{National Academy of Sciences}
\newcommand{\MPHy}{Reviews of Modern Physics}
\newcommand{\NRG}{Nature Reviews : Genetics}
\newcommand{\BullAMS}{Bulletin (New Series) of the American Mathematical Society}
\newcommand{\AMSM}{The American Mathematical Monthly}
\newcommand{\JAM}{SIAM Journal on Applied Mathematics}
\newcommand{\JDM}{SIAM Journal of  Discrete Math}
\newcommand{\JASM}{Journal of the American Statistical Association}
\newcommand{\JALG}{Journal of Algorithms}
\newcommand{\TIT}{IEEE Transactions on Information Theory}
\newcommand{\CM}{Contemporary Mathematics}
\newcommand{\JC}{Journal of Complexity}
\newcommand{\TSE}{IEEE Transactions on Software Engineering}
\newcommand{\TNDE}{IEEE Transactions on Knowledge and Data Engineering}
\newcommand{\JIC}{Journal Information and Computation}
\newcommand{\ToC}{Theory of Computing}
\newcommand{\MST}{Mathematical Systems Theory}
\newcommand{\Com}{Combinatorica}
\newcommand{\NC}{Neural Computation}
\newcommand{\TAP}{The Annals of Probability}
\newcommand{\TCS}{Theoretical Computer Science}
\newcommand{\IPL}{Information Processing Letter}
\newcommand{\Algorithmica}{Algorithmica}

\section{Acknowledgements}
The work is partially supported by DARPA QuICC, NSF AF:Small  \#2218678, and  NSF AF:Small  \#2114269.

\bibliographystyle{abbrv}
\bibliography{references}

\appendix 
\section{Some of the Proofs regarding Invariants of the algorithm for submodular matroid maximization}
\label{sec:invar_appendix}

\subsection{Proof of Theorem \ref{thm:invariants:leveling}}
\label{subs:thm:invariants:levelin}
\begin{lemma}[Survivor invariant]
\label{lm:survivor:leveling}
If before calling $\MatroidConstLevel{}(j)$, the level invariants partially hold for the first $j$ levels,
then after its execution, the survivor invariant fully holds.
\end{lemma}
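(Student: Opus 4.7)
The plan is to split the range of $i$ into two parts. For $1 \le i \le j$, the survivor invariant is delivered for free by the partial-invariant hypothesis and needs no further argument. For $j < i \le T+1$, I would analyze the behaviour of $\MatroidConstLevel{}(j)$ on each element of $R_j$, leveraging the monotonicity from Lemma~\ref{lm:binary_search_argument}: for every element $e$, the levels $L_r$ at which $e$ is promoting form a prefix, so there is a well-defined threshold $\ell^{*}(e) \in \mathbb{N} \cup \{\infty\}$ with the property that $e$ is promoting for $L_r$ if and only if $r < \ell^{*}(e)$.

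First, I would establish a precise description of which $R_r$'s a given $e \in R_j$ lands in after the call terminates. Let $\ell_0$ denote the value of the loop variable $\ell$ at the moment $e$ is processed in the iteration over $P$. In the promoting branch (so $\ell_0 - 1 < \ell^{*}(e)$), the code sets $e_{\ell_0} := e$, initialises an empty $R_{\ell_0+1}$, and appends $e$ to $R_{j+1}, \ldots, R_{\ell_0}$ via the common "for $r = i{+}1$ to $z$" loop with $z = \ell_0$. In the non-promoting branch, the binary search on $[j, \ell_0-1]$ returns $z = \ell^{*}(e)$; the search is well-posed because $e \in R_j$ together with the partial survivor invariant at level $j$ forces $e$ to be promoting for $L_{j-1}$, hence $\ell^{*}(e) \ge j$, while entering this branch forces $\ell^{*}(e) \le \ell_0 - 1$. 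In both branches $e$ ends up in $R_r$ exactly for $r \in \{j, j+1, \ldots, m(e)\}$, where $m(e) := \ell_0$ if $e$ was promoted (and then $m(e) \le \ell^{*}(e)$) and $m(e) := \ell^{*}(e)$ otherwise.

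With this characterisation I would then verify the survivor invariant at each level $i$ with $j < i \le T+1$, i.e.,\ $R_i = \{e \in R_{i-1} - e_{i-1} : \replacementTester{}(I_{i-1}, I'_{i-1}, e, w[I_{i-1}]) \ne \err\}$. The inclusion $\subseteq$: if $e \in R_i$ then $m(e) \ge i$, which gives both $\ell^{*}(e) \ge i$ (hence $e$ is promoting for $L_{i-1}$) and $e \in R_{i-1}$; moreover $e \ne e_{i-1}$, because if $e = e_{i-1}$ then $e$ was handled in the promoting branch at stage $\ell_0 = i-1$ and appended only up to $R_{i-1}$, not to $R_i$. The inclusion $\supseteq$: given $e \in R_{i-1}$ with $e \ne e_{i-1}$ and $e$ promoting for $L_{i-1}$, we have $\ell^{*}(e) \ge i$ and $e \in R_j$ (by walking the prefix property backward from $R_{i-1}$), and tracing the two branches shows $m(e) \ge i$.

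The main obstacle will be the last step of the $\supseteq$ direction: ruling out that $e$ is "spent" as some $e_{\ell_0}$ at a stage $\ell_0 < i$ even though $e$ is still promoting for $L_{i-1}$. This is where one must exploit that stages $\ell_0 \le i-2$ already have their $e_{\ell_0}$ assigned to elements different from $e$ (since $e \in R_{i-1}$ implies $e$ was neither chosen nor discarded before the current iteration reached level $i-1$), and that $\ell_0 = i-1$ is excluded by the hypothesis $e \ne e_{i-1}$; this careful interplay between the sequential processing order on $P$ and the monotonicity threshold $\ell^{*}(e)$ is the subtle part.
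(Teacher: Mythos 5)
Your proposal is correct and follows essentially the same approach as the paper: both arguments analyze, element by element, which sets $R_r$ a given $e \in R_j$ lands in after $\MatroidConstLevel{}(j)$ runs, using the monotonicity from Lemma~\ref{lm:binary_search_argument} to pin down a threshold index and then matching the algorithm's $z$ against that threshold. Your version is somewhat more formal than the paper's (you name the threshold $\ell^{*}(e)$, the landing index $m(e)$, and carry out explicit $\subseteq$/$\supseteq$ verifications, including correctly resolving the "spent as $e_{\ell_0}$" worry via the contrapositive $m(e) = \ell_0 \ge i-1$ together with $e \ne e_{i-1}$), whereas the paper states the same conclusion more informally by observing that each element is inserted into exactly $R_{j+1}, \dots, R_z$ with $z$ the promotability cutoff.
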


\begin{proof} 

First of all, we assume that $R_j \neq \emptyset$, otherwise $T = j-1$ and we are done. 
As we have in Algorithm \MatroidConstLevel{}, let $P$ be a random permutation of the set $R_j$. 
Let us fix an arbitrary element $e \in P$ and suppose that 
at the time when we see $e \in P$, the current level is $L_{\ell}$ for $\ell \ge j$. 
We have two cases. 
Either $e$ is a promoting element for the level $L_{\ell-1}$ or it is not promoting for the level $L_{\ell-1}$. 

First, assume that $e$ is a promoting element for the level $L_{\ell-1}$. 
We then let $e_{\ell}$ be $e$, perform a set of computations, and then start the new level. 
In particular, the element $e$ is not added to $R_{\ell+1}$ and so, it will not appear in any set $R_{z > \ell}$.
Recall that Lemma~\ref{lm:binary_search_argument} proves if $e$ 
is not a promoting element with respect to a level $L_{z}$, 
it will not be a promoting element for the next level $L_{x}$ where $z\le x \le T$.
On the other hand, since $e$ is a promoting element for the level $L_{\ell-1}$, 
we add $e$ to all previous sets $R_{j+1},\cdots, R_{\ell}$. 

Next, we consider the latter case where $e$ is not a promoting element for the level $L_{\ell-1}$. 
That is, $\suit(e,L_{\ell-1})$ is $False$.
This essentially means that if we inductively apply the argument of Lemma~\ref{lm:binary_search_argument}, 
there exists an integer $z \in [j,\ell)$ for which 
$\boolsuit(e,L_{z-1})$ is $True$, but $\boolsuit(e,L_{z})$ is $False$. 
This means $e$ is a promoting element for all levels $L_j,\cdots,L_{z-1}$ and 
it is not promoting for levels $L_{z},\cdots,L_T$. 
According to function $\MatroidConstLevel$, 
we insert the element $e$ into sets $R_{j+1}, \cdots, R_{z}$. 
Hence, after the execution of $\MatroidConstLevel{}(j)$, 
the survivor invariant holds. 
\end{proof}

\begin{lemma}[Independent invariant]
\label{lm:independent:leveling}
If before calling $\MatroidConstLevel{}(j)$, the level invariants partially hold for the first $j$ levels,
then after its execution, the independent invariant fully holds. 
\end{lemma}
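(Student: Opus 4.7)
The plan is to proceed by induction on the level index $i$. The base cases $1 \le i \le j-1$ are handled by the assumption that the level invariants partially hold for the first $j$ levels, which already asserts $I_i = I_{i-1} + e_i - \replacementTester{}(I_{i-1}, I'_{i-1}, e_i, w[I_{i-1}])$ and $I'_i = \cup_{z \le i} I_z$ in that range. The inductive step will handle $i \ge j$ by reading off the assignments made inside $\MatroidConstLevel{}(j)$ on the iteration in which the promoting element $e_i$ is encountered.

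For the inductive step, fix $i \ge j$ and assume the invariant holds at $i-1$ (so in particular $I'_{i-1} = \cup_{z \le i-1} I_z$). Consider the iteration of the for-loop at Line~\ref{line:iterate_P} in which $\ell = i$ and the element $e \in P$ is a promoting element for $L_{\ell - 1}$ (i.e.\ $\replacementTester{}(I_{\ell-1}, I'_{\ell-1}, e, w[I_{\ell-1}]) \ne \err$); this is the iteration in which $e_i$ is set, by the $\ell \gets \ell+1$ update at Line~\ref{line:constlevelmatroid:increase_ell}. The algorithm directly sets $y \gets \replacementTester{}(I_{i-1}, I'_{i-1}, e, w[I_{i-1}])$, $e_i \gets e$, and $I_i \gets (I_{i-1} + e_i) \setminus y$, which is exactly the identity demanded by the independent invariant. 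That $I_i \in \mathcal{I}$ follows from the specification of $\replacementTester{}$: if it returns $\emptyset$ then $I_{i-1} + e_i \in \mathcal{I}$ by Property~2, and if it returns $\{\hat{e}\}$ then by the construction of the set $C$ at Line~\ref{line:matroid:promote:find_min}, $I_{i-1} + e_i - \hat{e} \in \mathcal{I}$.

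It remains to verify $I'_i = \cup_{z \le i} I_z$. The algorithm sets $I'_i \gets I'_{i-1} + e_i$, so using the induction hypothesis
\[
\bigcup_{z \le i} I_z \;=\; \Bigl(\bigcup_{z \le i-1} I_z\Bigr) \cup I_i \;=\; I'_{i-1} \cup I_i,
\]
and we need $I'_{i-1} \cup I_i = I'_{i-1} + e_i$. The inclusion $\supseteq$ holds because $I'_{i-1} \subseteq I'_{i-1} \cup I_i$ and $e_i \in I_i$. For $\subseteq$, note $I_i = (I_{i-1} + e_i) \setminus y \subseteq I_{i-1} \cup \{e_i\} \subseteq I'_{i-1} \cup \{e_i\} = I'_{i-1} + e_i$, where we used $I_{i-1} \subseteq I'_{i-1}$ from the induction hypothesis. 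Combined, $I'_i = I'_{i-1} + e_i = \cup_{z \le i} I_z$, completing the induction.

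The main technical points are minor rather than genuinely hard: one must be careful that the promoting iteration which fixes $e_i$ (and the accompanying updates to $I_i$, $I'_i$) is precisely the one in which $\ell$ transitions from $i$ to $i+1$ at Line~\ref{line:constlevelmatroid:increase_ell}, and that on all other iterations of the loop (where the binary search branch at Line~\ref{line:const_level_matroid:binary_search} is taken) the variables $I_{i-1}, I'_{i-1}, e_i, I_i, I'_i$ are not modified, so the identities above are stable throughout $\MatroidConstLevel{}(j)$. Independence of the $I_i$'s themselves uses only the semantics of $\replacementTester{}$, which the statement of the invariant conveniently encapsulates.
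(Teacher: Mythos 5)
Your proof is correct and follows essentially the same route as the paper: read off the assignments made in Line~\ref{line:constlevelmatroid:increase_ell} to verify the identity $I_i = (I_{i-1}+e_i)\setminus y$, then induct to get $I'_i = \cup_{z\le i}I_z$, using that $I_i$ differs from $I_{i-1}$ only by $e_i$ (modulo removals, which stay inside $I'_{i-1}$). The only cosmetic difference is that the paper argues via $I_\ell \setminus I_{\ell-1} = \{e_\ell\}$ directly, whereas you argue the two inclusions separately; and you add a (strictly speaking unrequested) check that $I_i \in \mathcal{I}$. Both arguments implicitly rely on $e_i \notin I_{i-1}$ (to get $e_i \in I_i$, resp.\ $I_\ell\setminus I_{\ell-1}=\{e_\ell\}$), which the paper also leaves tacit, so this is not a gap specific to your write-up.
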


\begin{proof} 

In the execution of $\MatroidConstLevel{}(j)$, the variable $\ell$ is set to $j, j+1,\cdots, T,T+1$. Therefore, for each $\ell \in  [j,T]$, we set $I_{\ell}$ to  $(I_{\ell-1} + e_\ell) \backslash y$ in Line \ref{line:constlevelmatroid:increase_ell}, where $y$ is defined as \replacementTester$(I_{\ell-1}, I'_{\ell-1}, e, w[I_{\ell - 1}])$ in Line \ref{line:const_level_matroid:def_y}.
Adding this to the assumption of lemma implies that $I_{\ell} = (I_{\ell-1} + e_\ell) - \replacementTester(I_{\ell-1}, I'_{\ell-1}, e, w[I_{\ell - 1}])$ holds for any $\ell \in [T]$.

Next, we prove $I'_\ell = \cup_{m \le \ell} I_m$ for any $\ell\in[T]$ using induction.  
By the assumption of lemma, $I'_{\ell} = \cup_{m \le {\ell}} I_{m}$ holds for any  $i\leq j-1$.
For the induction step, assume $\ell  \in [j,T]$ and 
$I'_{\ell-1} = \cup_{m \le {\ell-1}} I_{m}$ holds. 
In Line \ref{line:constlevelmatroid:increase_ell} we set $I_{\ell}=(I_{\ell-1} + e) \backslash y$.
Since $e\notin I+{\ell-1}$, it means $I_{\ell} \setminus I_{\ell-1}=e$.
We then set $I'_{\ell}=I'_{\ell-1} + e$ in Line \ref{line:constlevelmatroid:increase_ell}. Putting everything together we have
$$I'_{\ell} = I'_{\ell-1} + e=\cup_{m \le {\ell-1}} I_{m}+e = \cup_{m \le {\ell-1}} I_{m} + (I_{\ell} \setminus I_{\ell-1}) = \cup_{m \le {\ell}} I_{m} \enspace .$$
It completes the proof the of lemma.
\end{proof}

\begin{lemma}[Weight invariant]
\label{lm:weight:leveling}
If before calling $\MatroidConstLevel{}(j)$, the level invariants partially hold for the first $j$ levels,
then after its execution, the weight invariant fully holds. 
\end{lemma}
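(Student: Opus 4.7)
The plan is to split the index range $1 \le i \le T$ into $i < j$ and $i \ge j$. For $i < j$ the claim already holds by the partial level invariants hypothesis, so it suffices to verify that \MatroidConstLevel$(j)$ does not alter any of $e_i$, $w(e_i)$, $R_i$, or $I'_{i-1}$. This is a direct inspection of the pseudocode: all writes inside the procedure occur on Lines \ref{line:const_level_matroid:w}--\ref{line:constlevelmatroid:increase_ell}, which assign $e_\ell, w(e_\ell), I_\ell, I'_\ell, R_{\ell+1}$ with $\ell \ge j$, and on Line \ref{line:constlevelmatroid:addR_bs}, which only modifies $R_r$ with $r \ge j+1$. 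Hence no quantity indexed strictly below $j$ is ever touched.

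For $i \ge j$, at the instant $e_i$ is set on Line \ref{line:const_level_matroid:w} the algorithm simultaneously writes $w(e_i) \gets f(I'_{i-1} + e_i) - f(I'_{i-1})$, so the desired equation holds at that moment. The key observation is that $I'_{i-1}$ is then frozen for the rest of the execution: $I'_r$ is written only on Line \ref{line:constlevelmatroid:increase_ell} and only while the loop variable $\ell$ equals $r$, and $\ell$ is monotonically non-decreasing (strictly incremented each time an element is promoted, never decremented), so once $\ell$ passes $i-1$ no further write to $I'_{i-1}$ can occur. The equation therefore persists to termination.

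It remains to show $e_i \in R_i$, and I would handle $i = j$ and $i > j$ separately. When $i = j$, the first element of the random permutation $P$ of $R_j$ (which exists and is promoting at $L_{j-1}$ by the partial survivor invariant) is assigned to $e_j$, so $e_j \in R_j$ trivially, and $R_j$ itself is not touched by \MatroidConstLevel$(j)$ since every write goes to indices $> j$. When $i > j$, at the moment $e_i$ is assigned, the local variable $z$ captures the pre-increment value $\ell = i$; after Line \ref{line:constlevelmatroid:increase_ell} increments $\ell$, the for-loop on Line \ref{line:constlevelmatroid:addR_bs} inserts $e = e_i$ into $R_r$ for every $r \in [j+1, z] = [j+1, i]$, in particular into $R_i$. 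From this point onward $R_i$ can only grow, because the sole operation that would empty $R_i$ is the reset $R_{\ell+1} \gets \emptyset$ executed while $\ell = i-1$, and monotonicity of $\ell$ forbids this after $e_i$ has been set.

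The main subtlety in writing this out carefully is the ordering within the promotion block: $z \gets \ell$ is executed on Line \ref{line:const_level_matroid:def_y} \emph{before} $\ell$ is incremented on Line \ref{line:constlevelmatroid:increase_ell}, so the subsequent for-loop pushes $e$ into the level that just received its new $e_\ell$ rather than into the freshly-emptied $R_{\ell+1}$. Beyond this bookkeeping the argument uses neither submodularity nor any matroid axiom; the lemma is essentially a semantic audit of the pseudocode against the invariant, once the monotone behavior of $\ell$ is pinned down.
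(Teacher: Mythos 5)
Your proof is correct. It takes a genuinely different route from the paper's. The paper establishes $e_i \in R_i$ for $i \ge j$ \emph{semantically}: it first invokes the survivor invariant (Lemma~\ref{lm:survivor:leveling}) to get the characterization $R_m = \{ e \in R_{m-1} - e_{m-1} : \replacementTester{}(\cdot) \ne \err\}$, observes that $e_i \in R_j$ because $P$ is a permutation of $R_j$, then uses the monotonicity property of Lemma~\ref{lm:binary_search_argument} to conclude $e_i$ is promoting for every level $L_{m-1}$ with $m \in [j,i]$, and closes by induction on $m$. You instead give an \emph{operational} argument: you track the pseudocode directly, noting that $z$ captures $\ell = i$ before the increment, so the for-loop on Line~\ref{line:constlevelmatroid:addR_bs} explicitly pushes $e_i$ into $R_r$ for $r \in [j+1,i]$, and that the only emptying operation $R_{\ell+1}\gets\emptyset$ can never hit $R_i$ again once $\ell$ has passed $i-1$. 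Your version is more self-contained — it needs neither the survivor lemma nor the binary-search monotonicity lemma — at the cost of a more line-by-line reading of the procedure; the paper's version reuses already-proved abstractions, which keeps it shorter but imports dependencies. Both are sound, and both reduce to verifying that $\ell$ is monotone and that writes are confined to indices $\ge j$. Your handling of the weight equation (set once at $\ell=i$, then $I'_{i-1}$ frozen by monotonicity of $\ell$) mirrors the paper's, which treats that half as immediate from Line~\ref{line:const_level_matroid:w}.
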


\begin{proof} 
To prove the lemma, we need to show $e_\ell \in R_\ell$ and  $w(e_\ell) = f(I'_{\ell-1} + e_\ell) - f(I'_{\ell-1})$ hold for each $\ell \in [j,T]$.
Recall that in the execution of $\MatroidConstLevel{}(j)$, after constructing the level $L_\ell$, we increase the variable $\ell$ by one. Hence, the variable $\ell$ is set to $j, j+1,\cdots, T,T+1$ during the execution of $\MatroidConstLevel{}(j)$.

We first prove $e_\ell \in R_\ell$ holds for each $\ell \in[j,T]$. Let $\ell $ be a fixed integer in  $[j,T]$.  Since $e_\ell$ is an element of $P$, and $P$ is a random permutation of $R_j$, we have $e_\ell \in R_j$. 
We know that $\replacementTester(I_{\ell-1}, I'_{\ell-1}, e_\ell, w[I_{\ell-1}]) \ne \err$. Then the monotone property that we prove in Lemma~\ref{lm:binary_search_argument} implies that $\replacementTester(I_{m-1}, I'_{m-1}, e_{\ell}, w[I_{m-1}]) \ne \err$ for any $m \in [i,\ell]$. Also, it is obvious that $e_\ell \neq e_{m}$ when $m<\ell$. Recall that $R_{m}= \{ e \in R_{m-1} - e_{m-1}: \replacementTester{}(I_{m-1}, I'_{m-1}, e, w[I_{m-1}]) \ne \err\}$. Therefore, by a simple induction on  $m$, we  can  show  that $e_\ell \in R_{m}$ holds for each  $m\in[i,\ell]$, which implies $e_\ell \in R_\ell$.

Moreover, we fix the weight $w(e_\ell)=f(I'_{\ell - 1} + e_\ell) - f(I'_{\ell -1})$ for each $\ell\in[j,T]$ in Line \ref{line:const_level_matroid:w}.
Adding this to the assumption of Lemma finishes the proof.
\end{proof}
\begin{lemma}[Terminator invariant]
\label{lm:terminator:leveling}
If before calling $\MatroidConstLevel{}(j)$, the level invariants partially hold for the first $j$ levels,
then after its execution, the terminator invariant fully holds. 
\end{lemma}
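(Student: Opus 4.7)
The plan is to show that at the end of $\MatroidConstLevel{}(j)$, we have $R_{\ell^*} = \emptyset$, where $\ell^*$ denotes the final value of the loop variable $\ell$, since by definition $T = \ell^* - 1$ so $R_{T+1} = R_{\ell^*}$. To do this, I will maintain the following loop invariant throughout the \textbf{for}-loop over the permutation $P$:

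\emph{At the moment just before (and just after) processing each element of $P$, if the current value of $\ell$ satisfies $\ell > j$, then $R_\ell = \emptyset$.}

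I would prove this invariant by induction on the iteration number. For the base case, before the first iteration we have $\ell = j$, so the statement is vacuously true. For the inductive step, assume the invariant holds at the start of an iteration in which we process an element $e$. There are two cases determined by the branch taken inside the loop body:
\begin{itemize}
    \item \emph{Promoting branch} (\replacementTester{} does not return \err): the algorithm sets $z \gets \ell$, then performs $R_{\ell+1} \gets \emptyset$ and $\ell \gets \ell + 1$. Afterwards, the \textbf{for}-$r$ loop over $r \in [i+1, z]$ only adds $e$ to $R_r$ for $r \le z$, i.e.\ $r$ strictly less than the new value of $\ell$, hence does not touch the new $R_\ell$. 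Thus after the iteration, $R_{\text{new }\ell}$ equals what was just assigned, namely $\emptyset$.
    \item \emph{Non-promoting branch}: the binary search (whose correctness is guaranteed by Lemma~\ref{lm:binary_search_argument}) returns $z \in [i, \ell-1]$, so $z < \ell$ and $\ell$ is not modified. Again the \textbf{for}-$r$ loop only adds $e$ to $R_r$ with $r \le z < \ell$, so $R_\ell$ is not modified, and by the inductive hypothesis it remains empty (assuming $\ell > j$).
\end{itemize}

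It remains to handle the corner case $\ell^* = j$ (i.e.\ the loop variable was never incremented). In this case I would argue that the \textbf{for}-loop over $P$ made no progress into the promoting branch; but the very first element $e \in P$ satisfies $e \in R_j$, and by the survivor clause of the partial level invariants assumed in the hypothesis, every element of $R_j$ is promoting for $L_{j-1}$, i.e.\ $\replacementTester(I_{j-1}, I'_{j-1}, e, w[I_{j-1}]) \ne \err$. Hence if $R_j \ne \emptyset$, the first iteration would necessarily take the promoting branch and increment $\ell$, contradicting $\ell^* = j$. Therefore $\ell^* = j$ forces $R_j = \emptyset$, which directly gives $R_{T+1} = R_j = \emptyset$. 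Combined with the loop invariant, which yields $R_{\ell^*} = \emptyset$ whenever $\ell^* > j$, we conclude $R_{T+1} = \emptyset$ in all cases.

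The main obstacle I foresee is bookkeeping: one must be careful about the order of the updates to $\ell$ and to the various $R_r$ inside a single iteration, and in particular that $z$ is captured \emph{before} $\ell$ is incremented so that the \textbf{for}-$r$ loop writes into $R_r$ for $r$ only up to the old $\ell$. Everything else is mechanical once the invariant is stated correctly; no properties of $f$ or of the matroid are needed beyond what Lemma~\ref{lm:binary_search_argument} already provides for the binary search.
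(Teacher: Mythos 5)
Your proof is correct and rests on the same key observation as the paper's: the \textbf{for}-$r$ loop only writes to $R_r$ with $r \le z < \ell$ (where $\ell$ is the current, post-increment value), and $\ell$ is non-decreasing, so $R_{T+1}$ is never written to after being initialized. Your packaging as an explicit loop invariant, and in particular your treatment of the corner case $T = j-1$ (which you rule out when $R_j \ne \emptyset$ via the assumed survivor clause), is more careful than the paper's terse argument, which leaves that case implicit.
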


\begin{proof}
According to Line~\ref{line:constlevelmatroid:addR_bs} and the variable $z$, if we add an element $e$ to $R_r$ at some point of time, then $r\leq z\leq \ell-1$ holds at that moment.
Since the variable $\ell$ never decreases during the execution of $\MatroidConstLevel{}(j)$ and we return $\ell-1$ as $T$ at the end, we can conduct that no element has been added to  $R_{T+1}$, and then $R_{T+1}=\emptyset$, which means the terminator invariant holds.   
\end{proof}

\subsection{Proof of Lemma \ref{mat_insert_level}}
\label{subs:mat_insert_level}

To prove the lemma, we first mention some useful facts and then show that the starter, weight, independent, and survivor invariants partially hold. Finally, we prove that all level invariants hold.

We begin with defining variables $i^*$ and $j^*$ as follows.

\begin{itemize}
    \item \textbf{$i^*$: } If during the execution of \insertv$(v)$ there is  $i\in [T]$ such that $e_{i}$ has been set to be $v$, which also implies that we have invoked \MatroidConstLevel $(i + 1)$, then we set $i^*$ to be $i$. Otherwise, we set $i^*$ to be $T + 1$. 
    \item \textbf{$j^*$: } Let $j^*$ be the largest $i\in [0, T^-+1]$ such that we have added $v$ to $R_i^-$.
\end{itemize}

We consider these two cases in this proof. 
\begin{itemize}
    \item Case 1: $i^* \leq T$, which means $e_{i^*} = v$ and therefore $j^* = i^*$. It also means that we have invoked \MatroidConstLevel $(i^* + 1)$.
    \item Case 2: $i^* = T + 1$, which means \MatroidConstLevel{} has never been invoked  during the insertion of $v$.
    Note that in this case, $T = T^-$ and it is not possible for $j^*$ to be equal to $T^- + 1$, since that would have caused invoking \MatroidConstLevel.  Thus, $j^* < T^-+1 = T+1 =i^*$.
\end{itemize}

Considering our algorithm in \insertv$(v)$, 
it is clear that for any $i < i^*$, we have not made any kind of change in $e_i^-$, $w(e_i)$, $I^-_i$, or $I'^-_i$ at least until \MatroidConstLevel{} is invoked, if it ever gets invoked. 
Additionally, according to \MatroidConstLevel, we know that if we have invoked \MatroidConstLevel$(i^* + 1)$, there has not been any alteration to the variables regarding previous levels. 
Hence, we can conduct the following facts.

\begin{fact}
\label{fact:iprev:e}
For any $i \in [1, i^*)$, we have $e_i = e_i^-$.
\end{fact}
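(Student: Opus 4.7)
The plan is to prove this fact by a direct inspection of the pseudocode of \insertv{} in Algorithm~\ref{alg:matroid-updates}, together with the observation that a call to \MatroidConstLevel{} only alters levels with index at least its input.

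First, I would trace through the for-loop of \insertv$(v)$ iteration by iteration. Let me index the iterations by the loop variable $i$ running from $1$ to $T+1$. The only lines in the body of the loop that can modify $e_j$ for any $j$ are Line~\ref{line:mat:insert:setw} (which sets $e_i \gets v$ for the current $i$) and the subsequent invocation of $\MatroidConstLevel{}(i+1)$. Both are contained inside the $\mathbf{if}~p_i = 1$ branch, and after that branch executes the loop immediately breaks. By the definition of $i^*$, the first (and only) iteration in which we enter the $p_i = 1$ branch is $i = i^*$; for every earlier iteration $i \in [1, i^*)$ we have $p_i = 0$ and the loop body only performs updates to $R_i$ (specifically adding $v$ to $R_i$). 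In particular, no assignment to $e_i$ occurs during iterations $i < i^*$.

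Next, I would argue that the call to $\MatroidConstLevel{}(i^*+1)$ triggered at iteration $i^*$ does not touch $e_j$ for any $j < i^*$. Inspecting Algorithm~\ref{alg:matroid}, the function $\MatroidConstLevel{}(\ell)$ initializes its internal counter to $\ell$ and only assigns $e_\ell, e_{\ell+1}, \dots$ (via Line~\ref{line:const_level_matroid:w}); it never writes to $e_j$ for $j < \ell$. Applied with $\ell = i^*+1$, this means no $e_j$ with $j \le i^*$ is modified, and in particular none with $j < i^*$.

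Combining the two observations, for any $i \in [1, i^*)$, the value of $e_i$ is never changed during the entire execution of $\insertv(v)$; hence $e_i = e_i^-$, proving the fact. No obstacle is expected here, as the argument is just a careful syntactic reading of the pseudocode.
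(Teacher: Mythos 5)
Your proof is correct and takes essentially the same approach as the paper: the paper also argues that for $i < i^*$ the \insertv{} loop makes no assignment to $e_i$ (only to $R_i$), and that the call to $\MatroidConstLevel{}(i^*+1)$ only alters levels with index at least $i^*+1$. Your version is somewhat more explicit in tracing the pseudocode line by line, but the underlying reasoning is identical.
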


\begin{fact}
\label{fact:iprev:w}
For any $i \in [1, i^*)$, we have $w(e_i) = w^-(e_i)$.
\end{fact}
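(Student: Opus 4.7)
The plan is to argue by direct inspection of the pseudocode that, during \insertv$(v)$, no write to the weight function occurs at any index strictly less than $i^*$, so combined with Fact~\ref{fact:iprev:e} (which already says $e_i = e_i^-$ for $i < i^*$), the weight attached to the element $e_i$ is the same before and after the insertion.

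First I would enumerate the two possible sites where the weight function $w$ is modified inside \insertv$(v)$. Inspecting Algorithm~\ref{alg:matroid-updates}, the only explicit weight assignment is on Line~\ref{line:mat:insert:setw}, inside the \textbf{if} branch guarded by $p_i = 1$. By the very definition of $i^*$, the event $p_i = 1$ happens for the first (and only) time at $i = i^*$; in particular, no assignment $w(e_i) \gets f(I'_{i-1}+v) - f(I'_{i-1})$ takes place in any iteration $i < i^*$ of the \textbf{for}-loop. This handles the direct part of the update.

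Next I would handle the indirect site, namely the call to \MatroidConstLevel$(i^*+1)$ issued inside the \textbf{if} branch when $p_{i^*}=1$ (only relevant in Case~1, i.e.\ $i^* \le T$). Examining Algorithm~\ref{alg:matroid}, the only weight assignment inside \MatroidConstLevel{} is on Line~\ref{line:const_level_matroid:w}, and it sets $w(e) \gets f(I'_{\ell-1} + e) - f(I'_{\ell-1})$ only when $e$ is accepted as the promoted element $e_\ell$ for some level $\ell$. Since \MatroidConstLevel{} is called with argument $i^*+1$ and its internal counter $\ell$ only grows from $i^*+1$ upward, these assignments affect indices $\ell \ge i^*+1 > i^*$ only, never indices below $i^*$.

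Combining the two observations, no write to $w$ at any element that equals some $e_i^-$ for $i < i^*$ occurs during \insertv$(v)$: indeed by Fact~\ref{fact:iprev:e} we have $e_i = e_i^-$ for $i < i^*$, and the element $e_i^-$ is distinct from $v$ (Fact~\ref{ebeforeinsert}, or the observation in the definition of $i^*$), so even the write of the weight of $v$ at step $i = i^*$ does not overwrite $w(e_i)$. Therefore $w(e_i) = w^-(e_i^-) = w^-(e_i)$, which is the desired equality. The main obstacle is mostly careful bookkeeping of which indices are touched by each branch of the pseudocode; once the two modification sites above are pinned down, the conclusion is immediate from Fact~\ref{fact:iprev:e}.
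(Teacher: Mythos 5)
Your proof follows the same route as the paper's one-sentence justification: inspect the two sites in the pseudocode where $w$ could be modified during \insertv$(v)$ (the direct assignment guarded by $p_i=1$, and the assignments inside the call to \MatroidConstLevel$(i^*+1)$) and conclude that neither touches $w(e_i)$ for $i<i^*$. One detail you should make explicit: your second observation is phrased in terms of \emph{indices} ($\ell \ge i^*+1$), but the assignment on Line~\ref{line:const_level_matroid:w} writes $w$ at the \emph{element} $e_\ell$, so to conclude that $w(e_i^-)$ is untouched you also need $e_\ell \ne e_i^-$ for all $i < i^*$ and all $\ell \ge i^*+1$. This does hold: the pre-insertion survivor invariant gives $e_i^- \notin R_{i^*}^-$, and at the moment of the call $R_{i^*+1} \subseteq R_{i^*} = R_{i^*}^- \cup \{v\}$ with $e_i^- \ne v$, so $e_i^-$ can never be chosen as $e_\ell$. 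The paper itself treats this as self-evident, so with that small gap closed your argument is at least as careful as the paper's.
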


\begin{fact}
\label{fact:iprev:I}
For any $i \in [0, i^*)$, we have $I_i = I_i^-$.
\end{fact}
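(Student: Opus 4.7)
The statement is a direct bookkeeping claim about \insertv$(v)$: nothing in the code touches the independent sets at levels strictly below $i^*$. My plan is to prove it by inspecting the pseudocode of \insertv$(v)$ (Algorithm~\ref{alg:matroid-updates}) and tracing exactly which lines can ever assign to an $I_i$, then combining that with the definitions of $i^*$ and $p_i$.

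First I would dispense with the base case $i=0$. By the starter invariant that held before the update, $I_0^-=\emptyset$, and \insertv{} contains no line that assigns to $I_0$ (the only write to a level-$0$ object is $R_0 \gets R_0+v$). Hence $I_0=\emptyset=I_0^-$, settling $i=0$ regardless of which case we are in.

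Next, for $1\le i<i^*$, I would argue that the variable $I_i$ is never written during this insertion. There are only two mechanisms in \insertv$(v)$ that can alter any $I_j$: (a) Line~\ref{line:mat:insert:setI} inside the if-block guarded by $p_j=1$, which assigns only to $I_j$ for the current loop index $j$; and (b) a downstream call to \MatroidConstLevel$(j+1)$, which rewrites $I_m$ only for $m\ge j+1$. By definition of $i^*$, in Case~1 the loop sets $p_j=0$ for every $j<i^*$ (the first index with $p_j=1$ is $i^*$), so mechanism (a) is not triggered at any $j<i^*$; and mechanism (b) fires at most once, at $j=i^*$, and even then only touches indices $\ge i^*+1$. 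In Case~2 we have $i^*=T+1$, so neither mechanism fires for any $j\le T$. In both cases, for every $i\in[1,i^*)$ the value of $I_i$ at the end of \insertv{} equals its value $I_i^-$ at the start.

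The only minor subtlety—and the closest thing to an obstacle—is making sure the control flow of \insertv{} cannot sneak in a write to $I_i$ for $i<i^*$ via the \replacementTester{} calls used for promotion checks; but \replacementTester{} is a pure function that reads its inputs and returns a value without mutating any $I_j$ or $I'_j$, so these calls are harmless. With that observation in place, the inspection above is complete and the fact follows directly, and this proof also makes Fact~\ref{fact:iprev:e} and Fact~\ref{fact:iprev:w} transparent since the same analysis shows $e_i$ and $w(e_i)$ are likewise untouched for $i<i^*$.
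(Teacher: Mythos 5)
Your proof is correct and follows essentially the same route as the paper, which justifies this fact (together with Facts~\ref{fact:iprev:e}, \ref{fact:iprev:w}, and \ref{fact:iprev:I'}) by the same code inspection: no line of \insertv{} writes to $I_j$ for $j<i^*$ since $p_j=0$ there, and the only call to \MatroidConstLevel{} is at index $i^*+1$, which touches only levels $\ge i^*+1$. Your version is merely more explicit about the two write mechanisms and the purity of \replacementTester{}.
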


\begin{fact}
\label{fact:iprev:I'}
For any $i \in [0, i^*)$, we have ${I'}_i =  {I'}_i^-$. 
\end{fact}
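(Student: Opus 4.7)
The plan is to prove Fact~\ref{fact:iprev:I'} by direct inspection of the \insertv$(v)$ subroutine, combined with the definition of $i^*$. The key observation is that within the main for-loop of \insertv$(v)$, the only line that assigns to any $I'_j$ variable is Line~\ref{line:mat:insert:setI}, which lives inside the conditional block guarded by $p_i = 1$ in Line~\ref{line:mat:insert:if}. When $p_i = 1$ is triggered for some index $i$, the algorithm sets $e_i \gets v$, updates $I_i$ and $I'_i$, invokes \MatroidConstLevel$(i+1)$, and immediately breaks out of the loop via the \Break statement. By the definition of $i^*$, this means that if $i^* \le T$ then $i^*$ is precisely the unique index at which $p_{i^*} = 1$ was triggered; and if $i^* = T+1$, then no such index exists and the loop never executed Line~\ref{line:mat:insert:setI} at all.

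From this, for every level index $i$ strictly less than $i^*$, the condition $p_i = 1$ was never triggered during the execution of \insertv$(v)$, so the assignment in Line~\ref{line:mat:insert:setI} was never executed for that $i$. Consequently, no direct modification to $I'_i$ occurred in the \insertv{} code itself for $i < i^*$. In Case 2 ($i^* = T+1$), this already suffices, since \MatroidConstLevel{} is never called and nothing else in the subroutine touches any $I'_j$.

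In Case 1 ($i^* \le T$), it remains to observe that the recursive call \MatroidConstLevel$(i^*+1)$ also does not modify $I'_j$ for any $j < i^*$. Inspecting \MatroidConstLevel$(\cdot)$, the only updates to any $I'_\ell$ happen inside its for-loop in Line~\ref{line:constlevelmatroid:increase_ell}, and the local counter $\ell$ is initialized to the input argument (here $i^*+1$) and only ever increases. Therefore the construction never writes to any $I'_j$ with $j \le i^*$, and in particular never touches $I'_j$ for $j < i^*$.

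Combining both bullets, for every $i \in [0, i^*)$, none of the operations performed during \insertv$(v)$ --- neither its own for-loop body nor the possible subsequent \MatroidConstLevel$(i^*+1)$ call --- ever writes to $I'_i$. Hence $I'_i = I'_i^-$, which is exactly the claim. No obstacle is anticipated: the statement is essentially a bookkeeping observation about which variables the pseudocode touches, and the only mild subtlety (already handled above) is to confirm that the assignment triggered at level $i^*$ modifies only the variable $I'_{i^*}$ and not any strictly lower-indexed $I'_j$.
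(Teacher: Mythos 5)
Your proposal is correct and matches the paper's own justification, which is likewise a direct inspection of \insertv{} and \MatroidConstLevel{}: the only assignment to any $I'_j$ in \insertv{} occurs at the level where $p_i=1$ fires (which is exactly $i^*$, after which the loop breaks), and a subsequent call to \MatroidConstLevel$(i^*+1)$ only writes to levels indexed $\ge i^*+1$. No further comment is needed.
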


By the definition of $j^*$, we have added the element $v$ to the set $R_i^-$, for each $i \in [0, j^*]$.
Recall that $j^* \leq i^*$, and by invoking \MatroidConstLevel$(i^* + 1)$, there has not been any alteration to the variables regarding previous levels. It leads to the following fact.
\begin{fact}
\label{fact:iprev:R+v}
For any $i \in [0, j^*]$, we have $R_i = R_i^- + v$.
\end{fact}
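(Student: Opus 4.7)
The plan is to verify Fact~\ref{fact:iprev:R+v} by tracing the execution of \insertv{}$(v)$ and auditing every place where the sets $R_0, \dots, R_{j^*}$ can be written. The approach has essentially two ingredients: (i) the main loop of \insertv{} writes to level $i$ only during its $i$-th iteration, and (ii) the only call that could reach back further, namely \MatroidConstLevel{}$(i^*+1)$, touches only indices $\ge i^*+1$, and $j^*\le i^*$.

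First I would handle $i=0$: the very first line of \insertv{}$(v)$ executes $R_0 \gets R_0 + v$, i.e.\ $R_0 = R_0^- + v$. Inspecting the remainder of \insertv{} and the code of \MatroidConstLevel$(i^*+1)$ (which starts its own for-loop at index $i^*+1 \ge 1$ and only modifies $R_r$ for $r\ge i^*+1$), one sees that $R_0$ is never written again, so the claim holds for $i=0$.

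Next, fix $i \in [1, j^*]$. By the definition of $j^*$ as the largest index $\le T^-+1$ at which $v$ is inserted into $R_i^-$, the statement $R_i \gets R_i + v$ in \insertv{} is executed exactly once, precisely when the loop variable equals $i$. I would argue in three steps:
\begin{enumerate}
  \item Before the loop reaches iteration $i$, each previous iteration $i' < i$ writes only to $R_{i'}$ (or, if $p_{i'}=1$, also resets $R_{i'+1}$ and hands off to \MatroidConstLevel$(i'+1)$); none of these writes affects $R_i$ as long as $i' < i \le j^*$. Hence at the moment we enter iteration $i$, we still have $R_i = R_i^-$.
  \item In iteration $i$ itself, the statement $R_i \gets R_i + v$ is executed, producing $R_i = R_i^- + v$.
  \item After this addition, the only further writes possible inside iteration $i$ occur if $p_i = 1$, in which case the algorithm writes to $R_{i+1}$ (by resetting it) and then invokes \MatroidConstLevel$(i+1)$. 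Neither of these touches $R_i$. Moreover, if $p_i=1$ then by the Case~1 definition we are in the situation $i = i^* = j^*$, and the subsequent \MatroidConstLevel$(i^*+1)$ only modifies $R_r$ for $r\ge i^*+1 > j^*$. Thus $R_i$ is preserved through the end of the update.
\end{enumerate}

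The only mildly subtle point is step~(3): one must verify by direct inspection of Algorithm~\ref{alg:matroid} that every write to a set $R_r$ inside \MatroidConstLevel$(i^*+1)$ (the line $R_r \gets R_r + e$ and the assignment $R_{\ell+1} \gets \emptyset$) has index $r \ge i^*+1$. Since the outer $\ell$ starts at $i^*+1$ and only increases, and the inner $r$ ranges over $[i^*+2,z]$, this is immediate. Combined with the Case~2 observation that \MatroidConstLevel{} is not invoked at all (so there is nothing to rule out), this completes the argument for every $i \in [0,j^*]$ and hence establishes Fact~\ref{fact:iprev:R+v}.
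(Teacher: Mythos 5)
Your proof is correct and follows the same route as the paper, which justifies this fact directly from the definition of $j^*$ (so $v$ is explicitly added to each $R_i$ for $i\le j^*$) together with the observation that $j^*\le i^*$ and \MatroidConstLevel$(i^*+1)$ never writes to levels below $i^*+1$. Your version merely spells out the line-by-line audit that the paper leaves implicit.
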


We know that if Case 2 holds, which means \MatroidConstLevel{} has never been invoked during the insertion of $v$, we have $R_i = R_i^-$  for any $i \in [j^* + 1, T + 1]$. Recall that in Case 2, $i^* = T + 1$, and therefore $[j^*+1, T+1] = [j^*+1, i^*]$. Also if Case 1 holds, $j^* = i^*$, so $[j^* + 1, i^*] = \emptyset$. Thus, independent of the case, we can have the following fact.

\begin{fact}
\label{fact:iprev:R}
 For any $i \in [j^*+1, i^*]$, we have $R_i = R_i^-$.
\end{fact}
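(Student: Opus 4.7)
The plan is to prove Fact~\ref{fact:iprev:R} by case analysis on the two mutually exclusive cases identified just above, showing that in each case the interval $[j^*+1, i^*]$ either collapses to the empty set or consists solely of indices at which \insertv{} performs no change to $R_i$.

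First I would handle Case 1, where $i^* \le T$. By definition of $i^*$, during \insertv$(v)$ the algorithm entered the if-branch at iteration $i = i^*$ that sets $e_{i^*}\gets v$ and then invokes \MatroidConstLevel$(i^*+1)$. To reach this branch, the check at Line~\ref{line:mat:insert:if} must have triggered, which in turn requires that the preceding assignment $R_{i^*} \gets R_{i^*} + v$ was executed. Thus $v$ was added to $R_{i^*}^-$, giving $j^* \ge i^*$. Moreover, immediately after invoking \MatroidConstLevel$(i^*+1)$ the algorithm breaks out of the for-loop, so no later iteration can add $v$ to any $R_i^-$ with $i > i^*$, and therefore $j^* \le i^*$. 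Combining the two inequalities yields $j^* = i^*$, so $[j^*+1, i^*]$ is empty and the claim is vacuous.

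Next I would handle Case 2, where $i^* = T+1$ and \MatroidConstLevel{} is never invoked in \insertv$(v)$. Since no call to \MatroidConstLevel{} is made, $T = T^-$, and in particular $i^* = T+1 = T^-+1$. Now fix any $i \in [j^*+1, i^*]$. By the maximality in the definition of $j^*$, no addition of $v$ to $R_i^-$ takes place during the routine: either the for-loop broke strictly before iteration $i$, or it reached iteration $i$ but \replacementTester returned \err{} and the \Break at Line~\ref{line:mat:insert:break} fired before the line $R_i \gets R_i + v$ could execute. In Case 2 the only way \insertv{} can alter any $R_i^-$ is via exactly those additions, because the one other mechanism that rewrites these sets, namely \MatroidConstLevel{}, is not triggered. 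Consequently $R_i = R_i^-$ for every $i \in [j^*+1, i^*]$.

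The step that I expect to require the most care is Case 1, where one must read the control flow carefully to pin down that the addition $R_{i^*} \gets R_{i^*} + v$ and the subsequent \Break happen in the same loop iteration, so that $j^* \le i^*$ is genuinely forced and not merely implied up to an off-by-one. Once this, together with the Case 2 observation that no sweep over higher-indexed $R_i$ ever occurs, is in place, the fact follows immediately.
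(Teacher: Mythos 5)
Your proof is correct and follows essentially the same route as the paper: Case 1 ($i^*\le T$) forces $j^*=i^*$ so the interval $[j^*+1,i^*]$ is empty, and Case 2 ($i^*=T+1$) uses that, absent a call to \MatroidConstLevel{}, the only modifications \insertv{} makes to the sets $R_i$ are the additions of $v$, which by definition of $j^*$ do not reach indices above $j^*$. Your write-up merely supplies the control-flow details that the paper leaves implicit.
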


In the following, we first prove that the starter invariant holds after executing $\insertv(v)$. We next show that the weight, independent, and survivor invariants partially hold for the first $i^*+1$ levels. Finally, we complete the proof by proving that all the level invariants hold.

\textbf{Starter invariant.}
To show that the starter invariant holds after $\insertv(v)$, we need to prove $R_0= V$ and $I_0={I'}_0=\emptyset$.
By the assumption of this lemma, we have $R_0^-=V^-$, and Fact~\ref{fact:iprev:R+v} results that $R_0 = R_0^- + v$. Thus $R_0 = R_0^- + v = V^- + v = V$.

Again by the assumption of this lemma, $I_0^-={I'}_0^-=\emptyset$.
Due to Fact~\ref{fact:iprev:I}, we have $I_0=I^-_0$, and therefore, it is clear that $I_0 = I_0^- = \emptyset$.
Similarly, we have ${I'}_0={I'}^-_0$ because of Fact~\ref{fact:iprev:I'}, and then ${I'}_0 = {I'}_0^- = \emptyset$.

\textbf{Weight invariant (partially).}
To show that weight invariant partially holds for the first $i^*$ levels, we first prove $e_i\in R_i$ and we prove then $w(e_i) = f(I'_{i - 1} + e_i) - f(I'_{i - 1})$.

By the assumption of this lemma, we know that $e_i^- \in R_i^-$ for $i \in [1, i^*)$.
Besides, according to Fact~\ref{fact:iprev:R} and Fact~\ref{fact:iprev:R+v}, for any $i \in [0,i^*]$, either $R_i=R_i^-+v$ or $R_i=R_i^-$, and thus $R_i^- \subseteq R_i$. Also for any $i \in [1, i^*)$ we have $e_i=e_i^-$ by Fact~\ref{fact:iprev:e}. Putting everything  together, we have $e_i=e_i^- \in R_i^- \subseteq R_i$ for any $i\in[1,i^*)$. which means $e_i \in R_i$. 

Next we need to show 
$w(e_i)= f(I'_{i - 1} + e_i) - f(I'_{i - 1})$.
For any $i\in[1, i^*)$, we have $w(e_i) = w^-(e_i)$ by Fact~\ref{fact:iprev:w}. Besides, $e_i=e_i^-$ by Fact~\ref{fact:iprev:e}, and then $w^-(e_i)=w^-(e_i^-)$.
Moreover, the assumption of this lemma implies that $w^-(e_i^-)=f(I'^-_{i - 1} + e_i^-) - f(I'^-_{i - 1})$.
Also $I'^-_{i - 1}=I'_{i - 1}$ for any $i\in [1,i^*)$ because of Fact~\ref{fact:iprev:I'}. Adding it to $e_i=e_i^-$ implies $f(I'^-_{i - 1} + e_i^-) - f(I'^-_{i - 1}) = f(I'_{i - 1} + e_i) - f(I'_{i - 1})$.
Putting everything together, for any $i\in [1,i^*)$ we have

\[
w(e_i) = w^-(e_i) = w^-(e_i^-)= f(I'^-_{i - 1} + e_i^-) - f(I'^-_{i - 1}) = f(I'_{i - 1} + e_i) - f(I'_{i - 1})  \enspace .
\]

\textbf{Independent invariant (partially).}
Now we show that the independent invariant partially holds for the first $i^*$ levels after $\insertv(v)$.
To do this, we prove $I_i = I_{i - 1} + e_i - \replacementTester{}(I_{i - 1},  {I'}_{i - 1}, e_i, w[I_{i - 1}])$ and  ${I'}_i = \cup_{j \le i} I_j$ holds for all $i\in[1,i^*)$.

Using Fact~\ref{fact:iprev:I}, we have $I_i = I_i^-$ for any $i\in  [1,i^*)$. Also, $I_i^- = I_{i - 1}^- + e_i^- - \replacementTester{}(I_{i - 1}^-, {I'}_{i - 1}^-, e_i^-, w^-[I_{i - 1}^-])$ by
the assumption of this lemma.  Then, 
$$ I_i = I_i^- = I_{i - 1}^- + e_i^- - \replacementTester{}(I_{i - 1}^-, {I'}_{i - 1}^-, e_i^-, w^-[I_{i - 1}^-]) \enspace.
$$

Recall that $w[X]$ means $w$ restricted to domain $X$.
Now, we show that $w^-[I_{i - 1}^-]=w[I_{i - 1}]$ holds for any $i\in[1,i^*)$.
By Fact~\ref{fact:iprev:I}, $I_{i-1} = I_{i-1}^-$, which means the domain of $w^-[I_{i - 1}^-]$ and $w[I_{i - 1}]$ are equal.
Moreover, since the independent invariant holds before the insertion of $v$ by the assumption of this lemma, we have $I_{i - 1}^- \subseteq \{e_1^-, \cdots, e_{i - 1}^-\}$.
Besides, for any $j\leq i-1 <i^*$, Fact~\ref{fact:iprev:e} implies that $e_j^-=e_j$, which results in $I_{i - 1}^- \subseteq \{e_1, \cdots, e_{i - 1}\}$.
We also have $w(e_j)=w^-(e_j)$ for any $j\leq i-1 <i^*$ using Fact~\ref{fact:iprev:w}. It completes the proof of $w^-[I_{i - 1}^-]=w[I_{i - 1}]$.
Hence,

$$I_i = I_{i - 1}^- + e_i^- - \replacementTester{}(I_{i - 1}^-, {I'}_{i - 1}^-, e_i^-, w[I_{i - 1}]) \enspace.$$
For any $i\in[1,i^*)$, we have $I_{i - 1}=I_{i - 1}^-$ by Fact~\ref{fact:iprev:I}, ${I'}_{i - 1}={I'}_{i - 1}^-1$ by Fact~\ref{fact:iprev:I'}, and $e_i=e_i^-$ by Fact~\ref{fact:iprev:e}. Then we can further conclude that:
$$
I_i = I_{i - 1} + e_i - \replacementTester{}(I_{i - 1},  {I'}_{i - 1}, e_i, w[I_{i - 1}]) \enspace.
$$

Finally, we prove ${I'}_i = \cup_{j \le i} I_j$ for  all $i \in [1,i^*)$.
Because of Fact~\ref{fact:iprev:I'}, ${I'}_i =  {I'}_i^-$ holds  for any $i\in[1,i^*)$. Besides, ${I'}_i^- = \cup_{j \le i} I_j^-$ a result of the assumption of this lemma.
Also, $I_j^-=I_j$ for any $j\leq i <i^*$ due to Fact~\ref{fact:iprev:I}.
Putting everything together we have
$$
{I'}_i =  {I'}_i^- = \cup_{j \le i} I_j^- = \cup_{j \le i} I_j  \enspace .
$$

\textbf{Survivor invariant (partially). }
Next, we show that the survivor invariant partially holds for the first $i^*$ levels by proving that $R_i=\{ e \in R_{i-1} - e_{i-1}: \replacementTester{}(I_{i-1}, {I'}_{i-1}, e, w[I_{i-1}]) \ne \err\}$ holds for any $i\in[1,i^*]$.
In the following, we first consider $i \in [1, j^*]$ and then $i \in [j^*+1, i^*]$.
Recall that $j^*$ is the largest $j$ that we added $v$ to $R_j$ in $\insertv(v)$.

First we study $i \in [1, j^*]$. 
Using Fact~\ref{fact:iprev:R+v}, $R_i = R_i^- + v$ holds for each $i \in [1, j^*]$, and $R_i^-=\{ e \in R_{i-1}^- - e_{i-1}^-: \replacementTester{}(I_{i-1}^-, {I'}_{i-1}^-, e, w^-[I_{i-1}^-]) \ne \err\}$ according  to the assumption of this lemma. Besides, by the definition of $j^*$  and according to the break condition in Line~\ref{line:mat:insert:break}, we can conduct that $\replacementTester{}(I_{i-1}^-, {I'}_{i-1}^-, v, w^-[I_{i-1}^-]) \ne \err$. Putting everything together we have,
\begin{align*}
    R_i &= R_i^- + v \\
    &= \{ e \in R_{i-1}^- - e_{i-1}^-: \replacementTester{}(I_{i-1}^-, {I'}_{i-1}^-, e, w^-[I_{i-1}^-]) \ne \err\} + v
    \\
    &= \{ e \in R_{i-1}^- + v - e_{i-1}^-: \replacementTester{}(I_{i-1}^-, {I'}_{i-1}^-, e, w^-[I_{i-1}^-]) \ne \err\} \enspace.
\end{align*}
As stated in the previous part, we know that $w^-[I_{i - 1}^-] = w[I_{i - 1}]$ for any $i$ that $i-1< j^*\leq i^*$. 
Using this fact alongside $R_{i-1}=R_{i-1}^- + v$ by Fact~\ref{fact:iprev:R+v}, $e_{i-1}=e_{i-1}^-$ by Fact~\ref{fact:iprev:e}, $I_{i-1}=I_{i-1}^-$ by Fact~\ref{fact:iprev:I}, and ${I'}_{i-1}={I'}_{i-1}^-$ by Fact~\ref{fact:iprev:I'}, we have: 
$$
R_i = \{ e \in R_{i-1} - e_{i-1}: \replacementTester{}(I_{i-1}, {I'}_{i-1}, e, w[I_{i - 1}]) \ne \err\} \enspace.
$$

Recall that if case 1 holds, $i^*=j^*$, and then the survivor invariant partially holds for the first $i^*$ levels.
Otherwise, if Case 2 holds, it remains to study $i\in [j^*+1, i^*]=[j^*+1,T+1]$.

It holds that $R_i = R_i^-$ for any $i \in [j^*, i^*]$  according to Fact~\ref{fact:iprev:R}. Adding  it to the assumption of this lemma, we have: 
$$
    R_i = R_i^-
    = \{ e \in R_{i-1}^- - e_{i-1}^-: \replacementTester{}(I_{i-1}^-, {I'}_{i-1}^-, e, w^-[I_{i-1}^-]) \ne \err\} \enspace .
$$

Besides, $e_{i-1}=e_{i-1}^-$ by Fact~\ref{fact:iprev:e}, $I_{i-1}=I_{i-1}^-$ by Fact~\ref{fact:iprev:I}, and ${I'}_{i-1}={I'}_{i-1}^-$ by Fact~\ref{fact:iprev:I'}. Using these fact alongside $w^-[I_{i - 1}^-] = w[I_{i - 1}]$, which was stated beforehand we have
$$
    R_i
    = \{ e \in R_{i-1}^- - e_{i-1}: \replacementTester{}(I_{i-1}, {I'}_{i-1}, e, w[I_{i-1}]) \ne \err\} \enspace .
$$
We have either $i=j^*+1$ or $i\in (j^*+1, i^*]$. If $i\in (j^*+1, i^*]$, then $R_{i - 1}=R_{i - 1}^-$ by Fact~\ref{fact:iprev:R}. Hence,
$$
    R_i
    = \{ e \in R_{i-1} - e_{i-1}: \replacementTester{}(I_{i-1}, {I'}_{i-1}, e, w[I_{i-1}]) \ne \err\} \enspace .
$$
Now consider $i=j^*+1$. According to Fact~\ref{fact:iprev:R+v}, $R_{j^*}=R_{j^*}^-+v$, and then $R_{j^*}^-=R_{j^*}\setminus\{v\}$. Due to the definition of $j^*$, we know $v$ is not added to $R_{j^*+1}$. Hence, according to the break condition in Line~\ref{line:mat:insert:break}, we can conduct that  $\replacementTester{}(I_{j^*}, {I'}_{j^*}, e, w[I_{j^*}]) = \err$. 
Putting everything we have
\begin{align*}
    R_{j^*+1} &= \{ e \in R_{j^*}^- - e_{j^*}: \replacementTester{}(I_{j^*}, {I'}_{j^*}, e, w^-[I_{j^*}]) \ne \err\} \\
     &= \{ e \in R_{j^*} \backslash \{v\} - e_{j^*}: \replacementTester{}(I_{j^*}, {I'}_{j^*}, e, w^-[I_{j^*}]) \ne \err\} \\
    &= \{ e \in R_{j^*} - e_{j^*}: \replacementTester{}(I_{j^*}, {I}_{j^*}, e, w[I_{j^*}]) \ne \err\} \enspace ,
\end{align*}
as claimed.

\textbf{Completing the proof.}
Now having everything in hand, we can complete the proof of this lemma. Above, we show that the starter, survivor, weight, and independent invariants partially hold for the first $i^*$ levels.

If Case 1 holds, it means we set $e_{i^*}$ to $v$. 
We also set $w(e_{i^*}) = f(I'_{i^* - 1} + e_{i^*}) - f(I'_{i^* -1})$ in Line~\ref{line:mat:insert:setw},
$I_{i^*} = I_{i^*-1} + e_{i^*} - \replacementTester{}(I_{i^*-1}, I'_{i^*-1}, e_{i^*}, w[I_{i^* - 1}])$
and $I'_i = I'_i + e_{i^*}$ in Line~\ref{line:mat:insert:setI}.
Then in the Line~\ref{line:mat:insert:setRi+1},
we set $R_{i^*+1} = \{e' \in R_{i^*}: \replacementTester{}(I_{i^*}, I'_{i^*}, e',w[I_{i^*}]) \ne \err \}$.
It implies that the level invariants partially hold for the first $i^*+1$ levels.
Next, we invoke \MatroidConstLevel$(i^*+1)$, and then all the invariants hold by Theorem~\ref{thm:invariants:leveling}. 

Otherwise, if Case 2 holds, $i^*=T+1=T^-+1$.
It means the starter,  survivor, weight, and independent invariants hold and it remains to show that the terminator invariant holds to complete the proof.
Recall that in this case, $j^*<T^-+1$, which implies $(T^-+1)\in[j^*+1, i^*]$ and then $R_{T^-+1}=R_{T^-+1}^-$ by Fact~\ref{fact:iprev:R+v}.
Also, the assumption of this lemma implies $R_{T^-+1}^-=\emptyset$.  Therefore,
$R_{T+1}=R_{T^-+1}=R_{T^-+1}^-=\emptyset$, which means the terminator invariant holds and completes the proof.

\subsection{Proof of Lemma \ref {mat_delete_level}}
\label{subs:mat_delete_level}

Let $i^*$ be the level for which \MatroidConstLevel{} is invoked, and if \MatroidConstLevel{} is never invoked during the execution of \deletev$(v)$, we set $i^*$ to be $T + 1$.
Considering our algorithm in \deletev$(v)$, we know that in levels before $i^*$, or in other words in each level $i \in [1, i^*)$ we do not make any change in our data structure other than removing the element $v$ from $R_{i}^-$ if it has this element in it. 
Hence, we have the following facts about $e$, $w$, $I$, and $I'$, which are similar to the Facts~\ref{fact:iprev:e},~\ref{fact:iprev:w},~\ref{fact:iprev:I},~and~\ref{fact:iprev:I'} in the proof of Lemma~\ref{mat_insert_level}.

\begin{fact}
\label{fact:prev:e}
For any $i \in [1, i^*)$, it holds that $e_i = e_i^-$. 
\end{fact}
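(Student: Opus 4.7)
The plan is to argue directly from the pseudocode of \deletev$(v)$ that no write to any variable $e_i$ occurs while the loop index is strictly below $i^*$. First I would unfold the definition of $i^*$: either $i^* = T+1$, in which case \MatroidConstLevel{} is never invoked during the execution of \deletev$(v)$, or $i^* \le T$ is the unique loop index at which \MatroidConstLevel$(i^*)$ is invoked (uniqueness follows because the invocation is immediately followed by \Break).

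Next I would inspect the body of the \textbf{for}-loop in \deletev{}. Fix any $i \in [1, i^*)$. By definition of $i^*$, the loop reaches iteration $i$ without having previously executed \MatroidConstLevel{} or breaking, so none of the earlier iterations have altered any of the stored variables besides possibly removing $v$ from some $R_j^-$. In iteration $i$ itself, the only actions performed before the test $e_i = v$ are (a) the check $v \notin R_i$, which either does nothing or triggers \Break, and (b) the assignment $R_i \gets R_i - v$; neither of these touches $e_i$. Moreover, since $i < i^*$ we must have $e_i^- \ne v$ (otherwise the call $\MatroidConstLevel(i)$ would have been invoked and we would have $i^* = i$, contradicting $i < i^*$), so the conditional that would invoke \MatroidConstLevel$(i)$ is not entered. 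Consequently the iteration leaves $e_i$ unchanged, i.e., $e_i = e_i^-$.

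The argument is essentially a syntactic inspection of the algorithm; the only subtle point, which I would want to state explicitly, is the justification that $e_j^- \ne v$ for every $j < i^*$. In the case $i^* \le T$ this is immediate from the fact that the loop did not invoke \MatroidConstLevel{} at any iteration $j < i^*$, while the \textbf{if} $e_j = v$ branch is the only way the loop can trigger that invocation. In the case $i^* = T+1$, the same reasoning applies to every $j \in [1, T]$, and the range $[1, i^*) = [1, T+1)$ is handled identically. Thus in all cases $e_i = e_i^-$ for every $i \in [1, i^*)$, completing the plan.
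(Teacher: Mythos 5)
Your proposal is correct and mirrors the paper's own (terse) justification, which simply observes that the iterations of \deletev{} with index below $i^*$ alter nothing in the data structure except removing $v$ from the $R_j^-$ sets. One small imprecision worth fixing: your claim that ``the loop reaches iteration $i$'' need not hold, since the loop may break at some $j < i$ because $v \notin R_j^-$; the conclusion $e_i = e_i^-$ survives because in that case $e_i$ is a fortiori untouched, but the cleanest route avoids the detour through $e_i^- \ne v$ entirely — it suffices to note that $e_i$ is written only by an invocation $\MatroidConstLevel(j)$ with $j \le i$, and by the very definition of $i^*$ no such invocation occurs for any $j < i^*$, so $i < i^*$ immediately gives $e_i = e_i^-$.
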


\begin{fact}
\label{fact:prev:w}
For any $i \in [1, i^*)$, it holds that $w(e_i) = w^-(e_i)$.
\end{fact}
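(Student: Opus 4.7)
The plan is to prove Fact~\ref{fact:prev:w} by carefully examining the pseudocode of the \deletev{} routine and arguing that, for every level index $i$ strictly below $i^*$, neither the choice of $e_i$ nor its recorded weight $w(e_i)$ is touched by the deletion. Since the weight of an element is fixed when it is first promoted and is never recomputed except inside \MatroidConstLevel{}, the claim is essentially a bookkeeping observation about which parts of the data structure \deletev{} writes to.

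Concretely, I would proceed as follows. First, I would inspect the main loop of \deletev{} in Algorithm~\ref{alg:matroid-updates}: at each level $L_i$ it (a) tests the break condition $v \notin R_i$, (b) removes $v$ from $R_i$ if present, and (c) checks whether $e_i = v$, in which case it invokes $\MatroidConstLevel{}(i)$ and immediately breaks. None of (a), (b) or the equality test in (c) performs any write to $e_j$ or $w(e_j)$ for any $j$. Consequently, the only way the weight of $e_i$ could change during \deletev{} is through a call to $\MatroidConstLevel{}(i')$ for some $i' \le i$.

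Next, I would split the analysis into the two cases encoded by the definition of $i^*$. If $i^* = T+1$, then $\MatroidConstLevel{}$ is never called, so for every $i \in [1, T]$ the weight $w(e_i)$ equals its pre-deletion value $w^-(e_i)$, and in particular for every $i \in [1, i^*)$. If $i^* \le T$, then $\MatroidConstLevel{}(i^*)$ is invoked exactly once, and inspection of \MatroidConstLevel{} shows that it only writes to levels $L_{i^*}, L_{i^*+1}, \ldots, L_T$; it never modifies $e_j$ or $w(e_j)$ for $j < i^*$. Moreover, after this single call the loop breaks, so no subsequent iteration performs any write either. Combining the two cases, we obtain $w(e_i) = w^-(e_i)$ for all $i \in [1, i^*)$.

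I do not expect any real obstacle here: the statement is purely a control-flow assertion about what \deletev{} does and does not touch before reaching $i^*$, and the argument is parallel to the proof of Fact~\ref{fact:prev:e} for the $e_i$ values. The only thing worth being careful about is making sure the invocation of $\MatroidConstLevel{}(i^*)$ is confined to levels $\ge i^*$, which is immediate from the pseudocode since the leveling subroutine initialises only $\ell \gets i^*$ and writes exclusively to $I_\ell, I'_\ell, R_\ell, e_\ell$ with $\ell \ge i^*$.
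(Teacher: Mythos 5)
Your proposal is correct and matches the paper's approach. The paper justifies Facts~\ref{fact:prev:e}--\ref{fact:prev:I'} (including this one on $w$) in a single sentence — observing that for levels $i < i^*$ the deletion routine touches nothing but $R_i$ — whereas you spell out the same control-flow argument more explicitly, including the case split on whether $\MatroidConstLevel{}$ is invoked and the observation that $\MatroidConstLevel{}(i^*)$ writes only to indices $\ge i^*$; the underlying reasoning is identical.
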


\begin{fact}
\label{fact:prev:I}
For any $i \in [0, i^*)$, it holds that $I_i = I_i^-$.
\end{fact}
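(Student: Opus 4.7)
The plan is to prove this fact by direct inspection of the pseudocode of \deletev$(v)$ in Algorithm~\ref{alg:matroid-updates}, mirroring the reasoning sketched just before Facts~\ref{fact:prev:e} and~\ref{fact:prev:w}. First, I recall the definition of $i^*$: it is the (unique) level at which \MatroidConstLevel{} gets invoked during the execution of \deletev$(v)$, or $T+1$ if no such invocation ever occurs. The case $i = 0$ is immediate, since the variables $I_0$ and $I'_0$ are never written to anywhere in \deletev{}; thus $I_0 = I_0^- = \emptyset$.

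For $i \in [1, i^*)$, I would trace through the for-loop of \deletev{} and observe that in each iteration indexed by $j < i^*$, the only write the algorithm performs to the data structure is $R_j \gets R_j - v$; the independent sets $I_j$ are never assigned. The loop proceeds past iteration $j$ either by falling through (when $v \in R_j$ and $e_j \ne v$) or by breaking. By the definition of $i^*$, the break that causes \MatroidConstLevel$(i^*)$ to be invoked happens precisely at $j = i^*$, so for every $j < i^*$ the loop body completes without calling \MatroidConstLevel{} and without touching $I_j$. Hence $I_j = I_j^-$ for all $j \in [1, i^*)$.

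The only remaining subtlety is to argue that, if \MatroidConstLevel$(i^*)$ is in fact invoked (i.e.\ $i^* \le T$), its subsequent execution does not retroactively alter any $I_j$ with $j < i^*$. This follows by inspection of Algorithm~\ref{alg:matroid}: \MatroidConstLevel$(i^*)$ initializes its local counter $\ell$ to $i^*$ and only ever writes to $I_\ell$, $I'_\ell$, $R_{\ell+1}$, and $e_\ell$ for $\ell \ge i^*$ (at Lines~\ref{line:const_level_matroid:w}--\ref{line:constlevelmatroid:increase_ell} and~\ref{line:constlevelmatroid:addR_bs}). Thus none of the sets $I_0, I_1, \dots, I_{i^*-1}$ is modified, confirming $I_i = I_i^-$ on the required range.

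I do not anticipate any genuine obstacle: the fact is essentially a syntactic observation about which fields of the data structure the pseudocode writes to before index $i^*$ is reached. The only point requiring a moment of care is verifying, via the loop's break condition, that the iterations for $j < i^*$ cannot terminate early in a way that would leave some $I_j$ in an inconsistent intermediate state, but since the loop body never touches $I_j$ at all, this concern is vacuous.
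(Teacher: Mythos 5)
Your proof is correct and takes essentially the same approach as the paper: the paper establishes Fact~\ref{fact:prev:I} (along with Facts~\ref{fact:prev:e}, \ref{fact:prev:w}, \ref{fact:prev:I'}) by the single observation that for $i \in [1, i^*)$ the loop in \deletev{} only removes $v$ from $R_i^-$, and that \MatroidConstLevel$(i^*)$, if invoked, does not touch any level below $i^*$. You simply spell out that observation in more detail, including the $i=0$ base case and the verification that \MatroidConstLevel's writes are confined to indices $\ell \ge i^*$.
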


\begin{fact}
\label{fact:prev:I'}
For any $i \in [0, i^*)$, it holds that ${I'}_i =  {I'}_i^-$.
\end{fact}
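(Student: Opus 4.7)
The plan is to prove this by direct inspection of the \deletev{} pseudocode in Algorithm~\ref{alg:matroid-updates}, analogously to how Facts~\ref{fact:prev:e}, \ref{fact:prev:w}, and \ref{fact:prev:I} are justified. The key observation is that inside the body of \deletev$(v)$, the only write operations that take place in the for-loop are (i) the assignment $R_i \gets R_i - v$, (ii) the conditional invocation of $\MatroidConstLevel(i)$ when $e_i = v$, and (iii) control-flow breaks. In particular, \emph{no assignment of the form $I'_j \gets \cdots$ appears outside of \MatroidConstLevel{}}. Therefore, so long as \MatroidConstLevel{} has not yet been called during this execution of \deletev$(v)$, every $I'_j$ retains the value it held just before the call.

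With that observation in hand, I would argue as follows. Let $i \in [0, i^*)$ be fixed. By the definition of $i^*$, the subroutine \MatroidConstLevel{} is not invoked at any iteration $j < i^*$; in particular it has not been invoked when iteration $i$ is entered. Combined with the fact that $I'_i$ is never modified outside of \MatroidConstLevel{}, this immediately gives $I'_i = I'^{-}_i$ for every $i \in [1, i^*)$. The base case $i = 0$ is handled separately: by the hypothesis of Lemma~\ref{mat_delete_level}, the starter invariant holds before the deletion, so $I'^{-}_0 = \emptyset$; since \deletev{} does not touch $I'_0$ at all (the loop begins at $i = 1$), we still have $I'_0 = \emptyset = I'^{-}_0$.

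Since this fact is essentially a syntactic check of the pseudocode, there is no real obstacle; the only care needed is to make sure we are reading the algorithm correctly and that the definition of $i^*$ is being applied with the right convention (namely that iterations strictly before the invocation of \MatroidConstLevel{} are the ones under consideration, which matches the range $[0, i^*)$). Once Facts~\ref{fact:prev:e}--\ref{fact:prev:I'} are established in this way, they feed into the subsequent case analysis of Lemma~\ref{mat_delete_level} exactly as the parallel facts in Lemma~\ref{mat_insert_level} do.
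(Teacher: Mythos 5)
Your proposal is correct and matches the paper's justification exactly: the paper likewise observes that \deletev{} only removes $v$ from the $R_j^-$ sets for $j < i^*$ and makes no other modification to the data structure (in particular, none to $I'_j$) until \MatroidConstLevel{} is invoked. The explicit handling of $i=0$ is a small clarification that the paper leaves implicit, but the underlying argument is the same syntactic inspection of the pseudocode.
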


In \deletev$(v)$, we removes the element $v$ from $R_{i}^-$ for any $i\in[0,\min(i^*,T^-)]$.
By the definition of $i^*$, we have $i^*\leq T^-+1$, and $i^*=T^-+1$ happens only if \MatroidConstLevel{} has never been invoked during the execution of \deletev$(v)$. In this case, 
$R_{T^-+1} = R_{T^-+1}^-$, and since $R_{T^-+1}^- = \emptyset$ according to the assumption of this lemma, we have $R_{T^-+1} = R_{T^-+1}^-=\emptyset$.
Therefore, $R_i = R_i^- \backslash \{v\}$ also holds when $i=T^-+1$, and as $\min(i^*,T^-+1)=i^*$, we can conduct the following fact.

\begin{fact}
\label{fact:prev:R}
For any $i \in [0, i^*]$, it holds that $R_i = R_i^- \backslash \{v\}$.
\end{fact}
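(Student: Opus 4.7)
The plan is to unfold the definition of $i^*$ into the two natural cases and trace the execution of \deletev$(v)$ against the pre-deletion level invariants. Since the paragraph preceding the fact already records the underlying reasoning in compressed form, the proof is essentially a careful verification that the algorithm either explicitly removes $v$ from $R_i^-$ at each level $i \le i^*$, or else does not need to because $v$ was never in $R_i^-$ in the first place.

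I would first dispose of the boundary index $i = 0$: the very first line of \deletev$(v)$ performs $R_0 \gets R_0 - v$, and since $v \in V = R_0^-$ (by the starter invariant, which holds before the deletion), this immediately gives $R_0 = R_0^- \setminus \{v\}$. For $i \ge 1$, I split on $i^*$. In Case 1 ($i^* \le T^-$), \MatroidConstLevel{} is invoked at level $i^*$; by the structure of the for-loop this can happen only if the loop reaches iteration $i^*$ without ever breaking, which in turn requires $v \in R_i^-$ for every $i \in [1, i^*]$. For each such $i$, the algorithm executes $R_i \gets R_i - v$ before the invocation, so $R_i = R_i^- \setminus \{v\}$. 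Moreover, \MatroidConstLevel$(i^*)$ rebuilds only the strictly later levels, so it leaves $R_{\le i^*}$ untouched.

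In Case 2 ($i^* = T^- + 1$), no invocation of \MatroidConstLevel{} occurs. Let $j \in [1, T^- + 1]$ be the first iteration at which the loop breaks because $v \notin R_j^-$ (set $j = T^- + 1$ if the loop runs to completion). For $i \in [1, j-1]$ the algorithm explicitly sets $R_i = R_i^- - v$ as in Case 1. For $i \in [j, T^-]$ the algorithm leaves $R_i$ unchanged, so $R_i = R_i^-$; but the survivor invariant before the deletion forces $R_0^- \supseteq R_1^- \supseteq \cdots \supseteq R_{T^- + 1}^-$, so $v \notin R_j^-$ propagates to $v \notin R_i^-$ for all $i \ge j$, giving $R_i = R_i^- = R_i^- \setminus \{v\}$. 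Finally, for the top index $i = T^- + 1 = i^*$, the terminator invariant gives $R_{T^- + 1}^- = \emptyset$, and since the algorithm never modifies this set we also have $R_{T^- + 1} = \emptyset = R_{T^- + 1}^- \setminus \{v\}$.

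The only mildly delicate point is the early-break situation in Case 2, where the claim cannot be justified by algorithmic action and instead relies on the monotonicity of the survivor chain inherited from the pre-deletion invariants. Once this is combined with the terminator invariant to close off the top index, the remaining verification is a routine reading of the \deletev{} pseudocode; I do not anticipate any genuine obstacle beyond being careful about the case $i = T^- + 1$.
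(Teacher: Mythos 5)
Your proof is correct and follows essentially the same route as the paper: trace \deletev$(v)$ level by level, using the pre-deletion starter, survivor, and terminator invariants, and note that \MatroidConstLevel$(i^*)$ only modifies $R_r$ for $r>i^*$. You are in fact slightly more careful than the paper's terse justification in the early-break case, where you correctly observe that $R_i = R_i^-\setminus\{v\}$ for $i\ge j$ follows not from any algorithmic action but from the nestedness $R_j^-\supseteq R_i^-$ implied by the survivor invariant.
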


\textbf{Starter invariant. }
Similar to Lemma~\ref{mat_insert_level}, we prove the starter invariant holds, which means $R_0=V$ and $I_0=I'_0=\emptyset$.

Fact~\ref{fact:prev:R} implies that $R_0 = R_0^- \backslash \{v\}$.
Besides, $ R_0^- = V^-$ since the starter invariant holds before the deletion of $v$ by the assumption.
We also know $V=V^- \backslash \{v\}$. Therefore, 
$R_0 = R_0^- \backslash \{v\}=V^- \backslash \{v\}= V$.

We have $I_0=I_0^-$ by Fact~\ref{fact:prev:I} and ${I'}_0={I'}_0^-$ by Fact~\ref{fact:prev:I'}.
Adding these to $I_0^-={I'}_0^-=\emptyset$, which is the assumption of lemma, implies that $I_0 = {I'}_0 = \emptyset$.

\textbf{Weight invariant  (partially). }
We next prove that the weight invariant partially holds for the first $i^*$ levels.
To do this, we show $e_i \in R_i$ and $w(e_i)=f(I'_{i - 1} + e_i) - f(I'_{i - 1})$ hold for any $i\in [1, i^*)$.

We first prove  $e_i \in R_i$ holds for each $i \in (1, i^*)$.
According to the definition of $i^*$ and \deletev{} Algorithm, we know that $e_i^- \ne v$ for any $i < i^*$.
Moreover, the assumption of this lemma implies that $e_i^-\in R_i^-$.
Hence we can say that $e_i^- \in R_i^- \backslash \{v\}$. 
We also know $R_i^- \backslash \{v\} =  R_i$ for any $i\in[0,i^*)$ according to Fact~\ref{fact:prev:R}.
Adding it to $e_i= e_i^-$, which is a result of  Fact~\ref{fact:prev:e}, we have
$$
e_i= e_i^- \in R_i^- \backslash \{v\}  =  R_i \enspace .
$$

We next show $w(e_i) = f(I'_{i - 1} + e_i) - f(I'_{i - 1})$ in  the  same way of Lemma~\ref{mat_insert_level}.
For any $i\in[1,i^*)$,  we have $w(e_i) = w^-(e_i)$ by Fact~\ref{fact:prev:w}, and $e_i=e_i^-$ by Fact~\ref{fact:prev:e}.
Moreover, $w^-(e_i^-) = f(I'^-_{i - 1} + e_i^-) - f(I'^-_{i - 1})$ because of the assumption of this lemma.
Also, $I'^-_{i - 1}=I'_{i - 1}$ holds for any $i\in[1,i^*)$ because of Fact~\ref{fact:prev:I'}.
Putting everything together, for any $i\in[1,i^*)$ we have
$$
w(e_i) = w^-(e_i) 
= w^-(e_i^-)
= f(I'^-_{i - 1} + e_i^-) - f(I'^-_{i - 1}) 
= f(I'_{i - 1} + e_i) - f(I'_{i - 1}) \enspace, 
$$
which completes the proof.

\textbf{Independent invariant (partially). }
We show that the independent invariant partially holds for the first $i^*$ levels,  which means $I_i = I_{i-1} + e_i - \replacementTester{}(I_{i-1}, I'_{i-1}, e_i, w[I_{i - 1}])$ and $I'_i = \cup_{j \le i} I_j$ hold for any $i\in[1,i^*)$.

Same as Lemma~\ref{mat_insert_level}, we first prove $I_i = I_{i-1} + e_i - \replacementTester{}(I_{i-1}, I'_{i-1}, e_i,, w[I_{i - 1}])$ holds for any $i\in[1,i^*)$.
By the assumption of this lemma, $I_{i}^-=I_{i - 1}^- + e_i^- - \replacementTester{}(I_{i - 1}^-, {I'}_{i - 1}^-, e_i^-, w^-[I_{i - 1}^-])$ holds.
Besides, $I_i = I_i^-$  for any any $i\in[1,i^*)$ according to Fact~\ref{fact:prev:I}.
Therefore,
$$ I_i = I_i^- = I_{i - 1}^- + e_i^- - \replacementTester{}(I_{i - 1}^-, {I'}_{i - 1}^-, e_i^-, w^-[I_{i - 1}^-])  \enspace.
$$
Recall that $w[X]$ means $w$ restricted to domain $X$.
Again same as Lemma~\ref{mat_insert_level}, we show that $w^-[I_{i - 1}^-]=w[I_{i - 1}]$ holds for any $i\in[1,i^*)$.
By Fact~\ref{fact:prev:I}, $I_{i-1} = I_{i-1}^-$, which means the domain of $w^-[I_{i - 1}^-]$ and $w[I_{i - 1}]$ are equal.
Moreover, since the independent invariant holds before the deletion of $v$ by the assumption of this lemma, we have $I_{i - 1}^- \subseteq \{e_1^-, \cdots, e_{i - 1}^-\}$.
Besides, for any $j\leq i-1 <i^*$, Fact~\ref{fact:prev:e} implies that $e_j^-=e_j$, which results in $I_{i - 1}^- \subseteq \{e_1, \cdots, e_{i - 1}\}$.
We also have $w(e_j)=w^-(e_j)$ for any $j\leq i-1 <i^*$ using Fact~\ref{fact:prev:w}. It completes the proof of $w^-[I_{i - 1}^-]=w[I_{i - 1}]$.
Hence,
$$
I_i = I_{i - 1}^- + e_i^- - \replacementTester{}(I_{i - 1}^-, {I'}_{i - 1}^-, e_i^-, w[I_{i - 1}]) \enspace.
$$
Adding it to $I_{i - 1}^-=I_{i - 1}$ by Fact~\ref{fact:prev:I}, 
$e_i^-=e_i$ by Fact~\ref{fact:prev:e}, 
and ${I'}_{i - 1}^-={I'}_{i - 1}$ by Fact~\ref{fact:prev:I'},
for any $i\in[1,i^*)$ we have
$$
I_i = I_{i - 1} + e_i - \replacementTester{}(I_{i - 1},  {I'}_{i - 1}, e_i, w[I_{i - 1}]) \enspace.
$$

Next, we show $I'_i = \cup_{j \le i} I_j$ holds for any $i\in[1,i^*)$.
By Fact~\ref{fact:prev:I'}, ${I'}_i =  {I'}_i^-$  holds for any $i\in[1,i^*)$.
Also, ${I'}_i^- = \cup_{j \le i} I_j^-$ is a result of the assumption of this lemma.
Moreover, Fact~\ref{fact:prev:I} implies $I_j^-=I_j$ for any $j\leq i <i^*$.
Therefore, 
$$
{I'}_i =  {I'}_i^- = \cup_{j \le i} I_j^- = \cup_{j \le i} I_j  \enspace.
$$

\textbf{Survivor invariant (partially). }
We next show that the survivor invariant partially holds for the first $i^*$ levels.
To do this, we prove
$R_i = \{ e \in R_{i-1} - e_{i-1}: \replacementTester{}(I_{i-1}, {I'}_{i-1}, e, w[I_{i - 1}]) \ne \err\}$ holds for any $i\in[1,i^*]$.

Using Fact~\ref{fact:prev:R},  we have $R_i = R_i^- \backslash \{v\}$ for each $i \in [1, i^*]$.
Also, the assumption of this lemma implies that 
$R_i^- = \{ e \in R_{i-1}^- - e_{i-1}^-: \replacementTester{}(I_{i-1}^-, {I'}_{i-1}^-, e, w^-[I_{i-1}^-]) \ne \err\}$. Thus,
\begin{align*}
    R_i &= R_i^- \backslash \{v\} \\
    &= \{ e \in R_{i-1}^- - e_{i-1}^-: \replacementTester{}(I_{i-1}^-, {I'}_{i-1}^-, e, w^-[I_{i-1}^-]) \ne \err\} \backslash \{v\} \\
    &= \{ e \in R_{i-1}^- \backslash \{v\} - e_{i-1}^-: \replacementTester{}(I_{i-1}^-, {I'}_{i-1}^-, e, w^-[I_{i-1}^-]) \ne \err\} \enspace.
\end{align*}

As we stated in the previous, we  know that $w^-[I_{i - 1}^-] = w[I_{i - 1}]$ holds when $i-1<i^*$.
Using this fact alongside $R_{i-1}^-=R_{i-1}$ by Fact~\ref{fact:prev:R}, $e_{i-1}^-=e_{i-1}$ by Fact~\ref{fact:prev:e}, $I_{i-1}^-=I_{i-1}$ by Fact~\ref{fact:prev:I}, and ${I'}_{i-1}^-={I'}_{i-1}$ by Fact~\ref{fact:prev:I'}, for each $i\in[1,i^*]$ we have
$$
R_i = \{ e \in R_{i-1} - e_{i-1}: \replacementTester{}(I_{i-1}, {I'}_{i-1}, e, w[I_{i - 1}]) \ne \err\} \enspace.
$$

\textbf{Completing the proof. }
Above, we show that the starter, survivor, weight, and independent invariants hold for the first $i^*$ levels.
To complete the proof of this lemma, we show all level invariants hold.
If $i^* \ne T + 1$, which means that we have invoked \MatroidConstLevel$(i^*)$, our proof is complete by Theorem \ref{thm:invariants:leveling}. 
Otherwise, if $i^* = T + 1$, we just need to show the terminator invariant holds to complete our proof, which means $R_{T + 1} = \emptyset$. 
Note that in this case, $T=T^-$.
Fact~\ref{fact:prev:R} implies that $R_{T + 1}=R_{T + 1}^-$, and the assumption of lemma implies that $R_{T^- + 1}^- = \emptyset$.
Hence,
$$
R_{T + 1} = R_{T + 1}^- = R_{T^- + 1}^- = \emptyset \enspace,
$$
which completes the proof.
\\

\section{Analysis of dynamic algorithm for maximum submodular under cardinality constraint}
\label{sec:analysis:klogk}
In this section, we prove the correctness of \constLevel{}, \deletev{}, and \insertv{} algorithms. We will also compute the query complexity of each one of them. 
Except for the approximation guarantee proof and some parts of the query complexity section, most of the theorems, lemmas, and their proof in this section are similar to Section \ref{sec:matroid:analysis} where we analyze our dynamic algorithm for matroid constraint.
First, we define the following random variables. 

\begin{tcolorbox}[width=\linewidth, colback=white!80!gray,boxrule=0pt,frame hidden, sharp corners]

\textbf{Random Variables:}
\begin{itemize}
    \item  We define the random variable $\bE_i$ for the promoting element $e_i$ that we pick at level $L_i$. 
     \item  We denote by $\bR_i$ the random variable that corresponds to the set $R_i$ of elements at the level $L_i$ and its 
     value is denoted by $R_i$ which is the set of elements that are in the set $R_i$. 
    \item  We define the random variable $\bT$ for the index of the last level that our algorithm creates. 
    Indeed, for a level $L_i$ to be created entirely, the value of the random variable $\bT$ should be $\bT \ge i$.
    
    \item We define
    $H_i = (e_1, \dots, e_{i-1}, R_0, \dots, R_{i})$
    as \emph{the partial configuration up to level $i$}.
    Note that $R_{i}$ is included in this definition, while $e_{i}$ is not.
    We let 
    $\bH_i := (\bE_1, \dots, \bE_{i-1}, \bR_0,  \bR_{1}, \dots, \bR_{i})$ be a random variable that corresponds to $H_i$.
\end{itemize}

\end{tcolorbox}

We break the analysis of our algorithms into a few steps.

\paragraph{Step 1: Analysis of binary search.} 
In the first step, we prove that the binary search that 
we use to speed up the process of finding the right levels for non-promoting elements  
works. Indeed, we prove that if $e \in V$ is a promoting element for a level $L_{z-1}$, 
it is promoting  for all levels $L_{r \le z-1}$ and 
if $e$ is not promoting  for the level $L_{z}$, 
it is not promoting  for all levels $L_{r \ge z}$. 
Therefore, because of this monotonicity property, we can do a binary search to find the smallest $z \in [i,\ell-1]$ so that 
$e$ is promoting  for the level $L_{z-1}$, but it is not promoting  for the level $L_{z}$.

\paragraph{Step 2: Maintaining invariants.}
We define five invariants, and we show that these invariants \emph{hold} when \init{} is run, and our whole data structure gets built, and \emph{are preserved} after every insertion and deletion of an element.

\begin{tcolorbox}[width=\linewidth, colback=white!80!gray,boxrule=0pt,frame hidden, sharp corners]
\textbf{Invariants:} 
\begin{enumerate}
    \item \textbf{Level invariants.} 
    \begin{enumerate}
        \item \textbf{Starter.} $R_0=V$ and $I_0 = \emptyset$
        \item \textbf{Survivor.}  For $1 \leq i \leq T + 1$, $R_i = \{ e \in R_{i-1} - e_{i-1}: \replacementTester{}(I_{i-1}, e) = True\}$ 
        \item  \textbf{Cardinality.} For $1 \leq i \leq T$, $I_i = I_{i-1} + e_i$ where $\replacementTester{}(I_{i-1}, e) = True$
        \item \textbf{Terminator.} $R_{T+1}=\emptyset$
    \end{enumerate}
    \item  \textbf{Uniform invariant.} For all $i \ge 1$, condition on the random variables $\bH_i$, the element $e_i$ is chosen uniformly at random from the set $R_i$. That is, $ \Pr{\bE_i = e |  \bT \geq i \text{ and } \bH_i = H_i  }= \frac{1}{|R_i|}\cdot \ind{e\in R_i}  $.
\end{enumerate}
\end{tcolorbox}

The survivors invariant says that all elements that are added to $R_i$ at a level $L_i$ 
are promoting elements for that level. 
In another words, those elements of the set $R_{i-1} - e_{i-1}$ 
that are not promoting will be filtered out and not be seen in $R_i$. 
The terminator invariant shows that the recursive construction of levels stops when the survivor set becomes empty. 
The cardinality invariant shows that the sets $I_i$ are constructed by adding a new element that promotes the previous set. That means the new element adds at least $\tau$ to the submodular value of $I_{i-1}$, and the size of $I_i$ remains at most $k$.
Intuitively, these level invariants provide us the approximation guarantee.

The uniform invariant asserts that for every level $L_{i \in [T]}$, 
condition on $\bT \geq i \text{ and } \bH_i = H_i $, 
the element $e_i$ is chosen uniformly at random from the set $R_i$. 
That is, $\Pr{e_i = e | \bT \geq i \text{ and } \bH_i = H_i } = \frac{1}{|R_i|} \cdot \ind{e \in R_i}$.
Intuitively, this invariant provides us with the randomness that we need  
to fool the adversary in the (fully) dynamic model 
which in turn helps us to develop a dynamic algorithm 
for the submodular maximization under the cardinality constraint.

\paragraph{Step 3: Query complexity.}
In the third part of the proof, we bound 
the worst-case expected  query complexity of the leveling algorithm, and later,  we show that if the \textbf{uniform} invariant holds, we can bound the worst-case expected query complexity of the insertion and deletion operations. 

\paragraph{Step 4: Approximation guarantee.}
Finally, in the last step of the proof, we show that if the \textbf{level} invariants 
is fulfilled, we can report a set $I_T$ with $|I_T|\le k$ 
whose submodular value is an $(2+\epsilon)$-approximation of the optimal value.

\subsection{Monotone property of promotable levels and binary search argument}
Recall that we defined the function 
$\suit(I_j, e)$ for an element $e \in V$ 
with respect to the level $L_j$ which 
\begin{itemize}
    \item returns $True$ if $\marginalgain{e}{I_j} \ge \tau$  and  $|I_j| < k$;
    \item returns $False$ otherwise.
\end{itemize}

\begin{lemma}
\label{lm:cardinality:binary_search_argument}
Let $L_j$ be an arbitrary level of the Algorithm~\levelingconstraint{}, where $1 \le j \le T$. 
Let $e \in V$ be an arbitrary element of the ground set. 
If $\suit(I_{j-1}, e)$ returns $False$, then 
$\suit(I_j, e)$ returns $False$. 
\end{lemma}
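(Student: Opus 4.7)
The plan is to prove this lemma by case analysis on why $\suit(I_{j-1}, e)$ returned $False$. By the definition of the \replacementTester{} subroutine in Algorithm~\ref{alg:cardinality:offline_constraint}, the return value $False$ means that at least one of the two conditions $\marginalgain{e}{I_{j-1}} \ge \tau$ or $|I_{j-1}| < k$ fails. The argument is much simpler than its matroid counterpart (Lemma~\ref{lm:binary_search_argument}) because neither condition involves the notion of a circuit, so no subtle circuit-exchange reasoning is needed; the main ingredients are just submodularity and a size comparison.

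First I would invoke the cardinality invariant which states that $I_j = I_{j-1} + e_j$ for every $1 \le j \le T$. This immediately gives $I_{j-1} \subseteq I_j$ and $|I_j| = |I_{j-1}| + 1 \ge |I_{j-1}|$, the two structural facts powering both cases. In the first case, assume $\marginalgain{e}{I_{j-1}} < \tau$. Since $f$ is submodular and $I_{j-1} \subseteq I_j$, we have
\[
f(I_j + e) - f(I_j) \le f(I_{j-1} + e) - f(I_{j-1}) < \tau,
\]
so the first condition required by \replacementTester$(I_j, e)$ already fails and $\suit(I_j, e) = False$.

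In the second case, assume $|I_{j-1}| \ge k$. Then $|I_j| \ge |I_{j-1}| \ge k$ by the inclusion above, so the condition $|I_j| < k$ fails and again $\suit(I_j, e) = False$. (Note that when $j \le T$, the algorithm actually enforced $|I_{j-1}| < k$ in order to create level $L_j$, so in practice this case only fires in the borderline situation $|I_{j-1}| = k-1$ and $|I_j|=k$, but the argument works uniformly.) Combining the two cases finishes the proof.

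I do not expect any real obstacle here: the only thing to be careful about is using the correct invariant (cardinality) to guarantee $I_{j-1} \subseteq I_j$, and applying submodularity in the right direction. An immediate consequence, obtained by a trivial induction and mirroring the discussion after Lemma~\ref{lm:binary_search_argument}, is that the predicate $\suit(I_\cdot, e)$ is monotone along the levels, which justifies the binary search in Line~\ref{line:cardinality:const_level_matroid:binary_search} of Algorithm~\ref{alg:cardinality:offline_constraint}.
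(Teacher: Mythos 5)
Your proof is correct and follows essentially the same route as the paper's: both split on which of the two \replacementTester{} conditions failed at level $L_{j-1}$, both use the relation $I_j = I_{j-1} + e_j$ to get $I_{j-1}\subseteq I_j$ and $|I_j| = |I_{j-1}|+1$, and both apply submodularity to propagate the failure of the marginal-gain condition. The only cosmetic difference is that you phrase the size case as $|I_j|\ge|I_{j-1}|\ge k$ while the paper uses $|I_j|=|I_{j-1}|+1>k$; both are valid.
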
   

\begin{proof}
Recall that in Line \ref{line:cardinality:set_I_l}, we set $I_j = I_{j-1} + e_j$.
Suppose that $\suit(I_{j-1}, e)$ returns $False$. 
It means that either $\marginalgain{e}{I_{j-1}} \ge \tau$  or  $|I_{j-1}| < k$ not hold.
If $|I_{j-1}| >=k$, since $|I_j| = |I_{j-1}|+1$, we can conclude that $|I_j| > k$ which means $\suit(I_j, e) = False$.
Moreover, if $\marginalgain{e}{I_{j-1}} < \tau$, since $I_{j-1} \subset I_j$ and $f$ is submodular, we have $\marginalgain{e}{I_j} \le \marginalgain{e}{I_{j-1}} < \tau$, which means that $\suit(I_j, e) = False$.
\end{proof}

Using Lemma \ref{lm:cardinality:binary_search_argument}, and by applying a simple induction, 
we can show the function $\suit(I_j, e)$ 
is monotone.
Thus, for every arbitrary element $e$, 
it is possible to perform a binary search on an interval $[i, \ell-1]$ 
to find the smallest $z\in [i, \ell-1]$ such that $\suit(I_{z-1}, e) = True$ and $\suit(I_z, e) = False$.

\subsection{Correctness of invariants after \constLevel{} is called}
In this section, we show that the invariants that we defined above will hold 
at the end of the algorithm $\constLevel{}(j)$. 
There are many similarities between this section and Section \ref{sec:level:proofs:matroid}.
However, in Section \ref{sec:level:proofs:matroid}, where we prove the correctness of invariants after calling $\MatroidConstLevel{}$, we have independent and weight invariants instead of the cardinality invariant.

We first explain what we mean by stating that level invariants partially hold.

\begin{tcolorbox}[width=\linewidth, colback=white!80!gray,boxrule=0pt,frame hidden, sharp corners]
\begin{definition}
    For $j \ge 1$, we say that 
    \emph{the level invariants partially hold for the first $j$ levels} if the followings hold.
    \begin{enumerate}
        \item \textbf{Starter.} $R_0 = V$ and $I_0 = \emptyset$
        \item \textbf{Survivor.}  For $1 \leq i \le j$, $R_i = \{ e \in R_{i-1} - e_{i-1}: \replacementTester{}(I_{i-1}, e) = True\}$ 
        \item  \textbf{Cardinality.} For $1 \leq i \le j-1$, $I_i = I_{i-1} + e_i$ where $\suit{}(I_{i-1}, e_i)=True$
    \end{enumerate}
\end{definition}

\end{tcolorbox}

We want to prove the following theorem that says after invoking $\constLevel{}(j)$, 
all level invariants hold. 
We break the proof of this theorem into four parts where 
we separately prove the survivor, cardinality, and terminator invariant hold in Lemmas~\ref{lm:cardinality:survivor:leveling},~\ref{lm:cardinality:cardinality:leveling}, and\ref{lm:cardinality:terminator:leveling}, respectively. 
Also, note that the starter invariant holds by the assumption of our theorem.

\begin{theorem}
\label{thm:cardinality:invariants:leveling}
If before calling $\constLevel{}(j)$, the level invariants partially hold for the first $j$ levels,
then after the execution of $\constLevel{}(j)$, level invariants fully hold.
\end{theorem}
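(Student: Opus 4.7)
The plan is to mirror the proof of Theorem~\ref{thm:invariants:leveling} from the matroid case, but with simplifications reflecting the fact that the cardinality version has fewer invariants (no independent or weight invariants, only a single cardinality invariant that tracks how $I_i$ grows). Since the starter invariant is given to us by hypothesis, I would break the proof into three lemmas, one each for the survivor, cardinality, and terminator invariants, and then combine them to obtain the theorem.

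For the \textbf{survivor} invariant, fix any $i \in [1, T+1]$ and an element $e$ of the permutation $P$ seen while the current level pointer is $\ell$. I would consider two cases exactly as in the matroid proof: if $\suit(I_{\ell-1}, e) = True$ then $e = e_\ell$ is placed into $I_\ell$ and the algorithm increments $\ell$ without adding $e$ to $R_{\ell+1}$; if $\suit(I_{\ell-1}, e) = False$ then Lemma~\ref{lm:cardinality:binary_search_argument} guarantees monotonicity, so there is a unique smallest $z \in [j,\ell-1]$ with $\suit(I_{z-1}, e) = True$ but $\suit(I_z, e) = False$, and the algorithm inserts $e$ into exactly $R_{j+1}, \dots, R_z$. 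Combined with the partial survivor invariant for the first $j$ levels, this shows $R_i = \{e \in R_{i-1} - e_{i-1} : \suit(I_{i-1}, e) = True\}$ for all $i \in [1, T+1]$.

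For the \textbf{cardinality} invariant, I would simply read off Line~\ref{line:cardinality:set_I_l}: whenever $\ell$ is advanced from some value to $\ell+1$, the algorithm has verified $\suit(I_{\ell-1}, e_\ell) = True$ and set $I_\ell = I_{\ell-1} + e_\ell$, which is precisely the statement. Together with the partial cardinality invariant for indices $i \le j-1$, this handles all $i \in [1, T]$. For the \textbf{terminator} invariant, observe that in Line~\ref{line:cardinality:constlevelmatroid:addR_bs}, elements are only ever added to $R_r$ for $r \le z \le \ell - 1$, and at the end of the loop we return $T = \ell - 1$, where $\ell$ is the final value of the pointer. Since $\ell$ never decreases during the execution, no element is ever added to $R_{T+1}$, so $R_{T+1} = \emptyset$.

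The main conceptual content is therefore already captured by Lemma~\ref{lm:cardinality:binary_search_argument} (monotonicity enabling binary search) together with the bookkeeping already used in the matroid analysis; no new idea is needed. The one thing to be a little careful about is the boundary between the ``partial'' range $i \le j-1$ guaranteed by hypothesis and the newly constructed range $i \ge j$: for $i = j$, the survivor invariant for $R_j$ comes from the hypothesis itself, while for $i > j$ it is proved using the loop body as above; analogously for the cardinality invariant at index $i = j$ it is the first iteration of the loop that establishes it. I do not expect any real obstacle beyond this careful indexing.
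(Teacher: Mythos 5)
Your proposal is correct and follows essentially the same route as the paper: the paper likewise decomposes Theorem~\ref{thm:cardinality:invariants:leveling} into Lemmas~\ref{lm:cardinality:survivor:leveling}, \ref{lm:cardinality:cardinality:leveling}, and~\ref{lm:cardinality:terminator:leveling} for the survivor, cardinality, and terminator invariants, proves the survivor invariant via the same case split on whether $e$ promotes $L_{\ell-1}$ and invokes the monotonicity of Lemma~\ref{lm:cardinality:binary_search_argument}, proves the cardinality invariant by reading Line~\ref{line:cardinality:set_I_l} together with the check in Line~\ref{line:cardinality:check_e_is_promoting}, and proves the terminator invariant by observing that elements are only ever added to $R_r$ with $r \le z \le \ell-1$ while $\ell$ never decreases. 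Your caution about the index boundary between the hypothesized partial range and the newly constructed range is appropriate and handled the same way.
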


\begin{lemma}[Survivor invariant]
\label{lm:cardinality:survivor:leveling}
If before calling $\constLevel{}(j)$, the level invariants partially hold for the first $j$ levels,
then after its execution, the survivor invariant fully holds.
\end{lemma}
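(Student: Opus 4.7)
The argument follows the template of Lemma~\ref{lm:survivor:leveling} from the matroid case but is substantially simpler because the only test in play is the condition ``$\marginalgain{e}{I}\ge\tau$ and $|I|<k$''. The first observation is that $\constLevel{}(j)$ does not touch any of $R_0,\dots,R_j$, nor any $e_m$ or $I_m$ with $m<j$, so the partial survivor invariant for levels $i\leq j$ is carried over verbatim; it suffices to verify the survivor invariant at the newly produced levels $i\in(j,T+1]$. I may further assume $R_j\neq\emptyset$, as otherwise $T=j-1$ and there is nothing more to prove.

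The core of the argument is to trace, for each element $e$ of the random permutation $P$ of $R_j$, the iteration at which $e$ is processed (Line~\ref{line:cardinality:iterate_P}) and to determine into which sets $R_r$ the element $e$ is inserted. Let $\ell(e)$ denote the value of $\ell$ at the moment $e$ is processed. In the \emph{promoting case}, where $\replacementTester(I_{\ell(e)-1},e)=True$, the algorithm sets $z:=\ell(e)$, $e_{\ell(e)}:=e$, and inserts $e$ into $R_{j+1},\dots,R_{\ell(e)}$. In the \emph{non-promoting case}, the binary search on Line~\ref{line:cardinality:const_level_matroid:binary_search} returns the smallest $z\in[j,\ell(e)-1]$ for which $\replacementTester(I_z,e)=False$, and $e$ is inserted into $R_{j+1},\dots,R_z$. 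In either case, Lemma~\ref{lm:cardinality:binary_search_argument} (the level-monotonicity of $\suit$) guarantees that this set of insertion indices is exactly $\{\,r\in[j+1,T+1]:\replacementTester(I_{r-1},e)=True\,\}$: for any strictly higher $r$, either $e=e_{\ell(e)}\in I_{r-1}$ forces $\marginalgain{e}{I_{r-1}}=0<\tau$, or monotonicity forces $\replacementTester(I_{r-1},e)=False$, so no later iteration re-adds $e$ to $R_r$.

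Combining these per-element statements yields, for every $i\in(j,T+1]$,
\[
R_i \;=\; \{\,e\in R_j \,:\, \replacementTester(I_{i-1},e)=True\,\}.
\]
Because $e_{i-1}\in I_{i-1}$ by the cardinality invariant, $\marginalgain{e_{i-1}}{I_{i-1}}=0$, whence $\replacementTester(I_{i-1},e_{i-1})=False$; so $e_{i-1}$ is automatically excluded from the right-hand side. A straightforward induction on $i$, together with the observation that each element added to $R_r$ is added to every $R_{r'}$ with $j<r'\le r$, gives $R_i\subseteq R_{i-1}$ for $i>j$. The display can therefore be rewritten as
\[
R_i \;=\; \{\,e\in R_{i-1}-e_{i-1} \,:\, \replacementTester(I_{i-1},e)=True\,\},
\]
which is precisely the survivor invariant at level $i$. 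The boundary case $i=T+1$ is handled identically: $R_{T+1}$ is initialized to $\emptyset$ when $\ell$ is last incremented, and since the algorithm terminates with $\ell=T+1$ every element processed afterwards is non-promoting at $L_T$, so the binary search returns some $z\le T$ and never adds anything to $R_{T+1}$.

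The main technical point is the claim in Step~2 that the set of insertion indices for $e$ coincides with the set of levels at which $e$ is promoting. This is the step that uses the level-monotonicity lemma, and it is the only subtle part: $R_i$ is modified throughout the call while the tests against $I_{i-1}$ at each later iteration all use the same (already frozen) $I_{i-1}$, so one has to verify that the ``good'' behaviour at the moment $e$ is processed persists to the end of the call. Once this is established, everything else reduces to bookkeeping.
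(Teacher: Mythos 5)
Your proof is correct and takes essentially the same approach as the paper's: trace each element $e$ of the permutation, split by whether $e$ is promoting at the level where it is processed, and invoke the level-monotonicity of Lemma~\ref{lm:cardinality:binary_search_argument} to identify the insertion indices for $e$ with the set of levels at which $e$ is promoting. Your write-up is somewhat more explicit than the paper's (the intermediate formula $R_i=\{e\in R_j:\replacementTester(I_{i-1},e)=True\}$, the nested inclusion $R_i\subseteq R_{i-1}$, the exclusion of $e_{i-1}$ via $\marginalgain{e_{i-1}}{I_{i-1}}=0$, and the $i=T+1$ boundary), but the underlying argument is the same.
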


\begin{proof}
First of all, we assume that $R_j \neq \emptyset$, otherwise $T = j-1$ and we are done. 
As we have in Algorithm \constLevel{}, let $P$ be a random permutation of the set $R_j$. 
Let us fix an arbitrary element $e \in P$ and suppose that 
at the time when we see $e \in P$, the current level is $L_{\ell}$ for $\ell \ge j$. 
We have two cases. 
Either $e$ is a promoting element for the level $L_{\ell-1}$ or it is not promoting for the level $L_{\ell-1}$. 

First, assume that $e$ is a promoting element for the level $L_{\ell-1}$. 
We then let $e_{\ell}$ be $e$, perform a set of computations, and then start the new level. 
In particular, the element $e$ is not added to $R_{\ell+1}$ and so, it will not appear in any set $R_{z > \ell}$.
Recall that Lemma~\ref{lm:cardinality:binary_search_argument} proves if $e$ 
is not a promoting element with respect to a level $L_{z}$, 
it will not be a promoting element for the next level $L_{x}$ where $z\le x \le T$.
On the other hand, since $e$ is a promoting element for the level $L_{\ell-1}$, 
we add $e$ to all previous sets $R_{j+1},\cdots, R_{\ell}$. 

Next, we consider the latter case where $e$ is not a promoting element for the level $L_{\ell-1}$. 
That is, $\suit(I_{\ell-1}, e)$ is $False$.
This essentially means that if we inductively apply the argument of Lemma~\ref{lm:cardinality:binary_search_argument}, 
there exists an integer $z \in [j,\ell)$ for which 
$\suit(I_{z-1}, e)$ is $True$, but $\suit(I_{z}, e)$ is $False$. 
This means $e$ is a promoting element for all levels $L_j,\cdots,L_{z-1}$ and 
it is not promoting for levels $L_{z},\cdots,L_T$. 
According to function $\constLevel$, 
we insert the element $e$ into sets $R_{j+1}, \cdots, R_{z}$. 
Hence, after the execution of $\constLevel{}(j)$, 
the survivors invariant holds. 
\end{proof}

\begin{lemma}[Cardinality invariant]
\label{lm:cardinality:cardinality:leveling}
If before calling $\constLevel{}(j)$, the level invariants partially hold for the first $j$ levels, 
then after its execution, the cardinality invariant fully holds. 
\end{lemma}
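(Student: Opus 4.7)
The proof plan is to simply unpack the definition of \constLevel{}(j) and combine it with the partial assumption. The cardinality invariant I need to establish is: for every $1 \leq i \leq T$, we have $I_i = I_{i-1} + e_i$ where $\suit(I_{i-1}, e_i) = True$. By hypothesis, this already holds for the range $1 \leq i \leq j-1$, so the entire task reduces to proving it for the range $j \leq i \leq T$, i.e., for exactly those levels produced or overwritten by the call to \constLevel{}(j).

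First, I would observe that in the body of \constLevel{}, the level counter $\ell$ starts at $j$ and is incremented in Line~\ref{line:cardinality:constLevel:increase_ell} only inside the $True$-branch of the $\suit$-test in Line~\ref{line:cardinality:check_e_is_promoting}. Immediately before that increment, Line~\ref{line:cardinality:set_I_l} executes the assignments $e_\ell \gets e$ and $I_\ell \gets I_{\ell-1} + e_\ell$. Consequently, each time the value of $\ell$ attains a new index $i \in [j, T]$, the pair $(e_i, I_i)$ has just been set to witness exactly the required identity $I_i = I_{i-1} + e_i$ with $\suit(I_{i-1}, e_i) = True$; this is established at that moment, and the variables $e_i$, $I_i$, $I_{i-1}$ are not touched again by subsequent iterations (later iterations only possibly append elements to the survivor sets $R_r$ in Line~\ref{line:cardinality:constlevelmatroid:addR_bs}, which does not affect any $I_r$).

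Finally, I would verify that $T$ as returned by \constLevel{}(j) is indeed the largest index for which the identity needs to be checked, since the last value of $\ell$ is $T+1$ (and $T+1$ corresponds to a still-empty level, to be handled by the terminator invariant in Lemma~\ref{lm:cardinality:terminator:leveling}, not by the cardinality invariant). Combining the partial assumption for indices below $j$ with the per-iteration argument above for indices in $[j, T]$ yields the full cardinality invariant.

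I do not anticipate any real obstacle here: unlike the matroid case (Lemmas~\ref{lm:independent:leveling} and~\ref{lm:weight:leveling}), there is no replacement or auxiliary weight bookkeeping to track, so the argument is essentially by inspection of the pseudocode. The only thing to be careful about is to make sure that once $(e_i, I_i)$ is assigned for some $i \in [j, T]$, it is not subsequently modified during the remainder of the same \constLevel{}(j) call, which follows directly from the fact that after the increment $\ell \gets \ell + 1$, all later writes use the larger value of $\ell$.
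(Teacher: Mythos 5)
Your proof is correct and follows essentially the same route as the paper's: inspect the loop in $\constLevel{}(j)$, note that each time $\ell$ reaches a new index $i\in[j,T]$ the assignment $I_i \gets I_{i-1}+e_i$ is made precisely inside the branch where $\suit(I_{i-1},e_i)=True$, and combine with the partial hypothesis for $i<j$. You spell out a couple of sanity checks (that $(e_i,I_i)$ is never overwritten later, and that $T+1$ is the terminator's job) that the paper leaves implicit, but the argument is the same.
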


\begin{proof}
In the execution of $\constLevel{}(j)$, the variable $\ell$ is set to $j, j+1,\cdots, T,T+1$. Therefore, for each $\ell \in  [j,T]$, we set $I_{\ell} = I_{\ell-1} + e_\ell$ in Line \ref{line:cardinality:set_I_l}, where according to Line \ref{line:cardinality:check_e_is_promoting}, $\replacementTester(I_{\ell-1}, e_{\ell})=True$.
\end{proof}
\begin{lemma}[Terminator invariant]
\label{lm:cardinality:terminator:leveling}
If before calling $\constLevel{}(j)$, the level invariants partially hold for the first $j$ levels,
then after its execution, the terminator invariant fully holds. 
\end{lemma}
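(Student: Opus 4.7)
The plan is to mirror the short argument used for Lemma~\ref{lm:terminator:leveling} in the matroid case, adapted to \constLevel{}. The terminator invariant says $R_{T+1} = \emptyset$, where $T$ is the final value of $\ell-1$ returned by the algorithm, so it suffices to show that no element is ever inserted into $R_{\ell_{\text{final}}}$ during the execution of $\constLevel(j)$, and that this set was (or is made) empty.

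First, I would check where elements are added to survivor sets: the only place is the for-loop on Line~\ref{line:cardinality:constlevelmatroid:addR_bs}, which adds $e$ to $R_r$ for $r \in [i+1, z]$. I would verify that at the moment this for-loop executes, we always have $z \le \ell - 1$. In the else (non-promoting) branch this is immediate since the binary search explicitly restricts $z$ to the interval $[i, \ell-1]$. In the promoting branch, $z$ is set to the value of $\ell$ \emph{before} $\ell$ is incremented on Line~\ref{line:cardinality:constLevel:increase_ell}, so after the increment we still have $z = \ell - 1$. Combined with the fact that $\ell$ only ever increases (it is only modified by Line~\ref{line:cardinality:constLevel:increase_ell}, where it is incremented), this shows that throughout the execution, no element is ever inserted into $R_r$ for any $r \ge \ell_{\text{final}} = T+1$.

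Next, I would argue that $R_{T+1}$ is empty at the end. If the promoting branch is triggered at least once during the loop over $P$, then on the last such occasion, $\ell$ is incremented and the line $R_{\ell} \gets \emptyset$ initializes the new top-level survivor set to empty; by the previous paragraph, nothing is subsequently added to this set, so $R_{T+1} = \emptyset$. If the promoting branch is never triggered, then $\ell$ stays at $j$ and $T = j-1$, so we need $R_j = \emptyset$. But the survivor invariant (which partially holds for the first $j$ levels by the hypothesis) says $R_j = \{e \in R_{j-1} - e_{j-1} : \replacementTester(I_{j-1}, e) = True\}$; if $R_j$ were nonempty, the very first iteration of the for-loop over the permutation $P$ would find $\replacementTester(I_{\ell-1}, e) = \replacementTester(I_{j-1}, e) = True$ and take the promoting branch, a contradiction. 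Hence $R_j$ must be empty, which again gives $R_{T+1} = \emptyset$.

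I do not expect a real obstacle here; the only subtle point is the bookkeeping that $z$ in the promoting branch equals the \emph{old} value of $\ell$, i.e., the new $\ell - 1$, which is what ensures nothing is ever added to the current top-level set. Everything else follows the template of Lemma~\ref{lm:terminator:leveling} verbatim after replacing \replacementTester{} signatures.
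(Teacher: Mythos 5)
Your proposal is correct and matches the paper's argument: both rest on the observation that elements are only ever added to $R_r$ for $r \le z \le \ell-1 \le T$ (since $\ell$ never decreases and $T$ is its final value minus one), so nothing is ever inserted into $R_{T+1}$. You go a bit further than the paper by explicitly verifying via case analysis that $R_{T+1}$ starts out empty — either it was freshly initialized to $\emptyset$ on the last promoting step, or, if the promoting branch is never taken, $R_j$ must already be empty by the survivor invariant — whereas the paper leaves this step implicit; that is a useful clarification, but the core idea is the same.
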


\begin{proof}
According to Line~\ref{line:cardinality:constlevelmatroid:addR_bs} and the variable $z$, if we add an element $e$ to $R_r$ at some point of time, then $r\leq z\leq \ell-1$ holds at that moment.
Since the variable $\ell$ never decreases during the execution of $\constLevel{}(j)$ and we return $\ell-1$ as $T$ at the end, we can conduct that no element has been added to  $R_{T+1}$, and then $R_{T+1}=\emptyset$, which means the terminator invariant holds. 
\end{proof}

\begin{lemma}[Uniform invariant]
\label{lm:cardinality:uniform:leveling}
If $\constLevel{}(j)$ is invoked and 
the level invariants are going to hold after its execution, then for any $j \ge i$ we have $ \Pr{\bE_i = e |  \bT \geq i \text{ and } \bH_i = H_i  }= \frac{1}{|R_i|}\cdot \ind{e\in R_i}  $.
\end{lemma}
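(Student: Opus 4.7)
The plan is to adapt the argument used for Lemma \ref{lm:uniform:leveling} in the matroid case almost verbatim, since the structural features of \constLevel{} that drive the distributional reasoning are the same as those of \MatroidConstLevel{}: a uniformly random permutation of $R_j$ is processed in order, the first element encountered at level $L_\ell$ that passes the \suit{} test is fixed as $e_\ell$, and every other element is merely deposited into earlier survivor sets via a binary search. The matroid-specific machinery (weights, minimum-weight circuit swaps, \replacementTester{} returning $\hat e$) played no role in selecting $e_\ell$ given the permutation, so removing it leaves the argument intact.

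First, I would recast the random permutation $P$ of $R_j$ as sampling-without-replacement: at each step, draw the next element uniformly from the not-yet-seen elements of $R_j$. This equivalent viewpoint is what makes the conditional distribution of the ``first element that lands in $R_i$'' tractable. Next I would establish two observations for any $i \ge j$:
\begin{enumerate}
\item[(a)] The element $e_i$ is necessarily the first element of $R_i$ that the algorithm observes in $P$. By inspection of \constLevel{}, the running value of $\ell$ is incremented (in Line \ref{line:cardinality:constLevel:increase_ell}) only when the current element is promoting for $L_{\ell-1}$, in which case that element is recorded as $e_\ell$; in every other case the element is appended only to $R_{i+1}, \dots, R_{z}$ with $z \le \ell-1$. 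A simple induction on the step of the loop then shows that before any element of $R_i$ is sampled, $\ell$ never exceeds $i$, hence nothing is ever added to $R_i$ before $e_i$ itself.
\item[(b)] Once $(e_1, \dots, e_{i-1})$ have been fixed, the set $R_i$ is deterministically determined. This follows because the level invariants are assumed to hold after the execution: by the survivor invariant $R_i = \{e \in R_{i-1}-e_{i-1}: \suit(I_{i-1}, e) = True\}$, and the cardinality invariant reconstructs $I_{i-1}$ from $(e_1, \dots, e_{i-1})$ alone.
\end{enumerate}

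With these in hand, let $\bM_i$ denote the prefix of the sampled sequence up to and including $e_{i-1}$. Conditioned on $\bM_i = M_i$, observation (b) says $R_i$ is a fixed set, and observation (a) together with sampling-without-replacement says $\bE_i$ is the uniformly random next draw whose value lies in $R_i$, hence $\Pr{\bE_i = e \mid \bT \ge i,\bM_i = M_i} = \tfrac{1}{|R_i|}\,\ind{e\in R_i}$. Because $\bH_i$ is a deterministic function of $\bM_i$, any $M_i$ of positive probability compatible with the conditioning event $[\bT\ge i,\bH_i = H_i]$ already forces $\bH_i = H_i$, and the law of total probability collapses the conditional probability to the same value $\tfrac{1}{|R_i|}\,\ind{e \in R_i}$, which is the claim.

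The only genuinely delicate step I anticipate is observation (a): verifying that \constLevel{} cannot place an element into $R_i$ via the binary-search branch \emph{before} the first sample from $R_i$ is observed. The care required is purely syntactic bookkeeping on the value of $\ell$ versus the index $z$ returned by the binary search (Line \ref{line:cardinality:const_level_matroid:binary_search}), together with the monotonicity guaranteed by Lemma \ref{lm:cardinality:binary_search_argument}. Once that invariant on $\ell$ is pinned down, the remainder of the proof is a mechanical translation of the matroid argument with $\replacementTester$'s $\err$/$\emptyset$/$\hat e$ trichotomy replaced by \suit{}'s $True$/$False$ dichotomy, and no new probabilistic idea is needed.
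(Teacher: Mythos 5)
Your proposal matches the paper's proof essentially step for step: the sampling-without-replacement reformulation, the two observations (that $e_i$ is the first element of $R_i$ encountered in the permutation, and that $R_i$ is deterministic once $e_1,\dots,e_{i-1}$ are fixed), the introduction of $\bM_i$ as the prefix up to $e_{i-1}$, and the final collapse via the law of total probability are exactly the structure used in Lemma~\ref{lm:cardinality:uniform:leveling}'s proof, which itself mirrors Lemma~\ref{lm:uniform:leveling}. Your remark that observation (a) is the only delicate step is also accurate; the paper handles it with precisely the bookkeeping on $\ell$ versus $z$ that you sketch.
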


\begin{proof}
In the beginning of $\constLevel{}(j)$, we take a random permutation of elements in $R_j$. Making a random permutation is equal to sampling all elements without replacement. 
    In other words, instead of fixing a random permutation $P$ of $R_j$ and iterating through $P$ in Line \ref{line:cardinality:iterate_P}, we repeatedly sample a random element $e$ from the unseen elements of $R_j$ until we have seen all of the elements. 
    Thus, for proving this lemma, we are assuming that our algorithm uses sampling without replacement.

    Given this view, we make the following claims.
    
    \textbf{Observation 1.}
    $e_i$ is the first element of $R_i$ seen in the permutation.~\\
            This is because before $e_i$ is seen,
            the value of $\ell$ is at most $i$. It is also clear from the algorithm that when an element $e$ is considered, it can only be added to sets $R_{x}$ for $x\le \ell$, both when $\suit{}(I_{\ell-1},e)=Ture$ and when $\suit{}(I_{\ell-1},e)=False$.
            Furthermore, $e$ can only be added to $R_{\ell}$ if $e=e_{\ell}$. Therefore, no element can be added to $R_i$ before $e_i$ is seen.
            
    \textbf{Observation 2.}
    Once $e_1, \dots, e_{i-1}$ have been seen, the set $R_i$ is uniquely determined.\\
    Note that $R_i$ is uniquely determined \emph{even though the algorithm has not observed its elements yet}. 
    This is because regardless of the randomness of $\constLevel{}(j)$, the level invariants will hold after its execution. 
    This implies that the content of the set $R_i$ only depends on the value of $(\bE_1, \dots, \bE_{i - 1})$, which is not going to change after it is set to be equal to $(e_1, \dots, e_{i - 1})$. 
    
    Let the random variable $\bM_i$ denote the sequence of elements that our algorithm observes until setting $\bE_{i-1}$ to be $e_{i - 1}$, including $e_{i-1}$ itself.
    In other words, if $e_{i-1}$ is the $x$-th element of the permutation $P$, $M_i$ is the first $x$ elements of $P$.
    ~\\
    Based on the above facts, conditioned on $\bM_i=M_i$, 
    \textbf{(a) }the value of $\bR_i$, or in other words $R_i$ is uniquely determined.
    \textbf{(b)} $e_i$ is going to be the first element of $R_i$ that the algorithm observes. Therefore,
    since we assumed that the algorithm uses sampling without replacement, $\bE_i$ is going to have a uniform distribution over $R_i$, i.e.,
    \begin{align*}
        \Pr{\bE_i = e | 
        \bT \ge i,
        \bM_i = M_i
        }
        = \frac{1}{|R_i|} \ind{e \in R_i} \enspace .
    \end{align*}

    By the law of total probability, we have 
    \begin{align*}
        \Pr{\bE_i = e_i | 
        \bT \ge i,
        \bH_i = H_i
        }
        &=
        \Exu{M_i}{
        \Pr{\bE_i = e_i | 
        \bT \ge i,
        \bH_i = H_i,
        \bM_i=M_i
        }
        } \enspace , 
    \end{align*}
    where the expectation is taken over all $M_i$ with positive probability.

    Also, note that knowing that $\bM_i = M_i$ uniquely determines the value of $\bH_i$ as well. This is because $M_i$ includes $(e_1, \dots, e_{i-1})$ and, with similar reasoning to what we used for Observation 2, we can say that $R_1, \dots, R_{i}$ are uniquely determined by $(e_1, \dots, e_{i-1})$.
    
Since we are only considering $M_i$ with positive probability, and $\bH_i$ is a function of $\bM_i$ given the discussion above, all the forms of $M_i$ that we consider in our expectation are the ones that imply $\bH_i = H_i$. Therefore, we can drop the condition $\bH_i=H_i$ from the condition $\bH_i=H_i, \bM_i = M_i$, which implies
\[
        \Pr{\bE_i = e_i | \bT \ge i, \bH_i = H_i }
        = \Exu{M_i}{ \Pr{\bE_i = e_i | \bT \ge i, \bM_i=M_i} }
        = \Exu{M_i}{ \frac{1}{|R_i|} \ind{e_i \in R_i} }
        =\frac{1}{|R_i|} \ind{e_i \in R_i} \enspace ,
\]
    as claimed.
\end{proof}

\subsection{Correctness of invariants after an update}
In our dynamic model, 
we consider a sequence $\mathcal{S}$ of updates of elements of an underlying ground set $V$ 
where at time $t$ of the sequence $\mathcal{S}$, we observe an update which can be the deletion of an element $e \in V$ or insertion of an element $e \in V$. 
We assume that an element $e$ can be deleted at time $t$, if after the first time $e$ is inserted, it is not deleted until time $t$. 
In this section, similar to Section \ref{sec:update:matroid}, we prove that all invariants hold after each update.
While most of this section is identical to Section \ref{sec:update:matroid}, we have the cardinality invariant instead of independent and weight invariants in this section.

We use several random variables for our analysis, including $\bE_i$, $\bR_i$, $\bT$, and $\bH_i$. Upon observing an update at time $t$, 
we should distinguish between each of these random variables and their corresponding values before and after the update.
To do so, we use the notations $\bY^-$ and $Y^{-}$ to denote a random variable and its value before time $t$ when $e$ is either deleted or inserted, and we keep using $\bY$ and $Y$ to denote them at the current time after the execution of update.
As an example,
    $\bH_i^- := (\bE_1^-, \dots, \bE_{i-1}^-, \bR_0^-,  \bR_{1}^-, \dots, \bR_{i}^-)$ 
    is the random variable that corresponds to the partial configuration $H_i^- = (e_1^-, \dots, e_{i-1}^-, R_0^-, \dots, R_{i}^-)$.

\subsubsection{Correctness of invariants after every insertion}

We first consider the case when the update at time $t$ of the sequence $\mathcal{S}$ is an insertion of an element $v$. 
In this section, we prove the following theorem.

\begin{theorem}
\label{cardinality:mat_insert:invariants}
If before the insertion of an element $v$, the level invariants and uniform invariant hold, then they also hold after the execution of \insertv$(v)$. 
\end{theorem}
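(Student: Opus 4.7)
The plan is to split the proof into two lemmas following exactly the pattern of Theorem~\ref{mat_insert:invariants} from the matroid case. The first lemma will establish that the level invariants (starter, survivor, cardinality, and terminator) are preserved by \insertv$(v)$, and the second lemma will use this fact together with the previous uniform invariant to show that the uniform invariant is preserved.

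For the level invariants lemma, I would introduce two auxiliary indices: let $i^*$ be the level at which \constLevel{} gets invoked during \insertv$(v)$ (or $T+1$ if it is never invoked), and let $j^*$ be the largest index such that $v$ was added to $R_j^-$ during the insertion. Inspection of the pseudocode gives the following book-keeping facts: for $i \in [1, i^*)$ we have $e_i = e_i^-$ and $I_i = I_i^-$ (nothing is touched at levels strictly before $i^*$); for $i \in [0, j^*]$ we have $R_i = R_i^- + v$; and for $i \in [j^*+1, i^*]$ we have $R_i = R_i^-$. The starter invariant follows since $R_0 = R_0^- + v = V^- + v = V$ and $I_0 = I_0^- = \emptyset$. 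The cardinality invariant at indices $i \in [1, i^*)$ is immediate from the equalities above together with the corresponding pre-insertion invariant. For the survivor invariant at $i \le j^*$, the definition of $j^*$ and the break condition at Line~\ref{line:cardinality:insert:break} guarantee that $v$ was added exactly because $\suit(I_{i-1}^-, v) = \text{True}$, while for the boundary $i = j^*+1$ (when $i^* = T+1$) the failure to add $v$ forces $\suit(I_{j^*}, v) = \text{False}$, so the set identity continues to hold. If $i^* \le T$, then we directly set $e_{i^*} = v$ and $I_{i^*} = I_{i^*-1} + v$ in Lines~\ref{line:cardinality:insert:setI}--\ref{line:cardinality:insert:setRi+1}, which makes the level invariants partially hold for the first $i^*+1$ levels; invoking \constLevel$(i^*+1)$ then finishes the argument by Theorem~\ref{thm:cardinality:invariants:leveling}. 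If $i^* = T+1$, the terminator invariant follows because $R_{T+1} = R_{T+1}^- = \emptyset$.

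For the uniform invariant lemma, the structure mirrors that of Lemma~\ref{mat_insert_uni}. Fix arbitrary $i$ and $e$, and split on $H_i = (e_1, \dots, e_{i-1}, R_0, \dots, R_i)$: either (Case 1) $e_j = v$ for some $j < i$, or (Case 2) no such $j$ exists. In Case 1, the only way $\bold{e}_j$ could have been set to $v$ is if $\bold{p}_j = 1$, which according to \insertv{} triggers a call to $\constLevel(j+1)$; having just established that the level invariants hold at the end of \insertv$(v)$, Lemma~\ref{lm:cardinality:uniform:leveling} immediately gives $\Pr{\bE_i = e \mid \bT \ge i, \bH_i = H_i} = \frac{1}{|R_i|}\cdot \ind{e \in R_i}$. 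In Case 2, define $H_i^- := (e_1, \dots, e_{i-1}, R_0 \setminus \{v\}, \dots, R_i \setminus \{v\})$ and argue that the events $[\bT \ge i, \bH_i = H_i]$ and $[\bT^- \ge i, \bH_i^- = H_i^-,\, \bold{p}_1 = 0, \dots, \bold{p}_{i-1} = 0]$ are equivalent, since in the absence of any $\bold{p}_j = 1$ for $j < i$ the insertion algorithm is deterministic and just removes $v$ from the $R_j$'s in reverse. Then condition on the value of $\bold{p}_i$: with probability $\frac{1}{|R_i|}\cdot \ind{v \in R_i}$ we have $\bold{p}_i = 1$ and $\bE_i = v$, while otherwise $\bE_i = \bE_i^-$ and we apply the pre-insertion uniform invariant on the reduced set $R_i^- = R_i \setminus \{v\}$. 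Splitting the remaining computation into the sub-cases $e = v$ and $e \ne v$ yields $\frac{1}{|R_i|}\cdot \ind{e \in R_i}$ in both sub-cases.

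The main technical obstacle will be the careful event-equivalence argument in Case 2 of the uniform invariant proof: showing that conditioning on the post-update configuration $\bH_i = H_i$ (with $v$ appearing nowhere among $e_1, \dots, e_{i-1}$) carries exactly the same information as conditioning on the pre-update configuration $\bH_i^- = H_i^-$ together with $\bold{p}_1 = \cdots = \bold{p}_{i-1} = 0$. This requires verifying both directions of the implication using only the determinism of \insertv{} away from the coin flips and the uniqueness of the extension of $\bH_i^-$ to $\bH_i$ when no \constLevel{} call is made before level $i$. Everything else (starter/cardinality/terminator bookkeeping, the two sub-cases $e = v$ vs.\ $e \neq v$) is a straightforward specialization of the matroid arguments and should go through with essentially the same calculations, noting that the cardinality setting eliminates the subtleties involving weights, $I'_i$, and matroid circuits.
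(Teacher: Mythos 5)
Your proposal is correct and follows essentially the same route as the paper: it splits the statement into a level-invariants lemma and a uniform-invariant lemma, uses the same auxiliary indices $i^*$ and $j^*$ with the same bookkeeping facts, completes the level-invariants proof by the same two-case analysis via Theorem~\ref{thm:cardinality:invariants:leveling}, and proves the uniform invariant by the same case split on whether $v$ appears among $e_1,\dots,e_{i-1}$, including the event-equivalence argument and the $\bold{p}_i$-conditioning with sub-cases $e=v$ and $e\ne v$. No substantive differences from the paper's own proof.
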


We break the proof of this theorem into Lemmas ~\ref{cardinality:mat_insert_level} and ~\ref{cardinality:mat_insert_uni}.

\begin{lemma} [Level invariants]
\label{cardinality:mat_insert_level}

If before the insertion of an element $v$ the level invariants (i.e., starter, survivor, cardinality, and terminator) hold, 
then they also hold after the execution of \insertv$(v)$. 
\end{lemma}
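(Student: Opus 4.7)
The plan is to mirror the template used for the matroid case in Lemma~\ref{mat_insert_level}, exploiting the fact that the cardinality setting has no weights and no $I'$ sets, so the argument is strictly lighter. First I would introduce two bookkeeping indices. Let $i^*$ be the unique level at which $e_{i^*}$ is set to $v$ and $\constLevel(i^*+1)$ is invoked during \insertv$(v)$, or $i^* = T^- + 1$ if no rebuild occurs. Let $j^*$ be the largest index $i$ with $v \in R_i$ at the end of \insertv$(v)$. By the structure of the for-loop in \insertv{}, in Case~1 (a rebuild occurs) we have $j^* = i^*$, and in Case~2 we have $j^* < i^* = T^- + 1$.

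With these indices fixed, I would record four straightforward before/after identities that follow directly from reading the pseudocode: for every $i \in [1, i^*)$, $e_i = e_i^-$; for every $i \in [0, i^*)$, $I_i = I_i^-$; for every $i \in [0, j^*]$, $R_i = R_i^- + v$; and for every $i \in (j^*, i^*]$, $R_i = R_i^-$. These are the cardinality analogs of Facts~\ref{fact:iprev:e}, \ref{fact:iprev:I}, \ref{fact:iprev:R+v}, and \ref{fact:iprev:R} from the matroid proof, and they hold because nothing below level $i^*$ is ever touched except for possibly inserting $v$ into some prefix of the $R_i$'s.

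Given these identities, the verification of the starter, survivor, and cardinality invariants \emph{partially} for the first $i^*$ levels is a direct rewriting. Starter: $R_0 = R_0^- + v = V^- + v = V$ and $I_0 = I_0^- = \emptyset$. Cardinality: for $i \in [1, i^*)$ the equality $I_i = I_{i-1} + e_i$ with $\replacementTester(I_{i-1}, e_i) = True$ transports verbatim from $(I_{i-1}^-, e_i^-)$ since $I_{i-1} = I_{i-1}^-$ and $e_i = e_i^-$, and \replacementTester{} depends only on $I_{i-1}$. Survivor: split the range $i \in [1, i^*]$ into $[1, j^*]$ and $(j^*, i^*]$. In the first range, combine $R_i = R_i^- + v$ with the fact that \insertv{} did not break at Line~\ref{line:cardinality:insert:break} up to level $j^*$, which means $\replacementTester(I_{i-1}, v) = True$, so $v$ rightfully belongs to $R_i$. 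In the second range, for $i > j^* + 1$ the equality transports trivially from the pre-insertion state; for $i = j^* + 1$ in Case~2, the break at Line~\ref{line:cardinality:insert:break} certifies that $\replacementTester(I_{j^*}, v) = False$, so $v$ is correctly absent from $R_{j^*+1}$.

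Finally I would close the two cases. In Case~1, Lines~\ref{line:cardinality:insert:setI}--\ref{line:cardinality:insert:setRi+1} guarantee the level invariants hold partially through level $i^*+1$ just before $\constLevel(i^*+1)$ is invoked, and then Theorem~\ref{thm:cardinality:invariants:leveling} upgrades this to the full set of level invariants. In Case~2, only the terminator invariant remains, and since $(T^-+1) \in (j^*, i^*]$, the identity $R_{T+1} = R_{T^-+1}^- = \emptyset$ gives it for free. The main obstacle is purely bookkeeping at the boundary $i = j^* + 1$ in Case~2, where one has to verify both that $v$ is correctly excluded from $R_{j^*+1}$ and that no other spurious change has been introduced; the monotonicity of $\suit$ given by Lemma~\ref{lm:cardinality:binary_search_argument} is what ensures that $v$'s absence from $R_{j^*+1}$ propagates consistently to all higher levels and that no further break conditions were skipped.
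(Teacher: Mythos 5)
Your proposal is correct and follows the paper's own proof essentially step for step: same indices $i^*$ and $j^*$, the same four transport facts about $e_i$, $I_i$, and $R_i$ across the insertion, the same split of the survivor verification at the boundary $j^*$, and the same two-case closing (Theorem~\ref{thm:cardinality:invariants:leveling} in Case~1, terminator bookkeeping in Case~2). The only cosmetic difference is your closing appeal to Lemma~\ref{lm:cardinality:binary_search_argument} to argue $v$'s absence propagates above $j^*+1$; the paper does not invoke monotonicity there, since for $i>j^*+1$ the sets $R_{i-1}$, $I_{i-1}$, $e_{i-1}$ coincide with their pre-insertion values and the survivor identity transports directly, making that appeal unnecessary though harmless.
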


\begin{proof}
To prove the lemma, we first mention some useful facts and then show that the starter, cardinality, and survivor invariants partially hold. Finally, we prove that all level invariants hold.

We begin with defining variables $i^*$ and $j^*$ as follows.

\begin{itemize}
    \item \textbf{$i^*$: } If during the execution of \insertv$(v)$ there is  $i\in [T]$ such that $e_{i}$ has been set to be $v$, which also implies that we have invoked \constLevel $(i + 1)$, then we set $i^*$ to be $i$. Otherwise, we set $i^*$ to be $T + 1$. 
    \item \textbf{$j^*$: } Let $j^*$ be the largest $i\in [0, T^-+1]$ such that we have added $v$ to $R_i^-$.
\end{itemize}

We consider these two cases in this proof. 
\begin{itemize}
    \item Case 1: $i^* \leq T$, which means $e_{i^*} = v$ and therefore $j^* = i^*$. It also means that we have invoked \constLevel $(i^* + 1)$.
    \item Case 2: $i^* = T + 1$, which means \constLevel{} has never been invoked  during the insertion of $v$.
    Note that in this case, $T = T^-$ and therefore, $j^* < T^-+1 = T+1 =i^*$.
\end{itemize}

Considering our algorithm in \insertv$(v)$, 
it is clear that for any $i < i^*$, we have not made any kind of change in $e_i^-$ or $I^-_i$ at least until \constLevel{} is invoked, if it ever gets invoked. 
Additionally, according to \constLevel, we know that if we have invoked \constLevel$(i^* + 1)$, there has not been any alteration to the variables regarding previous levels. 
Hence, we can conduct the following facts.

\begin{fact}
\label{fact:cardinality:iprev:e}
For any $i \in [1, i^*)$, we have $e_i = e_i^-$.
\end{fact}

\begin{fact}
\label{fact:cardinality:iprev:I}
For any $i \in [0, i^*)$, we have $I_i = I_i^-$.
\end{fact}

By the definition of $j^*$, we have added the element $v$ to the set $R_i^-$, for each $i \in [0, j^*]$.
Recall that $j^* \leq i^*$, and by invoking constLevel$(i^* + 1)$, there has not been any alteration to the variables regarding previous levels. It leads to the following fact.
\begin{fact}
\label{fact:cardinality:iprev:R+v}
For any $i \in [0, j^*]$, we have $R_i = R_i^- + v$.
\end{fact}

We know that if Case 2 holds, which means \constLevel{} has never been invoked during the insertion of $v$, we have $R_i = R_i^-$  for any $i \in [j^* + 1, T + 1]$. Recall that in Case 2, $i^* = T + 1$, and therefore $[j^*+1, T+1] = [j^*+1, i^*]$. Also if Case 1 holds, $j^* = i^*$, so $[j^* + 1, i^*] = \emptyset$. Thus, independent of the case, we can have the following fact.

\begin{fact}
\label{fact:cardinality:iprev:R}
 For any $i \in [j^*+1, i^*]$, we have $R_i = R_i^-$.
\end{fact}

In the following, we first prove that the starter invariant holds after executing $\insertv(v)$. We next show that the cardinality and survivor invariants partially hold for the first $i^*+1$ levels. Finally, we complete the proof by proving that all the level invariants hold.

\textbf{Starter invariant.}
To show that the starter invariant holds after $\insertv(v)$, we need to prove $R_0= V$ and $I_0=\emptyset$.
By the assumption of this lemma, we have $R_0^-=V^-$, and Fact~\ref{fact:cardinality:iprev:R+v} results that $R_0 = R_0^- + v$. Thus $R_0 = R_0^- + v = V^- + v = V$.

Again by the assumption of this lemma, $I_0^-=\emptyset$.
Due to Fact~\ref{fact:cardinality:iprev:I}, we have $I_0=I^-_0$, and therefore, it is clear that $I_0 = I_0^- = \emptyset$.

\textbf{Cardinality invariant (partially).}
Now we show that the cardinality invariant partially holds up to the level~$L_{i^*}$ after $\insertv(v)$.
To do this, we prove $I_i = I_{i - 1} + e_i$ and $\replacementTester{}(I_{i - 1}, e_i)=True$ holds for all $i\in[1,i^*)$.

Using Fact~\ref{fact:cardinality:iprev:I}, we have $I_i = I_i^-$ for any $i\in  [1,i^*)$. Also, $I_i^- = I_{i - 1}^- + e_i^-$ by
the assumption of this lemma.  Then, 
$$ I_i = I_i^- = I_{i - 1}^- + e_i^- \enspace.
$$

For any $i\in[1,i^*)$, we have $I_{i - 1}=I_{i - 1}^-$ by Fact~\ref{fact:cardinality:iprev:I} and $e_i=e_i^-$ by Fact~\ref{fact:cardinality:iprev:e}. Then we can further conclude that:
$$
I_i = I_{i - 1} + e_i \enspace.
$$

Finally, we prove $\replacementTester{}(I_{i - 1}, e_i)=True$ for any $i\in[1,i^*)$. By the assumption of this lemma, we know that $\replacementTester{}(I_{i - 1}^-, e_i^-)=True$. In addition, for any $i\in[1,i^*)$, we have $I_{i - 1}=I_{i - 1}^-$ by Fact~\ref{fact:cardinality:iprev:I} and $e_i=e_i^-$ by Fact~\ref{fact:cardinality:iprev:e}. Thus, $\replacementTester{}(I_{i - 1}, e_i)=True$ for any $i\in[1,i^*)$.

\textbf{Survivor invariant (partially). }
Next, we show that the survivor invariant partially holds for the first $i^*$ levels by proving that $R_i=\{ e \in R_{i-1} - e_{i-1}: \replacementTester{}(I_{i-1}, e) = True\}$ holds for any $i\in[1,i^*]$.
In the following, we first consider $i \in [1, j^*]$ and then $i \in [j^*+1, i^*]$.
Recall that $j^*$ is the largest $j$ that we added $v$ to $R_j$ in $\insertv(v)$.

First we study $i \in [1, j^*]$. 
Using Fact~\ref{fact:cardinality:iprev:R+v}, $R_i = R_i^- + v$ holds for each $i \in [1, j^*]$, and $R_i^-=\{ e \in R_{i-1}^- - e_{i-1}^-: \replacementTester{}(I_{i-1}^-, e) =True\}$ according  to the assumption of this lemma. Besides, by the definition of $j^*$  and according to the break condition in Line~\ref{line:cardinality:insert:break}, we can conduct that $\replacementTester{}(I_{i-1}^-, v) = True$. Putting everything together we have,
\begin{align*}
    R_i &= R_i^- + v \\
    &= \{ e \in R_{i-1}^- - e_{i-1}^-: \replacementTester{}(I_{i-1}^-, e) = True\} + v
    \\
    &= \{ e \in R_{i-1}^- + v - e_{i-1}^-: \replacementTester{}(I_{i-1}^-, e) = True\} \enspace.
\end{align*}
Using $R_{i-1}=R_{i-1}^- + v$ by Fact~\ref{fact:cardinality:iprev:R+v}, $e_{i-1}=e_{i-1}^-$ by Fact~\ref{fact:cardinality:iprev:e}, and $I_{i-1}=I_{i-1}^-$ by Fact~\ref{fact:cardinality:iprev:I}, we have: 
$$
R_i = \{ e \in R_{i-1} - e_{i-1}: \replacementTester{}(I_{i-1}, e) = True\} \enspace.
$$

Recall that if case 1 holds, $i^*=j^*$, and then the survivor invariant partially holds for the first $i^*$ levels.
Otherwise, if Case 2 holds, it remains to study $i\in [j^*+1, i^*]=[j^*+1,T+1]$.

It holds that $R_i = R_i^-$ for any $i \in [j^*, i^*]$  according to Fact~\ref{fact:cardinality:iprev:R}. Adding  it to the assumption of this lemma, we have: 
$$
    R_i = R_i^-
    = \{ e \in R_{i-1}^- - e_{i-1}^-: \replacementTester{}(I_{i-1}^-, e) = True\} \enspace .
$$

Besides, $e_{i-1}=e_{i-1}^-$ by Fact~\ref{fact:cardinality:iprev:e} and $I_{i-1}=I_{i-1}^-$ by Fact~\ref{fact:cardinality:iprev:I}. Thus,
$$
    R_i
    = \{ e \in R_{i-1}^- - e_{i-1}: \replacementTester{}(I_{i-1}, e) = True\} \enspace .
$$
We have either $i=j^*+1$ or $i\in (j^*+1, i^*]$. If $i\in (j^*+1, i^*]$, then $R_{i - 1}=R_{i - 1}^-$ by Fact~\ref{fact:cardinality:iprev:R}. Hence,
$$
    R_i
    = \{ e \in R_{i-1} - e_{i-1}: \replacementTester{}(I_{i-1}, e) = True\} \enspace .
$$
Now consider $i=j^*+1$. According to Fact~\ref{fact:cardinality:iprev:R+v}, $R_{j^*}=R_{j^*}^-+v$, and then $R_{j^*}^-=R_{j^*}\setminus\{v\}$. Due to the definition of $j^*$, we know $v$ is not added to $R_{j^*+1}$. Hence, according to the break condition in Line~\ref{line:cardinality:insert:break}, we can conduct that  $\replacementTester{}(I_{j^*}, e) = False$. 
Putting everything we have
\begin{align*}
    R_{j^*+1} &= \{ e \in R_{j^*}^- - e_{j^*}: \replacementTester{}(I_{j^*}, e) = True\} \\
     &= \{ e \in R_{j^*} \backslash \{v\} - e_{j^*}: \replacementTester{}(I_{j^*}, e) = True\} \\
    &= \{ e \in R_{j^*} - e_{j^*}: \replacementTester{}(I_{j^*}, e) = True\} \enspace ,
\end{align*}
which finishes the proof.

\textbf{Completing the proof.}
Now having everything in hand, we can complete the proof of this lemma. Above, we show that the starter, survivor, and cardinality invariants partially hold for the first $i^*$ levels.

If Case 1 holds, it means we set $e_{i^*}=v$ and
$I_{i^*} = I_{i^*-1} + e_{i^*}$ in Line~\ref{line:cardinality:insert:setI}.
Then in the Line~\ref{line:cardinality:insert:setRi+1},
we set $R_{i^*+1} = \{e' \in R_{i^*}: \replacementTester{}(I_{i^*}, e') = True \}$.
It means that the invariants partially hold for the first $i^*+1$ levels.
Next, we invoke \constLevel$(i^*+1)$, and then all the invariants hold by Theorem~\ref{thm:cardinality:invariants:leveling}. 

Otherwise, if Case 2 holds, $i^*=T+1=T^-+1$.
It means the starter,  survivor, and cardinality invariants hold and it remains to show the terminator invariant to complete the proof.
Recall that in this case, $j^*<T^-+1$, which implies $(T^-+1)\in[j^*+1, i^*]$ and then $R_{T^-+1}=R_{T^-+1}^-$ by Fact~\ref{fact:cardinality:iprev:R+v}.
Also, the assumption of this lemma implies $R_{T^-+1}^-=\emptyset$.  Therefore,
$R_{T+1}=R_{T^-+1}=R_{T^-+1}^-=\emptyset$, which means the terminator invariant holds and completes the proof.

\end{proof}

\begin{lemma} [Uniform invariant]
\label{cardinality:mat_insert_uni}
If before the insertion of an element $v$ the level and uniform invariants hold, then the uniform invariant also holds after the execution of \insertv$(v)$. 
\end{lemma}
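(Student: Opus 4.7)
The plan is to mirror the argument used for Lemma~\ref{mat_insert_uni} in the matroid case, which is possible because the mechanism of \insertv{} for the cardinality constraint is structurally identical to the matroid one: we walk up the levels, potentially set some $\bold{p_i}=1$ with probability $1/|R_i|$, and in that case overwrite $\bE_i$ with $v$ and invoke \constLevel$(i+1)$; otherwise the configuration is left untouched except for possibly appending $v$ to a prefix of the $R_j$'s. I would first fix an arbitrary level index $i$ and an arbitrary element $e$, and aim to show
\[
    \Pr{\bE_i = e \mid \bT \ge i,\ \bH_i = H_i} = \tfrac{1}{|R_i|}\cdot \ind{e \in R_i},
\]
by splitting on whether $H_i$ records that $v$ was promoted to some level below $i$ (Case~1: $e_j=v$ for some $j<i$) or not (Case~2: $v \notin \{e_1,\dots,e_{i-1}\}$).

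For Case~1, the argument is immediate: if $e_j = v$ for some $j<i$, then during the execution we must have had $\bold{p_j}=1$ (by Fact~\ref{fact:cardinality:iprev:e}-style reasoning, before the insertion $\bE_j^- \ne v$ since $v \notin V^-$), and so we invoked \constLevel$(j+1)$. By Lemma~\ref{cardinality:mat_insert_level} the level invariants hold at the end of \insertv$(v)$, hence in particular at the end of that \constLevel$(j+1)$ call, and Lemma~\ref{lm:cardinality:uniform:leveling} directly gives the desired uniform distribution on $\bE_i$ for all $i \ge j+1$.

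For Case~2, define the ``companion'' pre-insertion configuration
\[
    H_i^- := (e_1,\dots,e_{i-1},\ R_0\setminus\{v\},\dots,R_i\setminus\{v\}).
\]
The key structural claim to prove is that the events $[\bT \ge i,\ \bH_i = H_i]$ and $[\bT^- \ge i,\ \bH_i^- = H_i^-,\ \bold{p_1}=0,\dots,\bold{p_{i-1}}=0]$ are equivalent: if no $\bold{p_j}$ was $1$ for $j<i$, then \constLevel{} was never invoked, the $e_j$'s are unchanged, and the $R_j$'s differ from their pre-insertion versions only by possibly containing $v$; conversely, if any $\bold{p_j}=1$ we would have $e_j=v$, contradicting Case~2. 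Once this equivalence is established, I condition on the value of $\bold{p_i}$ and obtain
\[
    \Pr{\bE_i = e \mid \bT \ge i, \bH_i = H_i} = \tfrac{1}{|R_i|}\ind{v\in R_i}\cdot \Pr{\bE_i = e \mid \bold{p_i}=1, \cdots} + \bigl(1 - \tfrac{1}{|R_i|}\ind{v\in R_i}\bigr)\cdot \Pr{\bE_i^- = e \mid \cdots},
\]
where $\Pr{\bold{p_i}=1 \mid \bT \ge i, \bH_i = H_i} = \tfrac{1}{|R_i|}\ind{v\in R_i}$ by the algorithm, and in the second term I replace the conditioning event by $[\bT^- \ge i, \bH_i^- = H_i^-]$ using the equivalence plus the independence of $\bold{p_i}$ from $\bE_i^-$. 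Then I split on whether $e=v$ or $e \ne v$: in the first subcase the event $\bE_i^-=v$ has probability $0$ (since $v \notin V^-$, so $v \notin R_i^-$) so only the $\bold{p_i}=1$ contribution survives and equals $\tfrac{1}{|R_i|}\ind{e\in R_i}$; in the second subcase only the $\bold{p_i}=0$ contribution survives, and using the induction hypothesis $\Pr{\bE_i^-=e \mid \bT^-\ge i, \bH_i^-=H_i^-} = \tfrac{1}{|R_i^-|}\ind{e\in R_i^-}$ together with $|R_i| - \ind{v\in R_i} = |R_i^-|$ and $R_i^- = R_i\setminus\{v\}$, the two factors telescope to $\tfrac{1}{|R_i|}\ind{e\in R_i}$ as required.

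The main obstacle, as in the matroid version, is the careful bookkeeping in Case~2 to establish the event-equivalence and to justify that conditioning on $\bH_i^- = H_i^-$ and conditioning on $\bH_i^- = H_i^-$ together with $\bold{p_1}=\cdots=\bold{p_{i-1}}=0$ give the same distribution for $\bE_i^-$; this requires the independence of $\bE_i^-$ (which is determined by randomness prior to the insertion) from the fresh coins $\bold{p_1},\dots,\bold{p_i}$ used inside \insertv. Everything else is a short calculation that is strictly simpler than the matroid analog because there is no weight function, no \replacementTester{}-witness $y$ to track, and the definition of promoting reduces to the two-line condition $\marginalgain{e}{I} \ge \tau \wedge |I| < k$.
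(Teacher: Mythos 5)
Your plan is correct and is essentially the paper's own argument: the same two-case split on whether $e_j=v$ for some $j<i$, the same appeal to Lemma~\ref{cardinality:mat_insert_level} and Lemma~\ref{lm:cardinality:uniform:leveling} in Case~1, and the same event-equivalence claim, conditioning on $\bold{p_i}$, and $e=v$ vs.\ $e\ne v$ calculation in Case~2. Nothing is missing; you have correctly identified that the argument carries over verbatim from the matroid case because the insertion routine's random coins and level structure are identical.
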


\begin{proof} 
By the assumption that the uniform invariant holds before the insertion of the element $v$, we mean that for any arbitrary $i$ and any arbitrary element $e$, the following holds:
\[
    \Pr{\bE_i^- = e | \bT^- \geq i, \bH_i^- = H_i^-} = 
    \frac{1}{|R_i^{-}|} \cdot \ind{e \in R_i^{-}} \enspace.
\]
We aim to prove that given %
our assumptions,
after the execution of \insertv$(v)$, for each arbitrary $i$ and each arbitrary element $e$, we have 
\[
    \Pr{\bE_i = e | \bT \geq i, \bH_i = H_i} = 
    \frac{1}{|R_i|} \cdot \ind{e \in R_i} \enspace .
\]

Note that $\Pr{\bE_i = e | \bT \geq i, \bH_i = H_i}$, is only defined when 
$\Pr{\bT \geq i, \bH_i = H_i} > 0$, which means that given the input and considering the behavior of our algorithm including its random choices, it is possible to reach a state where $\bT \geq i$ and $\bH_i = H_i$.
In this proof, we use $\bold{p}_i$ to denote the variable $p_i$ used in the \insertv{} as a random variable.

Fix any arbitrary $i$ and any arbitrary element $e$. Since $\bH_i^- = (\bE_1^-, \dots, \bE_{i - 1}^-, \bR_0^-, \bR_1^-, \dots, \bR_i^-)$ refers to our data structure levels before the insertion of the element $v$, it is clear that the following facts hold about $\bH_i^-$. 

\begin{fact}
\label{cardinality:ebeforeinsert}
For any $j < i$, $\bold{e}_j^- \neq v$.
\end{fact}

\begin{fact}
\label{cardinality:rbeforeinsert}
For any $j \leq i$, $v \notin \bold{R}_j^-$.
\end{fact}

We consider the following cases based on which of the following holds for $H_i = (e_1, \dots, e_{i - 1}, R_0, R_1, \dots, R_i)$:
\begin{itemize}
    \item Case 1: If the $e_j = v$ for some $j < i$.
    \item Case 2: If $v \notin \{e_1, \dots, e_{i - 1}\}$.
\end{itemize}

By Claims \ref{cardinality:insertcase1} and \ref{cardinality:insertcase2}, we prove that no matter the case, $\Pr{\bE_i = e | \bT \geq i, \bH_i = H_i}$ is equal to $\frac{1}{|R_i|} \cdot \ind{e \in R_i}$, which completes the proof of the Lemma.

\begin{claim}
\label{cardinality:insertcase1}
If $H_i$ is such that there is a $1 \le j < i$ that $e_j = v$, then  
$\Pr{\bE_i = e | \bT \geq i, \bH_i = H_i} = \frac{1}{|R_i|} \cdot \ind{e \in R_i}$.
\end{claim}

\begin{proof}
We know that, $\bold{p_j}$ 
must have been equal to 1, as otherwise, instead of having $\bold{e_j} = e_j = v$, we would have had $\bE_j = \bE_j^-$, which would not have been equal to $v$ as stated in Fact \ref{cardinality:ebeforeinsert}. According to our algorithm, since $\bold{p_j}$ has been equal to $1$, we have invoked \constLevel($j + 1$). 
By Lemma~\ref{cardinality:mat_insert_level}, we know that the level invariants hold at the end of the execution of \insertv{}, which is also the end of the execution of \constLevel{$(j + 1)$}. 
Thus, Lemma~\ref{lm:cardinality:uniform:leveling}, proves that $\Pr{\bE_i = e | \bT \geq i, \bH_i = H_i} = \frac{1}{|R_i|}\cdot \ind{e \in R_i}$.
\end{proof}

\begin{claim}
\label{cardinality:insertcase2}
If $H_i$ is such that $e_j \neq v$ for any $1 \le j < i$, then  
$\Pr{\bE_i = e | \bT \geq i, \bH_i = H_i} = \frac{1}{|R_i|} \cdot \ind{e \in R_i}$.
\end{claim}

\begin{proof}
We first define $H_i^-$ based on $H_i$ as $ H_i^- := 
    (R_0 \backslash \{v\} , \dots, R_i \backslash \{v\}, e_1, \dots, e_{i-1})$ 
and prove the following claim. 
\begin{claim}
\label{cardinality:insertclaim}
The events $[\bT \geq i, \bH_i = H_i]$ and $[\bT^- \geq i, \bH_i^- = H_i^-, \bold{p_1} = 0, \dots, \bold{p_{i - 1}} = 0]$ are equivalent and imply each other, thusly they are interchangeable.   
\end{claim}

\begin{proof}
First, we show that if $\bT \geq i, \bH_i = H_i$, then $\bT^- \geq i, \bH_i^- = H_i^-, \bold{p_1} = 0, \dots, \bold{p_{i - 1}} = 0$. 
Considering that case 2 holds for $H_i$, $\bH_i = H_i$, means that for any $j < i$, $\bE_i \neq v$, which means there is no $j < i$ with $\bold{p_j} = 1$. Note that if $\bold{p_j} = 1$, then we would have set $\bold{e_j}$ to be equal to $v$, and we would have invoked \constLevel($j + 1$). Thus, in addition to knowing that for any $j < i$, $\bold{p_j} = 0$, we also know that, we have not invoked \constLevel($j + 1$) for any $j < i$. As for any $j < i$, $\bold{p_j} = 0$ and \constLevel($j + 1$) was not invoked, we have the following results: 
\begin{enumerate}
    \item 
    Level $i$ also existed before the insertion of $v$, i.e. $\bT^- \geq i$.
    \item 
    We have made no change in the values of $(\bold{e_1}, \dots, \bold{e_{i - 1}})$, and they still have the values they had before the insertion of $v$, i.e. for any  $j < i$, $\bold{e_j} = \bold{e_j^-}$, and so $\bold{e_j^-} = e_j$. 
    \item
    All the change we might have made in our data structure is limited to adding the element $v$ to a subset of $\{\bold{R_0^-}, \dots, \bold{R_i^-}\}$. Hence, for any $j \leq i$, whether $\bold{R_j}$ is equal to $\bold{R_j^-}$ or $\bold{R_j^-} \cup \{v\}$, $\bold{R_j^-} = \bold{R_j} \backslash \{v\} = R_j \backslash \{v\}$. 
\end{enumerate}
So far, we have proved that throughout our algorithm, we reach the state, where $\bT \geq i, \bH_i = H_i$, only if $\bT^- \geq i, \bH_i^- = H_i^-, \bold{p_1} = 0, \dots, \bold{p_{i - 1}} = 0$.

We know that in our insertion algorithm, there is not any randomness other than setting the value of $\bold{p_j}$ as long as we have not invoked \constLevel, which only happens when for a $j$, $\bold{p_j}$ is set to be 1. It means that the value of $\bH_i$ can be determined uniquely if we know the value of $\bH_i^-$, and we know that $\bold{p_1}, \dots, \bold{p_{i - 1}}$ are all equal to $0$. 
Since we have assumed that $\bT \geq i, \bH_i = H_i$ is a valid and reachable state in our algorithm, $\bT^- \geq i, \bH_i^- = H_i^-$ must have been a reachable state as well. Plus, $\bT^- \geq i, \bH_i^- = H_i^-, \bold{p_1} = 0, \dots, \bold{p_{i - 1}} = 0$, should imply that $\bT \geq i$ and $ \bH_i = H_i$. Otherwise, $\bT \geq i, \bH_i = H_i$ could not be a reachable state, which is in contradiction with our assumption. 
\end{proof}

Now, we proceed to calculate $\Pr{\bE_i = e|\bT \geq i, \bH_i = H_i}$. 
As stated above, considering that Case 2 holds for $H_i$, we know that 
$\bT \geq i, \bH_i = H_i$ implies that \constLevel{} has not been invoked for any $j < i$. Thus, the value of $\bE_i$ will be determined based on the random variable $\bold{p_i}$. And we have: 
$$\Pr{\bE_i = e|\bT \geq i, \bH_i = H_i} = 
\sum_{p_i \in \{0, 1\}}
(\Pr{\bold{p_i} = p_i|\bT \geq i, \bH_i = H_i} \cdot
\Pr{\bE_i = e|\bT \geq i, \bH_i = H_i, \bold{p_i} = p_i}) $$
According to the algorithm, if $v \in H_i$, then $\Pr{\bold{p_i} = 1|\bT \geq i, \bH_i = H_i}$ is equal to $\frac{1}{|R_i|}$. Otherwise, if $v \notin H_i$, then $\bold{p_i}$ would be zero by default, and $\Pr{\bold{p_i} = 1|\bT \geq i, \bH_i = H_i} = 0$. Hence, we can say that: 
$$\Pr{\bold{p_i} = 1|\bT \geq i, \bH_i = H_i} = 
\frac{1}{|R_i|} \cdot \ind{v \in R_i}.$$
Additionally, Having $\bT \geq i, \bH_i = H_i$, 
 if $\bold{p_i} = 1$, then $\bE_i$ would be $v$. 
Otherwise, if $\bold{p_i} = 0$, then $\bE_i^-$ would remain unchanged, i.e. $\bE_i = \bE_i^-$.  
Hence, $\Pr{\bE_i = e|\bT \geq i, \bH_i = H_i}$ is equal to 
$$
\frac{1}{|R_i|} \cdot \ind{v \in R_i} \cdot
\Pr{\bE_i = e|\bT \geq i, \bH_i = H_i, \bold{p_i} = 1} 
+ (1 - \frac{1}{|R_i|} \cdot \ind{v \in R_i})\cdot
\Pr{\bE_i^- = e|\bT \geq i, \bH_i = H_i, \bold{p_i} = 0}
.$$
We consider the following cases based on the value of $e$: 
\begin{itemize}
\item Case (i): $e = v$.

In this case 
$\Pr{\bE_i = e|\bT \geq i, \bH_i = H_i, \bold{p_i} = 1} = 1$, and $\Pr{\bE_i^- = e|\bT \geq i, \bH_i = H_i, \bold{p_i} = 0} = 0$. Thus, we have: 
$$\Pr{\bE_i = e|\bT \geq i, \bH_i = H_i} = 
\frac{1}{|R_i|} \cdot \ind{v \in R_i} \cdot 1
+ (1 - \frac{1}{|R_i|} \cdot \ind{v \in R_i}) \cdot 0 = \frac{1}{|R_i|} \cdot \ind{v \in R_i} = \frac{1}{|R_i|} \cdot \ind{e \in R_i}.
 $$
\item Case (ii): $e \neq v$
In this case, $\Pr{\bE_i = e|\bT \geq i, \bH_i = H_i, \bold{p_i} = 1} = 0$. So we have: 
$$\Pr{\bE_i = e|\bT \geq i, \bH_i = H_i} =
\frac{1}{|R_i|} \cdot \ind{v \in R_i} \cdot 0
+ (1 - \frac{1}{|R_i|} \cdot \ind{v \in R_i}) \cdot 
\Pr{\bE_i^- = e|\bT \geq i, \bH_i = H_i, \bold{p_i} = 0}. $$
According to the claim that we proved beforehand, $\bT \geq i, \bH_i = H_i$ and $\bT^- \geq i, \bH_i^- = H_i^-, \bold{p_1} = 0, \dots, \bold{p_{i - 1}} = 0$ are interchangeable. So we have: 
$$\Pr{\bE_i = e|\bT \geq i, \bH_i = H_i} =
(1 - \frac{1}{|R_i|} \cdot \ind{v \in R_i}) \cdot 
\Pr{\bE_i^- = e| \bT^- \geq i, \bH_i^- = H_i^-, \bold{p_1} = 0, \dots, \bold{p_i} = 0}.
$$
Since for any $j \leq i$, $\bE_i^-$ and $\bold{p_i}$ are independent random variables, we have: 
\begin{align*}
\Pr{\bE_i = e|\bT \geq i, \bH_i = H_i} =
(1 - \frac{1}{|R_i|} \cdot \ind{v \in R_i}) \cdot 
\Pr{\bE_i^- = e| \bT^- \geq i, \bH_i^- = H_i^-} \\
= (1 - \frac{1}{|R_i|} \cdot \ind{v \in R_i}) \cdot 
\left(\frac{1}{|R_i^-|} \cdot \ind{e \in R_i^-}\right) \enspace, 
\end{align*}
where the last equality holds because of the assumption stated in Lemma. From the definition of $H_i^-$, we have $R_i^- = R_i \backslash \{v\}$. Therefore, 
$$\Pr{\bE_i = e|\bT \geq i, \bH_i = H_i} = 
\frac{|R_i| - \ind{v \in R_i}}{|R_i|} \cdot 
\left(\frac{1}{|R_i| - \ind{v \in R_i} } \cdot \ind{e \in R_i \backslash \{v\}}\right) \enspace .
$$
And since, $e \neq v$, we have: 
$$\Pr{\bE_i = e|\bT \geq i, \bH_i = H_i} =
\frac{1}{|R_i|} \cdot \ind{e \in R_i} \enspace . $$
\end{itemize}
\end{proof}
As stated before, proof of these claims completes the Lemma's proof.
\end{proof}

\subsubsection{Correctness of invariants after every deletion}

Now, we consider the case when the update at time $t$ of the sequence $\mathcal{S}$, is a deletion of an element $v$, and prove the following theorem. 

\begin{theorem}
\label{cardinality:mat_delete:invariants}
If before the deletion of an element $v$, the level invariants and the uniform invariant hold, then they also hold after the execution of \deletev$(v)$. 
\end{theorem}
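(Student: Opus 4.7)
The plan is to mirror the decomposition used for the matroid case (Theorem~\ref{mat_delete:invariants}), splitting the statement into two lemmas: one asserting that the level invariants (starter, survivor, cardinality, terminator) are preserved by \deletev$(v)$, and another asserting that the uniform invariant is preserved. The first of these can be proved in a fashion essentially identical to Lemma~\ref{mat_delete_level}, with the simplification that there are no weight invariants or \replacementTester{} returns to track. Specifically, I would let $i^*$ denote the level at which \constLevel{} is invoked during $\deletev(v)$, or $i^* = T+1$ if it is never invoked, and record the analogues of Facts~\ref{fact:prev:e},~\ref{fact:prev:I}, and~\ref{fact:prev:R}: for $i \in [1, i^*)$ we have $e_i = e_i^-$ and $I_i = I_i^-$, and for $i \in [0, i^*]$ we have $R_i = R_i^- \setminus \{v\}$. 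From these facts the starter, cardinality, and survivor invariants partially hold for the first $i^*$ levels by direct substitution; if $i^* \le T$ then Theorem~\ref{thm:cardinality:invariants:leveling} promotes this to the full invariants, while if $i^* = T+1$ the terminator invariant follows because $R_{T+1} = R_{T^-+1}^- = \emptyset$.

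For the uniform invariant, I would follow the structure of Lemma~\ref{mat_delete_uni} and define a random variable $\bold{X_i}$ taking values in $\{0, 1, 2\}$ based on how \deletev$(v)$ interacts with level $i$: the value $2$ corresponds to the case that \constLevel{}$(j)$ was invoked for some $j \le i$, the value $1$ to the case that the algorithm terminated at some level $j \le i$ because $v \notin R_j^-$, and the value $0$ to all remaining cases (where $v \in R_i^-$ but the deletion has no heavy consequences at level $i$). By the law of total probability it suffices to show $\Pr{\bE_i = e \mid \bT \ge i, \bH_i = H_i, \bold{X_i} = X_i} = \frac{1}{|R_i|} \ind{e \in R_i}$ for each $X_i \in \{0, 1, 2\}$.

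For $\bold{X_i} = 2$, the proof is immediate from Lemma~\ref{lm:cardinality:uniform:leveling} together with the level-invariants lemma just established. For $\bold{X_i} = 1$, the deletion makes no change at level $i$, so $\bE_i = \bE_i^-$ and $R_i = R_i^-$, and the conclusion follows directly from the uniform invariant hypothesis before deletion. The most delicate case is $\bold{X_i} = 0$: here I would define $H_i^-$ from $H_i$ by inserting $v$ back into each $R_j$, show (as in Claim~\ref{pro:events:eqn}) that the events $[\bT \ge i, \bH_i = H_i, \bold{X_i} = 0]$ and $[\bT^- \ge i, \bH_i^- = H_i^-, \bE_i^- \ne v]$ are equivalent, and then compute
\[
\Pr{\bE_i^- = e \mid \bT^- \ge i, \bH_i^- = H_i^-, \bE_i^- \ne v} = \frac{\frac{1}{|R_i^-|} \ind{e \in R_i^- \setminus \{v\}}}{1 - \frac{1}{|R_i^-|}} = \frac{1}{|R_i|} \ind{e \in R_i},
\]
using the prior uniform invariant and the identities $R_i^- = R_i \cup \{v\}$ and $v \notin R_i$.

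The main technical obstacle, as in the matroid case, is the $\bold{X_i} = 0$ case: one must carefully verify that conditioning on $\bold{X_i} = 0$ together with $\bH_i = H_i$ really is equivalent to conditioning on the ``pre-deletion'' event $\bH_i^- = H_i^-$ with the additional restriction $\bE_i^- \ne v$. This requires showing both directions of implication, where the nontrivial direction uses that once \MatroidConstLevel{} (here \constLevel{}) is never invoked up to level $i$ and $v$ is present in every $R_j^-$ for $j \le i$, the deletion algorithm is deterministic and its effect on each $R_j^-$ is exactly the removal of $v$. Once this equivalence is nailed down, the probability computation is a short algebraic manipulation and the rest of the proof is routine.
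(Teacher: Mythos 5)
Your proposal mirrors the paper's proof precisely: the paper splits Theorem~\ref{cardinality:mat_delete:invariants} into Lemmas~\ref{cardinality:mat_delete_level} and~\ref{cardinality:mat_delete_uni}, proves the level invariants via the same $i^*$ bookkeeping and appeal to Theorem~\ref{thm:cardinality:invariants:leveling}, and proves the uniform invariant via the identical $\bold{X_i} \in \{0,1,2\}$ case split with the same equivalence-of-events argument and conditional-probability computation for the $\bold{X_i}=0$ case. Your proposal is correct and takes essentially the same approach.
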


Similar to Theorem~\ref{cardinality:mat_insert:invariants}, we break the proof of this theorem into 
Lemmas ~\ref{cardinality:mat_delete_level} and ~\ref{cardinality:mat_delete_uni}.

\begin{lemma}[Level invariants]
\label{cardinality:mat_delete_level}
If before the deletion of an element $v$ the level invariants (i.e., starter, survivor, cardinality, and terminator) hold, 
then they also hold after the execution of \deletev$(v)$. 
\end{lemma}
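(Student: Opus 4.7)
The plan is to mirror the proof of Lemma~\ref{mat_delete_level} from Section~\ref{subs:mat_delete_level}, but adapted to the cardinality setting where independent/weight invariants are replaced by the simpler cardinality invariant. First, I introduce the pivotal index $i^*$ as the level at which \constLevel{} is invoked during the execution of \deletev$(v)$, setting $i^* := T^-+1$ if \constLevel{} is never invoked. This gives a clean split between ``untouched'' levels (indices $< i^*$) and levels that are rebuilt from scratch (indices $\ge i^*$ when $i^* \le T^-$).

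Next, I would collect the analogues of Facts~\ref{fact:prev:e}--\ref{fact:prev:R} that are immediate from inspection of \deletev{}: for every $i \in [1,i^*)$, $e_i = e_i^-$; for every $i \in [0,i^*)$, $I_i = I_i^-$; and for every $i \in [0,i^*]$, $R_i = R_i^- \setminus \{v\}$. The last one uses the boundary observation that when $i^* = T^-+1$ we also have $R_{T^-+1} = R_{T^-+1}^- = \emptyset$, so the equality $R_i = R_i^- \setminus \{v\}$ still holds trivially at $i = i^*$. Note that there is no weight function to track here, which removes the whole $w^-[I_{i-1}^-] = w[I_{i-1}]$ bookkeeping that burdens the matroid case; the cardinality constraint therefore makes this step noticeably cleaner.

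With these facts in hand, I would verify each invariant in turn. The starter invariant reduces to $R_0 = R_0^- \setminus \{v\} = V^- \setminus \{v\} = V$ and $I_0 = I_0^- = \emptyset$. For the partial cardinality invariant on $[1, i^*)$, substitute $I_i = I_i^-$, then use the pre-update cardinality invariant to get $I_i^- = I_{i-1}^- + e_i^-$, and rewrite with $I_{i-1} = I_{i-1}^-$, $e_i = e_i^-$, together with the fact that $\suit{}(I_{i-1}, e_i) = \suit{}(I_{i-1}^-, e_i^-) = True$. For the partial survivor invariant on $[1,i^*]$, substitute $R_i = R_i^- \setminus \{v\}$ and apply the pre-update survivor invariant, then push the $\setminus \{v\}$ inside the comprehension and use $e_{i-1} = e_{i-1}^-$ and $I_{i-1} = I_{i-1}^-$ to convert every $^-$ into its current counterpart.

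Finally, I close the proof by a case split on $i^*$. If $i^* \le T^-$, then \constLevel{$(i^*)$} is invoked with the first $i^*$ level invariants already partially holding, so Theorem~\ref{thm:cardinality:invariants:leveling} yields all invariants at the end. If $i^* = T^-+1$, the starter, survivor, and cardinality invariants are already established; only the terminator invariant is left, and it follows immediately from $T = T^-$ together with $R_{T+1} = R_{T^-+1}^- \setminus \{v\} = \emptyset \setminus \{v\} = \emptyset$. The main obstacle, as in the matroid analogue, is just keeping the bookkeeping on $i^*$ clean so that the ``boundary'' level $i^*$ is handled consistently in both the cases $j^* < i^*$ and $j^* = i^*$ (here actually simpler because deletion only removes $v$, there is no $j^*$ analogue to manage), but since the reasoning is otherwise parallel to Lemma~\ref{mat_delete_level}, no new technical difficulty arises.
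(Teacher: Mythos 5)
Your proposal is correct and follows essentially the same route as the paper's own proof: you define the pivotal index $i^*$ exactly as the paper does, derive the same three facts about untouched levels ($e_i = e_i^-$, $I_i = I_i^-$, $R_i = R_i^- \setminus \{v\}$), verify starter, cardinality, and survivor invariants in the same order, and finish with the same case split (invoke Theorem~\ref{thm:cardinality:invariants:leveling} when $\constLevel{}$ is triggered, otherwise argue the terminator invariant directly from $R_{T+1} = R_{T^-+1}^- = \emptyset$). Your remark that the cardinality setting drops both the weight bookkeeping and the $j^*$ index is also accurate and matches the simplifications the paper's proof exhibits relative to its matroid counterpart.
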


\begin{proof}

Let $i^*$ be the level for which \constLevel{} is invoked, and if \constLevel{} is never invoked during the execution of \deletev$(v)$, we set $i^*$ to be $T + 1$.
Considering our algorithm in \deletev$(v)$, we know that in levels before $i^*$, or in other words in each level $i \in [1, i^*)$ we do not make any change in our data structure other than removing the element $v$ from $R_{i}^-$ if it has this element in it. 
Hence, we have the following facts about $e$ and $I$, which are similar to the Facts~\ref{fact:cardinality:iprev:e} and~\ref{fact:cardinality:iprev:I} in the proof of Lemma~\ref{cardinality:mat_insert_level}.

\begin{fact}
\label{fact:cardinality:prev:e}
For any $i \in [1, i^*)$, it holds that $e_i = e_i^-$. 
\end{fact}

\begin{fact}
\label{fact:cardinality:prev:I}
For any $i \in [0, i^*)$, it holds that $I_i = I_i^-$.
\end{fact}

In \deletev$(v)$, we removes the element $v$ from $R_{i}^-$ for any $i\in[0,\min(i^*,T^-)]$.
By the definition of $i^*$, we have $i^*\leq T^-+1$, and $i^*=T^-+1$ happens only if \constLevel{} has never been invoked during the execution of \deletev$(v)$. In this case, 
$R_{T^-+1} = R_{T^-+1}^-$, and since $R_{T^-+1}^- = \emptyset$ according to the assumption of this lemma, we have $R_{T^-+1} = R_{T^-+1}^-=\emptyset$.
Therefore, $R_i = R_i^- \backslash \{v\}$ also holds when $i=T^-+1$, and as $\min(i^*,T^-+1)=i^*$, we can conduct the following fact.

\begin{fact}
\label{fact:cardinality:prev:R}
For any $i \in [0, i^*]$, it holds that $R_i = R_i^- \backslash \{v\}$.
\end{fact}

\textbf{Starter invariant. }
Similar to Lemma~\ref{cardinality:mat_insert_level}, we prove the starter invariant holds, which means $R_0=V$ and $I_0=\emptyset$.

Fact~\ref{fact:cardinality:prev:R} implies that $R_0 = R_0^- \backslash \{v\}$.
Besides, $ R_0^- = V^-$ since the starter invariant holds before the deletion of $v$ by the assumption.
We also know $V=V^- \backslash \{v\}$. Therefore, 
$R_0 = R_0^- \backslash \{v\}=V^- \backslash \{v\}= V$.

We have $I_0=I_0^-$ by Fact~\ref{fact:cardinality:prev:I}.
Adding it to $I_0^-=\emptyset$, which is an assumption of this lemma, implies that $I_0 = \emptyset$.

\textbf{Cardinality invariant (partially). }
We show that the cardinality invariant partially holds for the first $i^*$ levels,  which means $I_i = I_{i-1} + e_i$ and $\replacementTester{}(I_{i-1}, e_i)$ hold for any $i\in[1,i^*)$.

Same as Lemma~\ref{cardinality:mat_insert_level}, we first prove $I_i = I_{i-1} + e_i$ holds for any $i\in[1,i^*)$.
By the assumption of this lemma, $I_{i}^-=I_{i - 1}^- + e_i^-$ holds.
Besides, $I_i = I_i^-$  for any any $i\in[1,i^*)$ according to Fact~\ref{fact:cardinality:prev:I}.
Therefore,
$$ I_i = I_i^- = I_{i - 1}^- + e_i^-  \enspace.
$$
Adding it to $I_{i - 1}^-=I_{i - 1}$ by Fact~\ref{fact:cardinality:prev:I} and 
$e_i^-=e_i$ by Fact~\ref{fact:cardinality:prev:e},
for any $i\in[1,i^*)$ we have
$$
I_i = I_{i - 1} + e_i \enspace.
$$

Next, we show $\replacementTester{}(I_{i - 1}, e_i)=True$ for any $i\in[1,i^*)$. By the assumption of this lemma, we know that $\replacementTester{}(I_{i - 1}^-, e_i^-)=True$. In addition, for any $i\in[1,i^*)$, we have $I_{i - 1}=I_{i - 1}^-$ by Fact~\ref{fact:cardinality:prev:I} and $e_i=e_i^-$ by Fact~\ref{fact:cardinality:prev:e}. Thus, $\replacementTester{}(I_{i - 1}, e_i)=True$ for any $i\in[1,i^*)$.

\textbf{Survivor invariant (partially). }
We next show that the survivor invariant partially holds for the first $i^*$ levels.
To do this, we prove
$R_i = \{ e \in R_{i-1} - e_{i-1}: \replacementTester{}(I_{i-1}, e) = True\}$ holds for any $i\in[1,i^*]$.

Using Fact~\ref{fact:cardinality:prev:R},  we have $R_i = R_i^- \backslash \{v\}$ for each $i \in [1, i^*]$.
Also, the assumption of this lemma implies that 
$R_i^- = \{ e \in R_{i-1}^- - e_{i-1}^-: \replacementTester{}(I_{i-1}^-, e) = True\}$. Thus,
\begin{align*}
    R_i &= R_i^- \backslash \{v\} \\
    &= \{ e \in R_{i-1}^- - e_{i-1}^-: \replacementTester{}(I_{i-1}^-, e) = True\} \backslash \{v\} \\
    &= \{ e \in R_{i-1}^- \backslash \{v\} - e_{i-1}^-: \replacementTester{}(I_{i-1}^-, e) = True\} \enspace.
\end{align*}

Using $R_{i-1}^-=R_{i-1}$ by Fact~\ref{fact:cardinality:prev:R}, $e_{i-1}^-=e_{i-1}$ by Fact~\ref{fact:cardinality:prev:e}, and $I_{i-1}^-=I_{i-1}$ by Fact~\ref{fact:cardinality:prev:I}, for each $i\in[1,i^*]$ we have
$$
R_i = \{ e \in R_{i-1} - e_{i-1}: \replacementTester{}(I_{i-1}, e) = True\} \enspace.
$$

\textbf{Completing the proof. }
Above, we show that the starter, survivor, and cardinality invariants hold for the first $i^*$ levels.
To complete the proof of this lemma, we show all level invariants hold.
If $i^* \ne T + 1$, which means that we have invoked \constLevel$(i^*)$, our proof is complete by Theorem \ref{thm:cardinality:invariants:leveling}. 
Otherwise, if $i^* = T + 1$, we just need to show the terminator invariant holds to complete our proof, which means $R_{T + 1} = \emptyset$. 
Note that in this case, $T=T^-$.
Fact~\ref{fact:cardinality:prev:R} implies that $R_{T + 1}=R_{T + 1}^-$, and the assumption of lemma implies that $R_{T^- + 1}^- = \emptyset$.
Hence,
$$
R_{T + 1} = R_{T + 1}^- = R_{T^- + 1}^- = \emptyset \enspace,
$$
which completes the proof.
\end{proof}

\begin{lemma} [Uniform invariant]
\label{cardinality:mat_delete_uni}
If before the deletion of an element $v$, the level and uniform invariants hold, then the uniform invariant also holds after the execution of \deletev$(v)$. 
\end{lemma}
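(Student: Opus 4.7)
The plan is to mirror the structure of the analogous Lemma~\ref{mat_delete_uni} from the matroid case, but with the simplification that promoting for the cardinality constraint depends only on $I_{i-1}$ (not on an auxiliary set $I'_{i-1}$ or weight function $w$). We want to show that for every index $i$ and every element $e$,
\[
    \Pr{\bE_i = e \mid \bT \geq i,\ \bH_i = H_i} = \frac{1}{|R_i|}\cdot \ind{e \in R_i},
\]
given that the same holds for $\bE_i^-$ and that the level invariants hold both before and after the deletion (the latter by Lemma~\ref{cardinality:mat_delete_level}).

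First, I would introduce a random variable $\bold{X_i} \in \{0,1,2\}$ that classifies what the deletion did by level $i$: (i) $\bold{X_i} = 2$ if \deletev$(v)$ terminated by invoking \constLevel$(j)$ at some level $j \le i$; (ii) $\bold{X_i} = 1$ if it terminated at some level $j \le i$ because $v \notin R_j^-$; (iii) $\bold{X_i} = 0$ otherwise, i.e.\ the deletion proceeded past level $i$ without invoking \constLevel{} (so necessarily $v \in R_i^-$ and $e_i^- \ne v$). By the law of total probability it suffices to establish the claim conditioned on each value of $\bold{X_i}$ separately.

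The case $\bold{X_i} = 2$ is immediate from Lemma~\ref{lm:cardinality:uniform:leveling}: since the level invariants hold after the last \constLevel$(j)$ call with $j \le i$, the conclusion follows. The case $\bold{X_i} = 1$ is essentially deterministic, just as in Claim~\ref{propos:mat:delete2}: the algorithm made no random choice at level $i$ or beyond, so $\bR_i = \bR_i^-$, $\bE_i = \bE_i^-$, and $\bT \ge i$ iff $\bT^- \ge i$; conditioning on $\bH_i^-$ and using the assumed pre-deletion uniform invariant gives the result. The main (mildly subtle) case is $\bold{X_i} = 0$, handled as in Claim~\ref{propos:mat:delete1}.

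For $\bold{X_i} = 0$, the first step is to prove a structural claim: conditioned on $[\bT \ge i,\ \bH_i = H_i,\ \bold{X_i} = 0]$, we have $e_j \ne v$ for all $j < i$ and $v \notin R_i$. Then, defining
\[
    H_i^- := (e_1, \dots, e_{i-1},\ R_0 \cup \{v\},\ R_1 \cup \{v\},\ \dots,\ R_i \cup \{v\}),
\]
I would show that the two events $[\bT \ge i,\ \bH_i = H_i,\ \bold{X_i} = 0]$ and $[\bT^- \ge i,\ \bH_i^- = H_i^-,\ \bE_i^- \ne v]$ are equivalent: the ``$\Rightarrow$'' direction uses the fact that with no \constLevel{} call, \deletev{} only strips $v$ from each $R_j^-$, so $R_j^- = R_j \cup \{v\}$ for $j \le i$ and $e_j^- = e_j$; the ``$\Leftarrow$'' direction uses that $\bE_i^- \ne v$ combined with $e_j^- \ne v$ for $j < i$ (forced by the structural claim and the fact that each $e_j^-$ equals $e_j$) ensures no \constLevel{} call is triggered before level $i$ finishes. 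Then
\[
    \Pr{\bE_i = e \mid \bT \ge i,\ \bH_i = H_i,\ \bold{X_i} = 0}
    = \Pr{\bE_i^- = e \mid \bT^- \ge i,\ \bH_i^- = H_i^-,\ \bE_i^- \ne v},
\]
and conditioning the pre-deletion uniform distribution on $R_i^- = R_i \cup \{v\}$ on the event $\bE_i^- \ne v$ yields the uniform distribution on $R_i^- \setminus \{v\} = R_i$, finishing the claim.

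The hardest step is the $\bold{X_i} = 0$ case, and specifically the equivalence of the two conditioning events; the rest is bookkeeping. Because the cardinality version of the \suit{} test depends only on $I_{i-1}$ (not on weights $w[I_{i-1}]$ or on $I'_{i-1}$), the argument is strictly simpler than its matroid analogue and no new ideas beyond those in the proof of Lemma~\ref{mat_delete_uni} are needed.
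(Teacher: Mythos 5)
Your proposal mirrors the paper's own proof exactly: the same $\bold{X_i} \in \{0,1,2\}$ case split, the same invocation of Lemma~\ref{lm:cardinality:uniform:leveling} for the $\bold{X_i}=2$ case, the same determinism argument conditioning on $\bH_i^-$ for $\bold{X_i}=1$, and for $\bold{X_i}=0$ the same structural claim ($e_j \ne v$ for $j<i$ and $v\notin R_i$), the same definition of $H_i^-$, the same event-equivalence claim, and the same conditional-uniformity computation. Your observation that the cardinality case is strictly simpler because \replacementTester{} depends only on $I_{i-1}$ is accurate, and the proof plan is correct as stated.
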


\begin{proof}
In other words, we want to prove that if for any $i$ and any element $e$ 
\[
    \Pr{\bE_i^- = e | \bT^- \geq i, \bH_i^- = H_i^-} = 
    \frac{1}{|R_i^{-}|} \cdot \ind{e \in R_i^{-}} \enspace ,
\]
then, after execution \deletev$(v)$, for each $i$ and each element $e$, we have 
\[
    \Pr{\bE_i = e | \bT \geq i, \bH_i = H_i} = 
    \frac{1}{|R_i|} \cdot \ind{e \in R_i} \enspace .
\]

Fix any arbitrary $i$ and $e$. 
We define a random variable $\bold{X_i}$ attaining values from the set $ \{0, 1, 2\}$, as follows:
    \begin{enumerate}
    \item If the execution of \deletev$(v)$ has terminated after invoking \constLevel$(j)$, then we set $\bold{X_i}$ to $2$.
    \item If the execution of \deletev$(v)$ has terminated in a level $L_{j \leq i}$ because $v \notin R^-_j$, then we set $\bold{X_i}$ to $1$.
    \item Otherwise, we set $\bold{X_i}$ to $0$. 
    That is, this case occurs if $v \in R_i^-$ and \deletev$(v)$ terminates because in a level $L_{j > i}$, 
    either $e_j=v$ or $v \notin R_j$.
\end{enumerate}

In Claims \ref{propos:cardinality:mat:delete1}, \ref{propos:cardinality:mat:delete2}, and \ref{cardinality:sevomi},
we show that for each value $X_i \in \{0, 1, 2\}$, 
$\Pr{\bE_i = e | \bT \geq i, \bH_i = H_i, \bold{X_i} = X_i} = \frac{1}{|R_i|} \cdot \ind{e \in R_i}$. 
This would imply the statement of our Lemma and completes the proof since 
\begin{equation*}
    \Pr{\bE_i = e | \bT \geq i, \bH_i = H_i}=
\Exu{X_i\sim \bold{X_i} }{\Pr{\bE_i = e | \bT \geq i, \bH_i = H_i, \bold{X_i}=X_i}}
\end{equation*}
by the law of total probability.

\begin{claim}
\label{propos:cardinality:mat:delete1}
$\Pr{\bE_i = e | \bT \geq i, \bH_i = H_i, \bold{X_i} = 0} = \frac{1}{|R_i|} \cdot \ind{e \in R_i}$.    
\end{claim}

\begin{proof}
First, we prove the following claim. 

\begin{claim}
\label{prop:cardinality:not:in:x0}
If $\bold{X_i} = 0$,  then for every $j < i$, $e_j \ne v$ and $v \notin R_i$. 
\end{claim}

\begin{proof}

Since $\bold{X_i}=0$, then $\constLevel{}(j)$ has not been invoked for any $j \le i$. 
Thus, $\bold{e_j^-} = \bold{e_j} = e_j$ for any $j < i$. 
However, if $e_j = v$ for a level index $j < i$, then $\bold{e_j^-} = v$ would have held for that $j < i$, 
which means that $\constLevel{}(j)$ would have been executed for that $j$ . 
This contradicts the assumption that $\bold{X_i}=0$. 
Therefore, for all $j < i$, we must have  $e_j \ne v$ proving the first part of this claim.  

Next, we prove the second part. 
Since $\bold{X_i} = 0$, the algorithm \deletev$(v)$ neither has called \constLevel{} nor it terminates its execution until level $L_i$. 
Thus, $\bold{R_i} = \bold{R_i^-} - v$, which implies that $v \notin \bold{R_i}$. 
However, if we had $v \in R_i$, then the event $[\bH_i = H_i, \bold{X_i = 0}]$ would have been impossible.
\end{proof}

Using Claim~\ref{prop:cardinality:not:in:x0}, 
we know that $e_j \ne v$ for $j < i$ and $v \notin R_i$. 
However, we also know that $v \in R_j^-$ for $j \le i$. 
Thus, we can define $H_i^- = (e_1^-, \dots, e_{i-1}^-, R_0^-  , \dots, R_i^- )$ 
based on $H_i = (e_1, \dots, e_{i-1}, R_0  , \dots, R_i)$ as follows:
\begin{align*}
    H_i^- =     (e_1, \dots, e_{i-1}, R_0 \cup \{v\} , \dots, R_i \cup \{v\}) \enspace. 
\end{align*}

\begin{claim}
    \label{pro:cardinality:events:eqn}
   Two events $[\bT \geq i, \bH_i = H_i, \bold{X_i} = 0]$ and $[\bT^- \geq i, \bH_i^- = H_i^-, \bE_i^- \neq v]$ are equivalent (i.e., they imply each other).
\end{claim}

\begin{proof}
We first prove that the event $[\bT \geq i, \bH_i = H_i, \bold{X_i} = 0]$ 
implies the event $[\bT^- \geq i, \bH_i^- = H_i^-,\bE_i^- \neq v]$. 
Indeed, since $\bold{X_i}=0 \ne 2$ we know that  
the algorithm $\constLevel{}(j)$ was not invoked for any $j \le i$ and 
the element $v$ was contained in $\bold{R_j^-}$ for all $j \le i$. 
In this case, according to the algorithm \deletev$(v)$, 
we conclude that for any $j \le i$, we have $\bold{e_j^-} \ne v $ and  $\bold{e_j^-} = \bold{e_j}$, and $\bold{R_j} = \bold{R_j^-} - v$.
This means that $\bold{R_j^-} = \bold{R_j} \cup \{v\}$. 
Therefore, since $\bH_i = H_i$, 
we must have  $\bH_i^- = H_i^-$, $\bE_i^- \neq v$, and  $\bE_i^-=\bE_i$.

Next, we prove the other way around. 
That is, the event $[\bT^- \geq i, \bH_i^- = H_i^-,\bE_i^- \neq v]$ 
implies the event $[\bT \geq i, \bH_i = H_i, \bold{X_i} = 0]$. 
Indeed, since $\bH_i^- = H_i^- = (e_1, \dots, e_{i-1}, R_0 \cup \{v\} , \dots, R_i \cup \{v\})$, 
then, for any $j \leq i$, $v \in \bold{R_j^-}$ and for any $j < i$, $\bold{e_j^-} = e_j$. 

Recall from Claim~\ref{prop:cardinality:not:in:x0} that 
for all $j < i$, $e_j \ne v$ and $v \notin R_i$. 
Thus, for any $j < i$, we know that $\bold{e_j^-} \ne v$. 
However, we also know that $\bE_i^- \neq v$. 
Thus, $\bold{e_j^-} \ne v$ for any $j \leq i$.
This essentially means that the algorithm \deletev$(v)$ neither invokes \constLevel{} 
nor terminates its execution till the level $L_i$. 
This implies that $\bold{X_i} = 0$. 
On the other hand, the algorithm \deletev$(v)$ only removes $v$ from $R_i^-$ and 
does not make any change in $\bold{e_1^-, \dots, e_i^-}$. 
Thus, $\bold{R_i} = R_i^- - \{v\}= R_i \cup \{v\} - v = R_i$ and 
$\bE_i = \bE_i^-$.  Therefore,  we have $\bH_i = H_i$. 
\end{proof}

Therefore, we have the following corollary.
\begin{corollary}
\label{cor:cardinality:eqn:events}
$\Pr{\bE_i = e | \bT \geq i, \bH_i = H_i, \bold{X_i} = 0} =  \Pr{\bE_i^- = e | \bT^- \geq i, \bH_i^- = H_i^-, \bE_i^- \neq v} $. 
\end{corollary}

Thus, in order to prove 
$\Pr{\bE_i = e | \bT \geq i, \bH_i = H_i, \bold{X_i} = 0} = \frac{1}{|R_i|} \cdot \ind{e \in R_i}$, 
we can prove 
\[
\Pr{\bE_i^- = e | \bT^- \geq i, \bH_i^- = H_i^-, \bE_i^- \neq v} = \frac{1}{|R_i|} \cdot \ind{e \in R_i} \enspace .
\]

Recall that the assumption of this lemma is  
$ \Pr{\bE_i^- = e | \bT^- \geq i, \bH_i^- = H_i^-} = 
  \frac{1}{|R_i^{-}|} \cdot \ind{e \in R_i^{-}} $. 
That is, conditioned on the event $[\bT^- \geq i, \bH_i^- = H_i^-]$, 
the random variable $\bE_i^- \sim U(R_i^-)$ is a uniform random variable over the set $R_i^-$. 
(i.e., the value $e_i$ of the random variable $\bE_i^-$ takes ones of the elements of the set $R_i^-$ uniformly at random.)  
However, since $X_i =0$ and using Claim~\ref{pro:cardinality:events:eqn}, we have $\bE_i^- \ne v$. 
Thus, conditioned on the event $ [\bT^- \geq i, \bH_i^- = H_i^-, \bE_i^- \ne v]$, 
we have that the random variable $\bE_i^- \sim U(R_i^- \backslash \{v\} ) = U(R_i)$ 
should be a uniform random variable over the set $R_i^- \backslash \{v\} = R_i$. 
Indeed, we have

\begin{align*}
\Pr{\bE_i^- = e | \bT^- \geq i, \bH_i^- = H_i^-, \bE_i^- \neq v} 
&= \dfrac {\Pr{\bE_i^- = e, \bE_i^- \neq v| \bT^- \geq i, \bH_i^- = H_i^-}}
    {\Pr{\bE_i^- \neq v| \bT^- \geq i, \bH_i^- = H_i^-}} 
= \dfrac{\frac{1}{|R_i^{-}|} \cdot \ind{e \in R_i^{-} \backslash \{v\}}}{1 - \frac{1}{|R_i^{-}|}} \\
&= \dfrac{1}{|R_i^{-}| - 1} \cdot \ind{e \in R_i^{-} \backslash \{v\}} 
= \dfrac{1}{|R_i|} \cdot \ind{e \in R_i} \enspace ,
\end{align*}
where the second equality holds because of our assumption that the uniform invariant holds before the deletion, and the fourth invariant holds because $R_i^- = R_i \cup \{v\}$ and $v \notin R_i$ proving the case $X=0$.
\end{proof}

\begin{claim}
\label{propos:cardinality:mat:delete2}
$\Pr{\bE_i = e | \bT \geq i, \bH_i = H_i, \bold{X_i} = 1} = \frac{1}{|R_i|} \cdot \ind{e \in R_i}$.
\end{claim}

\begin{proof}
We will be conditioning on possible values of $\bH_i^-$.  
\begin{align*}
    \Pr{\bE_i = e | \bT \geq i, \bH_i = H_i, X_i = 1} = \Exu{H_i^-}{\Pr{\bE_i = e | \bT \geq i, \bH_i = H_i, \bold{X_i} = 1, \bH_i^- = H_i^-}} \enspace ,
\end{align*}
where the expectation is taken over all $H_i$ for which $\Pr{\bT \geq i, \bH_i = H_i, \bold{X_i} = 1, \bH_i^- = H_i^-} > 0$.
For all such $H_i^-$, we claim that
this can be further rewritten as
$\Pr{\bT \geq i, \bH_i^- = H_i^-}$.
This is because \deletev$(v)$ is executed deterministically if it does not invoke 
the algorithm \constLevel{}. Furthermore, the value of $\bold{X_i}$ is
deterministically determined by $\bH_i^-$.
Therefore, for any value of $H_i^-$, either
$\bH_i^-=H_i^-$ implies $\bold{X_i}\ne 1$, in which case $\Pr{\bT \geq i, \bH_i^- = H_i^-, \bold{X_i} = 1} = 0$, which is in contradiction with our assumption, or 
$\bH_i^- = H_i^-$ imply
$\bold{X_i}=1$. Therefore, for all such $H_i^-$ implies $\bold{X_i}=1$, which also means that \constLevel{} never gets invoked, in which case $\bH_i$ is uniquely determined. Hence $\bH_i^- = H_i^-$ should also imply that $\bH_i = H_i$, as otherwise $\Pr{\bT \geq i, \bH_i^- = H_i^-, \bH_i = H_i} = 0$.
We therefore obtain: 
\begin{align*}
    \Pr{\bE_i = e | \bT \geq i, \bH_i = H_i, \bold{X_i} = 1, \bH_i^- = H_i^-} = \Pr{\bE_i = e | \bT \geq i, \bH_i^- = H_i^-} 
\end{align*}
as claimed.

Also, we know that $\bH_i = H_i, \bold{X_i} = 1$, implies that: 
\begin{align*}
    \bT^- = \bT, \enspace
    \bold{R_i^-} = \bold{R_i}, \enspace
    \bE_i^- = \bE_i, \enspace
\end{align*}
since it means that the execution of \deletev$(v)$ has terminated before level $i$, thus no change has been made for that level.
Therefore, for a $H_i^-$ used in our expectation, we know that $\bT \geq i, \bH_i^- = H_i^-$ also implies 
\begin{align*}
    \bT^- \geq i, \enspace
    R_i^- = R_i, \enspace
    \bE_i^- = \bE_i, \enspace
\end{align*}
we have:
\begin{align*}
    \Pr{\bE_i = e | \bT \geq i, \bH_i^- = H_i^-} = \Pr{\bE_i^- = e | \bT^- \geq i, \bH_i^- = H_i^-} = \frac{1}{|R_i^{-}|} \cdot \ind{e \in R_i^{-}} = \frac{1}{|R_i|} \cdot \ind{e \in R_i}, 
\end{align*}
where the third equality holds because of our assumption that the uniform invariant holds before the deletion of element $v$.
Therefore, $\Pr{\bE_i = e | \bT \geq i, \bH_i = H_i, \bold{X_i} = 1} = \frac{1}{|R_i|} \cdot \ind{e \in R_i}$.
    
\end{proof}

\begin{claim}
\label{cardinality:sevomi}
    $\Pr{\bE_i = e | \bT \geq i, \bH_i = H_i, \bold{X_i} = 2}  = \frac{1}{|R_i|}\cdot \ind{e \in R_i}$.
\end{claim}

\begin{proof}
By Lemma~\ref{cardinality:mat_delete_level}, we know that the level invariants hold at the end of the execution of \deletev{}, which is also the end of the execution of \constLevel{$(j)$}. 
Using Lemma~\ref{lm:cardinality:uniform:leveling}, we know that 
since the level invariants are going to hold after the execution of $\constLevel{}(j)$, 
for $i$ which is greater than $j$, we have: 
$$\Pr{\bE_i = e | \bT \geq i, \bH_i = H_i, \bold{X_i} = 2}  = \frac{1}{|R_i|}\cdot \ind{e \in R_i} \enspace, $$

which proves this claim. 
\end{proof}

\end{proof}

\subsection{Application of Uniform Invariant: Query complexity}
In terms of the query complexity of this algorithm, it's important to note that verifying whether an element $e$ is promoting for a level $L_z$ requires a single oracle query.
The binary search that we perform needs $O(\log T)$ number of such suitability checks for the element $e$. 
Thus, if we initiate the leveling algorithm with a set $R_i$, 
our algorithm needs $O(|R_i|\cdot \log(T))$ oracle queries to build the levels $L_i,\cdots, L_T$. 

\begin{lemma}
\label{lm:cardinality:number_of_levels}
  The number of levels $T$ is at most $k$.
\end{lemma}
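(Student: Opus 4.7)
The plan is to use the cardinality invariant to track the size of $I_i$ across the levels and then apply the definition of \replacementTester{} to get the bound. Specifically, by the cardinality invariant, for every $1 \le i \le T$ we have $I_i = I_{i-1} + e_i$ with $\replacementTester(I_{i-1}, e_i) = True$, and by the starter invariant $I_0 = \emptyset$. A straightforward induction on $i$ then yields $|I_i| = i$ for all $0 \le i \le T$.

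Next I would look at what \replacementTester{} requires in order to return True. Inspecting its pseudocode, $\replacementTester(I, e) = True$ only if $|I| < k$ (in addition to the marginal-gain condition). Applying this to level $T$, the cardinality invariant gives $\replacementTester(I_{T-1}, e_T) = True$, which forces $|I_{T-1}| < k$. Combining this with $|I_{T-1}| = T-1$ from the induction above yields $T - 1 < k$, i.e., $T \le k$.

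The argument is essentially immediate once both invariants are in place, so there is no real obstacle: the only thing to be careful about is that the bound on $T$ follows from the condition checked at the \emph{last} created level, not from the final termination condition (which is about $R_{T+1}$). In particular, the terminator invariant is not needed here; the cap on $T$ is imposed purely by the size constraint embedded in \replacementTester{}. This is in contrast with the matroid case (Lemma~\ref{lm:number_of_levels}), where the bound $T \le k \log(k/\epsilon)$ required a more delicate path-decomposition argument using the weights of promoting elements.
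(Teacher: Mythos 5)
Your proof is correct and follows essentially the same route as the paper: both use the starter and cardinality invariants to establish $|I_i| = i$ by induction, and both then apply the $|I| < k$ condition inside \replacementTester{} at level $T$ to conclude $T - 1 < k$. The remark contrasting this with the matroid case is accurate but not part of the paper's proof.
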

\begin{proof}
  Given the starter and the cardinality invariants, we have $|I_0|=0$ and $|I_i|=|I_{i-1}|+1$.
  We can conclude by induction that $|I_{i}| = i$. 
  Since we have $I_T=I_{T-1}+e_T$ where $\suit{}(I_{T-1},e_T)=True$ by cardinality invariant, the element $e_T$ is promoting for level $L_{T-1}$. Therefore, $T-1<k$ which means $T<=k$.
\end{proof}

\begin{lemma}
\label{lm:cardinality:level_query_complexity}
  The query complexity of calling $\constLevel{}(i)$ is at most
  $\mO\left(\log\left(k\right)\cdot|R_i|\right)$.
\end{lemma}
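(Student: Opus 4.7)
The plan is to mimic the argument already carried out for the matroid case in Lemma~\ref{lm:level_query_complexity}, but simplified, since in the cardinality setting the cost of a single promotability check is much cheaper. The high-level strategy is to (i) bound the cost of one call to \replacementTester{}, (ii) bound the number of such calls incurred by a single element of $R_i$, and (iii) sum over all elements.

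First, I would observe that a single call to $\replacementTester(I, e)$ costs only $\mO(1)$ oracle queries: the function merely computes the marginal gain $f(I+e)-f(I)$ (two oracle queries, or one amortized if $f(I)$ is cached) and compares it against the fixed threshold $\tau$, together with a trivial cardinality check $|I|<k$. So each promotability check is $\mO(1)$, in contrast with the $\mO(\log k)$ bound needed in the matroid case, which required Lemma~\ref{lm:find_min} to find the minimum-weight replacement element.

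Next, I would analyze the cost incurred per element $e$ processed in the for-loop of $\constLevel{}(i)$ on Line~\ref{line:cardinality:iterate_P}. For a fixed element $e$, the algorithm first performs one promotability check on Line~\ref{line:cardinality:check_e_is_promoting}. If this returns $True$, then $e$ becomes the new $e_\ell$ and we advance to the next level without any further oracle queries for $e$. If it returns $False$, we invoke the binary search on Line~\ref{line:cardinality:const_level_matroid:binary_search} over the interval $[i,\ell-1]$. By Lemma~\ref{lm:cardinality:binary_search_argument}, the monotonicity property holds, so binary search is valid, and by Lemma~\ref{lm:cardinality:number_of_levels} the length of this interval is at most $T\le k$. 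Hence the binary search runs in $\mO(\log k)$ steps, each of which calls $\replacementTester{}$ once at cost $\mO(1)$. Lines~\ref{line:cardinality:set_I_l}--\ref{line:cardinality:constlevelmatroid:addR_bs} only manipulate sets without oracle queries, so the total per-element cost is $\mO(\log k)$.

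Finally, I would sum over all $|R_i|$ elements of the random permutation $P$ to obtain the total query complexity $\mO(|R_i|\cdot\log k)$, which proves the lemma. I do not expect any real obstacle: the only subtle ingredients are the monotonicity property (already established in Lemma~\ref{lm:cardinality:binary_search_argument}) and the bound $T\le k$ (already established in Lemma~\ref{lm:cardinality:number_of_levels}), both of which I would just cite.
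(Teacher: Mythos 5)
Your proposal is correct and follows essentially the same argument as the paper: bound each $\replacementTester$ call at $\mO(1)$ queries, bound the per-element binary search at $\mO(\log k)$ steps via the monotonicity property and the level count bound $T\le k$ from Lemma~\ref{lm:cardinality:number_of_levels}, and sum over the $|R_i|$ elements. The contrast you draw with the matroid case (where each promotability check costs $\mO(\log k)$ via Lemma~\ref{lm:find_min}) is a nice clarifying observation but not required for the proof.
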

\begin{proof}
  The algorithm $\constLevel{}(i)$ iterates over all elements in $R_i$.
  For each element $e$, it first calls the \replacementTester{} function, and select $e$ if it is a promoting element, i.e. $\replacementTester(I_{\ell-1}, e) = True$. In this case, we only need one query call for checking whether the element is promoting or not.
  However, if $e$ is not a promoting element, it reaches Line \ref{line:cardinality:const_level_matroid:binary_search} and runs the binary search on interval $[i,\ell-1]$.
  Based on Lemma~\ref{lm:cardinality:number_of_levels}, the length of this interval is  $\mO\left(k\right)$.
  Therefore, the number of steps in binary search is at most $\mO\left(\log\left(k\right)\right)$.
  In each step of the binary search, the algorithm calls $\replacementTester$ one time, which takes one query call.
  Thus, for each element, we need $\mO\left(\log\left(k\right)\right)$, and for all elements, we need $\mO\left(\log\left(k\right)\cdot|R_i|\right)$ query calls.
\end{proof}

\begin{lemma}
  For a specified value of $OPT$, the query complexity of each update operation in \carupdates~is at most
  $\mO\left(k\log\left(k\right)\right)$.
\end{lemma}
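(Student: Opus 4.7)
The plan is to mirror the proof already given for the matroid case (the lemma preceding Theorem~\ref{thm:matroid:query_complexity}), but using the simpler bounds that hold in the cardinality setting. I will split the queries incurred by a single \insertv{} or \deletev{} call into two groups: those made \emph{directly} by the update routine itself, and those made \emph{indirectly} via a triggered call to \constLevel{}.

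For the direct queries, \deletev{} performs no oracle calls at all (it only checks set membership $v\in R_i$ and the identity $e_i=v$). The routine \insertv{} scans the levels in order and, at each level, calls \replacementTester{} once to decide whether to \Break{} or continue; by Lemma~\ref{lm:cardinality:number_of_levels} there are at most $T+1\le k+1$ such levels, and each \replacementTester{} call on a cardinality promoting test costs $O(1)$ queries. So the direct cost per update is $O(k)$.

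For the indirect cost, I will invoke the uniform invariant (Lemma~\ref{cardinality:mat_insert_uni} and Lemma~\ref{cardinality:mat_delete_uni}), which says that conditioned on the partial configuration up to level $i$, the element $e_i$ is uniform over $R_i$. A call to $\constLevel{}(i)$ is triggered by \deletev$(v)$ only when $e_i = v$, which has probability at most $1/|R_i|$, and by \insertv$(v)$ only when $v$ is promoting up to level $i$ and the Bernoulli draw $p_i=1$, which by Line~\ref{line:cardinality:p:insert:klogk} has probability at most $1/|R_i|$. Combining this with Lemma~\ref{lm:cardinality:level_query_complexity}, which bounds the cost of $\constLevel{}(i)$ by $O(|R_i|\log k)$, the expected cost contributed by level $i$ is
\[
\tfrac{1}{|R_i|}\cdot O(|R_i|\log k)=O(\log k).
\]
Since at most one indirect call is made per update, and when one is made it is at a single level $i$, the total expected indirect cost is bounded by summing over the (at most $T\le k$) candidate levels:
\[
\sum_{i=1}^{T} O(\log k) = O(k\log k).
\]
Adding the direct and indirect contributions gives the claimed $O(k\log k)$ bound.

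The only subtlety is justifying that the probability of triggering $\constLevel{}(i)$ is truly at most $1/|R_i|$ even after conditioning on the previous randomness of the update path; this is exactly the content of the uniform invariant, which we have already proved is preserved by both update routines, so no new work is needed. Thus the proof reduces to the arithmetic above.
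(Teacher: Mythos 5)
Your proof is correct and follows essentially the same approach as the paper: it relies on the uniform invariant to bound the probability of triggering $\constLevel{}(i)$ by $1/|R_i|$, uses Lemma~\ref{lm:cardinality:level_query_complexity} to bound the cost of such a call by $O(|R_i|\log k)$, and sums the resulting $O(\log k)$ expected per-level cost over $T\le k$ levels via Lemma~\ref{lm:cardinality:number_of_levels}. The only addition you make is an explicit account of the $O(k)$ direct queries made by \insertv{} while scanning levels, which the paper's proof leaves implicit; this is a welcome bit of extra care but does not change the argument.
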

\begin{proof}
  Based on uniform invariant, when we insert/delete an element, for each natural number $i \le T$, we call $\constLevel{}(i)$ with probability $\dfrac{1}{|R_i|} \cdot \ind{e \in R_i}$ which is at most $\dfrac{1}{|R_i|}$. 
  Using Lemma~\ref{lm:cardinality:level_query_complexity}, the query complexity for calling $\constLevel{}(i)$ is $\mO\left(\log\left(k\right)\cdot|R_i|\right)$. Therefore, the expected number of queries caused by level $i$ is bounded by $\dfrac{1}{|R_i|} \cdot \mO\left(\log\left(k\right)\cdot|R_i|\right) = \mO\left(\log\left(k\right)\right)$.
  As the Lemma~\ref{lm:cardinality:number_of_levels} bounded the number of levels by $T = O\left(k\right)$, we calculate the expected number of query calls for each update by summing the expected number of query calls at each level:
  \begin{align*}
      \sum_{i=1}^T \mO\left(\log\left(k\right)\right) \le \mO\left(k\log\left(k\right)\right) \enspace .
  \end{align*}
\end{proof}

In order to obtain an algorithm that works regardless of the value of $OPT$, we guess $OPT$ up to a factor of $1+\epsilon$ using parallel runs.
Each element is inserted only to $\mO(\log(k)/\epsilon)$ copies of the algorithm. Therefore, we obtain the total query complexity claimed in Theorem \ref{thm:cardinality:query_complexity}.

\begin{theorem}
\label{thm:cardinality:query_complexity}
The expected query complexity of each insert/delete for all runs is $\mO\left(\frac{k}{\epsilon}\log^2\left(k\right)\right)$.
\end{theorem}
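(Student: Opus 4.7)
The plan is to leverage the previous lemma which bounds the expected query complexity of a single instance of \carupdates{} (for a fixed guess of $OPT$) by $\mO(k\log(k))$, and then carefully bound the number of parallel instances that are affected by any single update.

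First, I would recall the setup in Algorithm~\ref{alg:cardinality:unknown:opt}: we maintain parallel instances $\mathcal{A}_i$ indexed by $i \in \mathbb{Z}$, where $\mathcal{A}_i$ uses the guess $OPT = (1+\epsilon)^i$. When an update for an element $e$ arrives, the routine \textsc{UpdateWithoutKnowingOPT} forwards the update only to those instances $\mathcal{A}_i$ with $\frac{(1+\epsilon)^i}{2k} \le f(e) \le (1+\epsilon)^i$, i.e., $i$ in the interval $\left[\lceil \log_{1+\epsilon}f(e)\rceil,\; \lfloor \log_{1+\epsilon}(2k\cdot f(e))\rfloor\right]$.

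The main step is to bound the length of this interval. The number of relevant indices $i$ is at most
\[
\log_{1+\epsilon}(2k\cdot f(e)) - \log_{1+\epsilon}(f(e)) + 1
= \log_{1+\epsilon}(2k) + 1
= \frac{\log(2k)}{\log(1+\epsilon)} + 1 .
\]
Using the standard inequality $\log(1+\epsilon) \ge \epsilon/2$ for $\epsilon \in (0,1]$, this quantity is $\mO(\log(k)/\epsilon)$. Hence every insertion or deletion of an element $e$ is forwarded to at most $\mO(\log(k)/\epsilon)$ parallel instances, regardless of $f(e)$.

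Finally, the previous lemma guarantees that each affected instance processes the update in expected query complexity $\mO(k\log(k))$. By linearity of expectation over the (at most) $\mO(\log(k)/\epsilon)$ affected instances, the total expected query complexity per update is
\[
\mO\!\left(\frac{\log(k)}{\epsilon}\right) \cdot \mO(k\log(k)) \;=\; \mO\!\left(\frac{k}{\epsilon}\log^2(k)\right),
\]
as claimed. I do not anticipate any real obstacle here: the correctness of ignoring instances outside the interval is already justified in the paragraph on relaxing the $OPT$ assumption (elements with $f(e) > (1+\epsilon)^i$ exceed the guessed optimum, while elements with $f(e) < \frac{(1+\epsilon)^i}{2k} = \tau$ can never be promoting in $\mathcal{A}_i$), so the argument reduces to the clean counting estimate above combined with the single-instance bound.
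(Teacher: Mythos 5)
Your proof is correct and takes essentially the same approach as the paper: combine the per-instance bound $\mO(k\log k)$ with the observation that each update touches only $\mO(\log_{1+\epsilon}(2k)) = \mO(\log(k)/\epsilon)$ parallel instances, then multiply by linearity of expectation. Your version simply spells out the interval-length calculation and the inequality $\log(1+\epsilon) \ge \epsilon/2$ that the paper leaves implicit.
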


\subsection{Application of Level Invariants: Approximation guarantee}

Recall that we run parallel instances of our algorithm with different guesses of $OPT$.
In this section, we prove that if the level invariants hold, then after each update, there is a run such that the submodular value of the set $I_T$ in this run is a $(2+\epsilon)$-approximation 
of the optimal value. Formally, we state this claim as follows: 

\begin{theorem}
Suppose that the level invariants hold in every run of our algorithm.
Let $I_T$ be the selected set of the final level $L_T$ in each run. 
If $I^* \subseteq V$ is an optimal subset of size at most $k$ that achieves the optimal value, then there is a run such that the set $I_T$ in that run satisfies $(2 + \epsilon) \cdot f(I_T) \ge f(I^*)$.
\end{theorem}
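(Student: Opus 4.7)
My plan is to identify a \emph{good} parallel run whose guess of $OPT$ is within a $(1+\epsilon)$ factor of $f(I^*)$, and then exploit the level invariants of that run to derive the bound. By construction of the parallel instances with guesses of the form $(1+\epsilon)^i$, there exists at every time $t$ a run in which $f(I^*) \in (OPT_t/(1+\epsilon), OPT_t]$. I fix this run; in it, $\tau = OPT_t/(2k)$, and $f(I^*) \le OPT_t < (1+\epsilon) f(I^*)$.

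The analysis splits into two cases depending on $|I_T|$. In the easy case $|I_T| = k$, the cardinality invariant gives $\suit(I_{i-1}, e_i) = True$ for every $i \in [T]$, so $f(I_i) - f(I_{i-1}) \ge \tau$; telescoping yields $f(I_T) \ge k\tau = OPT_t/2 \ge f(I^*)/2$, which already beats $(2+\epsilon)$.

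The interesting case is $|I_T| < k$. The heart of the argument is to trace every $e^* \in I^* \setminus I_T$ through the levels. Since $e^* \in V = R_0$ (starter) and $R_{T+1} = \emptyset$ (terminator), there is a largest index $z = z(e^*) \in [0, T]$ with $e^* \in R_z$. Because $e^* \notin I_T$ we have $e^* \ne e_z$, so $e^* \in R_z \setminus \{e_z\}$ yet $e^* \notin R_{z+1}$; the survivor invariant then forces $\suit(I_z, e^*) = False$. Since $I_z \subseteq I_T$ (cardinality invariant) and $|I_T| < k$ we have $|I_z| < k$, so the only way $\suit$ can return False is $f(I_z + e^*) - f(I_z) < \tau$. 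Submodularity together with $I_z \subseteq I_T$ then gives $f(I_T + e^*) - f(I_T) < \tau$ for every such $e^*$.

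A standard telescoping argument, summing these marginal gains over any enumeration of $I^* \setminus I_T$ and invoking submodularity at each step, produces $f(I^* \cup I_T) - f(I_T) \le |I^* \setminus I_T| \cdot \tau \le k\tau = OPT_t/2$; monotonicity then gives $f(I^*) \le f(I_T) + (1+\epsilon) f(I^*)/2$, which rearranges to $f(I^*) \le 2 f(I_T)/(1-\epsilon)$. The main obstacle is this final accounting: $2/(1-\epsilon)$ is not literally $2+\epsilon$, so the guessing grid should be run with the rescaled parameter $\epsilon' = \epsilon/(2+\epsilon) \le \epsilon$, which satisfies $2/(1-\epsilon') = 2+\epsilon$ exactly and changes the number of parallel instances by only a constant factor since $1/\epsilon' = \Theta(1/\epsilon)$, preserving the claimed query complexity.
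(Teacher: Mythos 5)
Your proposal follows essentially the same route as the paper: pick the run where $f(I^*) \in (OPT/(1+\epsilon), OPT]$, split into $|I_T|=k$ (where telescoping over the cardinality invariant gives $f(I_T)\ge k\tau$) and $|I_T|<k$ (where every element outside $I_T$ has marginal gain below $\tau$ with respect to $I_T$), and conclude via submodularity and monotonicity. Your bookkeeping in the second case---tracing each $e^*\in I^*\setminus I_T$ to its drop level $z$, using the survivor invariant to force $\suit(I_z,e^*)=\mathrm{False}$, ruling out the $|I_z|\ge k$ branch, and lifting the gain bound to $I_T$ by submodularity---is a more explicit rendering of the paper's terser claim that $R_{T+1}=\emptyset$ means no element promotes $L_T$, but it is the same idea. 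You are also right that the argument literally yields $\frac{2}{1-\epsilon}\,f(I_T)\ge f(I^*)$, which for $\epsilon>0$ is strictly weaker than $(2+\epsilon)f(I_T)\ge f(I^*)$: the paper's proof stops at $\frac{2}{1-\epsilon}$ without commenting, whereas the matroid analogue does rescale; your suggestion to run the grid with $\epsilon'=\epsilon/(2+\epsilon)$ correctly closes this gap while leaving the query complexity asymptotically unchanged.
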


\begin{proof}
First of all, as described earlier, we run parallel instances of our algorithm with different guesses of $OPT$ such that after each update, there is a run in which $f(I^*) \in (\frac{OPT}{1+\epsilon}, OPT]$.
Now, we show that in this run, $(2 + \epsilon) \cdot f(I_T) \ge f(I^*)$.

Since the terminator invariant holds, $R_{T+1}=\emptyset$, which means no element promotes level $L_T$.
Thus, either $T=k$ or $\marginalgain{e}{I_T}<\tau$ for every element $e \in V$. 
If $T = k$, then since the cardinality invariant holds, we have $f(I_T) = \Sigma_{i = 1}^{T} f(I_i) - f(I_{i-1}) = \Sigma_{i = 1}^{T} \marginalgain{e_i}{I_{i-1}} \geq k\tau = \frac{OPT}{2} \ge \frac{f(I^*)}{2}$. 

In the other case,  $\marginalgain{e}{I_T}<\tau$ for every element $e \in V$. 
By the submodularity and monotonicity of the function f, we have
$$
f(I^*) \leq f(I^* \cup I_{T}) \leq 
f(I_T) + \Sigma_{e \in I^* \setminus I_{T}} \marginalgain{e}{I_T} < f(I_T) + k\tau = f(I_T) + \frac{OPT}{2} \enspace .
$$
Given that in this run $\frac{OPT}{1+\epsilon} < f(I^*)$, we have $f(I^*) < f(I_T) + \frac{1+\epsilon}{2}\cdot f(I^*)$. Thus, we can conclude that $\frac{2}{1-\epsilon} \cdot f(I_T) > f(I^*)$.
\end{proof}

\section{Parameterized Lower Bound}
\label{sec:ower:bound}
In above, we presented our dynamic $0.5$-approximation algorithm that has an amortized query complexity of $\mO(k\log{k})$ 
if we know the optimal value of the sequence $\mathcal{S}$ after every insertion or deletion, and 
incurs an extra $\mO(\log(k)/\epsilon)$-factor in the case that we do not know the optimal value.
One may ask if we can obtain a dynamic algorithm for this problem that 
provides better than $0.5$-approximation factor having a query complexity that is still linear, or even polynomial in $k$.
Interestingly, we show there is no dynamic algorithm that maintains a $(0.5+\epsilon)$-approximate submodular solution of the sequence $\mathcal{S}$
using a query complexity that is an arbitrary function $g(k)$ of $k$ (e.g., not even doubly exponentially in $k$). 
This hardness holds even when we know the optimal value of the sequence after every insertion or deletion.   
Thus, the approximation ratio of our parameterized dynamic algorithm is tight.
We first state the lower bound due to Chen and Peng \cite[Theorem 1.1]{DBLP:journals/corr/abs-2111-03198} in 
the following lemma.

\begin{lemma}[Theorem 1.1 of \cite{DBLP:journals/corr/abs-2111-03198}]
\label{lem:chen&peng:klogk}
For any constant $\epsilon>0$, there is a constant $C_{\epsilon}>0$ with the following property. 
 When $k \ge C_{\epsilon}$, any randomized algorithm that achieves an approximation ratio of 
 $(0.5+\epsilon)$ for dynamic submodular maximization under cardinality constraint $k$ 
requires amortized query complexity $n^{\alpha_{\epsilon}}/k^3$, where
$\alpha_{\epsilon} = \tilde{\Omega}(\epsilon)$ and $n$ is the number of elements in $V$. 
\end{lemma}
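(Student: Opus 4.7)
My plan is to follow the communication-complexity reduction blueprint used by Chen and Peng, with Yao's minimax principle as the outer framework. Since this is a lower bound against randomized algorithms in the oblivious adversary model, it suffices to exhibit a distribution $\mathcal{D}$ over update sequences on which every \emph{deterministic} algorithm making at most $q = n^{\alpha_\epsilon}/k^3$ amortized oracle queries fails to maintain a $(0.5+\epsilon)$-approximate solution with high probability at some specified timestep. I would set $\alpha_\epsilon = \tilde{\Omega}(\epsilon)$ so that the parameter $t := n^{\Theta(\epsilon)}$ is the target communication cost of the hard primitive being embedded.

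The next step is the construction of the hard distribution. I would index a family of monotone submodular functions $\{f_S\}$ by a hidden structure $S$ drawn from a carefully chosen code-like collection of subsets of $V$ (for example, subsets arising from a bipartite expander or from a near-disjoint family with controlled pairwise intersections). Each $f_S$ is built as a coverage-type function (or a weighted matroid rank) so that: the unique optimum of size at most $k$ equals $S$ with $\text{OPT}_S = f_S(S)$; any set $T$ with $f_S(T) \ge (0.5+\epsilon)\cdot\text{OPT}_S$ must satisfy $|T \cap S| \ge \Omega(\epsilon k)$; and the oracle $f_S(\cdot)$ can be evaluated locally from a short description of $S$. The dynamic sequence $\mathcal{S}$ then encodes $S$ by insertions/deletions spread over $\Theta(n)$ updates, so that after the final update an algorithm certifying a $(0.5+\epsilon)$-approximation has effectively recovered $\Omega(\epsilon k)$ coordinates of the planted $S$.

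Third, I would invoke an information-theoretic lower bound on the reduced problem. A convenient choice is to embed an instance of a gap version of Unique Set-Disjointness (or of a multi-instance Index problem), whose one-way randomized communication complexity is $n^{\tilde{\Omega}(\epsilon)}$; this is precisely the bound needed to match $\alpha_\epsilon = \tilde{\Omega}(\epsilon)$. The reduction turns every oracle query of the dynamic algorithm into a bounded number of bits communicated between Alice (who holds the encoding of $S$) and Bob (who simulates the algorithm), so that a total of $Q$ oracle queries produces a protocol of $\widetilde{O}(Q\cdot \text{poly}(k))$ bits. Matching this against the $n^{\tilde{\Omega}(\epsilon)}$ communication lower bound and dividing by the number of updates $\Theta(n)$ yields the amortized bound $n^{\tilde{\Omega}(\epsilon)}/\text{poly}(k)$; a careful accounting of the $\text{poly}(k)$ overhead (arising from encoding queries over the $\binom{n}{\le k}$ set queries, bucket sizes, and the number of gap thresholds) absorbs into the stated $k^3$ denominator.

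The hardest step, and the real technical heart, is the second: producing a submodular $f_S$ whose \emph{threshold} behaviour at approximation ratio $0.5+\epsilon$ is sharp, while simultaneously being oracle-friendly and being consistent across different planted $S$'s so that the reduction does not leak unintended information. The factor $0.5$ is dictated by the Nemhauser--Wolsey--Fisher barrier for cardinality constraints, so the construction must saturate that barrier in a robust, distributional sense; doing this while keeping the oracle queries simulable by short communication is the delicate combinatorial part. Once this $f_S$ is in hand, the Yao-principle conclusion and the amortization accounting are essentially mechanical, yielding the stated $n^{\tilde{\Omega}(\epsilon)}/k^3$ amortized query lower bound for all sufficiently large $k \ge C_\epsilon$.
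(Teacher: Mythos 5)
This lemma is not proved in the paper: it is an imported result, cited verbatim as Theorem~1.1 of Chen and Peng~\cite{DBLP:journals/corr/abs-2111-03198}. The paper's entire treatment of it is a one-sentence remark that Chen and Peng ``proved their theorem by considering a sequence that has the optimal value $1$ after every insertion or deletion,'' after which the lemma is used as a black box in the proof of Theorem~\ref{thm:lowerbound:general}. There is therefore no proof in this paper against which your proposal can be compared; what you have written is an attempted reconstruction of Chen and Peng's argument, not of anything this paper establishes.

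As a reconstruction, your sketch is plausible in outline (Yao's principle, a planted-optimum family of oracle-friendly submodular functions, an information cost accounting), but it commits to specifics---embedding Unique Set-Disjointness or a multi-instance Index problem and converting oracle queries into communication bits---that, as far as I can tell, do not match what Chen and Peng actually do. Their argument is closer to a direct indistinguishability/adversary argument on a fixed hard distribution of coverage-style functions with a padded ground set, rather than a reduction to an off-the-shelf two-party communication primitive. More importantly for the task at hand, even a fully correct communication-complexity reconstruction would be answering the wrong question: the expected ``proof'' here is simply the citation. If you wish to engage with the paper's actual contribution in this appendix, the object to prove is Theorem~\ref{thm:lowerbound:general}, whose argument is the elementary observation that for any fixed $k$ and any function $g(k)$, once $n$ is taken large enough the bound $n^{\alpha_\epsilon}/k^3$ from the cited lemma exceeds $g(k)$, so no algorithm with amortized query cost depending only on $k$ can beat the $2$-approximation threshold.
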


Chen and Peng proved their theorem by considering a sequence that has the optimal value $1$ after every insertion or deletion. 
Building on this lower bound, we next prove the following theorem. 

\begin{theorem}
\label{thm:lowerbound:general}
Let $g: \mathbb{N} \to \mathbb{R}^+$ be an arbitrary function.
There is no randomized $(0.5+\epsilon)$-approximate algorithm for dynamic submodular maximization under cardinality constraint $k$ 
with an expected amortized query time of $g(k)$, 
even if the optimal value is known after every insertion/deletion.
\end{theorem}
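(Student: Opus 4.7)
The plan is to derive Theorem~\ref{thm:lowerbound:general} as a direct corollary of Lemma~\ref{lem:chen&peng:klogk} by a padding argument in the parameter $n$. The key observation is that Chen and Peng's lower bound $n^{\alpha_\epsilon}/k^3$ scales polynomially in the size of the ground set, so for any fixed $k$ we can make the lower bound arbitrarily large by taking $n$ sufficiently large; in particular, we can make it exceed any prescribed function $g(k)$.

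Concretely, I would argue by contradiction. Suppose some randomized algorithm $\mathcal{A}$ maintains a $(0.5+\epsilon)$-approximate solution with expected amortized query complexity at most $g(k)$ on every instance, even when the optimal value after every update is provided to it. Fix any $k_0 \ge C_\epsilon$ (the constant from Lemma~\ref{lem:chen&peng:klogk}). Choose an integer $n_0$ large enough that
\[
  \frac{n_0^{\alpha_\epsilon}}{k_0^{\,3}} \;>\; g(k_0),
\]
which is possible since $\alpha_\epsilon = \tilde{\Omega}(\epsilon) > 0$ and so $n^{\alpha_\epsilon}/k_0^3 \to \infty$ as $n \to \infty$ for $k_0$ fixed. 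Apply Lemma~\ref{lem:chen&peng:klogk} with this choice of $(n_0, k_0)$: there exists an update sequence $\mathcal{S}$ over a ground set of size $n_0$ on which any randomized $(0.5+\epsilon)$-approximate algorithm for dynamic submodular maximization under cardinality constraint $k_0$ must use amortized expected query complexity at least $n_0^{\alpha_\epsilon}/k_0^3 > g(k_0)$, even given access to the optimum at every time step. Running $\mathcal{A}$ on this very sequence contradicts the assumed bound $g(k_0)$, which yields the theorem.

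The step that requires the most care is verifying that Lemma~\ref{lem:chen&peng:klogk} indeed supplies a family of hard instances in which $n$ can be taken arbitrarily large while $k$ is held fixed at $k_0$, and for which the optimum is revealed to the algorithm. Both properties are explicit in the way Chen and Peng's result is stated (the bound $n^{\alpha_\epsilon}/k^3$ is presented as holding uniformly for $k \ge C_\epsilon$, and the earlier discussion in Section~\ref{sec:contrib} emphasizes that their lower bounds hold even when the optimal value is known after each update), so no modification of their construction is needed. The remaining arithmetic choice of $n_0 \ge (k_0^{3}\,g(k_0)+1)^{1/\alpha_\epsilon}$ is routine, and the hypothesis $0 < \epsilon$ ensures $\alpha_\epsilon > 0$ so that this root is well defined.
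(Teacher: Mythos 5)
Your proposal is correct and follows essentially the same approach as the paper: assume for contradiction an algorithm with $g(k)$ amortized queries exists, fix $k_0 \ge C_\epsilon$, and choose $n_0$ large enough that the Chen--Peng lower bound $n_0^{\alpha_\epsilon}/k_0^3$ exceeds $g(k_0)$. One small note in your favor: the paper's own proof contains a sign typo in the choice of $n_0$ (it writes $(g(k)\cdot k^3)^{-\alpha_\epsilon}$ where $(g(k)\cdot k^3)^{1/\alpha_\epsilon}$ is needed for the stated inequality $n_0^{\alpha_\epsilon} \ge g(k)\cdot k^3$ to follow), whereas your explicit choice $n_0 \ge (k_0^3 g(k_0)+1)^{1/\alpha_\epsilon}$ is the correct arithmetic; the paper also explicitly folds in the side condition $n \ge k^{2/\epsilon}$ into its $\max$, which you should include when picking $n_0$, but this does not affect the argument since $n_0$ can always be taken larger.
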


\begin{proof}
Assume for the sake of contradiction, 
there exists a constant $\epsilon$, a function $g: \mathbb{N} \to \mathbb{R}^+$, and a $(0.5+\epsilon)$-approximation algorithm for dynamic submodular maximization with at most $g(k)$ amortized query per insertion/deletion.

According to Lemma~\ref{lem:chen&peng:klogk}, there is a constant $C_{\epsilon}>0$ such that for all $k>C_{\epsilon}$ and $n \geq k^{2/\epsilon}$, any $(0.5+\epsilon)$-approximation algorithm requires at least $n^{\alpha_{\epsilon}}/k^3$ amortized query. 
Let $k$ be an arbitrary natural number such that $k>C_{\epsilon}$, 
we define $n_{0} := \max((g(k)\cdot k^3)^{-\alpha_{\epsilon}}, k^{{2}/{\epsilon}})$. 
By the definition of $n_0$, we have $n_0 \ge (g(k)\cdot k^3)^{-\alpha_{\epsilon}}$, therefore $n_0^{\alpha_{\epsilon}} \ge g(k)\cdot k^3$, and then $n_{0}^{\alpha_{\epsilon}}/k^3 \ge g(k)$.
In conclusion, for any $n>n_{0}$, as $k^2 \le n^{\epsilon}$ constraint holds, Lemma~\ref{lem:chen&peng:klogk} implies that
the amortized query complexity is at least
$n^{\alpha_{\epsilon}}/k^3 > n_{0}^{\alpha_{\epsilon}}/k^3 \ge g(k)$, even if we know the optimal value.
Thus, the algorithm requires more than $g(k)$ amortized query complexity which is in contradiction with the assumption.

\end{proof}

As a result of Theorem \ref{thm:lowerbound:general}, for any $\epsilon>0$, even if $k$ is a constant, the required amortized query complexity to find a $0.5+\epsilon$-approximate solution increases by increasing $n$.
Therefore, even if we know the optimal value, it is not possible to find a parameterized algorithm with an approximation factor better than $0.5$ while query complexity is a function of only $k$; even if the query complexity is a double exponential of $k$. For example, if we are looking for an algorithm with amortized query complexity of $2^{2^k}$, when $n$ goes up enough, it is not possible to get a $(0.5+\epsilon)$-approximate solution for any $\epsilon>0$. 
Hence, the best approximation ratio of an algorithm that is parameterized by only $k$ is $0.5$, even with the assumption of knowing the optimal value. 
Surprisingly, the parameterized dynamic $0.5$-approximation algorithm that we presented above has expected amortized query complexity of $\mO(k \log k)$ if we know the optimal value. 

%-----------------------------------------------------------------------------------------

\end{document}